\def\HiLi{\leavevmode\rlap{\hbox to \hsize{\color{gray!30}\leaders\hrule height .8\baselineskip depth .5ex\hfill}}}
\newtheorem{theorem}{Theorem}
\newtheorem{observation}[theorem]{Observation}
\newtheorem{problem}[theorem]{Problem}
\newtheorem{definition}{Definition}
\newtheorem{lemma}{Lemma}
\newcommand\SP[1]{\mathtt{sp}(#1)}
\newcommand\EP[1]{\mathtt{ep}(#1)} 
\newcommand\INTERVAL[1]{\mathtt{range}(#1)}
\newcommand\REPR[1]{\mathtt{repr}(#1)}
\newcommand\REPRPRIME[1]{\mathtt{repr}'(#1)}
\newcommand{\ST}{\ensuremath{\mathsf{ST}}}
\newcommand{\SA}{\ensuremath{\mathsf{SA}}}
\newcommand{\LCP}{\ensuremath{\mathsf{LCP}}}
\newcommand{\PLCP}{\ensuremath{\mathsf{PLCP}}}
\newcommand{\MS}{\ensuremath{\mathsf{MS}}}
\newcommand{\DS}{\ensuremath{\mathsf{DS}}}
\newcommand{\BWT}{\ensuremath{\mathsf{BWT}}}
\newcommand{\REV}[1]{\ensuremath{\underline{#1}}}
\newcommand{\SLT}{\ensuremath{\mathsf{SLT}}}
\newcommand{\INTERVALFUNCTION}{\ensuremath{\mathbb{I}}}
\newcommand{\LF}{\ensuremath{\mathsf{LF}}}
\newcommand{\No}[1]{}
\newcommand{\ltdots}{..}
\newcommand{\Sec}[1]{Section~\ref{#1}}
\newcommand{\Thm}[1]{Theorem~\ref{#1}}
\newcommand{\Lemma}[1]{Lemma~\ref{#1}}
\newcommand{\Algo}[1]{Algorithm~\ref{#1}}
\newcommand{\Obs}[1]{Observation~\ref{#1}}
\newcommand{\Or}{~\textbf{or}~}
\newcommand{\Not}{~\textbf{not}~}
\newcommand{\st}{\:|\:}
\begin{document}


\title{Linear-time string indexing and analysis in small space}
\author{Djamal Belazzougui}
\author{Fabio Cunial}
\author{Juha K\"{a}rkk\"{a}inen}
\author{Veli M\"{a}kinen}
\affil{Helsinki Institute for Information Technology}
\maketitle
\begin{abstract}
The field of \emph{succinct data structures} has flourished over the last 16 years. Starting from the compressed suffix array by Grossi and Vitter (STOC 2000) and the FM-index by Ferragina and Manzini (FOCS 2000), a number of generalizations and applications of string indexes based on the Burrows-Wheeler transform (BWT) have been developed, all taking an amount of space that is close to the input size in bits. In many large-scale applications, the construction of the index and its usage need to be considered as one unit of computation. For example, one can compare two genomes by building a common index for their concatenation, and by detecting common substructures by querying the index. Efficient string indexing and analysis in small space lies also at the core of a number of primitives in the data-intensive field of high-throughput DNA sequencing.

We report the following advances in string indexing and analysis. We show that the BWT of a string $T\in \{1,\ldots,\sigma\}^n$ can be built in deterministic $O(n)$ time using just $O(n\log{\sigma})$ bits of space, where $\sigma \leq n$. Deterministic linear time is achieved by exploiting a new 
\emph{partial rank} data structure that supports queries in constant time, and that might have independent interest. Within the same time and space budget, we can build an index based on the BWT that allows one to enumerate all the internal nodes of the suffix tree of $T$. Many fundamental string analysis problems, such as maximal repeats, maximal unique matches, and string kernels, can be mapped to such enumeration, and can thus be solved in deterministic $O(n)$ time and in $O(n\log{\sigma})$ bits of space from the input string, by tailoring the enumeration algorithm to some problem-specific computations.

We also show how to build many of the existing indexes based on the BWT, such as the \emph{compressed suffix array}, the \emph{compressed suffix tree}, and the \emph{bidirectional BWT index}, in \emph{randomized} $O(n)$ time and in $O(n\log{\sigma})$ bits of space. The previously fastest construction algorithms for BWT, compressed suffix array and compressed suffix tree, which used $O(n\log{\sigma})$ bits of space, took $O(n\log{\log{\sigma}})$ time for the first two structures, and $O(n\log^{\epsilon}n)$ time for the third, where $\epsilon$ is any positive constant smaller than one. Contrary to the state of the art, our bidirectional BWT index supports every operation in constant time per element in its output.
\end{abstract}

\vfill
\noindent\rule{\textwidth}{0.5pt}
\\[0.2\baselineskip]
{
\footnotesize
\indent This work was partially supported by Academy of Finland under grants 250345 and 284598 (CoECGR).
\\
\indent This work extends results originally presented in ESA 2013 (all authors) and STOC 2014 (Belazzougui). 
\\
\indent Author's address: (Helsinki Institute for Information Technology), Department of Computer Science, P.O. Box 68 (Gustaf H\"{a}llstr\"{o}min katu 2b), FIN-00014, University of Helsinki, Finland.
\\
\indent Djamal Belazzougui is currently with the Centre de Reserche sur L'Information Scientifique et Technique, Algeria, and Fabio Cunial is with the Max-Planck Institute of Molecular Cell Biology and Genetics, Germany.
}

\newpage

\tableofcontents
\newpage
\section{Introduction}

The suffix tree~\cite{Wei73} is a fundamental text indexing data structure that has been used for solving a large number of string processing problems over the last 40 years~\cite{Ap85,Gu97}. The suffix array~\cite{MM93} is another widely popular data structure in text indexing, and although not as versatile as the suffix tree, its space usage is bounded by a smaller constant: specifically, given a string of length $n$ over an alphabet of size $\sigma$, a suffix tree occupies $O(n\log n)$ bits of space, while a suffix array takes exactly $n \lceil \log{n} \rceil$ bits\footnote{In this paper $\log n$ stands for $\log_2 n$.}.

The last decade has witnessed the rise of \emph{compressed} versions of the suffix array~\cite{GV05,FM05} and of the suffix tree~\cite{Sa07a}. In contrast to their plain versions, they occupy just $O(n\log\sigma)$ bits of space: this shaves a $\Theta(\log_\sigma n)$ factor, thus space becomes just a constant times larger than the original text, which is encoded in exactly $n\log\sigma$ bits. Any operation that can be implemented on a suffix tree (and thus any algorithm or data structure that uses the suffix tree) can be implemented on the \emph{compressed suffix tree} (henceforth denoted by CST) as well, at the price of a slowdown that ranges from $O(1)$ to $O(\log^{\epsilon}{n})$ depending on the operation. Building a CST, however, suffers from a large slowdown if we are restricted to use an amount of space that is only a constant factor away from the space taken by the CST itself. More precisely, a CST can be built in deterministic $O(n\log^\epsilon n)$ time (where $\epsilon$ is any constant such that $0<\epsilon<1$) and $O(n\log\sigma)$ bits of space~\cite{HSS09}, or alternatively in deterministic $O(n)$ time and $O(n\log n)$ bits by first employing a linear-time deterministic suffix tree construction algorithm to build the plain suffix tree~\cite{Fa97}, and then compressing the resulting representation. 
It can also be built in deterministic $O(n\log{\log{n}})$ time and $O(n\log{\sigma}\log{\log{n}})$ bits of space (by combining ~\cite{HSS09} with \cite{hon02space}). 

The compressed version of the suffix array (denoted by CSA in what follows) does not suffer from the same slowdown in construction as the compressed suffix tree, since it can be built in deterministic $O(n\log{\log{\sigma}})$ time\footnote{This bound should actually read as $O(n\cdot\max(1,\log{\log{\sigma}}))$.} and $O(n\log\sigma)$ bits of space~\cite{HSS09}, or alternatively in deterministic $O(n)$ time and in $O(n\log{\sigma} \log{\log{n}})$ bits of space~\cite{OS09}.

In this paper we show that both the CST and the CSA can be built in \emph{randomized} $O(n)$ time using $O(n\log{\sigma})$ bits of space, where randomization comes from the use of \emph{monotone minimal perfect hash functions}\footnote{Monotone minimal perfect hash functions are defined in Section \ref{sec:build_mmphf}.}. This seems in contrast to the plain suffix array and suffix tree, which can be built in \emph{deterministic} $O(n)$ time. However, hashing is also necessary to build a representation of the plain suffix tree that supports the fundamental child operation in constant time\footnote{The constant-time child operation enables e.g. matching a pattern of length $m$ against the suffix tree in $O(m)$ time.}: building such a plain representation of the suffix tree takes itself randomized $O(n)$ time. If one insists on achieving deterministic linear construction time, then the fastest bound known so far for the child operation is $O(\log{\log{\sigma}})$.

We also show that the key ingredient of compressed text indexes, namely the \emph{Burrows-Wheeler transform} (BWT) of a string \cite{BW94}, can be built in deterministic $O(n)$ time and $O(n\log{\sigma})$ bits of space. Such construction rests on the following results, which we believe have independent technical interest and wide applicability to string processing and biological sequence analysis problems. The first result is a data structure that takes at most $n\log{\sigma}+O(n)$ bits of space, and that supports access and partial rank\footnote{Access, partial rank and select queries are defined in Section \ref{sect:definitions}.} queries in constant time, and a related data structure that takes $n\log{\sigma}(1+1/k)+O(n)$ bits of space for any positive integer $k$, and that supports either access and partial rank queries in constant time and select queries in $O(k)$ time, or select queries in constant time and access and partial rank queries in $O(k)$ time (Lemma \ref{lemma:rank_select_access}). Both such data structures can be built in deterministic $O(n)$ time and $o(n)$ bits of space. 

In turn, the latter data structure enables an index that takes $n\log{\sigma}+O(n)$ bits of space, and that allows one to enumerate a rich representation of all the internal nodes of a suffix tree, in overall $O(n)$ time and in $O(\sigma^{2}\log^{2}{n})$ bits of additional space (Lemmas \ref{lemma:rangedistinct} and \ref{lemma:iterator}). Such index is our second result of independent interest: we call it the \emph{unidirectional BWT index}. Our enumeration algorithm is easy to implement, to parallelize, and to apply to multiple strings, and it performs a depth-first traversal of the \emph{suffix-link tree}\footnote{The suffix-link tree is defined in Section \ref{sec:suffixTree}.} using a stack that contains at every time at most $\sigma\log{n}$ nodes. A similar enumeration algorithm, which performs however a breadth-first traversal of the suffix-link tree, was described in \cite{BGOS13}: such algorithm uses a queue that takes $\Theta(n)$ bits of space, and that contains $\Theta(n)$ nodes in the worst case. This number of nodes might be too much for applications that require storing e.g. a real number per node, like weighted string kernels (see \cite{belazzougui2015framework} and references therein). 

We also show that many fundamental operations in string analysis and comparison, with a number of applications to genomics and high-throughput sequencing, can all be performed by enumerating the internal nodes of a suffix tree \emph{regardless of their order}. This allows one to implement all such operations in deterministic $O(n)$ time on top of the unidirectional index, and thus in deterministic $O(n)$ time and $O(n\log{\sigma})$ bits of space \emph{directly from the input string}: this is our third result of independent interest. 
Implementing such string analysis procedures on top of our enumeration algorithm is also practical, as it amounts to few lines of code invoked by a callback function. Using again the enumeration procedure, we give a practical algorithm for building the BWT of the \emph{reverse} of a string given the BWT of the string. Contrary to \cite{OhlebuschBA14}, our algorithm does not need the suffix array and the original string in addition to the BWT.

To build the CST we make use of the \emph{bidirectional BWT index}, a data structure consisting of two BWTs that has a number of applications in high-throughput sequencing \cite{Bidirectional_search_in_a_string_with_wavelet_trees,Bidirectional_search_in_a_string_with_wavelet_trees_and_bidirectional_matching_statistics,LLTWWY09,SOAP2_An_improved_ultrafast_tool_for_short_read_alignment}. Our fourth result of independent interest consists in showing that, in randomized $O(n)$ time and in $O(n\log{\sigma})$ bits of space, we can build a bidirectional BWT index that takes $O(n\log{\sigma})$ bits of space and that supports every operation in constant time per element in the output (Theorem \ref{thm:bidirectionalIndexConstruction}). This is in contrast to the $O(\sigma)$ or $O(\log{\sigma})$ time per element in the output required by existing bidirectional indexes in some key operations. Our fifth result of independent interest is an algorithm that builds the \emph{permuted LCP array} (a key component of the CST), as well as the \emph{matching statistics array}\footnote{The permuted LCP array is defined in Section \ref{sec:bwt}. The matching statistics array and related notions are defined in Section \ref{sec:stringAnalysis}.}, given a constant-time bidirectional BWT index, in $O(n)$ time and $O(\log{n})$ bits of space (Lemmas \ref{lemma:plcpConstruction} and \ref{lemma:ms}). Both such algorithms are practical.

The paper consists of a number of other intermediate results, whose logical dependencies are summarized in Figure~\ref{fig:architecture}. We suggest to keep this figure at hand in particular while reading Section \ref{sec:bwtIndexesConstruction}.

\begin{sidewaysfigure}[t!]
\begin{center}
\includegraphics[width=1.02\textwidth]{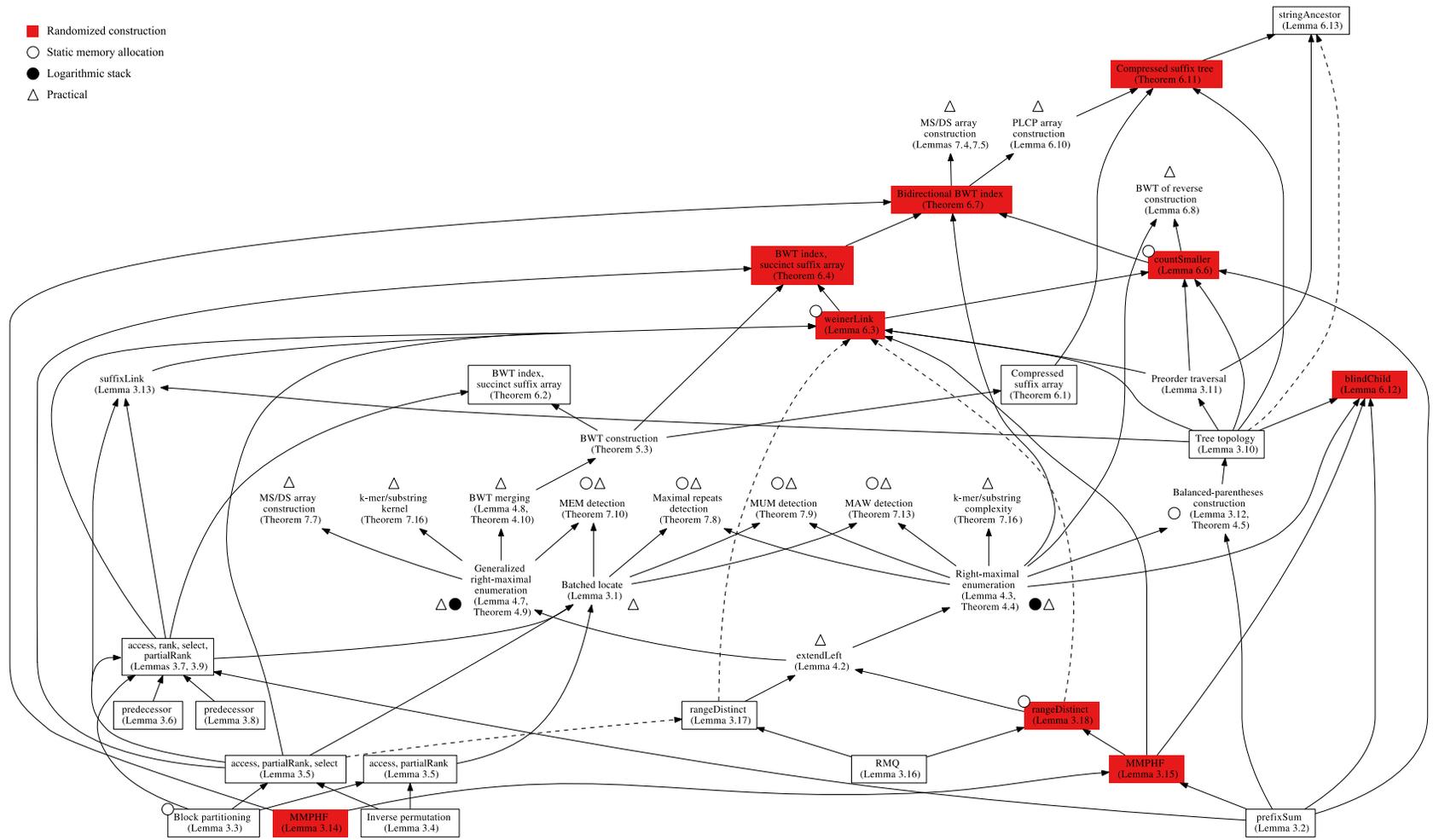}
\caption{\footnotesize
Map of the main data structures (rectangles) and algorithms described in the paper. Data structures whose construction algorithm works in randomized time are highlighted in red. Algorithms that use the static allocation strategy are marked with a white circle (see Section \ref{sec:staticAllocation}). Algorithms that use the logarithmic stack technique described in the proof of Lemma \ref{lemma:iterator} are marked with a black circle. Algorithms that are easy to implement in practice are highlighted with a triangle. Arcs indicate logical dependencies. A dashed arc $(v,w)$ means that data structure $v$ is used to build data structure $w$, but some of the components of $v$ are discarded after the construction of $w$.}
\label{fig:architecture}
\end{center}
\end{sidewaysfigure}


\section{Definitions and preliminaries} \label{sect:definitions}

We work in the RAM model, and we index arrays starting from one. We denote by $i \; (\mbox{mod}_1\; n)$ the function that returns $n$ if $i=0$, that returns $i$ if $i \in [1 \ltdots n]$, and that returns $1$ if $i=n+1$.

\subsection{Temporary space and working space}

We call \emph{temporary space} the size of any region of memory that: (1) is given in input to an algorithm, initialized to a specific state; (2) is read and written (possibly only in part) by the algorithm during its execution; (3) is restored to the original state by the algorithm before it terminates. We call \emph{working space} the maximum amount of memory that an algorithm uses in addition to its input, its output, and its temporary space (if any). The temporary space of an algorithm can be bigger than its working space.

\subsection{Strings\label{sect:definitions_strings}} 

A \emph{string} $T$ of length $n$ is a sequence of symbols from the compact alphabet $\Sigma=[1..\sigma]$, i.e. $T \in \Sigma^n$. We assume $\sigma \in o(\sqrt{n}/\log{n})$, since for larger alphabets there already exist algorithms for building the data structures described in this paper, in linear time and in $O(n\log{\sigma}) = O(n\log{n})$ bits of working space (for example the linear-time suffix array construction algorithms in \cite{KA03,KSPP05,KSB06}). The reason behind our choice of $o(\sqrt{n}/\log{n})$ will become apparent in \Sec{sec:enumeration}. We also assume $\#$ to be a separator that does not belong to $[1..\sigma]$, and specifically we set $\#=0$. In some cases we use multiple distinct separators, denoted by $\#_i=-i+1$ for integers $i>0$.

Given a string $T \in [1..\sigma]^n$, we denote by $T[i..j]$ (with $i$ and $j$ in $[1..n]$) a \emph{substring} of $T$, with the convention that $T[i..j]$ equals the empty string if $i>j$. As customary, we denote by $V \cdot W$ the concatenation of two strings $V$ and $W$. We call $T[1..i]$ (with $i \in [1..n]$) a \emph{prefix} of $T$, and $T[j..n]$ (with $j \in [1..n]$) a \emph{suffix} of $T$. 

A \emph{rotation} of $T$ is a string $T[i..n] \cdot T[1..i-1]$ for $i \in [1..n]$. We denote by $\mathcal{R}(T)$ the set of all \emph{lexicographically distinct} rotations of $T$. Note that $|\mathcal{R}(T)|$ can be smaller than $n$, since some rotations of $T$ can be lexicographically identical: this happens if and only if $T=W^k$ for some $W \in [1..\sigma]^+$ and $k>1$. We are interested only in strings for which all rotations are lexicographically distinct: we often enforce this property by terminating a string with $\#$. We denote by $\mathcal{S}(T)$ the set of all distinct, not necessarily proper, prefixes of rotations of $T$. In what follows we will use rotations to define a set of notions (like maximal repeats, suffix tree, suffix array, longest common prefix array) that are typically defined in terms of the \emph{suffixes} of a string terminated by $\#$. We do so to highlight the connection between such notions and the \emph{Burrows-Wheeler transform}, one of the key tools used in the following sections, which is defined in terms of rotations. Note that there is a one-to-one correspondence between the $i$-th rotation of $T\#$ in lexicographic order and the $i$-th suffix of $T\#$ in lexicographic order.

A \emph{repeat} of $T$ is a string $W \in \mathcal{S}(T)$ such that there are two rotations $T^1 = T[i_{1}..n] \cdot T[1..i_{1}-1]$ and $T^2 = T[i_{2}..n] \cdot T[1..i_{2}-1]$, with $i_1 \neq i_2$, such that $T^{1}[1..|W|]=T^{2}[1..|W|]=W$. Repeats are substrings of $T$ if $T \in [1..\sigma]^{n-1}\#$. A repeat $W$ is \emph{right-maximal} if $|W|<n$ and there are two rotations $T^1 = T[i_{1}..n] \cdot T[1..i_{1}-1]$ and $T^2 = T[i_{2}..n] \cdot T[1..i_{2}-1]$, with $i_1 \neq i_2$, such that $T^{1}[1..|W|]=T^{2}[1..|W|]=W$ and $T^{1}[|W|+1] \neq T^{2}[|W|+1]$. A repeat $W$ is \emph{left-maximal} if $|W|<n$ and there are two rotations $T^1 = T[i_{1}..n] \cdot T[1..i_{1}-1]$ and $T^2 = T[i_{2}..n] \cdot T[1..i_{2}-1]$, with $i_1 \neq i_2$, such that $T^{1}[2..|W|+1]=T^{2}[2..|W|+1]=W$ and $T^{1}[1] \neq T^{2}[1]$. 
Intuitively, a right-maximal (respectively, left-maximal) repeat cannot be extended to the right (respectively, to the left) by a single character, without losing at least one of its occurrences in $T$. A repeat is \emph{maximal} if it is both left- and right-maximal. 
If $T \in [1..\sigma]^{n-1}\#$, repeats are substrings of $T$, and we use the terms left- (respectively, right-) maximal \emph{substring}. Given a string $W \in \mathcal{S}(T)$, we call $\mu(W)$ the number of (not necessarily proper) suffixes of $W$ that are maximal repeats of $T$, and we set $\mu_{T}=\max\{\mu(T') : T' \in \mathcal{R}(T)\}$. We say that a repeat $W$ of $T$ is \emph{strongly left-maximal} if there are at least two distinct characters $a$ and $b$ in $[1..\sigma]$ such that both $aW$ and $bW$ are right-maximal repeats of $T$. Since only a right-maximal repeat $W$ of $T$ can be strongly left-maximal, the set of strongly left-maximal repeats of $T$ is a subset of the maximal repeats of $T$. Let $W \in \mathcal{S}(T)$, and let $\lambda(W)$ be the number of (not necessarily proper) suffixes of $W$ that are strongly left-maximal repeats of $T$. We set $\lambda_{T}=\max\{\lambda(T') : T' \in \mathcal{R}(T)\}$. Note that $\lambda_T \leq \mu_T$. 
Other types of repeat will be described in \Sec{sec:stringAnalysis}.

\subsection{Suffix tree} \label{sec:suffixTree}

Let $\mathcal{T}=\{T^1,T^2,\dots,T^m\}$ be a set of strings on alphabet $[1..\sigma]$. The \emph{trie of} $\mathcal{T}$ is the tree $G=(V,E,\ell)$, with set of nodes $V$, set of edges $E$, and labeling function $\ell$, defined as follows: (1) every edge $e \in E$ is labeled by exactly one character $\ell(e) \in [1..\sigma]$; (2) the edges that connect a node to its children have distinct labels; (3) the children of a node are sorted lexicographically according to the labels of the corresponding edges; (4) there is a one-to-one correspondence between $V$ and the set of distinct prefixes of strings in $\mathcal{T}$. Note that, if no string in $\mathcal{T}$ is a prefix of another string in $\mathcal{T}$, there is a one-to-one correspondence between the elements of $\mathcal{T}$ and the leaves of the trie of $\mathcal{T}$.

Given a trie, we call \emph{unary path} a maximal sequence $v_1,v_2,\dots,v_k$ such that $v_i \in V$ and $v_i$ has exactly one child, for all $i \in [1..k]$. By \emph{collapsing a unary path} we mean transforming $G=(V,E,\ell)$ into a tree $G'=(V \setminus \{v_1,\dots,v_k\}, (E \setminus \{(v_0,v_1),(v_1,v_2),\dots,(v_k,v_{k+1})\}) \cup \{(v_0,v_{k+1})\}, \ell')$, where $v_0$ is the parent of $v_1$ in $G$, $v_{k+1}$ is the only child of $v_k$ in $G$, $\ell'(e)=\ell(e)$ for all $e \in E \cap E'$, and $\ell'((v_0,v_{k+1}))$ is the concatenation $\ell(v_0,v_1) \cdot \ell(v_1,v_2) \cdot \cdots \cdot \ell(v_k,v_{k+1})$. Note that $\ell'$ labels the edges of $G'$ with strings rather than with single characters. Given a trie, we call \emph{compact trie} the labeled tree obtained by collapsing all unary paths in the trie. Every node of a compact trie has either zero or at least two children.

\begin{definition}[\cite{Wei73}] \label{def:suffixtree}
Let $T \in [1..\sigma]^{n}$ be a string such that $|\mathcal{R}(T)|=n$. The suffix tree $\ST_T=(V,E,\ell)$ of $T$ is the compact trie of $\mathcal{R}(T)$.
\end{definition}

Note that $\ST_T$ is not defined if some rotations of $T$ are lexicographically identical, and that there is a one-to-one correspondence between the leaves of the suffix tree of $T$ and the elements of $\mathcal{R}(T)$. Since the suffix tree of $T$ has precisely $n$ leaves, and since every internal node is branching, there are at most $n-1$ internal nodes. We denote by $\SP{v}$, $\EP{v}$, and $\INTERVAL{v}$ the left-most leaf, the right-most leaf, and the set of all leaves in the subtree of an internal node $v$, respectively. We denote by $\ell(e)$ the label of an edge $e \in E$, and by $\ell(v)$ the string $\ell(r,v_1) \cdot \ell(v_1,v_2)\cdot  \cdots \cdot \ell(v_{k-1},v)$, where $r \in V$ is the root of the tree, and $r,v_1,v_2,\dots,v_{k-1},v$ is the path of $v \in V$ in the tree. We say that node $v$ has \emph{string depth} $|\ell(v)|$. We call $w$ the \emph{proper locus} of string $W$ if the search for $W$ starting from the root of $\ST_T$ ends at an edge $(v,w) \in E$. Note that there is a one-to-one correspondence between the set of internal nodes of $\ST_T$ and the set of right-maximal repeats of $T$. Moreover, the set of all left-maximal repeats of $T$ enjoys the \emph{prefix closure} property, in the sense that if a repeat is left-maximal, so is any of its prefixes. It follows that the maximal repeats of $T$ form an induced subgraph of the suffix tree of $T$, rooted at $r$.

Given strings $T^1,T^2,\dots,T^m$ with $T^i \in [1..\sigma]^{n_i}$ for $i \in [1..m]$, assume that $|\mathcal{R}(T^i)|=n_i$ for all $i \in [1..m]$, and that $\mathcal{R}(T^i) \cap \mathcal{R}(T^j) = \emptyset$ for all $i \neq j$ in $[1..m]$. We call \emph{generalized suffix tree} the compact trie of $\mathcal{R}(T^1) \cup \mathcal{R}(T^2) \cup \dots \cup \mathcal{R}(T^m)$. Note that, if string $W$ labels an internal node of the suffix tree of a string $T^i$, then it also labels an internal node of the generalized suffix tree. However, there could be an internal node $v$ in the generalized suffix tree $G=(V,E,\ell)$ such that $\ell(v)$ does not label an internal node in any $T^i$. This means that: (1) if $\ell(v) \in \mathcal{S}(T^i)$, then it is always followed by the same character $a_i$ in every rotation of $T^i$; (2) there are at least two strings $T^i$ and $T^j$, with $i \neq j$, such that $a_i \neq a_j$. A node $v$ in the generalized suffix tree could be such that all leaves in the subtree rooted at $v$ are rotations of the same string $T^i$: we call such a node \emph{pure}, and we call it \emph{impure} otherwise.

Let the label $\ell(v)$ of an internal node $v$ of $\ST_{T}=(V,E,\ell)$ be $aW$, with $a \in \Sigma$ and $W \in \Sigma^*$. Since $W$ occurs at all positions where $aW$ occurs, there must be a node $w \in V$ with $\ell(w)=W$, otherwise $v$ would not be a node of the suffix tree. We say that there is a \emph{suffix link from $v$ to $w$ labelled by $a$}, and we write $\mathtt{suffixLink}(v)=w$. More generally, we say that the set of labels of internal nodes of $\ST_T$ enjoys the \emph{suffix closure} property, in the sense that if a string $W$ belongs to this set, so does every one of its suffixes. If $T \in [1..\sigma]^{n-1}\#$, we define $\mathtt{suffixLink}(v)$ for leaves $v$ of $\ST_T$ as well: the suffix link from a leaf leads either to another leaf, or to the root of $\ST_T$. The graph that consists of the set of internal nodes of $\ST_T$ and of the set of suffix links, is a trie rooted at the same root node as $\ST_T$: we call such trie the \emph{suffix-link tree} $\SLT_T$ of $T$. Note that the suffix-link tree might contain unary paths, and that traversing the suffix-link tree allows one to enumerate all nodes of the suffix tree. Note also that extending to the left a repeat that is not right-maximal does not lead to a right-maximal repeat. We exploit this property in Section \ref{sec:enumeration} to enumerate all nodes of the suffix tree in small space, storing neither the suffix tree nor the suffix-link tree explicitly. Note that every leaf of the suffix-link tree has more than one Weiner link, or its label has length $n-1$. Thus, the set of all maximal repeats of $T$ coincides with the set of all the internal nodes of the suffix-link tree with at least two (implicit or explicit) Weiner links, and with a subset of all the leaves of the suffix-link tree.

Inverting the direction of all suffix links yields the so-called \emph{explicit Weiner links}. 
Given a node $v \in V$ and a character $a \in \Sigma$, it might happen that string $a\ell(v) \in \mathcal{S}(T)$, but that it does not label any internal node of $\ST_T$: we call all such extensions of internal nodes \emph{implicit Weiner links}. An internal node might have multiple outgoing Weiner links (possibly both explicit and implicit), and all such Weiner links have distinct labels. The constructions described in this paper rest on the fact that the total number of explicit and implicit Weiner links is small:

\begin{observation} \label{obs:suffixtree}
Let $T \in [1..\sigma]^{n}$ be a string such that $|\mathcal{R}(T)|=n$. The number of suffix links, explicit Weiner links, and implicit Weiner links in the suffix tree of $T$ are upper bounded by $n-2$, $n-2$, and $3n-3$, respectively.
\end{observation}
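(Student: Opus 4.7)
For the first two bounds, the argument is short. Since $|\mathcal{R}(T)| = n$, the tree $\ST_T$ has exactly $n$ leaves; because every internal node is branching, the number of internal nodes is at most $n - 1$. Every internal node $v$ other than the root has a unique outgoing suffix link to the internal node $w$ with $\ell(w) = \ell(v)[2..]$, whose existence is guaranteed by the suffix-closure property of internal-node labels recalled earlier in the excerpt. This yields at most $n - 2$ suffix links. Explicit Weiner links are, by definition, the inverses of suffix links, so their number is also at most $n - 2$.

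For the third bound, the plan is to count implicit Weiner links by identifying each one with the locus of a specific string in $\ST_T$. An implicit Weiner link $(v, a)$ corresponds to the string $S = a \ell(v) \in \mathcal{S}(T)$, whose locus is \emph{not} an internal node by the definition of ``implicit''. The pair $(v, a)$ is recoverable from $S$ via $a = S[1]$ and $\ell(v) = S[2..]$, so the correspondence is injective and it suffices to count such loci. Each locus lies either at a leaf or strictly on an edge. Leaf loci contribute at most $n$ implicit Weiner links. For a locus strictly on an edge $e = (u, w)$ at position $k \in [1, |\ell(u,w)| - 1]$, the character $a$ is uniquely determined by $e$ (it equals $\ell(u)[1]$ when $u$ is not the root, and $\ell(u,w)[1]$ otherwise), so for each edge I only need to bound the number of valid $k$'s, namely those for which $\ell(v) = (\ell(u) \cdot \ell(u,w)[1..k])[2..]$ labels an internal node of $\ST_T$.

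As $k$ runs from $1$ to $|\ell(u,w)| - 1$, $\ell(v)$ traces a character-by-character downward walk in $\ST_T$ starting just below $\SL(u)$ (or just below the root when $u$ is the root) and ending just above $\SL(w)$, and the number of valid $k$'s equals the number of internal nodes visited by this walk. By an amortized argument analogous to the standard analysis of suffix-link walks in Ukkonen's suffix-tree construction, the total number of internal-node visits across the walks for all edges of $\ST_T$ is bounded by $2n - 3$, for a grand total of $n + (2n - 3) = 3n - 3$ implicit Weiner links.

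The main obstacle is controlling the edge case: a naive bound $\sum_e (|\ell(u,w)| - 1)$ can be as large as $\Theta(n^2)$, so we cannot count each interior edge position individually. Obtaining the linear bound requires the amortized walk argument, which exploits the telescoping behaviour of suffix-link walks along edges of $\ST_T$ so that each internal node of $\ST_T$ is charged only $O(1)$ times in total.
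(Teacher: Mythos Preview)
Your handling of the first two bounds is correct and matches the paper.

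For the implicit Weiner links, however, your approach diverges from the paper's and contains a real gap. The paper does \emph{not} split by where the target of the Weiner link lands; it splits by how many implicit Weiner links leave the source node. Type~I links (nodes with exactly one implicit Weiner link) are trivially at most $n-1$. For Type~II (nodes $v$ with at least two implicit Weiner links), the set $\Sigma_v$ of all Weiner-link labels from $v$ has $|\Sigma_v|\ge 2$, so $\ell(v)$ is left-maximal in $T$, hence $\REV{\ell(v)}$ labels an internal node $w$ of $\ST_{\REV{T}}$, and each $c\in\Sigma_v$ maps to a distinct child edge of $w$. This injects all Type~II links into the at most $2n-2$ edges of $\ST_{\REV{T}}$, giving $3n-3$ in total.

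Your locus-based decomposition is a legitimate alternative idea, but the crucial step---that the total number of internal nodes visited by the walks $\SL(u)\rightsquigarrow P_w$, summed over \emph{all} edges $(u,w)$ of $\ST_T$, is at most $2n-3$---is only asserted, not proved. The appeal to Ukkonen's amortization does not transfer directly: Ukkonen's potential argument telescopes along a \emph{sequence} of suffix-link-then-rescan steps (one per suffix, in text order), so each decrement of node depth via a suffix link is matched against subsequent increments along a single chain. In your setting the sum is over the full edge set of $\ST_T$; an internal node $u$ with several children contributes the same ``take suffix link from $u$'' step once per child edge, and the node-depth potential no longer telescopes. Concretely, for internal-to-internal edges your sum is $\sum_{w}\bigl(d(\SL(w))-d(\SL(\mathrm{parent}(w)))-1\bigr)$, and rewriting this as $\sum_{u}(1-c_I(u))\,d(\SL(u)) - (|I|-1)$ (with $c_I(u)$ the number of internal children of $u$) shows there is no cancellation in general: a single edge from the root to a node $w$ with $d(\SL(w))=\Theta(n)$ already contributes $\Theta(n)$ on its own. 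An $O(n)$ bound does hold, but establishing it requires a genuinely global argument (for instance the reverse-tree injection the paper uses), not the local Ukkonen potential you invoke.
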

\begin{proof}
Each of the at most $n-2$ internal nodes of the suffix tree (other than the root) has a suffix link. Each explicit Weiner link is the inverse of a suffix link, so their total number is also at most $n-2$.

Consider an internal node $v$ with only one implicit Weiner link $e=(\ell(v),a\ell(v))$. The number of such nodes, and thus the number of such implicit Weiner links, is bounded by $n-1$. Call these the implicit Weiner links of type I, and the remaining the implicit Weiner links of type II. Consider an internal node $v$ with two or more implicit Weiner links, and let $\Sigma_v$ be the set of labels of all Weiner links from $v$. Since $|\Sigma_v|>1$, there is an internal node $w$ in the suffix tree $\ST_{\REV{T}}$ of the \emph{reverse} $\REV{T}$ of $T$, labeled by the reverse $\REV{\ell(v)}$ of $\ell(v)$: every $c \in \Sigma_v$ can be mapped to a distinct edge of $\ST_{\REV{T}}$ connecting $w$ to one of its children. This is an injective mapping from all type II implicit Weiner links to the at most $2n-2$ edges of the suffix tree of $\REV{T}$. The sum of type I and type II Weiner links, i.e. the number of all implicit Weiner links, is hence bounded by $3n-3$.
\end{proof}

Slightly more involved arguments push the upper bound on the number of implicit Weiner links down to $n$.

\subsection{Rank and select\label{sect:rankandselect}} 

Given a string $S \in [1..\sigma]^n$, we denote by $\mathtt{rank}_c(S,i)$ the number of occurrences of character $c \in [1..\sigma]$ in $S[1..i]$, and we denote by $\mathtt{select}_c(S,j)$ the position $i$ of the $j$-th occurrence of $c$ in $S$, i.e. $j=\mathtt{rank}_c(S,\mathtt{select}_c(S,j))$. We use $\mathtt{partialRank}(S,i)$ as a shorthand for $\mathtt{rank}_{S[i]}(S,i)$. Data structures to support such operations efficiently will be described in the sequel. Here we just recall that it is possible to represent a bitvector of length $n$ using $n+o(n)$ bits of space, such that rank and select queries can be supported in constant time (see e.g. \cite{Cla96,Mun96}). This representation can be built in $O(n)$ time and in $o(n)$ bits of working space. Rank and select data structures can be used to implement a \emph{representation} of a string $S$ that supports operation $\mathtt{access}(S,i)=S[i]$ without storing $S$ itself.

\subsection{String indexes}\label{sec:bwt}

Sorting the set of rotations of a string yields an index that can be used for supporting pattern matching by binary search:
 
\begin{definition}[\cite{MM93}] \label{def:sa}
Let $T \in [1..\sigma]^{n}$ be a string such that $|\mathcal{R}(T)|=n$. The \emph{suffix array} $\SA_{T}[1..n]$ of $T$ is the permutation of $[1..n]$ such that $\SA_{T}[i]=j$ iff rotation $T[j..n] \cdot T[1..j-1]$ has rank $i$ in the list of all rotations of $T$ taken in lexicographic order.
\end{definition}

Note that $\SA_T$ is not defined if some rotations of $T$ are lexicographically identical.
We denote by $\INTERVAL{W} = [\SP{W} .. \EP{W}]$ the maximal interval of $\SA_{T}$ whose rotations are prefixed by $W$. Note that $\INTERVAL{W}$, $\SP{W}$ and $\EP{W}$ are in one-to-one correspondence with $\INTERVAL{v}$, $\SP{v}$ and $\EP{v}$ of a node $v$ of the suffix tree of $T$ such that $\ell(v)=W$. We will often use such notions interchangeably. 

The \emph{longest common prefix array} stores the length of the longest common prefix between every two consecutive rotations in the suffix array:

\begin{definition}[\cite{MM93}]
Let $T \in [1..\sigma]^n$ be a string such that $|\mathcal{R}(T)|=n$, and let $p(i,j)$ be the function that returns the longest common prefix between the rotation that starts at position $\SA_{T}[i]$ in $T$ and the rotation that starts at position $\SA_{T}[j]$ in $T$. The \emph{longest common prefix array} of $T$, denoted by $\LCP_{T}[1..n]$, is defined as follows: $\LCP_{T}[1]=0$ and $\LCP_{T}[i] = p(i,i-1)$ for all $i \in [2..n]$. The \emph{permuted longest common prefix array} of $T$, denoted by $\PLCP_{T}[1..n]$, is the permutation of $\LCP_{T}$ in string order, i.e. $\PLCP_{T}[\SA_{T}[i]]=\LCP_{T}[i]$ for all $i \in [1..n]$.
\end{definition}

The main tool that we use in this paper for obtaining space-efficient index structures is a permutation of $T$ induced by its suffix array:

\begin{definition}[\cite{BW94}] \label{def:bwt}
Let $T \in [1..\sigma]^{n}$ be a string such that $|\mathcal{R}(T)|=n$. The \emph{Burrows-Wheeler transform} of $T$, denoted by $\BWT_{T}$, is the permutation $L[1 \ltdots n]$ of $T$ such that $L[i]=T[\SA_{T}[i]-1 \; (\mbox{mod}_{1} \; n)]$ for all $i \in [1..n]$.
\end{definition}


Like $\SA_T$, $\BWT_T$ cannot be uniquely defined if some rotations of $T$ are lexicographically identical. Given two strings $S$ and $T$ such that $\mathcal{R}(S) \cap \mathcal{R}(T) = \emptyset$, we say that the BWT of $\mathcal{R}(S) \cup \mathcal{R}(T)$ is the string obtained by sorting $\mathcal{R}(S) \cup \mathcal{R}(T)$ lexicographically, and by printing the character that precedes the starting position of each rotation. Note that either $\mathcal{R}(S) \cap \mathcal{R}(T) = \emptyset$, or $\mathcal{R}(S)=\mathcal{R}(T)$.

A key feature of the BWT is that it is \emph{reversible}: given $\BWT_T=L$, one can reconstruct the unique $T$ of which $L$ is the Burrows-Wheeler transform. Indeed, let $V$ and $W$ be two rotations of $T$ such that $V$ is lexicographically smaller than $W$, and assume that both $V$ and $W$ are preceded by character $a$ in $T$. It follows that rotation $aV$ is lexicographically smaller than rotation $aW$, thus there is a bijection between rotations \emph{preceded by $a$} and rotations that \emph{start with $a$} that preserves the relative order among such rotations. Consider thus the rotation that starts at position $i$ in $T$, and assume that it corresponds to position $p_i$ in $\SA_T$ (i.e $\SA_T[p_i]=i$). If $L[p_i]=a$ is the $k$-th occurrence of $a$ in $L$, then the rotation that starts at position $i-1$ in $T$ must be the $k$-th rotation that starts with $a$ in $\SA_T$, and its position $p_{i-1}$ in $\SA_T$ must belong to the compact interval $\INTERVAL{a}$ that contains all rotations that start with $a$. For historical reasons, the function that projects the position $p_i$ in $\SA_T$ of a rotation that starts at position $i$, to the position $p_{i-1}$ in $\SA_T$ of the rotation that starts at position $i-1 \; (\mbox{mod}_1 \; n)$, is called $\LF$ (or \emph{last-to-first}) \emph{mapping} \cite{FM00,FM05}, and it is defined as $\LF(i)=j$, where $\SA[j]=\SA[i]-1 \; (\mbox{mod}_1\; n)$. Note that reconstructing $T$ from its BWT requires to know the starting position in $T$ of its lexicographically smallest rotation.

Let again $L$ be the Burrows-Wheeler transform of a string $T \in [1..\sigma]^n$, and assume that we have an array $C[1 \ltdots \sigma]$ that stores in $C[c]$ the number of occurrences in $T$ of all characters strictly smaller than $c$, that is the sum of the frequency of all characters in $[1..c-1]$. Note that $C[1]=0$, and that $C[c]+1$ is the position in $\SA_T$ of the first rotation that starts with character $c$. It follows that $\LF(i)=C[L[i]]+\mathtt{rank}_{L[i]}(L,i)$. 

Function $\LF$ can be extended to a \emph{backward search} algorithm which counts the number of occurrences in $T$ of a string $W$, in $O(|W|)$ steps, considering iteratively suffixes $W[i..|W|]$ with $i$ that goes from $|W|$ to one \cite{FM00,FM05}. Given the interval $[i_1..j_1]$ that corresponds to a string $V$ and a character $c$, the interval $[i_2..j_2]$ that corresponds to string $cV$ can be computed as $i_2=\mathtt{rank}_c(i_1-1)+C[c]+1$ and $j_2=\mathtt{rank}_c(j_1)+C[c]$. If $i_2>j_2$, then $cV \notin \mathcal{S}(T)$. Note that, if $W$ is a right-maximal repeat of $T$, a step of backward search corresponds to taking an explicit or implicit Weiner link in $\ST_T$. The time for computing a backward step is dominated by the time needed to perform a $\mathtt{rank}$ query, which is typically $O(\log{\log{\sigma}})$ \cite{GMR06} or $O(\log{\sigma})$ \cite{GGV03}.

The inverse of function $\LF$ is called $\psi$ for historical reasons \cite{GV00,Sad00}, and it is defined as follows. Assume that position $i$ in $\SA_{T}$ 
corresponds to rotation $aW$ with $a \in [1..\sigma]$: since $a$ satisfies $C[a] < i \leq C[a+1]$, it can be computed from $i$ by performing $\mathtt{select}_{0}(C',i)-i+1$ on a bitvector $C'$ that represents $C$ with $\sigma-1$ ones and $n$ zeros, and that is built as follows: we append $C[i+1]-C[i]$ zeros followed by a one for all $i \in [1..\sigma-1]$, and we append $n-C[\sigma]$ zeros at the end. Function $\psi(i)$ returns the lexicographic rank of rotation $W$, given the lexicographic rank $i$ of rotation $aW$, as follows: $\psi(i)=\mathtt{select}_{a}(\BWT_{T},i-C[a])$. 

Combining the BWT and the $C$ array gives rise to the following index, which is known as \emph{FM-index} in the literature \cite{FM00,FM05}:

\begin{definition} \label{def:BWTindex}
Given a string $T \in [1 \ltdots \sigma]^n$, a \emph{BWT index} on $T$ is a data structure that consists of: 
\begin{itemize}
\item $\BWT_{T\#}$, with support for rank (and select) queries;
\item the integer array $C[0\ltdots\sigma]$, that stores in $C[c]$ the number of occurrences in $T\#$ of all characters strictly smaller than $c$.
\end{itemize}
\end{definition}

The following lemma derives immediately from function $\LF$:

\begin{lemma} \label{lemma:bwtinvert}
Given the BWT index of a string $T \in [1..\sigma]^{n-1}\#$, there is an algorithm that outputs the sequence $\SA_{T}^{-1}[n],\SA_{T}^{-1}[n-1],\ldots,\SA_{T}^{-1}[1]$, in $O(t)$ time per value in the output, in $O(nt)$ total time, and in $O(\log{n})$ bits of working space, where $t$ is the time for performing function $\LF$.
\end{lemma}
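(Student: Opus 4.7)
The plan is to iterate the $\LF$ mapping starting from a known position. Recall from the discussion preceding the lemma that if $p_i = \SA_T^{-1}[i]$ is the position in $\SA_T$ of the rotation starting at position $i$ in $T$, then $\LF(p_i) = p_{i-1 \,(\mathrm{mod}_1\, n)}$. Thus, once we know any single value $\SA_T^{-1}[i_0]$, the entire inverse suffix array can be produced by repeated application of $\LF$, emitting values in the order $\SA_T^{-1}[i_0], \SA_T^{-1}[i_0 - 1], \ldots, \SA_T^{-1}[i_0 - n + 1 \,(\mathrm{mod}_1\, n)]$.

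The first step is to pick a convenient $i_0$. Since $T \in [1..\sigma]^{n-1}\#$ with $\# = 0$ being strictly smaller than every other alphabet symbol, the rotation of $T$ that starts at position $n$ begins with $\#$ and is the unique lexicographically smallest rotation. Hence $\SA_T[1] = n$, which is equivalent to $\SA_T^{-1}[n] = 1$. Choosing $i_0 = n$ gives the output order requested by the lemma, namely $\SA_T^{-1}[n], \SA_T^{-1}[n-1], \ldots, \SA_T^{-1}[1]$.

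The algorithm is then: initialize a single integer variable $p \leftarrow 1$, output $p$ (which is $\SA_T^{-1}[n]$), and then repeat $n-1$ times the assignment $p \leftarrow \LF(p)$ followed by outputting $p$. Using the formula $\LF(p) = C[\BWT_T[p]] + \mathtt{rank}_{\BWT_T[p]}(\BWT_T, p)$ from the excerpt, each step accesses the BWT index once and performs one rank query, so each invocation of $\LF$ costs $O(t)$ time by assumption. The total time is therefore $O(nt)$, with $O(t)$ time per emitted value.

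For the working space, observe that besides the BWT index itself (which is part of the input, not of the working space) and the output stream (which is written out and not retained), the algorithm only maintains the current position $p \in [1..n]$ and a down-counter from $n-1$ to $1$; both fit in $O(\log n)$ bits. There is no obstacle of substance: the only thing to verify is that $\SA_T^{-1}[n] = 1$, which follows from the assumption that $T$ ends in the separator $\#$, and that the chain of $n$ applications of $\LF$ starting from this position visits every index of $\SA_T$ exactly once, which is immediate because $i \mapsto i - 1 \,(\mathrm{mod}_1\, n)$ is a cyclic permutation of $[1..n]$.
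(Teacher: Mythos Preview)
Your proof is correct and is exactly the argument the paper has in mind; the paper does not even write out a proof, stating only that the lemma ``derives immediately from function $\LF$''. You have simply spelled out that immediate derivation: start at $p=1=\SA_T^{-1}[n]$ (since the rotation beginning with $\#$ is lexicographically smallest) and iterate $\LF$, which walks the positions of $T$ in decreasing string order while maintaining only a constant number of $O(\log n)$-bit integers.
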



So far we have only described how to support \emph{counting} queries, and we are still not able to \emph{locate} the starting positions of a pattern $P$ in string $T$. One way of doing this is to \emph{sample} suffix array values, and to extract the missing values using the $\LF$ mapping. Adjusting the sampling rate $r$ gives different space/time tradeoffs. Specifically, we sample all the values of $\SA_{T\#}[i]$ that satisfy $\SA_{T\#}[i] = 1+rk$ for $0 \leq k < n/r$, and we store such samples consecutively, in the same order as in $\SA_{T\#}$, in array $\mathtt{samples}[1 \ltdots \lceil n/r \rceil]$. Note that this is equivalent to sampling every $r$ positions \emph{in string order}. We also mark in a bitvector $B[1 \ltdots n]$ the positions of the suffix array that have been sampled, that is we set $B[i]=1$ if $\SA_{T\#}[i]=1+rk$, and we set $B[i]=0$ otherwise. Combined with the $\LF$ mapping, this allows one to compute $\SA_{T\#}[i]$ in $O(rt)$ time, where $t$ is the time required for function $\LF$. One can set $r=\log^{1+\epsilon} n/\log\sigma$ for any given $\epsilon>0$ to have the samples fit in $(n/r)\log n=n\log\sigma/\log^\epsilon n = o(n \log \sigma)$ bits, which is asymptotically the same as the space required for supporting counting queries. This setting implies that the extraction of $\SA_{T\#}[i]$ takes $O(\log^{1+\epsilon} n t/\log\sigma)$ time. The resulting collection of data structures is called \emph{succinct suffix array} (see e.g. \cite{NM07}).

Succinct suffix arrays can be further extended into \emph{self-indexes}. 
A self-index is a succinct representation of a string $T$ that, in addition to supporting count and locate queries on arbitrary strings provided in input, allows one to access any substring of $T$ by specifying its starting and ending position. In other words, a self-index for $T$ completely replaces the original string $T$, which can be discarded. Recall that we can reconstruct the whole string $T\#$ from $\BWT_{T\#}$ by applying function $\LF$ iteratively. To reconstruct arbitrary substrings efficiently, it suffices to store, for every sampled position $1+ri$ in string $T\#$, the position of suffix $T[1+ri \ltdots n]$ in $\SA_{T\#}$: specifically, we use an additional array $\mathtt{pos2rank}[1 \ltdots \lceil n/r \rceil]$ such that $\mathtt{pos2rank}[i]=j$ if $\SA_{T\#}[j]=1+ri$ \cite{FM05}. Note that $\mathtt{pos2rank}$ can be seen itself as a sampling of the \emph{inverse suffix array} at positions $1+ri$, and that it takes the same amount of space as array $\mathtt{samples}$. Given an interval $[e \ltdots f]$ in string $T\#$, we can use $\mathtt{pos2rank}[k]$ to go to the position $i$ of suffix $T[1+rk \ltdots n]$ in $\SA_{T\#}$, where $k=\lceil (f-1)/r \rceil$ and $1+rk$ is the smallest sampled position greater than or equal to $f$ in $T\#$. We can then apply $\LF$ mapping $1+rk-e$ times starting from $i$: the result is the whole substring $T[e \ltdots 1+rk]$ printed from right to left, thus we can return its proper prefix $T[e \ltdots f]$. The running time of this procedure is $O((f-e+r)t)$.

Making a succinct suffix array a self-index does not increase its asymptotic space complexity. We can thus define the succinct suffix array as follows:

\begin{definition}\label{def:succinctSuffixArray}
Given a string $T \in [1 \ltdots \sigma]^n$, the \emph{succinct suffix array} of $T$ is a data structure that takes $n\log\sigma(1+o(1))+O((n/r)\log{n})$ bits of space, where $r$ is the sampling rate, and that supports the following queries: 
\begin{itemize}
\item $\mathtt{count}(P)$: returns the number of occurrences of string $P \in [1 \ltdots \sigma]^m$ in $T$.
\item $\mathtt{locate}(i)$: returns $\SA_{T\#}[i]$. 
\item $\mathtt{substring}(e,f)$: returns $T[e \ltdots f]$. 
\end{itemize}
\end{definition}

The following result is an immediate consequence of Lemma \ref{lemma:bwtinvert}:

\begin{lemma}\label{lemma:buildingSuccinctSA}
The succinct suffix array of a string $T \in [1 \ltdots \sigma]^n$ can be built from the BWT index of $T$ in $O(nt)$ time and in $O(\log{n})$ bits of working space, where $t$ is the time for performing function $\LF$.
\end{lemma}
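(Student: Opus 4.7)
The plan is to exploit \Lemma{lemma:bwtinvert}, which already streams the values $\SA^{-1}[n], \SA^{-1}[n-1], \dots, \SA^{-1}[1]$ in $O(nt)$ total time and $O(\log n)$ bits of working space directly from the BWT index. Beyond the BWT index itself, the succinct suffix array requires three additional components: the bitvector $B[1 \ltdots n]$ with constant-time rank support, the array $\mathtt{samples}$ of sampled $\SA$ values listed in suffix-array order, and the array $\mathtt{pos2rank}$ that samples $\SA^{-1}$ at positions $1+rk$ in string order. I will populate $B$ and $\mathtt{pos2rank}$ during a single invocation of \Lemma{lemma:bwtinvert}, then add constant-time rank support to $B$, and finally derive $\mathtt{samples}$ from $\mathtt{pos2rank}$ by a short permutation sweep over the $O(n/r)$ sampled entries.

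Concretely, I would allocate $B[1 \ltdots n]$ (zero-initialized) and $\mathtt{pos2rank}[1 \ltdots \lceil n/r \rceil]$ in the output region, and maintain a counter $i$ that is decremented from $n$ to $1$ as successive values $j = \SA^{-1}[i]$ are emitted. Whenever $i = 1 + rk$ for some integer $k \geq 0$, I set $B[j] \gets 1$ and $\mathtt{pos2rank}[k+1] \gets j$. By \Lemma{lemma:bwtinvert} this phase costs $O(nt)$ time and $O(\log n)$ bits of additional working space (just the counter $i$ and whatever scratch $\LF$ requires). I then augment $B$ with the $o(n)$-bit rank/select support recalled in \Sec{sect:rankandselect} in $O(n)$ time; the auxiliary tables are written directly into the output region, so only $O(\log n)$ bits of scratch are needed.

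For the last component I allocate $\mathtt{samples}[1 \ltdots \lceil n/r \rceil]$ and iterate $k$ from $0$ to $\lceil n/r \rceil - 1$: setting $j = \mathtt{pos2rank}[k+1]$ (so $\SA[j] = 1+rk$), I write $\mathtt{samples}[\mathtt{rank}_1(B,j)] \gets 1 + rk$. Since $B[j] = 1$ by construction, $\mathtt{rank}_1(B,j)$ is precisely the rank of $j$ among sampled suffix-array positions, which is exactly the index required by the definition. The total running time is $O(nt) + O(n) + O(n/r) = O(nt)$, and the peak working space stays at $O(\log n)$ bits throughout.

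The main conceptual hurdle is that $\mathtt{samples}$ is indexed by suffix-array position, while the $\LF$-based inversion of \Lemma{lemma:bwtinvert} naturally produces values in \emph{string} order; a naive implementation would therefore rerun the whole $\LF$ inversion a second time after $B$ has been equipped with rank support. The trick that avoids this is to notice that $\mathtt{samples}$ and $\mathtt{pos2rank}$ are inverse permutations up to the affine map $k \mapsto 1+rk$, so once $\mathtt{pos2rank}$ is available together with rank on $B$, $\mathtt{samples}$ can be filled with an $O(n/r)$-time permutation sweep rather than a second $O(nt)$-time pass over the BWT.
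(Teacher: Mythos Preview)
Your proposal is correct and is precisely the detailed unpacking the paper omits: the paper states only that the lemma ``is an immediate consequence of Lemma~\ref{lemma:bwtinvert}'' and gives no further proof, so your single $\LF$-inversion pass to fill $B$ and $\mathtt{pos2rank}$, followed by the $O(n/r)$ permutation sweep to derive $\mathtt{samples}$ via $\mathtt{rank}_1$ on $B$, is exactly the intended construction. The only minor caveat is that Section~\ref{sect:rankandselect} quotes $o(n)$ rather than $O(\log n)$ bits of working space for building rank support on a bitvector; your claim that the standard superblock/block rank structure can in fact be written out with just a constant number of $O(\log n)$-bit counters is correct, but it goes slightly beyond what the paper explicitly states.
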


In Section \ref{sec:bwtIndexesConstruction} we define additional string indexes used in this paper, like the \emph{compressed suffix array}, the \emph{compressed suffix tree}, and the \emph{bidirectional BWT index}.

\section{Building blocks and techniques}\label{sect:datastructures}


\subsection{Static memory allocation} \label{sec:staticAllocation}

Let $\mathcal{A}$ be an algorithm that builds a set of arrays by iteratively appending new elements to their end. In all cases described in this paper, the final size of all growing arrays built by $\mathcal{A}$ can be precomputed by running a slightly modified version $\mathcal{A}'$ of $\mathcal{A}$ that has the same time and space complexity as $\mathcal{A}$. Thus, we always restructure $\mathcal{A}$ as follows: first, we run $\mathcal{A}'$ to precompute the final size of all growing arrays built by $\mathcal{A}$; then, we allocate a single, contiguous region of memory that is large enough to contain all the arrays built by $\mathcal{A}$, and we compute the starting position of each array inside the region; finally, we run $\mathcal{A}$ using such positions. This strategy avoids memory fragmentation in practice, and in some cases, for example in Section \ref{sec:topology}, it even allows us to achieve better space bounds. See Figure \ref{fig:architecture} for a list of all algorithms in the paper that use this technique.

\subsection{Batched locate queries}

In this paper we will repeatedly need to resolve \emph{a batch} of queries $\mathtt{locate}(i)=\SA_{T\#}[i]$ issued on a set of distinct values of $i$ in $[1..n]$, where $T \in [1..\sigma]^{n-1}$. The following lemma describes how to answer such queries using just the BWT of $T$ and a data structure that supports function $\LF$:

\begin{lemma} \label{lemma:batch_extract}
Let $T \in [1..\sigma]^{n-1}$ be a string. Given the BWT of $T\#$, a data structure that supports function $\LF$, and a list $\mathtt{pairs}[1 \ltdots \mathtt{occ}]$ of pairs $(i_k,p_k)$, where $i_k \in [1 \ltdots n]$ is a position in $\SA_{T\#}$ and $p_k$ is an integer for all $k \in [1 \ltdots \mathtt{occ}]$, we can transform every pair $(i_k,p_k)$ in $\mathtt{pairs}$ into the corresponding pair $(\SA_{T\#}[i_k],p_k)$, possibly altering the order of list $\mathtt{pairs}$, in $O(nt + \mathtt{occ})$ time and in $O(\mathtt{occ} \cdot \log{n})$ bits of working space, where $t$ is the time taken to perform function $\LF$.
\end{lemma}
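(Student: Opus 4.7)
The plan is to apply Lemma \ref{lemma:bwtinvert} and filter its output stream. That lemma, given the BWT index and $\LF$-support as inputs, produces the sequence $\SA_{T\#}^{-1}[n],\SA_{T\#}^{-1}[n-1],\dots,\SA_{T\#}^{-1}[1]$, in $O(t)$ time per output and $O(\log n)$ bits of state. Equivalently, at its $j$-th step the algorithm emits a position $i$ that satisfies $\SA_{T\#}[i]=n-j$, thereby giving us on the fly the value $\SA_{T\#}[i]$ for every $i\in[1..n]$. What remains is to intercept, in constant time per step, the $\mathtt{occ}$ emissions whose position $i$ equals some queried $i_k$, and to copy $n-j$ into the corresponding slot of $\mathtt{pairs}$.

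To do this, I would first reorder $\mathtt{pairs}$ by its first component $i_k$ via an integer sort in $O(\mathtt{occ})$ time and $O(\mathtt{occ}\log n)$ bits; the statement explicitly permits altering the order of the list. From the now-sorted keys I would build an Elias--Fano representation of the set $\{i_1,\dots,i_{\mathtt{occ}}\}\subseteq[1..n]$: it takes $O(\mathtt{occ})$ construction time, occupies $O(\mathtt{occ}\log(n/\mathtt{occ})+\mathtt{occ})=O(\mathtt{occ}\log n)$ bits, and supports in $O(1)$ both a membership test and, for a key that is present, the computation of its rank in the sorted sequence.

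I would then run the walk of Lemma \ref{lemma:bwtinvert}. At step $j$, if the Elias--Fano structure reports that the current position $i$ is in the batch, I read its rank $k$ and overwrite the first component of $\mathtt{pairs}[k]$ with $n-j=\SA_{T\#}[i]$. Since the $i_k$ are pairwise distinct, each pair is touched at most once, so the updates together cost $O(\mathtt{occ})$. Adding the $O(nt)$ cost of the walk and the $O(\mathtt{occ})$ cost of sort plus Elias--Fano construction gives the claimed $O(nt+\mathtt{occ})$ time bound. The working space is dominated by the sort's scratch area and by the Elias--Fano structure, both $O(\mathtt{occ}\log n)$ bits; the $O(\log n)$ bits used by the walk itself are negligible.

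The main technical obstacle is honouring the working-space bound. A dense bitvector of length $n$ marking the batch positions would give constant-time lookups trivially but cost $\Theta(n)$ bits, exceeding $O(\mathtt{occ}\log n)$ whenever $\mathtt{occ}=o(n/\log n)$. Using a sparse predecessor/membership structure such as Elias--Fano resolves this, provided its input is sorted --- which is why the preliminary integer sort is needed, and why the sort itself must be implemented within $O(\mathtt{occ}\log n)$ bits (a constant-pass radix sort on keys in $[1..n]$ suffices).
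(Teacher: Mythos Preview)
Your high-level plan—invert the BWT once and intercept the queried positions during the walk—is exactly the paper's strategy. However, the two concrete mechanisms you rely on do not deliver the bounds you claim.

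First, Elias--Fano does not support constant-time membership or rank queries. The standard Elias--Fano coding (which is what Lemma~\ref{lemma:prefixSums} in the paper uses) gives $O(1)$ \emph{access by index} (i.e., $\mathtt{select}$): given $k$, return the $k$-th smallest key. To answer a membership query for an arbitrary $x$, one locates in $O(1)$ the range of stored keys sharing the high bits of $x$, but must then search among those keys' low bits; that range can contain up to $\mathtt{occ}$ elements, so the query degrades to $\Theta(\log\mathtt{occ})$ in the worst case. There is no known deterministic data structure that, in $O(\mathtt{occ}\log n)$ bits and $O(\mathtt{occ})$ construction time, supports $O(1)$ worst-case membership on a subset of $[1..n]$. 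Since the lemma must hold when $t=O(1)$ (the paper achieves constant-time $\LF$ elsewhere), a per-step cost of $\Theta(\log\mathtt{occ})$ or $\Theta(\log\log n)$ blows the $O(nt)$ budget.

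Second, a ``constant-pass radix sort on keys in $[1..n]$'' uses $\Theta(n^{1/c}\log n)$ bits for bucket counters, which exceeds $O(\mathtt{occ}\log n)$ whenever $\mathtt{occ}=o(n^{1/c})$. This needs a case distinction on the size of $\mathtt{occ}$ (the paper splits into $\mathtt{occ}$ large versus small and uses comparison sort in the latter case).

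The paper sidesteps the membership problem entirely with a different amortization. It marks, in a bitvector of $\lceil n/h\rceil$ bits, only the \emph{blocks} of size $h$ that contain some $i_k$. During the walk, a step inside an unmarked block costs $O(1)$; a step inside a marked block triggers a binary search in the sorted $\mathtt{pairs}$. At most $\mathtt{occ}$ blocks are marked and each contributes $h$ steps, so there are at most $h\cdot\mathtt{occ}$ binary searches in total. Choosing $h=n/(\mathtt{occ}\log n)$ makes the bitvector fit in $\mathtt{occ}\log n$ bits and the total binary-search cost $O(n)$. When $\mathtt{occ}\ge n/\log n$, a plain length-$n$ bitvector already fits in the space budget and no binary searches are needed. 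Finally, note the lemma does not assume the $i_k$ are pairwise distinct; your update step ``overwrite $\mathtt{pairs}[k]$'' would miss duplicates, whereas the paper's final merge of the sorted $\mathtt{translate}$ and $\mathtt{pairs}$ lists handles them naturally.
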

\begin{proof}
Assume that we could use a bitvector $\mathtt{marked}[1\ltdots n]$ such that $\mathtt{marked}[i_k]=1$ for all the distinct $i_k$ that appear in $\mathtt{pairs}$. Building $\mathtt{marked}$ from $\mathtt{pairs}$ takes $O(n+\mathtt{occ})$ time. Then, we invert $\BWT_{T\#}$ in $O(nt)$ time. During this process, whenever we are at a position $i$ in $\BWT_{T\#}$, we also know the corresponding position $\SA_{T\#}[i]$ in $T\#$: if $\mathtt{marked}[i]=1$, we append pair $(i,\SA_{T\#}[i])$ to a temporary array $\mathtt{translate}[1 \ltdots \mathtt{occ}]$. At the end of this process, the pairs in $\mathtt{translate}$ are in reverse string order: thus, we sort both $\mathtt{translate}$ and $\mathtt{pairs}$ in suffix array order. Finally, we perform a linear, simultaneous scan of the two sorted arrays, replacing $(i_k,p_k)$ in $\mathtt{pairs}$ with $(\SA_{T\#}[i_k],p_k)$ using the corresponding pair $(i_k,\SA_{T\#}[i_k])$ in $\mathtt{translate}$.

If $\mathtt{occ} \geq n/\log{n}$, $\mathtt{marked}$ fits in $O(\mathtt{occ} \cdot \log{n})$ bits. Otherwise, rather than storing $\mathtt{marked}[1..n]$, we use a smaller bitvector $\mathtt{marked}'[1..n/h]$ in which we set $\mathtt{marked}'[i]=1$ iff there is an $i_k \in [hi..h(i+1)-1]$. As we invert $\BWT_{T\#}$, we check whether the block $i/h$ that contains the current position $i$ in the BWT, is such that $\mathtt{marked}'[i/h]=1$. If this is the case, we binary search $i$ in $\mathtt{pairs}$. Every such binary search takes $O(\log{\mathtt{occ}})$ time, and we perform at most $h \cdot \mathtt{occ}$ binary searches in total. Setting $h=n/(\mathtt{occ} \cdot \log{n})$ makes $\mathtt{marked}'$ fit in $\mathtt{occ} \cdot \log{n}$ bits, and it makes the total time spent in binary searches $O(n/(\log{n}/\log{\mathtt{occ}})) \in O(n)$.

Now if $\mathtt{occ}\geq \sqrt{\log n}$, we sort array $\mathtt{pairs}[1 \ltdots \mathtt{occ}]$ using radix sort: specifically, we interpret each pair $(i_k,p_k)$ as a triple $(\mathtt{msb}(i_k),\mathtt{lsb}(i_k),p_k)$, where $\mathtt{msb}(x)$ is a function that returns the most significant $\lceil (\log n)/2 \rceil$ bits of $x$ and $\mathtt{lsb}(x)$ is a function that returns the least significant $\lfloor (\log n)/2 \rfloor$ bits of $x$. Since the resulting primary and secondary keys belong to the range $[1 \ltdots 2\sqrt{n}]$, sorting both $\mathtt{pairs}$ and $\mathtt{translate}$ takes $O(\sqrt{n}+\mathtt{occ})$ time and $O((\sqrt{n}+\mathtt{occ})\log n)\in O(\mathtt{occ}\log n)$ bits of working space.
If $\mathtt{occ}<\sqrt{\log n}$ we just sort array $\mathtt{pairs}[1 \ltdots \mathtt{occ}]$ using standard comparison sort, 
in time $O(\mathtt{occ}\log n)\in O(\sqrt{\log{n}}\log n)$. 
\end{proof}

\subsection{Data structures for prefix-sum queries}\label{sec:prefixSums}

A \emph{prefix-sum data structure} supports the following query on an array of numbers $A[1..n]$: given $i \in [1..n]$, return $\sum_{j=1}^{i}A[j]$. The following well-known result, which we will use extensively in what follows, derives from combining Elias-Fano coding~\cite{El74,Fa71} with bitvectors indexed to support the $\mathtt{select}$ operation in constant time:

\begin{lemma}[\cite{okanohara2007practical}]\label{lemma:prefixSums}
Given a representation of an array of integers $A[1..n]$ whose total sum is $U$, that allows one to access its entries from left to right, we can build in $O(n)$ time and in $O(\log{U})$ bits of working space a data structure that takes $n(2+\lceil \log(U/n) \rceil)+o(n)$ bits of space and that answers prefix-sum queries in constant time.
\end{lemma}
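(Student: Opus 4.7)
The plan is to use the classical Elias--Fano encoding of the non-decreasing sequence of prefix sums $S[i]=\sum_{j=1}^{i}A[j]$, which lies in $[0\ltdots U]$. Let $\ell=\lceil \log(U/n)\rceil$ and split each $S[i]$ into its low $\ell$ bits $L_i=S[i] \bmod 2^{\ell}$ and its high $\lceil \log(U)\rceil - \ell$ bits $H_i=\lfloor S[i]/2^{\ell}\rfloor$. Note that $H_i\le U/2^{\ell}\le n$, so the high parts lie in a universe of size at most $2n$.

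First, I would store the low parts in a packed array $L[1\ltdots n]$ using exactly $n\ell$ bits, which supports $O(1)$ access to any entry. Second, I would store the high parts in a bitvector $H$ of length at most $3n$ built as follows: for each bucket value $v\in [0\ltdots 2n]$ I write a block consisting of $|\{i : H_i=v\}|$ ones followed by a single zero; equivalently, the position of the $i$-th one in $H$ equals $H_i+i$. I then equip $H$ with the $o(n)$-bit $\mathtt{select}_1$ index recalled in Section~\ref{sect:rankandselect}. Total space is $n\ell + 2n + o(n) = n(2+\lceil \log(U/n)\rceil)+o(n)$ bits as required.

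Prefix-sum queries become $\sum_{j=1}^{i} A[j] = \bigl(\mathtt{select}_1(H,i)-i\bigr)\cdot 2^{\ell} + L[i]$, which uses only constant-time primitives. Construction runs in a single left-to-right pass over $A$: I maintain the running sum in a $\lceil \log U\rceil$-bit register, and at step $i$ I write the low part into $L[i]$ and append $H_i-H_{i-1}$ zeros followed by a single one to $H$ (this is correct because $H_i\ge H_{i-1}$ by monotonicity, and the total number of zeros written never exceeds $2n$, matching the pre-allocated length of $H$). Both writes are $O(1)$ amortised per element, since the total number of zero-bits appended across the whole execution is at most $2n$, and the select index on $H$ is built in $O(n)$ time with $o(n)$ bits of working space using standard techniques. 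The only non-output storage used during construction is the running sum and a handful of counters, for $O(\log U)$ bits of working space in total.

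The step that requires a bit of care is arranging the writes into $H$ so that they are truly left-to-right and require no random access: this is ensured by the monotonicity of $S[\cdot]$, which lets one emit the unary gaps $H_i-H_{i-1}$ in order. Once this bookkeeping is in place, each of the three requirements -- $O(n)$ construction time, $O(\log U)$ bits of working space, and $n(2+\lceil \log(U/n)\rceil)+o(n)$ bits for the final structure with constant-time queries -- follows directly from the Elias--Fano layout and the constant-time $\mathtt{select}_1$ index.
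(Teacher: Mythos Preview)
Your proposal is correct and takes essentially the same approach as the paper, which does not give a detailed proof but simply states that the result ``derives from combining Elias-Fano coding with bitvectors indexed to support the $\mathtt{select}$ operation in constant time.'' Your write-up is a faithful and accurate expansion of exactly this idea, with the standard low/high split at $\ell=\lceil\log(U/n)\rceil$, the unary encoding of the high parts, and the single left-to-right streaming construction.
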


\subsection{Data structures for access, rank, and select queries}\label{sec:accessPartialrankSelect}

We conceptually split a string $S$ of length $n$ into $N=\lceil n/\sigma\rceil$ blocks of size $\sigma$ each, except possibly for the last block which might be smaller. Specifically, block number $i \in [1 .. N-1]$, denoted by $S^i$, covers substring $S[\sigma(i-1)+1 .. \sigma i]$, and the last block $S^N$ covers substring $S[\sigma(N-1)+1 .. n]$. The purpose of splitting $S$ into blocks consists in translating global operations on $S$ into local operations on a block: for example, $\mathtt{access}(i)$ can be implemented by issuing $\mathtt{access}(i-\sigma(b-1))$ on block $b = \lceil i/\sigma \rceil$. The construction we describe in this section largely overlaps with \cite{GMR06}.

We use $f(c)$ to denote the frequency of character $c$ in $S$, $f(c,b)$ to denote the frequency of character $c$ in $S^b$, and $C^{b}[c]$ as a shorthand for $\sum_{a=1}^{c-1}f(a,b)$. We encode the block structure of $S$ using bitvector $\mathtt{freq}=\mathtt{freq}_1 \mathtt{freq}_2 \cdots \mathtt{freq}_\sigma$, where bitvector $\mathtt{freq}_c[1 .. f(c)+N]$ is defined as follows:
\[
\mathtt{freq}_{c} = \mathtt{1} \mathtt{0}^{f(c,1)} \mathtt{1} \mathtt{0}^{f(c,2)} \mathtt{1} \dots \mathtt{1} \mathtt{0}^{f(c,N)}
\]
Note that $\mathtt{freq}$ takes at most $2n+\sigma-1$ bits of space: indeed, every $\mathtt{freq}_{c}$ contains exactly $N$ ones, thus the total number of ones in all bitvectors is $\sigma\lceil n/\sigma\rceil \leq n+\sigma-1$, and the total number of zeros in all bitvectors is $\sum_{c \in [1..\sigma]}f(c)=n$. Note also that a $\mathtt{rank}$ or $\mathtt{select}$ operation on a specific $\mathtt{freq}_c$ can be translated in constant time into a $\mathtt{rank}$ or $\mathtt{select}$ operation on $\mathtt{freq}$. Bitvector $\mathtt{freq}$ can be computed efficiently:

\begin{lemma}\label{lemma:freq}
Given a string $S \in [1..\sigma]^n$, vector $\mathtt{freq}$ can be built in $O(n)$ time and in $o(n)$ bits of working space.
\end{lemma}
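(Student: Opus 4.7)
The plan is a straightforward two-pass construction. First, I would make a single linear scan of $S$ to compute the total frequencies $f(c)$ for all $c \in [1..\sigma]$, storing them in a length-$\sigma$ array $F$ of $\sigma\lceil\log n\rceil$ bits. From these I would derive the starting offset of each sub-bitvector $\mathtt{freq}_c$ inside $\mathtt{freq}$ as $\mathtt{start}[c] = \sum_{c'<c}(f(c')+N)$, storing it in a second array of the same asymptotic size, and initialize a write-pointer array $P[1..\sigma]$ with $P[c] \leftarrow \mathtt{start}[c]$.

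Next, allocate $\mathtt{freq}$ of $\sum_c (f(c)+N) \leq 2n+\sigma-1$ bits initialized to zero, and make a second pass over $S$ block by block. For block $b = 1, \ldots, N$: first, for every $c \in [1..\sigma]$ write a one at position $P[c]+1$ of $\mathtt{freq}$ and increment $P[c]$ (this places the $b$-th ``$\mathtt{1}$''-marker inside $\mathtt{freq}_c$); then scan the characters of $S^b$ in order and, for each occurrence of a character $c$, simply increment $P[c]$ by one, skipping past a position that is already zero from the initialization. The invariant is that at the start of block $b$ the pointer $P[c]$ sits just past the $(b-1)$-th marker of $\mathtt{freq}_c$ followed by exactly $f(c,1)+\cdots+f(c,b-1)$ zeros. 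After processing block $N$ we have $P[c] = \mathtt{start}[c] + N + f(c) = \mathtt{start}[c+1]$, confirming that $\mathtt{freq}_c$ has been assembled exactly as specified.

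Time: the first pass costs $O(n)$; in the second pass the marker-writes total $\sigma N = n + O(\sigma) = O(n)$ and the character scans total $n$. Working space (in addition to the input $S$ and the output $\mathtt{freq}$) is dominated by the three $\sigma$-element arrays $F$, $\mathtt{start}$, $P$, each of $\sigma\lceil\log n\rceil$ bits, plus $O(\log n)$ bits for loop counters; using the standing hypothesis $\sigma \in o(\sqrt{n}/\log n)$ from Section~\ref{sect:definitions_strings}, this is $o(\sqrt{n}) \subseteq o(n)$ bits.

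There is no real technical obstacle: the only thing to verify is that interleaving the marker-writes (step one of each block) with the pointer-advances (step two of each block) places the $\mathtt{1}$s and the initialized $\mathtt{0}$s inside each $\mathtt{freq}_c$ in exactly the pattern $\mathtt{1}\mathtt{0}^{f(c,1)}\mathtt{1}\mathtt{0}^{f(c,2)}\cdots\mathtt{1}\mathtt{0}^{f(c,N)}$, which follows immediately from the invariant above together with the fact that block $b$ of $S$ contributes exactly $f(c,b)$ advances to $P[c]$.
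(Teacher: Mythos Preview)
Your proposal is correct and is essentially identical to the paper's own proof: a first linear scan of $S$ to compute the per-character totals $f(c)$ (and hence the starting offsets of each $\mathtt{freq}_c$), followed by a second scan that, at each block boundary, appends a one to every $\mathtt{freq}_c$ and, for each character read, appends a zero to the corresponding $\mathtt{freq}_c$; the $o(n)$ working-space bound comes from the same appeal to $\sigma \in o(\sqrt{n}/\log n)$. The only cosmetic difference is that you pre-initialize $\mathtt{freq}$ to zeros and merely advance the pointer for zero-writes, whereas the paper phrases both as explicit appends.
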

\begin{proof}
We use the static allocation strategy described in Section \ref{sec:staticAllocation}: specifically, we first compute $f(c)$ for all $c\in[1..\sigma]$ by scanning $S$ and incrementing corresponding counters. Then, we compute the size of each bitvector $\mathtt{freq}_{c}$ and we allocate a contiguous region of memory for $\mathtt{freq}$. Storing all $f(c)$ counters takes $\sigma\log{n} \leq (\sqrt{n}/\log{n})\log{n} = o(n)$ bits of space. Finally, we scan $S$ once again: whenever we see the beginning of a new block, we append a one to the end of every $\mathtt{freq}_{c}$, and whenever we see an occurrence of character $c$, we append a zero to the end of $\mathtt{freq}_{c}$. The total time taken by this process is $O(n)$, and the pointers to the current end of each $\mathtt{freq}_{c}$ in $\mathtt{freq}$ take $o(n)$ bits of space overall.
\end{proof}

Vector $\mathtt{freq}_{c}$, indexed to support $\mathtt{rank}$ or $\mathtt{select}$ operations in constant time, is all we need to translate in constant time a $\mathtt{rank}$ or $\mathtt{select}$ operation on $S$ into a corresponding operation on a block of $S$: thus, we focus just on supporting $\mathtt{rank}$ and $\mathtt{select}$ operations \emph{inside a block of $S$} in what follows.

For this purpose, let $X^b$ be the string $1^{f(1,b)} 2^{f(2,b)} \cdots \sigma^{f(\sigma,b)}$. $S^b$ can be seen as a permutation of $X^b$: let $\pi_b: [1..\sigma] \mapsto [1..\sigma]$ be the function that maps a position in $S^b$ onto a position in $X^b$, and let $\pi^{-1}_b: [1..\sigma] \mapsto [1..\sigma]$ be the function that maps a position in $X^b$ to a position in $S^b$. A possible choice for such permutation functions is:
\begin{eqnarray}
\pi_{b}(i) & = & C^{b}[S^{b}[i]]+\mathtt{rank}_{S^b}(S^{b}[i],i) \label{eq:pi} \\
\pi^{-1}_{b}(i) & = & \mathtt{select}_{S^b}(i-C^{b}[c],c) \label{eq:piInverse}
\end{eqnarray}
where $c=X^{b}[i]$ is the only character that satisfies $C^{b}[c] < i \leq C^{b}[c+1]$. We store explicitly just one of $\pi_b$ and $\pi^{-1}_b$, so that random access to any element of the stored permutation takes constant time, and we represent the other permutation \emph{implicitly}, as described in the following lemma:

\begin{lemma}[\cite{munro2003succinct}]\label{lemma:permutation}
Given a permutation $\pi[1..n]$ of sequence $1,2,\dots,n$, there is a data structure that takes $(n/k)\log n + n + o(n)$ bits of space in addition to $\pi$ itself, and that supports query $\pi^{-1}[i]$ for any $i \in [1..n]$ in $O(k)$ time, for any integer $k \geq 1$. This data structure can be built in $O(n)$ time and in $o(n)$ bits of working space. The query and the construction algorithms assume constant time access to any element 
$\pi[i]$. 
\end{lemma}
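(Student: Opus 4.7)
The plan is to install the classical cycle-sampling scheme on the functional graph of $\pi$. Since $\pi$ decomposes into disjoint directed cycles and $\pi^{-1}[i]$ is simply the predecessor of $i$ on its cycle, it suffices to be able to walk backward along each cycle in $O(k)$ time; I achieve this by storing length-$k$ backward shortcuts at evenly spaced positions along each sufficiently long cycle. Concretely, let $B[1\ltdots n]$ be a bitvector equipped with the $o(n)$-bit constant-time rank index recalled in Section~\ref{sect:rankandselect}. For every cycle of $\pi$ of length $\ell\geq k$ I mark $\lfloor \ell/k\rfloor$ positions spaced exactly $k$ apart along the cycle; cycles of length $<k$ are left unmarked. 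The total number of marked positions is $m\leq \lfloor n/k\rfloor$. In an array $S[1\ltdots m]$ of $\lceil\log n\rceil$-bit entries I store, at index $\mathtt{rank}_1(B,p)$, the value $\pi^{-k}[p]$, i.e.\ the position reached by walking $k$ steps backward along the cycle from each marked $p$. The total space is $n+o(n)$ bits for $B$ and at most $(n/k)\log n$ bits for $S$, matching the claimed bound.

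To answer $\pi^{-1}[i]$, I iterate $i\mapsto\pi[i]\mapsto\pi^{2}[i]\mapsto\cdots$ for at most $k$ steps. If $\pi^{\ell}[i]=i$ for some $\ell<k$, then the cycle containing $i$ has length $\ell$ and the previous position in the walk, $\pi^{\ell-1}[i]$, equals $\pi^{-1}[i]$. Otherwise I reach a marked position $p=\pi^{j}[i]$ at some step $j<k$; I fetch $q=S[\mathtt{rank}_1(B,p)]=\pi^{-k}[p]$ and apply $\pi$ exactly $k-j-1$ more times starting at $q$. Correctness follows from $\pi^{k}[q]=p=\pi^{j}[i]$, which gives $\pi^{k-j-1}[q]=\pi^{-1}[i]$. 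The query performs at most $k-1$ applications of $\pi$ plus one rank query, all in constant time each, for total $O(k)$ time.

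Construction traverses each cycle of $\pi$ exactly once in $O(n)$ time, maintaining a rolling window of size $k$ on the current walk so that the $k$-step backward pointer is available the moment a new arc-end is reached. The main obstacle is the $o(n)$ working-space budget, since a naive visited bitvector would already cost $n$ bits. I would address this by using $B$ itself in two roles: during traversal $B$ doubles as the visited marker, while a buffer of size $m$ (ultimately the array $S$) accumulates arc-end positions in traversal order; at the end a linear cleanup pass zeroes $B$, re-marks only the true arc-ends, rebuilds the rank index in $o(n)$ bits, and converts the buffered arc-end positions into the backward shortcuts $\pi^{-k}[p]$ expected by the query algorithm by following the saved $k$-step walks. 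The transient auxiliary state is then a rolling window and a constant number of indices, i.e.\ $O(k\log n)$ bits, which is $o(n)$ in the only informative regime $k=o(n/\log n)$ of the lemma; for larger $k$ the naive cycle walk already meets the time bound and the shortcut structure can be omitted. The total working space is therefore $o(n)$ and the total running time remains $O(n)$.
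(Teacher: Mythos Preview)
Your approach is essentially the paper's: decompose $\pi$ into cycles, sample each long cycle at regular spacing, mark the samples in a rank-indexed bitvector, and attach a $k$-step backward pointer to every sample. Two remarks.

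First, a small slip in the query analysis. With $\lfloor\ell/k\rfloor$ samples spaced exactly $k$ apart on a cycle of length $\ell\geq k$, the wrap-around gap has length $\ell-(\lfloor\ell/k\rfloor-1)k = k+(\ell\bmod k)\in[k,2k-1]$, so from some starting points the forward walk needs up to $2k-2$ steps before hitting a sample; your claim ``at some step $j<k$'' fails e.g.\ for $\ell=2k-1$, where there is a single sample. The paper places $\lceil t/k\rceil$ samples per cycle of length $t$ and arranges all gaps to be at most $k$. Either fix preserves the $O(k)$ bound.

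Second, on construction working space: the paper's own proof is actually looser than yours here. It explicitly states that the cycle decomposition uses $n$ bits (a visited bitvector) and never explains how to reach the $o(n)$ bound asserted in the lemma. Your idea of reusing the output bitvector $B$ as the visited marker, and the output array $S$ as scratch, is the right way to close this gap. Two details of your sketch need repair, though: (i) the rolling window of $k$ positions is unnecessary, since consecutive samples are exactly $k$ apart and remembering only the last sample ($O(\log n)$ bits) already yields each backward pointer; and (ii) your escape hatch ``for larger $k$ the naive cycle walk already meets the time bound'' is false throughout the range $k\in[n/\log n,\,n)$, where a full cycle walk is $O(n)\neq O(k)$. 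Dropping the rolling window removes the need for the case split; what still requires care (and what neither your sketch nor the paper's proof supplies) is rearranging the buffered shortcuts from traversal order into rank order in place within $S$.
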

\begin{proof}
A permutation $\pi[1..n]$ of sequence $1,2,\dots,n$ can be seen as a collection of \emph{cycles}, where the number of such cycles ranges between one and $n$. Indeed, consider the following iterated version of the permutation operator:
$$
\pi^{t}[i]=
\left\{
\begin{array}{ll}
i & \mbox{if } t=0 \\
\pi[\pi^{t-1}[i]] & \mbox{if } t>0
\end{array}
\right.
$$
We say that a position $i$ in $\pi$ belongs to a cycle of length $t$, where $t$ is the smallest positive integer such that $\pi^{t}[i]=i$. Note that $\pi$ can be decomposed into cycles in linear time and using $n$ bits of working space, by iterating operator $\pi$ from position one, by marking in a bitvector all the positions that have been touched by such iteration, and by repeating the process from the next position in $\pi$ that has not been marked. 

If a cycle contains a number of arcs $t$ greater than a predefined threshold $k$, it can be subdivided into $\lceil t/k \rceil$ paths containing at most $k$ arcs each. We store in a dictionary the first vertex of each path, and we associate with it a pointer to the first vertex of the path that \emph{precedes} it. 
That is, given a cycle $x,\pi[x],\pi^2[x],\ldots,\pi^{t-1}[x],x$, the dictionary stores the set of pairs:
$$
\left\{ (x,\pi^{k (\lceil t/k \rceil-1)}[x]) \right\} \cup \left\{ (\pi^{ik}[x],\pi^{(i-1)k}[x]) : i \in [1 .. (\lceil t/k \rceil-1)] \right\} \; .
$$ 

Then, we can determine $\pi^{-1}[i]$ for any value $i$ in $O(k)$ time, by successively computing $i,\pi[i],\pi^{2}[i], \dots, \pi^{k}[i]$ and by querying the dictionary for every vertex in such sequence. As soon as the query is successful for some $\pi^{j}[i]$ with $j \in [0..k]$, we get $\pi^{j-k}[i]$ from the dictionary and we compute the sequence $\pi^{j-k}[i],\pi^{j-k+1}[i], \pi^{j-k+2}[i], \dots, \pi^{-1}[i],i$, returning $\pi^{-1}[i]$. The dictionary can be implemented using a table and a bitvector of size $n$ with $\mathtt{rank}$ support, which marks the first element of each path of length $k$ of each cycle. 
\end{proof}

Combining Lemma \ref{lemma:freq} and Lemma \ref{lemma:permutation} with Equations \ref{eq:pi} and \ref{eq:piInverse}, we obtain the key result of this section:

\begin{lemma}\label{lemma:rank_select_access}
Given a string of length $n$ over alphabet $[1..\sigma]$, we can build the following data structures in $O(n)$ time and in $o(n)$ bits of working space:
\begin{itemize}
\item a data structure that takes at most $n\log{\sigma}+4n+o(n)$ bits of space, and that supports $\mathtt{access}$ and $\mathtt{partialRank}$ in constant time;
\item a data structure that takes $n\log{\sigma}(1+1/k) + 5n + o(n)$ bits of space for any positive integer $k$, and that supports either $\mathtt{access}$ and $\mathtt{partialRank}$ in constant time and $\mathtt{select}$ in $O(k)$ time, or $\mathtt{select}$ in constant time and $\mathtt{access}$ and $\mathtt{partialRank}$ in $O(k)$ time.
\end{itemize}
Neither of these data structures requires the original string to support $\mathtt{access}$, $\mathtt{partialRank}$ and $\mathtt{select}$.
\end{lemma}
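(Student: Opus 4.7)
The plan is to exploit the block decomposition of $S$ introduced just before the lemma, and to reduce every query on $S$ to: (i) constant-time navigation on $\mathtt{freq}$; (ii) the permutation $\pi_b$ (or $\pi_b^{-1}$) of each block; and (iii) one auxiliary per-block bitvector $B_b = 0^{f(1,b)}\,1\,0^{f(2,b)}\,1\cdots 1\,0^{f(\sigma,b)}$ of length $2\sigma-1$, equipped with constant-time rank and select support, that lets us recover in constant time both the character of $X^b$ at any position and the prefix sum $C^b[c]$ for any character $c$. Concatenating all $B_b$'s takes at most $2n+o(n)$ bits, and is the single piece of information needed in order to use each stored permutation as an oracle for access and partial rank. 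I expect the main design point to be exactly this: showing that $B_b$ alone is enough, so that no $\sigma\log\sigma$-bit array of $C^b$ values has to be kept per block in addition to $\pi_b$ itself.

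For the first bullet, I store $\pi_b[1\ltdots\sigma]$ explicitly for every block, which contributes $N\sigma\lceil\log\sigma\rceil \leq n\log\sigma + o(n)$ bits. On top of $\mathtt{freq}$ (at most $2n+o(n)$ bits by \Lemma{lemma:freq} with constant-time rank/select added) and the $B_b$'s ($2n+o(n)$ bits), a query $\mathtt{access}(i)$ is answered by taking block $b=\lceil i/\sigma\rceil$ and offset $j=i-\sigma(b-1)$, looking up $\pi_b[j]$, and using $B_b$ to read the character at position $\pi_b[j]$ of $X^b$; a query $\mathtt{partialRank}(S,i)$ is answered by combining the local rank $\pi_b[j]-C^b[c]$ (immediate from \Eq{eq:pi}, with $C^b[c]$ obtained by a single select on $B_b$) with the prefix sum $\sum_{b'<b}f(c,b')$, which is the number of zeros preceding the $b$-th one in $\mathtt{freq}_c$ and hence is returned by a single select query. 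This yields the first bullet, with a total of $n\log\sigma+4n+o(n)$ bits. For the second bullet I keep the same components and additionally equip every $\pi_b$ with the cycle-pointer structure of \Lemma{lemma:permutation} at threshold $k$; this adds $N\bigl[(\sigma/k)\log\sigma+\sigma+o(\sigma)\bigr] \leq (n/k)\log\sigma+n+o(n)$ bits, enabling $\pi_b^{-1}$ in $O(k)$ time while preserving $O(1)$ access and partial rank. A $\mathtt{select}_c(S,j)$ query then locates the target block $b$ and local rank $j'$ via $\mathtt{freq}_c$, computes the target position $C^b[c]+j'$ of $X^b$ via $B_b$, and inverts with $\pi_b^{-1}$ in $O(k)$ time. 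Swapping the roles of $\pi_b$ and $\pi_b^{-1}$ in the storage delivers the symmetric trade-off: $\mathtt{select}$ in $O(1)$, and $\mathtt{access}$ and $\mathtt{partialRank}$ in $O(k)$. Summing contributions yields $n\log\sigma(1+1/k)+5n+o(n)$ bits.

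The construction is driven by the static allocation strategy of \Sec{sec:staticAllocation}. A preliminary pass computes all $f(c)$ with $\sigma\log n=o(n)$ bits of counters, so that the exact layout of $\mathtt{freq}$, of every $B_b$, and of every $\pi_b$ in the final output is fixed before anything is written. A second pass fills $\mathtt{freq}$ using \Lemma{lemma:freq} and, for each block, fills $\pi_b$ and $B_b$ in $O(\sigma)$ time using a reused scratch array of $\sigma$ running counters that fits in $O(\sigma\log\sigma)=o(n)$ bits. Finally, rank/select indices on $\mathtt{freq}$ and each $B_b$, and the cycle-pointer structures of \Lemma{lemma:permutation} on every $\pi_b$, are installed in $O(n)$ total time and $o(n)$ bits of working space (the scratch needed by \Lemma{lemma:permutation} is $o(\sigma)$ bits per block and is reused). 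Since at query time the character at every position is recovered from $B_b$ and all frequency information is read from $\mathtt{freq}$, the original string $S$ is never consulted, giving the last claim of the lemma.
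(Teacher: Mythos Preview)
Your proposal is correct and follows essentially the same approach as the paper: the per-block bitvector $B_b$ you introduce is exactly the unary encoding of $X^b$ that the paper calls the stored form of $C^b$, and your use of $\pi_b$, $\mathtt{freq}$, and \Lemma{lemma:permutation} matches the paper's construction step for step. The only point the paper makes more explicit is that the individual $o(\sigma)$ overheads (for rank/select on each $B_b$ and for the bitvector inside each instance of \Lemma{lemma:permutation}) need not sum to $o(n)$ when $\sigma$ is small; the paper resolves this by concatenating all these bitvectors into one and indexing the concatenation once, which you already do for the $B_b$'s but should also state for the bitvectors inside the cycle-pointer structures.
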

\begin{proof}
In addition to the data structures built in Lemma \ref{lemma:freq}, we store $\pi_b$ explicitly for every $S^b$, spending overall $n\log{\sigma}$ bits of space. Note that $\pi_b$ can be computed from $S^b$ in linear time for all $b \in [1..N]$. We also store $C^b$ for every $S^b$ as a bitvector of $2\sigma$ bits that coincides with a unary encoding of $X^b$ (that is we store $10^{f(1,b)}10^{f(2,b)} \cdots 10^{f(\sigma,b)}$): given a position $i$ in block $S^b$, we can determine the character $c$ that satisfies $C^b[c] < \pi_{b}[i] \leq C^b[c+1]$ using a $\mathtt{select}$ and a $\mathtt{rank}$ query on such bitvector, thus implementing $\mathtt{access}$ to $S^b[i]$ in constant time. In turn, this allows one to implement $\mathtt{partialRank}_{S^b}(i)$ in constant time using Equation \ref{eq:pi}.

A $\mathtt{select}$ query on $C^b$, combined with the implicit representation of $\pi^{-1}_b$ described in Lemma \ref{lemma:permutation}, allows one to implement $\mathtt{select}$ on $S^b$ in $O(k)$ time, at the cost of $(\sigma/k) \log{\sigma} + \sigma + o(\sigma)$ bits of additional space per block. The complexity of $\mathtt{select}_{S^b}$ can be exchanged with that of $\mathtt{access}_{S^b}$ and $\mathtt{partialRank}_{S^b}$, by storing explicitly $\pi_{b}^{-1}$ rather than $\pi_{b}$.

Note that the individual lower-order terms $o(\sigma)$ needed to support rank and select queries on the bitvectors that encode $C^b$, and in the structures implemented by Lemma \ref{lemma:permutation}, do not necessarily add up to $o(n)$. Thus, for each of the two cases, we concatenate all the individual bitvectors, we index them for rank and/or select queries, and we simulate operations on each individual bitvector using operations on the bitvectors that result from the concatenation. 
\end{proof}


In some parts of the paper we will need an implementation of $\mathtt{rank}$ rather than of $\mathtt{partialRank}$, and to support this operation efficiently we will use predecessor queries. Given a set of sorted integers, a \emph{predecessor query} returns the index of the largest integer in the set that is smaller than or equal to a given integer provided in input. It is known that predecessor queries can be implemented efficiently, for example with the following data structure: 

\begin{lemma}[\cite{Wi83,hagerup2001deterministic}]\label{lemma:predecessor}
Given a sorted sequence of $n$ integers $x^1 < x^2 < \cdots < x^n$, where $x^i$ is encoded in $\log{U}$ bits for all $i \in [1..n]$, we can build in $O(n)$ time and in $O(n\log{U})$ bits of working space a data structure that takes $O(n\log{U})$ bits of space, and that answers predecessor queries in $O(\log{\log{U}})$ time. This data structure does not require the original sequence of integers to answer queries.
\end{lemma}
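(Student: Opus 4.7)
The plan is to construct Willard's y\nobreakdash-fast trie \cite{Wi83} on the input sequence, using the deterministic dictionary of \cite{hagerup2001deterministic} as the underlying hash-table primitive. First I partition the sorted sequence $x^1<\cdots<x^n$ into $m=\lceil n/\log U\rceil$ consecutive groups of $\Theta(\log U)$ keys each, and in each group I store the keys in a balanced binary search tree of height $O(\log\log U)$ supporting predecessor queries in $O(\log\log U)$ time. Each such BST uses $O(\log^2 U)$ bits for pointers and keys, so the BSTs together occupy $O(m\log^2 U)=O(n\log U)$ bits and can be built in $O(\log U)$ time per group, i.e. $O(n)$ time overall, since every group receives an already-sorted slice of the input.

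Next I designate one representative per group, say $r^i=x^{(i-1)\log U+1}$, and build an x\nobreakdash-fast trie over $r^1,\ldots,r^m$. This is the complete binary trie on the $\log U$-bit encodings of the representatives, together with $\log U+1$ dictionaries, one per level, that map each distinct $\ell$-bit prefix of a representative to the corresponding trie node; the leaves are linked in a doubly linked list, and every internal node stores pointers to the minimum and maximum representatives in its subtree. The total number of entries across all level-dictionaries is $O(m\log U)=O(n)$, and each entry occupies $O(\log U)$ bits, giving $O(n\log U)$ bits overall. To answer a query for $q\in[1..U]$, I binary-search on the $\log U+1$ levels for the deepest level whose dictionary contains the matching prefix of $q$: this reveals the lowest trie ancestor where $q$ diverges from the representatives, and a constant-time follow-up via the min/max pointers and the leaf list locates either the predecessor or successor representative. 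Having thereby identified the correct group, one more $O(\log\log U)$-time predecessor query on that group's BST yields the answer, for a total of $O(\log\log U)$ time.

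The main obstacle is the deterministic $O(n)$-time construction of the $\log U+1$ level-dictionaries, since the generic deterministic dictionary construction is a touch superlinear. Here I exploit the fact that the keys fed to each level-dictionary are already sorted (they are prefixes of the sorted representatives, each level's distinct prefixes occurring in non-decreasing order), which lets the construction of \cite{hagerup2001deterministic} run in time linear in the number of keys of that level; summed over all levels this is $O(m\log U)=O(n)$. A minor technicality, as in the end of the proof of Lemma~\ref{lemma:rank_select_access}, is to concatenate the per-level index bitvectors and index them jointly so that the lower-order terms of the individual dictionaries do not accumulate beyond the claimed $O(n\log U)$-bit budget. Working space during construction stays within $O(n\log U)$ bits because each level can be built, installed and then released before processing the next. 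Finally, the structure is self-contained: the representatives can be recovered by walking the leaf list of the x\nobreakdash-fast trie and the remaining keys live inside the per-group BSTs, so the original input sequence can be discarded after construction.
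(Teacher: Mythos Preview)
Your approach is the same as the paper's: the paper's entire proof is a two-sentence remark that one takes Willard's $y$-fast trie and replaces the randomized hash table by the deterministic dictionary of Hagerup et al., which is exactly the construction you spell out (group into blocks of $\Theta(\log U)$ keys with a BST per block, then an $x$-fast trie on the representatives). Your exposition is considerably more detailed than the paper's, which simply defers to the two citations.

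One point worth tightening: your justification for getting deterministic \emph{linear} construction time hinges on the claim that sorted input lets the dictionary of \cite{hagerup2001deterministic} be built in time linear in its number of keys. This is not a standard property of that construction (its bottleneck is selecting collision-free hash functions, which sorted input does not obviously accelerate), so as written this step is under-argued. The paper does not spell out this detail either, relying instead on the bare citation; if you want a self-contained argument you should either point to a specific linear-time deterministic static dictionary result, or observe that the total number of dictionary entries over all $\log U$ levels of the $x$-fast trie is $O(m\log U)=O(n)$ and invoke a construction whose time is linear in that total.
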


The original predecessor data structure described in~\cite{Wi83} (called \emph{$y$-fast trie}) 
has an expected linear time construction algorithm. Construction time is randomized, since the data structure uses a hash table. To obtain deterministic linear construction time, one can replace the hash table with a deterministic dictionary~\cite{hagerup2001deterministic}. 

Implementing rank queries amounts to plugging Lemma \ref{lemma:predecessor} into the block partitioning scheme of Lemma \ref{lemma:rank_select_access}:

\begin{lemma}\label{lemma:rank_select_access2}
Given a string of length $n$ over alphabet $[1..\sigma]$ and an integer $c>1$, we can build a data structure that takes $n\log{\sigma}(1+1/k) + 6n + O(n/\log^{c-1}{\sigma}) + o(n)$ bits of space for any positive integer $k$, and that supports:
\begin{itemize}
\item either $\mathtt{access}$ and $\mathtt{partialRank}$ in constant time, $\mathtt{select}$ in $O(k)$ time, and $\mathtt{rank}$ in $O(kc\log{\log{\sigma}})$ time;
\item or $\mathtt{access}$ and $\mathtt{partialRank}$ in $O(k)$ time, $\mathtt{select}$ in constant time, and $\mathtt{rank}$ in $O(c\log{\log{\sigma}})$ time.
\end{itemize}
This data structure can be built in $O(n)$ time and in $o(n)$ bits of working space, and it does not require the original string to support $\mathtt{access}$, $\mathtt{rank}$ and $\mathtt{select}$.
\end{lemma}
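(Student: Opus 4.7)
The plan is to extend the data structure of Lemma \ref{lemma:rank_select_access} by superimposing predecessor-based auxiliary structures, thereby upgrading its constant-time $\mathtt{partialRank}$ support to full $\mathtt{rank}_c$ queries for arbitrary characters $c$. First I would reduce a global rank query to a query inside a single block. Let $b=\lceil i/\sigma\rceil$ and $j=i-\sigma(b-1)$. Using the bitvector $\mathtt{freq}_c$ of Lemma \ref{lemma:freq}, indexed for constant-time $\mathtt{select}$ on its ones,
\[
\mathtt{rank}_c(S,i) \;=\; \bigl(\mathtt{select}_1(\mathtt{freq}_c,b)-b\bigr) \;+\; \mathtt{rank}_c^{S^b}(j),
\]
where the first summand counts the occurrences of $c$ in the blocks preceding block $b$ and is obtained in $O(1)$ time; it remains to support $\mathtt{rank}_c^{S^b}(j)$ locally.

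Second, I would observe that the sorted list of positions of $c$ inside $S^b$ is exactly the sequence $p_m := \pi_b^{-1}(C^b[c]+m)$ for $m=1,\ldots,f(c,b)$, because $\pi_b$ sends $S^b$ to the non-decreasing string $X^b$ (Equations~\ref{eq:pi}--\ref{eq:piInverse}); hence $\mathtt{rank}_c^{S^b}(j)$ is precisely the predecessor rank of $j$ in this increasing sequence. Under the variant of Lemma \ref{lemma:rank_select_access} that stores $\pi_b^{-1}$ explicitly, every $p_m$ is obtained in $O(1)$; under the other variant it costs $O(k)$.

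Third, since a dedicated predecessor structure per $(b,c)$ pair would waste $\Theta(n\log\sigma)$ bits, I would build $c-1$ sampling levels on top of the implicit sequences $(p_m)_m$. At level $\ell\in\{1,\ldots,c-1\}$, for every pair $(b,c)$, I sample every $\log^{\ell}\sigma$-th element and pack each sample in $\log\sigma$ bits; the level-$\ell$ samples across all pairs occupy $O(n/\log^{\ell-1}\sigma)$ bits, so the geometric sum across levels contributes $O(n)$ bits plus the stated $O(n/\log^{c-1}\sigma)$ bits at the coarsest level. On that coarsest level I plug in the $y$-fast trie of Lemma \ref{lemma:predecessor} to answer predecessor queries in $O(\log\log\sigma)$ time. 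A query $\mathtt{rank}_c^{S^b}(j)$ is then answered by locating $j$ in the topmost level via the $y$-fast trie, and then iteratively refining through levels $c-2,c-3,\ldots,0$ by binary-searching among the at most $\log\sigma$ candidates delimited by the pair of enclosing samples of the level above; each refinement spends $O(\log\log\sigma)$ comparisons and touches $O(1)$ entries of $\pi_b^{-1}$ (costing either $O(1)$ or $O(k)$ depending on the variant), which yields the claimed $O(c\log\log\sigma)$ or $O(kc\log\log\sigma)$ bounds.

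Construction takes $O(n)$ time per level via one scan of $\pi_b^{-1}$ per block, and Lemma \ref{lemma:predecessor} builds the coarsest-level trie in linear time; together with the static-allocation trick of \Sec{sec:staticAllocation} the working space stays in $o(n)$. The main obstacle I anticipate is bookkeeping: packing the per-pair sample lists into one contiguous array under tight space accounting, while still being able to reach the starting offset of any $(b,c)$-list in constant time. I would implement such a directory with a prefix-sum bitvector (Lemma \ref{lemma:prefixSums}) over the per-pair sample counts, analogously to how $\mathtt{freq}$ is used in Lemma \ref{lemma:rank_select_access}, and instantiate the $y$-fast trie as a single global structure so that the $o(n)$ lower-order terms of Lemma \ref{lemma:predecessor} are not charged once per pair.
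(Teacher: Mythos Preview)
Your reduction of $\mathtt{rank}_c(S,i)$ to a block-local predecessor query via $\mathtt{freq}_c$, together with the observation that the $m$-th occurrence of $c$ inside $S^b$ is $\pi_b^{-1}(C^b[c]+m)$, is exactly the starting point the paper uses. The divergence is in how you organise the predecessor search inside a block.

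You propose $c-1$ explicit sampling levels (rates $\log\sigma,\log^2\sigma,\ldots,\log^{c-1}\sigma$) with a $y$-fast trie at the top, then $c-1$ successive binary searches, each over $\log\sigma$ candidates. The paper instead keeps a \emph{single} sampling level at rate $\log^{c}\sigma$: it builds the predecessor structure of Lemma~\ref{lemma:predecessor} directly on those $O(n/\log^{c}\sigma)$ ``red'' samples (so the structure occupies the stated $O(n/\log^{c-1}\sigma)$ bits), answers one $O(\log\log\sigma)$ predecessor query there, and then performs one binary search over the $\le\log^{c}\sigma$ ``blue'' positions between two consecutive red samples, costing $O(c\log\log\sigma)$ select probes. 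Same query time, simpler bookkeeping.

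Your multi-level variant is sound in principle but does not meet the stated space bound as written. Level~$1$ alone stores $n/\log\sigma$ samples of $\log\sigma$ bits each, i.e.\ $\Theta(n)$ extra bits that are not accounted for in the $6n$ term (that $6n$ is already consumed by the $5n$ of Lemma~\ref{lemma:rank_select_access} plus the $n$ bits of the $\mathtt{which}_a$ bitvectors the paper adds). Moreover, putting the $y$-fast trie at level $c-1$ (i.e.\ on $n/\log^{c-1}\sigma$ samples of $\log\sigma$ bits) costs $O(n/\log^{c-2}\sigma)$ bits, not the claimed $O(n/\log^{c-1}\sigma)$. Both issues disappear if you drop the intermediate levels and sample once at rate $\log^{c}\sigma$, which is precisely the paper's choice. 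Your directory idea (prefix-sums to locate the per-$(b,c)$ structures, plus a bitvector marking which pairs actually carry a structure) matches the paper's $\mathtt{which}_a$ bitvectors and prefix-sum encoding.
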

\begin{proof}
As described in Lemma \ref{lemma:rank_select_access}, we divide the string $T$ into blocks of size $\sigma$ and we build bitvectors $\mathtt{freq}_a$ for every $a \in [1..\sigma]$. We support $\mathtt{rank}_{a}(i)$ as follows. Let $b$ be the block that contains position $i$, where blocks are indexed from zero. First, we determine the number of zeros in $\mathtt{freq}_a$ that precede the $b$-th one, by computing $\mathtt{select}_{\mathtt{freq}_a}(b,1)-b$. Then, if character $a$ occurs at most $\log^{c}{\sigma}$ times inside block $b$, we binary-search the list of zeros in block $b$ of $\mathtt{freq}_a$, using at each step a select query to convert the position of a zero inside block $b$ of $\mathtt{freq}_a$ into an occurrence of character $a$ in string $T$. This process takes $O(\tau c \log{\log{\sigma}})$ time, where $\tau$ is the time to perform a select query on $T$.

If character $a$ occurs more than $\log^{c}{\sigma}$ times inside block $b$ of $\mathtt{freq}_a$, we use a sampling strategy similar to the one described in \cite{Wi83}. Specifically, we sample the relative positions at which $a$ occurs inside block $b$, every $\log^{c}{\sigma}$ occurrences of a zero in $\mathtt{freq}_a$, and we encode such positions in the data structure described in Lemma \ref{lemma:predecessor}. 
Let us call the sampled positions of a block \emph{red positions}, and all other positions \emph{blue positions}. Since positions are relative to a block, the size of the universe is $\sigma$, thus the data structure of every block takes $O(m\log{\sigma})$ bits of space and it answers queries in time $O(\log{\log{\sigma}})$, where $m$ is the number of red positions of the block. We use the data structure of Lemma \ref{lemma:predecessor} to find the index $j$ of the red position of $a$ that immediately precedes position $i$ inside block $b$: this takes $O(\log{\log{\sigma}})$ time. Since we sampled red positions every $\log^{c}{\sigma}$ occurrences of $a$ in block $b$, we know that there are exactly $(j+1)\log^{c}{\sigma}-1$ zeros inside block $b$ before the $j$-th red position. Finally, we find the blue position that immediately precedes position $i$ inside block $b$ by binary-searching the set of $\log^{c}{\sigma}-1$ blue positions between two consecutive red positions, as described above, in time $O(\tau c \log{\log{\sigma}})$. 

With this strategy we build $O(n/\log^{c}{\sigma})$ data structures of Lemma \ref{lemma:predecessor}, containing in total $O(n/\log^{c}{\sigma})$ elements, thus the total space taken by all such data structures is $O(n/\log^{c-1}{\sigma})$ bits. Note also that all such data structures can be built using just $O(\sigma/\log^{c-1}{\sigma})$ bits of working space. For every character $a \in [1..\sigma]$, we store all data structures consecutively in memory, and we encode their starting positions in the prefix-sum data structure described in Lemma \ref{lemma:prefixSums}. All such prefix-sum data structures take overall $O(n\log{\log{\sigma}}/\log^{c}{\sigma})$ bits of space, and they can be built in $O(\log{n})$ bits of working space. We use also a bitvector $\mathtt{which}_a$ of size $\lceil n/\sigma \rceil$ to mark the blocks of $\mathtt{freq}_a$ for which we built a data structure of Lemma \ref{lemma:predecessor}. To locate the starting position of the data structure of a given block and character $a$, we use a rank query on $\mathtt{which}_a$ and we query the prefix-sum data structure in constant time. The bitvectors for all characters take overall $n+o(n)$ bits of space.
\end{proof}

In the space complexity of Lemma \ref{lemma:rank_select_access2}, we can achieve $5n$ rather than $6n$ by replacing the plain bitvectors $\mathtt{which}_a$ with the compressed bitvector representation described in \cite{Pa08}, which supports constant-time rank queries using $(c\log{\log{\sigma}}/\log^{c}{\sigma})n + O(n/\mbox{polylog}(n))$ bits. Lemma \ref{lemma:rank_select_access2} can be further improved by replacing binary searches with queries to the following data structure:

\begin{lemma}[\cite{grossi2009more}]\label{lemma:sbTree}
Given a sorted sequence of $n$ integers $x^1 < x^2 < \cdots < x^n$, where $x^i$ is encoded in $\log{U}$ bits for all $i \in [1..n]$, and given a constant $\epsilon<1$ and a lookup table of size $O(U^{\epsilon})$ bits, we can build in $O(n)$ time and in $O(n\log{U})$ bits of working space, a data structure that takes $O(n\log{\log{U}})$ bits of space, and that answers predecessor queries in $O(t/\epsilon)$ time, where $t$ is the time to access an element of the sorted sequence of integers. The lookup table can be built in polynomial time on its size.
\end{lemma}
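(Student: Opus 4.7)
The plan is to construct a Patricia-style compacted trie over the binary representations of the $n$ integers, grouped so that each level of the trie handles $\epsilon \log U$ bits at a time. This naturally yields a tree of height $1/\epsilon$, which is where the $O(1/\epsilon)$ factor in the query time will come from. I would build this trie in $O(n)$ time by sweeping the already-sorted sequence once and computing longest common prefixes between consecutive keys: the internal branching nodes of a Patricia trie are in bijection with the positions where such LCPs change, so the skeleton is determined directly from the LCP profile.

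The critical space-saving step is to store only the \emph{skeleton} of the trie, not the full edge labels. Each internal branching node records the bit-position at which its children discriminate (requiring $O(\log \log U)$ bits, since there are only $\log U$ possible positions), together with a pointer into the sorted sequence identifying one representative leaf per subtree. With $O(n)$ internal nodes this totals $O(n \log \log U)$ bits, matching the claimed bound. The lookup table of size $O(U^\epsilon)$ bits tabulates, for every $\epsilon \log U$-bit pattern together with a local node descriptor, the child to descend into, and can be built in polynomial time in $U^\epsilon$ by exhaustive enumeration.

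A predecessor query then proceeds by \emph{blind descent}: at each of the $1/\epsilon$ levels, the lookup table converts the next $\epsilon \log U$ bits of the query into a child index in $O(1)$ time. Upon reaching a candidate leaf, the query is compared against the underlying integer (accessed in time $t$) to verify correctness. If the candidate is indeed the predecessor we are done; otherwise the first mismatch bit between the query and the candidate locates the true discriminating position, and a standard Patricia-style fix-up -- climbing to the deepest ancestor whose stored bit-position lies above the mismatch -- identifies the correct predecessor in $O(1/\epsilon)$ further skeleton traversals. Total query time is $O(1/\epsilon + t) = O(t/\epsilon)$.

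The main obstacle will be making the $O(n \log \log U)$ space bound co-exist with lookup-table-driven constant-time descent: one cannot afford a full fan-out-$U^\epsilon$ child table per node, so the lookup table must be \emph{global} and the per-node information (essentially just the discriminating bit-position and a representative pointer) must suffice to redirect a tabulated pattern into the correct child, for example via rank/select on a global bitvector encoding all branching positions. The working-space bound of $O(n \log U)$ bits is comfortably met by maintaining the trie in a pointer-based form during construction and then compacting it into the final skeletal representation.
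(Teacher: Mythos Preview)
The paper does not prove this lemma; it is quoted verbatim from \cite{grossi2009more} and used as a black box. So there is no ``paper's own proof'' to compare against. That said, your sketch is worth examining on its merits.

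Your high-level picture --- a succinct Patricia-style structure navigated via a global lookup table, with one key access per level for verification --- is the right intuition for the Grossi et~al.\ construction. But the specific realisation you describe has a gap. You propose a single multiway Patricia trie of radix $U^{\epsilon}$ and height $1/\epsilon$, storing per internal node only ``the bit-position at which its children discriminate'' in $O(\log\log U)$ bits. In a radix-$U^{\epsilon}$ trie, however, a node does not branch on a single bit: it branches on an entire $\epsilon\log U$-bit chunk, and may have up to $\min(n,U^{\epsilon})$ children. To route a query through such a node you must know \emph{which} chunk values are present and \emph{where} each child lives; that information cannot be compressed into $O(\log\log U)$ bits per node, nor can it serve as part of a table index bounded by $O(\epsilon\log U)$ bits. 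Your ``local node descriptor'' is therefore doing work that has not been accounted for.

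The actual construction in \cite{grossi2009more} is two-level: an outer balanced B-tree whose branching factor $B$ is chosen so that the \emph{entire} Patricia-trie encoding of a single B-tree node --- $O(B\log\log U)$ bits --- fits inside $\epsilon\log U$ bits, i.e.\ $B=\Theta(\epsilon\log U/\log\log U)$. That encoding, concatenated with the relevant query bits, then legitimately indexes the $O(U^{\epsilon})$-sized lookup table, giving constant-time navigation within a node. One key access per B-tree level verifies the blind descent, yielding the $O(t/\epsilon)$ bound (under the parameter regime in which the lemma is invoked in this paper, namely $n\le\mathrm{polylog}\,U$, so that the B-tree height is indeed $O(1/\epsilon)$). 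Your sketch conflates these two levels into one, which is where the descriptor-size argument breaks down.
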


\begin{lemma}\label{lemma:rank_select_access3}
Given a string of length $n$ over alphabet $[1..\sigma]$, we can build a data structure that takes $n\log{\sigma}(1+1/k) + O(n\log{\log{\sigma}})$ bits of space for any positive integer $k$, and that supports:
\begin{itemize}
\item either $\mathtt{access}$ and $\mathtt{partialRank}$ in constant time, $\mathtt{select}$ in $O(k)$ time, and $\mathtt{rank}$ in $O(\log{\log{\sigma}}+k)$ time;
\item or $\mathtt{access}$ and $\mathtt{partialRank}$ in $O(k)$ time, $\mathtt{select}$ in constant time, and $\mathtt{rank}$ in $O(\log{\log{\sigma}})$ time.
\end{itemize}
This data structure can be built in $O(n)$ time and in $o(n)$ bits of working space, and it does not require the original string to support $\mathtt{access}$, $\mathtt{rank}$ and $\mathtt{select}$.
\end{lemma}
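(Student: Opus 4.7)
The plan is to keep the block decomposition and the permutation machinery of the proof of Lemma \ref{lemma:rank_select_access2}, but to drop its red/blue sampling and replace every $y$-fast trie (Lemma \ref{lemma:predecessor}) by an sbTree (Lemma \ref{lemma:sbTree}). Since an sbTree on $m$ elements from a universe of size $U$ takes only $O(m\log\log U)$ bits versus $\Theta(m\log U)$ for a $y$-fast trie, we can now afford to index \emph{every} intra-block occurrence of every character, rather than only a $\log^{c}\sigma$-sparse sample. This simultaneously eliminates the binary-search step of Lemma \ref{lemma:rank_select_access2} and the factor $c\log\log\sigma$ it contributed to the rank time.

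Concretely, I first build the structures of Lemma \ref{lemma:rank_select_access}: partition $S$ into $N=\lceil n/\sigma\rceil$ blocks, store $\mathtt{freq}$ as in Lemma \ref{lemma:freq}, and store either $\pi_b$ or $\pi_b^{-1}$ explicitly for each block. These already deliver the two $(\mathtt{access},\mathtt{partialRank},\mathtt{select})$ tradeoffs in $n\log\sigma(1+1/k)+O(n)$ bits. On top of them, for every pair $(a,b)$ with $f(a,b)>0$ I would build an sbTree on the $f(a,b)$ intra-block positions of $a$ inside $S^b$, treated as integers from the universe $[1..\sigma]$, with a \emph{single} lookup table of size $O(\sigma^{\epsilon})=o(n)$ bits shared across all sbTrees (legal because all block-local positions live in the same universe). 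Summing $O(f(a,b)\log\log\sigma)$ over all pairs telescopes to the claimed $O(n\log\log\sigma)$ bits of extra space. Answering $\mathtt{rank}_a(i)$ then splits into a prefix over whole blocks, $\mathtt{select}_{\mathtt{freq}_a}(b,1)-b$ with $b=\lceil i/\sigma\rceil$ in constant time, plus a single predecessor query on the $(a,b)$-sbTree at local coordinate $i-\sigma(b-1)$, returning both the predecessor and its rank in the sorted sequence. The sbTree query costs $O(t/\epsilon)$, where $t$ is the time to access an element of its sorted sequence; in our application $t$ is a $\mathtt{select}$ on $S^b$ provided by the base structure, i.e.\ $O(1)$ when $\pi_b^{-1}$ is stored explicitly and $O(k)$ when $\pi_b$ is stored explicitly. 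Combined with the $\Theta(\log\log\sigma)$ overhead intrinsic to the sbTree, this yields rank in $O(\log\log\sigma)$ and $O(\log\log\sigma+k)$ respectively, while $\mathtt{access}$, $\mathtt{partialRank}$ and $\mathtt{select}$ retain the costs inherited from Lemma \ref{lemma:rank_select_access}.

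For the construction I would process blocks one at a time: for each $S^b$, scan it once to bucket positions by character, feed each nonempty bucket to the sbTree builder, append the result to a contiguous memory region whose per-$(a,b)$ offsets are encoded by the prefix-sum structure of Lemma \ref{lemma:prefixSums}, and then release the buckets. Under the assumption $\sigma\in o(\sqrt{n}/\log n)$, the per-block bucketing plus the $O(f(a,b)\log\sigma)$ working space required by Lemma \ref{lemma:sbTree} for a single sbTree never exceed $O(\sigma\log\sigma)=o(\sqrt{n})$ bits at any time, so the $o(n)$ working-space bound holds; the sbTree builder runs in time linear in its input, for a total construction time of $O(n)$. The main care point is the space accounting: verifying that the lookup table is shared globally rather than per pair, that the lower-order $o(n)$ terms from the constant-time rank/select indexes on the auxiliary bitvectors (concatenated as in Lemma \ref{lemma:rank_select_access}) stay within the $O(n\log\log\sigma)$ budget, and that the $\Theta(\log\log U)$ component implicit in the sbTree's $O(t/\epsilon)$ bound is precisely what produces the $\log\log\sigma$ summand appearing in the stated rank times.
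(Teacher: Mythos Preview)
Your argument is correct, but it takes a different and in fact simpler route than the paper. The paper keeps the full two-level scheme of Lemma~\ref{lemma:rank_select_access2}: it retains the $y$-fast tries on the red (sampled) positions and only replaces the binary search over each run of $\log^{c}\sigma-1$ blue positions by an sbTree from Lemma~\ref{lemma:sbTree}. A rank query there first hits a $y$-fast trie in $O(\log\log\sigma)$ time and then an sbTree in $O(\tau/\epsilon)$ time, and it is the $y$-fast trie step that produces the additive $\log\log\sigma$ in the statement. You instead discard the red/blue sampling entirely and build one sbTree per nonempty pair $(a,b)$ over all $f(a,b)$ intra-block occurrences; since $\sum_{a,b}f(a,b)=n$, the total sbTree space is still $O(n\log\log\sigma)$, the single shared lookup table of $O(\sigma^{\epsilon})=o(n)$ bits is legitimate, and the $O(n)$-entry prefix-sum directory for addressing fits in $O(n\log\log\log\sigma)\subseteq O(n\log\log\sigma)$ bits. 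Your construction is cleaner, and your rank time is actually $O(\tau/\epsilon)=O(\tau)$ with no additive $\log\log\sigma$: the ``$\Theta(\log\log\sigma)$ overhead intrinsic to the sbTree'' you invoke is not present in Lemma~\ref{lemma:sbTree} as stated, whose query cost is purely $O(t/\epsilon)$. This is harmless here since you only over-report your own bound, but it is worth noting that the $\log\log\sigma$ summand in the lemma is an artefact of the paper's two-level scheme rather than of the sbTree itself.
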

\begin{proof}
We proceed as in Lemma \ref{lemma:rank_select_access2}, but we build the data structure of Lemma \ref{lemma:sbTree} on every sequence of consecutive $\log^{c}{\sigma}-1$ blue occurrences of $a$ inside the same block $b$. Every such data structure uses $O(\log^{c}{\sigma} \cdot \log{\log{\sigma}})$ bits of space, and a $O(\tau/\epsilon)$-time predecessor query to such a data structure replaces the binary search over the blue positions performed in Lemma \ref{lemma:sbTree}, where $\tau$ is the time to perform a select query on $T$. The total time to build all the data structures of Lemma \ref{lemma:sbTree} for all blocks and for all characters is $O(n)$. All such data structures take overall $O(n\log{\log{\sigma}})$ bits of space, and they all share the same lookup table of size $o(\sigma)$ bits, which can be built in $o(\sigma)$ time by choosing $\epsilon$ small enough. We also build, in $O(n)$ time, the prefix-sum data structure of Lemma \ref{lemma:prefixSums}, which allows constant-time access to each data structure of Lemma \ref{lemma:sbTree}.
\end{proof}

\subsection{Representing the topology of suffix trees}\label{sec:topology}

It is well known that the topology of an ordered tree $T$ with $n$ nodes can be represented using $2n+o(n)$ bits, as a sequence of $2n$ balanced parentheses built by opening a parenthesis, by recurring on every child of the current node in order, and by closing a parenthesis~\cite{munro2001succinct}. To support tree operations on such representation, we will repeatedly use the following data structure:

\begin{lemma}[\cite{SNsoda10,NStalg14}] \label{lemma:balancedParentheses}
Let $T$ be an ordered tree with $n$ nodes, and let $\mathtt{id}(v)$ be the rank of a node $v$ in the preorder traversal of $T$. Given the balanced parentheses representation of $T$ encoded in $2n+o(n)$ bits, we can build a data structure that takes $2n+o(n)$ bits, and that supports the following operations in constant time:
\begin{itemize}
\item $\mathtt{child}(\mathtt{id}(v),i)$: returns $\mathtt{id}(w)$, where $w$ is the $i$th child of node $v$ ($i \geq 1$), or $\emptyset$ if $v$ has less than $i$ children;
\item $\mathtt{parent}(\mathtt{id}(v))$: returns $\mathtt{id}(u)$, where $u$ is the parent of $v$, or $\emptyset$ if $v$ is the root of $T$;
\item $\mathtt{lca}(\mathtt{id}(v),\mathtt{id}(w))$: returns $\mathtt{id}(u)$, where $u$ is the lowest common ancestor of nodes $v$ and $w$;
\item $\mathtt{leftmostLeaf}(\mathtt{id}(v))$, $\mathtt{rightmostLeaf}(\mathtt{id}(v))$: returns one plus the number of leaves that, in the preorder traversal of $T$, are visited before the first (respectively, the last) leaf that belongs to the subtree of $T$ rooted at $v$;
\item $\mathtt{selectLeaf}(i)$: returns $\mathtt{id}(v)$, where $v$ is the $i$-th leaf visited in the preorder traversal of $T$;
\item $\mathtt{depth}(\mathtt{id}(v))$, $\mathtt{height}(\mathtt{id}(v))$: returns the distance of $v$ from the root or from its deepest descendant, respectively;
\item $\mathtt{ancestor}(\mathtt{id}(v),d)$: returns $\mathtt{id}(u)$, where $u$ is the ancestor of $v$ at depth $d$;
\end{itemize}
This data structure can be built in $O(n)$ time and in $O(n)$ bits of working space.
\end{lemma}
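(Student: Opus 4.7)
The plan is to encode the tree $T$ as a balanced parentheses (BP) sequence $P[1 \ltdots 2n]$, obtained by a preorder traversal that writes an opening parenthesis at every node before recursing into its children and a closing parenthesis afterwards. Identify $\mathtt{id}(v)$ with the position of the opening parenthesis of $v$ in $P$, and let $E(i) = (\text{number of ``('' in } P[1\ltdots i]) - (\text{number of ``)'' in } P[1\ltdots i])$ be the excess, so that $E(\mathtt{id}(v)) - 1$ equals $\mathtt{depth}(v)$. I would equip $P$ with constant-time $\mathtt{rank}$/$\mathtt{select}$ support on both parenthesis types using the standard $n + o(n)$ bit representation, and build on top of it the range-min-max tree (rmM-tree) of Sadakane and Navarro: group $P$ into blocks of size roughly $(\log n)/2$, build a tree of high branching factor over the blocks, and store at each internal node the minimum, maximum and total excess of its range. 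This auxiliary structure takes $o(n)$ bits and, with precomputed lookup tables of $o(n)$ bits for in-block queries on all possible $(\log n)/2$-bit patterns, supports in constant time the primitives $\mathtt{findclose}$, $\mathtt{findopen}$, $\mathtt{enclose}$, $\mathtt{fwdsearch}(i,d)$ and $\mathtt{bwdsearch}(i,d)$ (the nearest position after/before $i$ where the excess changes by exactly $d$), and range minimum/maximum queries over $E$.

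Next I would reduce every requested tree operation to a constant number of these primitives. $\mathtt{parent}(v) = \mathtt{enclose}(v)$; $\mathtt{depth}(v) = E(v) - 1$; $\mathtt{child}(v,i)$ is obtained by locating the $i$-th position in $[v+1 \ltdots \mathtt{findclose}(v)-1]$ whose excess equals $E(v)+1$, which reduces to an $i$-th-child query answered in constant time through the rmM-tree; $\mathtt{lca}(v,w)$ is computed via the range minimum of $E$ on $[\min(v,w) \ltdots \max(v,w)]$ (returning the parent of the first opening parenthesis at that minimum level, with the usual fix-up when one node is already an ancestor of the other); $\mathtt{ancestor}(v,d) = \mathtt{bwdsearch}(v, d+1-E(v))$; and $\mathtt{height}(v)$ equals the range maximum of $E$ over the subtree interval $[v \ltdots \mathtt{findclose}(v)]$, minus $E(v)$. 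For the leaf-related operations I would additionally mark in a bitvector $L[1 \ltdots 2n]$ the positions of the openings of leaves, which correspond exactly to the $n$ occurrences of the pattern ``()'' in $P$, and index $L$ for constant-time $\mathtt{rank}$/$\mathtt{select}$: then $\mathtt{leftmostLeaf}(v) = \mathtt{rank}_L(v-1)+1$, $\mathtt{rightmostLeaf}(v) = \mathtt{rank}_L(\mathtt{findclose}(v))$ and $\mathtt{selectLeaf}(i) = \mathtt{select}_L(i)$.

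For construction, I would build $P$ in a single $O(n)$-time traversal of $T$ using the static allocation strategy of Section~\ref{sec:staticAllocation} so that $P$ occupies a single contiguous array of $2n$ bits, and then build the rank/select indexes on $P$ in linear time and $o(n)$ working bits. The rmM-tree is built bottom-up by a single linear scan, computing the per-block excess min, max and total in $O(n)$ time; since the tree has only $O(n/\log n)$ nodes each holding $O(\log n)$ bits, it fits in $o(n)$ bits. The lookup tables, being indexed by $(\log n)/2$-bit patterns, take $o(n)$ bits and can be precomputed in $o(n)$ time. Finally, $L$ is produced in a second scan of $P$ that marks the first position of every matched pair ``()''.

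The main obstacle is the constant-time support of operations that intrinsically require searching a long range of $P$, namely $\mathtt{ancestor}$, $\mathtt{height}$, $\mathtt{child}$ (for the $i$-th child) and the excess range minimum used by $\mathtt{lca}$. A naive rmM-tree with fan-out $\Theta(\log n)$ yields only $O(\log n / \log\log n)$-time queries, so the key technical point is to exploit the fact that consecutive values of $E$ differ by exactly $\pm 1$: this lets a precomputed table of size $o(n)$ answer, in constant time, any forward/backward excess search and any range min/max inside a single block, and it lets the upper levels of the rmM-tree be traversed in constant time by using superblocks of size $\Theta(\log^2 n)$ together with the standard two-level scheme. Verifying that all the reductions above can be combined with this two-level rmM-tree without blowing up the total space beyond $2n+o(n)$ bits is the delicate part of the argument.
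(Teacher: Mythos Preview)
The paper does not prove this lemma at all: it is stated with a citation to Sadakane and Navarro and used as a black box, so there is no ``paper's own proof'' to compare against. Your sketch is a faithful outline of exactly that cited construction (the range min--max tree over the BP sequence, plus rank/select on the leaf bitvector), so in substance you are reproducing the referenced result rather than diverging from the paper.

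Two small points of precision. First, the lemma defines $\mathtt{id}(v)$ as the preorder \emph{rank} of $v$ (an integer in $[1..n]$), not as the position of its opening parenthesis in $P$ (an integer in $[1..2n]$); your reductions are written in terms of the latter, so you should make explicit the constant-time translation $\mathtt{id}(v) \leftrightarrow \mathtt{select}_{\texttt{(}}(P,\mathtt{id}(v))$ via rank/select on opening parentheses. Second, the lemma already takes the BP sequence as input, so the paragraph about building $P$ by traversing $T$ is superfluous. Your closing caveat is well placed: the genuinely nontrivial part of Navarro--Sadakane is pushing $\mathtt{fwdsearch}$/$\mathtt{bwdsearch}$, range min/max, and the $i$th-child primitive down to $O(1)$ rather than $O(\log n)$ or $O(\log n/\log\log n)$, and that requires the ladder/multi-level machinery layered on top of the basic rmM-tree, not just ``superblocks of size $\Theta(\log^2 n)$''; if you intend this as a full proof rather than a pointer to the literature, that part needs to be spelled out.
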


Note that the operations supported by Lemma \ref{lemma:balancedParentheses} are enough to implement a preorder traversal of $T$ in small space, as described by the following folklore lemma:

\begin{lemma} \label{lemma:spaceEfficientPreorder}
Let $T$ be an ordered tree with $n$ nodes, and let $\mathtt{id}(v)$ be the rank of a node $v$ in the preorder traversal of $T$. Assume that we have a representation of $T$ that supports the following operations:
\begin{itemize}
\item $\mathtt{firstChild}(\mathtt{id}(v))$: returns the identifier of the first child of node $v$ in the order of $T$;
\item $\mathtt{nextSibling}(\mathtt{id}(v))$: returns the identifier of the child of the parent of node $v$ that follows $v$ in the order of $T$;
\item $\mathtt{parent}(\mathtt{id}(v))$: returns the identifier of the parent of node $v$.
\end{itemize}
Then, a preorder traversal of $T$ can be implemented using $O(\log{n})$ bits of working space.
\end{lemma}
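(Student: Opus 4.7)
The plan is to show that the classical recursive preorder traversal can be de-recursified using only the three navigational primitives assumed in the hypothesis, so that the only state carried between steps is a constant number of node identifiers (plus a constant number of bits of control state), each of which fits in $\lceil \log n \rceil$ bits.

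Concretely, I would maintain a current node $v$, initialized to the root, and repeat the following loop. Visit $v$, and attempt to descend: if $\mathtt{firstChild}(v) \neq \emptyset$, set $v \leftarrow \mathtt{firstChild}(v)$ and continue. Otherwise, we need to climb until we find a node whose next sibling exists: loop setting $v \leftarrow \mathtt{parent}(v)$ as long as $\mathtt{nextSibling}(v) = \emptyset$ and $v$ is not the root; as soon as $\mathtt{nextSibling}(v) \neq \emptyset$, set $v \leftarrow \mathtt{nextSibling}(v)$ and continue the outer loop; if we climb all the way to the root without finding a next sibling, terminate. To avoid visiting a node twice, one can either (i) maintain a single bit distinguishing the two phases ``descending'' vs.\ ``climbing,'' or (ii) rely on the fact that the visit is performed exactly once, immediately before trying to descend, and the climbing phase never visits nodes.

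For correctness, I would observe that this is precisely the standard preorder enumeration: each node $v$ is visited exactly once, and children are explored from left to right via repeated application of $\mathtt{nextSibling}$, matching $\mathtt{firstChild}$ followed by sibling iteration. The fact that we can always recover the ``next sibling to try'' from the current node via $\mathtt{parent}$ and $\mathtt{nextSibling}$ removes the need for an explicit stack; this works because the tree is given as an ordered tree and both operations are deterministic. For the running time (not required by the statement, but worth noting), each edge of $T$ is traversed at most a constant number of times, so the whole enumeration takes $O(n)$ operations.

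For the space bound, the only variables the algorithm stores are the current node identifier, at most one scratch identifier produced by a call to $\mathtt{firstChild}$, $\mathtt{nextSibling}$ or $\mathtt{parent}$, and the constant-bit phase flag. Since node identifiers range over $[1 \ltdots n]$ and therefore fit in $\lceil \log n \rceil$ bits each, the total working space is $O(\log n)$ bits as claimed. The only subtle point—and the one place I would be careful—is the termination condition: one must not call $\mathtt{nextSibling}$ on the root, and one must distinguish the case where the climb exhausts the ancestors of the initial descent from a legitimate sibling step; this is handled cleanly by testing $\mathtt{parent}(v) = \emptyset$ before recursing upward.
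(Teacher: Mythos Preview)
Your proposal is correct and follows essentially the same approach as the paper: maintain the current node identifier plus one direction bit (your ``descending'' vs.\ ``climbing'' flag), use $\mathtt{firstChild}$ to go down, and use $\mathtt{nextSibling}$/$\mathtt{parent}$ to backtrack without an explicit stack. The paper phrases the same state machine slightly differently---it tracks whether the current node was reached ``from above'' (parent or previous sibling) or ``from below'' (last child)---but the invariant, the transitions, and the $O(\log n)$-bit space analysis are identical to yours.
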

\begin{proof}
During a preorder traversal of $T$ we visit every leaf exactly once, and every internal node exactly twice. Specifically, we visit a node $v$ from its parent, from its previous sibling, or from its last child in the order of $T$. If we visit $v$ from its parent or from its previous sibling, in the next step of the traversal we will visit the first child of $v$ from its parent -- or, if $v$ has no child, we will visit the next sibling of $v$ from its previous sibling if $v$ is not the last child of its parent, otherwise we will visit the parent of $v$ from its last child. If we visit $v$ from its last child, in the next step of the traversal we will visit the next sibling of $v$ from its previous sibling -- or, if $v$ has no next sibling, we will visit the parent of $v$ from its last child. Thus, at each step of the traversal we need to store just $\mathtt{id}(v)$ and a single bit that encodes the direction in which we visited $v$.
\end{proof}

In this paper we will repeatedly traverse trees in preorder. Not surprisingly, the trees we will be interested in are suffix trees or \emph{contractions} of suffix trees, induced by selecting a subset of the nodes of a suffix tree and by connecting such nodes using their ancestry relationship (the parent of a node in the contracted tree is the nearest selected ancestor in the original tree). We will thus repeatedly need the following space-efficient algorithm for building the balanced parentheses representation of a suffix tree:

\begin{lemma}\label{lemma:tree_topology}
Let $S \in [1..\sigma]^{n-1}$ be a string. Assume that we are given an algorithm that enumerates all the intervals of $\SA_{S\#}$ that correspond to an internal node of $\ST_{S\#}$, in $t$ time per interval. Then, we can build the balanced parentheses representation of the topology of $\ST_{S\#}$ in $O(nt)$ time and in $O(n)$ bits of working space.
\end{lemma}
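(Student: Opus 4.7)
The plan rests on a simple structural identity for the balanced parentheses representation. For $i\in[1\ltdots n]$ let $o_i$ be the number of internal nodes $v$ of $\ST_{S\#}$ with $\SP{v}=i$, and $c_i$ the number with $\EP{v}=i$. Unrolling the depth-first traversal that defines the balanced parentheses, one checks that between the close of leaf $i-1$ and the open of leaf $i$ (for $i\geq 2$) the traversal outputs exactly $c_{i-1}$ close parentheses (one per internal node whose subtree ends at position $i-1$) followed by exactly $o_i$ open parentheses (one per internal node whose subtree starts at position $i$), and that there are $o_1$ opens before leaf $1$ and $c_n$ closes after leaf $n$. Hence the balanced parentheses representation of $\ST_{S\#}$ is exactly the string
\[
\mathtt{(}^{o_1}\mathtt{()}\mathtt{)}^{c_1}\mathtt{(}^{o_2}\mathtt{()}\mathtt{)}^{c_2}\cdots\mathtt{(}^{o_n}\mathtt{()}\mathtt{)}^{c_n},
\]
of total length $2n+2N \leq 4n-2$ by \Obs{obs:suffixtree}, where $N \leq n-1$ is the number of internal nodes. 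Crucially, inside every block all opens are the same character \texttt{(} and all closes are \texttt{)}, so the order in which distinct internal nodes contribute their parentheses within a block does not affect the output string; the whole construction therefore reduces to computing the two count sequences $o$ and $c$ in compact form and then streaming them out.

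The algorithm would then run in three stages. A first enumeration pass tallies $N$, which allows the static allocation strategy of \Sec{sec:staticAllocation} to reserve, inside a single contiguous $O(n)$-bit region, two bitvectors $B_o$ and $B_c$ of length $n+N \leq 2n-1$ each, intended to hold the unary-coded sequences $\mathtt{1}^{o_1}\mathtt{0}\mathtt{1}^{o_2}\mathtt{0}\cdots\mathtt{1}^{o_n}\mathtt{0}$ and $\mathtt{1}^{c_1}\mathtt{0}\mathtt{1}^{c_2}\mathtt{0}\cdots\mathtt{1}^{c_n}\mathtt{0}$. Additional enumeration passes then populate $B_o$ and $B_c$; once they are indexed for constant-time $\mathtt{select}_0$ via the techniques of \Sec{sect:rankandselect}, we have $o_i=\mathtt{select}_0(B_o,i)-\mathtt{select}_0(B_o,i-1)-1$ and similarly for $c_i$. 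A final left-to-right sweep over $i=1,\ldots,n$ retrieves $o_i$ and $c_i$ and writes the block $\mathtt{(}^{o_i}\mathtt{()}\mathtt{)}^{c_i}$ into the output in $O(n)$ additional time.

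The main obstacle is the middle stage: storing the full multiset of enumerated intervals as raw pairs $(\SP{v},\EP{v})$ would need $\Omega(n\log n)$ bits, and keeping $n$ independent $\log n$-bit counters for $o$ and $c$ is likewise out of budget. My plan is to handle this by bucketing positions into $O(n/\log n)$ coarse buckets and orchestrating a constant number of enumeration passes under the static allocation strategy. One intermediate pass computes, for each coarse bucket, the number of intervals whose $\SP$ (respectively $\EP$) falls into it; these bucket totals are kept inside the Elias-Fano-style prefix-sum structure of \Lemma{lemma:prefixSums} in $O(n)$ bits and fix the offsets within $B_o$ and $B_c$ at which the chunk for each bucket must be written. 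A further pass then fills the chunks using per-bucket counters which, being charged only against their bucket's share of the global sum $N\leq n-1$, collectively fit in $O(n)$ bits. Because the number of enumeration passes remains constant and each per-interval update is $O(1)$, the total running time is $O(nt)$, and the working space never exceeds $O(n)$ bits.
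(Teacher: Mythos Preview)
Your structural reduction is correct and is exactly the one the paper uses: both observe that the balanced-parentheses string equals $\prod_{i=1}^{n}\mathtt{(}^{o_i+1}\mathtt{)}^{c_i+1}$ (your $\mathtt{(}^{o_i}\mathtt{()}\mathtt{)}^{c_i}$ is the same thing), so everything reduces to computing the count arrays $o$ and $c$ under the $O(n)$-bit budget. The gap is in the one step that carries all the weight.

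You write that ``a further pass then fills the chunks using per-bucket counters which, being charged only against their bucket's share of the global sum $N\leq n-1$, collectively fit in $O(n)$ bits.'' But each coarse bucket spans $\Theta(\log n)$ positions, and to write its chunk of $B_o$ correctly you must know the individual $o_i$ for every one of those positions, not just the bucket total $t_j$. Keeping $\log n$ separate incrementable counters per bucket, each sized at $\lceil\log(t_j{+}1)\rceil$ bits (you cannot size them smaller without knowing the per-position counts in advance), costs $\Theta(n\log\log n)$ bits when the $t_j$ are balanced; recording a within-bucket offset for each of the $N$ enumerated intervals costs $\Theta(N\log\log n)$ bits; and a single append pointer per bucket writes the ones in enumeration order rather than grouped by position, so the zero separators cannot be placed. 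The amortization you invoke bounds the length of each \emph{chunk}, not the working space needed to compute it on the fly in one pass.

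The paper's mechanism is different in a way that matters. It partitions into blocks of size $b=\log\log n$ (not $\log n$), runs one enumeration with a capped $c$-bit counter per block, allocates $b\log n$ raw bits to each of the at most $n/(2^c-1)$ \emph{saturated} blocks and a packed Elias-gamma encoding of all $b$ counters to each non-saturated block, and then runs a second enumeration in which a single counter inside a non-saturated block is incremented in $O(1)$ time via a table lookup keyed on the entire packed block state. That state fits in $O((\log\log n)^2)$ bits, so the table is $o(n)$. The very small block size is essential: with $\log n$-sized buckets the packed state already has $\Omega(\log n)$ bits, forcing a table of $\Omega(n\log n)$ bits. Your plan becomes correct if you replace the coarse bucketing by this small-block plus lookup-table scheme, or supply some other concrete device for the within-bucket counting step.
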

\begin{proof}
We assume without loss of generality that $\log n$ is a power of two. We associate two counters to every position $i \in [1..n]$, one containing the number of open parentheses and the other containing the number of closed parentheses at $i$. We implement such counters with two arrays $C_o[1..n]$ and $C_c[1..n]$. Given the interval $[i..j]$ of an internal node of $\ST_{S\#}$, we just increment $C_o[i]$ and $C_c[j]$. Once all such intervals have been enumerated, we scan $C_o$ and $C_c$ synchronously, and for each $i \in [1..n]$ we write $C_o[i]+1$ open parentheses followed by $C_c[i]+1$ closed parentheses. The total number of parentheses in the output is at most $2(2n-1)$.

A naive implementation of this algorithm would use $O(n\log n)$ bits of working space: we achieve $O(n)$ bits using the static allocation strategy described in Section \ref{sec:staticAllocation}. Specifically, we partition $C_o[1..n]$ into $\lceil n/b \rceil$ blocks containing $b>1$ positions each (except possibly for the last block, which might be smaller), and we assign to each block a counter of $c$ bits. Then, we enumerate the intervals of all internal nodes of the suffix tree, incrementing counter $\lceil i/b \rceil$ every time we want to increment position $i$ in $C_o$. If a counter reaches its maximum value $2^c - 1$, we stop incrementing it and we call \emph{saturated} the corresponding block. The space used by all such counters is $\lceil n/b \rceil \cdot c$ bits, which is $O(n)$ if $c$ is a constant multiple of $b$. At the end of this process, we allocate a memory area of size $b \log n$ bits to each saturated block, so that every position $i$ in a saturated block has $\log n$ bits available to store $C_{o}[i]$. Note that there can be at most $(n-1)/(2^c-1)$ saturated blocks, so the total memory allocated to saturated blocks is at most $nb\log n / (2^c-1)$ bits: this quantity is $o(n)$ if $2^c$ grows faster than $b\log n$.

To every non-saturated block we assign a memory area in which we will store the counters for all the $b$ positions inside the block. Specifically, we will use Elias gamma coding to store a counter value $x \geq 0$ in exactly $1 + 2 \lceil \log(x+1) \rceil \leq 3+2\log (x+1)$ bits \cite{El75}, and we will concatenate the encodings of all the counters in the same block. The space taken by the memory area of a non-saturated block $j$ whose counter has value $t < 2^{c}-1$ is at most:
\begin{eqnarray}
& & \sum_{i=(j-1)b+1}^{jb} \big( 3+2\log(C_{o}[i]+1) \big) \nonumber \\
& \leq & 3b + 2b\log\left( \frac{\sum_{i=(j-1)b+1}^{jb}(C_{o}[i]+1)}{b} \right) \label{eq:js}\\
& = & 3b + 2b\log\left( \frac{t+b}{b} \right) \label{eq:bound} \\
& \leq & 5b + 2t \label{eq:loglinear}
\end{eqnarray}
where Equation \ref{eq:js} derives from applying Jensen's inequality to the logarithm, and Equation \ref{eq:loglinear} comes from the fact that $\log x \leq x$ for all $x \geq 1$. Since $\sum_{i=1}^{\lceil n/b \rceil}t \leq n-1$, it follows that the total number of bits allocated to non-saturated blocks is at most $7n$ for any choice of $b$ (tighter bounds might be possible, but for clarity we don't consider them here). We concatenate the memory areas of all blocks, and we store a prefix-sum data structure that takes $o(n)$ bits and that returns in constant time the starting position of the memory area allocated to any given block (see Lemma \ref{lemma:prefixSums}). We also store a bitvector $\mathtt{isSaturated}[1..\lceil n/b \rceil]$ that marks every saturated block with a one and index it for rank queries.

Once memory allocation is complete, we enumerate again the intervals of all internal nodes of the suffix tree, and for every such interval $[i..j]$ we increment $C_{o}[i]$, as follows. First, we compute the block that contains position $i$, we use $\mathtt{isSaturated}$ to determine whether the block is saturated or not, and we use the prefix-sum data structure to retrieve in constant time the starting position of the region of memory assigned to the block. If the block is saturated, we increment the counter that corresponds to position $i$ directly. Otherwise, we access a precomputed table $T_{s}[1..2^s,1..b]$ such that $T_{s}[i,j]$ stores, for every possible configuration $i$ of $s$ bits interpreted as the concatenation of the Elias gamma coding of $b$ counter values $x^1,x^2,\dots,x^b$, the configuration of $s$ bits that represents the concatenation of the Elias gamma coding of counter values $x^1,x^2,\dots,x^{j-1},x^{j}+1,x^{j+1},\dots,x^b$. The total number of bits used by all such tables is at most $\sum_{s=1}^{y} 2^s b s = 2b+b(y-1)2^{y+1}$, where $y=3b + 2b\log( (t+b)/b )$ with $t=2^{c}-2$ is from Equation \ref{eq:bound}. Thus, we need to choose $b$ and $c$ so that $b y 2^y \in o(n)$: setting $b=\log\log n$ and $c=db$ for any constant $d \geq 1$ makes $y \in O((\log\log n)^2)$, and thus $b y 2^y \in o(n)$. The same choice of $b$ and $c$ guarantees that a cell of $T_s$ can be read in constant time for any $s$, and that the space for the counters in the memory allocation phase of the algorithm is $O(n)$. Finally, setting $d \geq 2$ guarantees that $2^c$ grows faster than $b\log n$, thus putting the total memory allocated to \emph{saturated} blocks in $o(n)$. Tables $T_s$ for all $s \in [1..y]$ can be precomputed in time linear in their size.

Array $C_c$ of closed parentheses can be handled in the same way as array $C_o$.
\end{proof}

We will also need to be able to follow the suffix link that starts from any node of a suffix tree. Specifically, let operation $\mathtt{suffixLink}(\mathtt{id}(v))$ return the identifier of the destination $w$ of a suffix link from node $v$ of $\ST_{S\#}$. The topology of $\ST_{S\#}$ can be augmented to support operation $\mathtt{suffixLink}$, using just $\BWT_{S\#}$:

\begin{lemma}[\cite{Sa07a}] \label{lemma:suffixLink}
Let $S \in [1..\sigma]^{n-1}$ be a string. Assume that we are given the representation of the topology of $\ST_{S\#}$ described in Lemma \ref{lemma:balancedParentheses}, the BWT of $S\#$ indexed to support $\mathtt{select}$ operations in time $t$, and the $C$ array of $S$. Then, we can implement function $\mathtt{suffixLink}(\mathtt{id}(v))$ for any node $v$ of $\ST_{S\#}$ (possibly a leaf) in $O(t)$ time.
\end{lemma}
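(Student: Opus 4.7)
The plan is to reduce $\mathtt{suffixLink}(\mathtt{id}(v))$ to two evaluations of the $\psi$ function together with a single constant-time LCA query on the topology of $\ST_{S\#}$. Recall from Section~\ref{sec:bwt} that $\psi(i)=\mathtt{select}_{a}(\BWT_{S\#},i-C[a])$, where $a$ is the unique character satisfying $C[a]<i\le C[a+1]$. Given $C$, I can locate such $a$ in constant time by precomputing once the $(\sigma-1)$-ones/$n$-zeros bitvector $C'$ described in Section~\ref{sec:bwt}, indexed for constant-time select: $C'$ is built from $C$ in $O(\sigma)$ time and $n+o(n)$ bits, after which each $\psi$ query runs in $O(t)$ time (the cost is dominated by the $\mathtt{select}$ on $\BWT_{S\#}$).

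The algorithm itself would then be as follows. First, compute $i=\mathtt{leftmostLeaf}(\mathtt{id}(v))$ and $j=\mathtt{rightmostLeaf}(\mathtt{id}(v))$ in $O(1)$ time via Lemma~\ref{lemma:balancedParentheses}, so that $[i\ltdots j]$ is the SA interval of $v$. Next, set $i'=\psi(i)$ and $j'=\psi(j)$, convert these leaf ranks back to preorder identifiers via $u_1=\mathtt{selectLeaf}(i')$ and $u_2=\mathtt{selectLeaf}(j')$, and finally return $\mathtt{lca}(u_1,u_2)$. Every step apart from the two $\psi$-queries costs $O(1)$, so the total running time is $O(t)$.

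For correctness, write $\ell(v)=aW$, whence $\ell(\mathtt{suffixLink}(v))=W$. By definition of $\psi$, the rotations at SA ranks $i'$ and $j'$ are the images of those at $i$ and $j$ under one step of $\LF^{-1}$, hence both start with $W$ and lie inside the contiguous $W$-interval $[\SP{W}\ltdots\EP{W}]$; consequently every SA position in $[i'\ltdots j']$ still belongs to that interval, and so $\mathtt{lca}(u_1,u_2)$ has string depth at least $|W|$. When $v$ is internal, the characters following $aW$ in the rotations at positions $i$ and $j$ are distinct, because they correspond to the first and last child of $v$ in $\ST_{S\#}$; therefore the rotations at $i'$ and $j'$ disagree at position $|W|+1$, pinning the LCA at string depth exactly $|W|$, which is precisely $\mathtt{suffixLink}(v)$. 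When $v$ is a leaf, $i=j$ and $u_1=u_2$ is already the leaf whose rotation is the $\psi$-image of $v$'s rotation, and this is the suffix-link target by definition. The only subtle point to verify is that $\psi(i)$ and $\psi(j)$ need not coincide with $\SP{\mathtt{suffixLink}(v)}$ and $\EP{\mathtt{suffixLink}(v)}$; however, contiguity of the $W$-interval together with the mismatch at position $|W|+1$ already forces the LCA to sit at exactly that node, so this is not an obstacle.
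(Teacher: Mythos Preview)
Your proposal is correct and follows essentially the same approach as the paper: convert $\mathtt{id}(v)$ to its interval via $\mathtt{leftmostLeaf}/\mathtt{rightmostLeaf}$, apply $\psi$ (i.e.\ $\mathtt{select}_a(\BWT_{S\#},\,\cdot-C[a])$) to both endpoints, lift them to leaves via $\mathtt{selectLeaf}$, and take $\mathtt{lca}$. You are slightly more explicit than the paper in spelling out how the first character $a$ is recovered in constant time from the $C'$ bitvector, and in justifying why the LCA lands at string depth exactly $|W|$; the paper simply asserts both points.
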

\begin{proof}
Let $w$ be the destination of the suffix link from $v$, let $[i..j]$ be the interval of node $v$ in $\BWT_{S\#}$, and let $\ell(v)=aW$ and $\ell(w)=W$, where $a \in [0..\sigma]$ and $W \in [0..\sigma]^*$. We convert $\mathtt{id}(v)$ to $[i..j]$ using operations $\mathtt{leftmostLeaf}$ and $\mathtt{rightmostLeaf}$ of the topology. Let $aWX$ and $aWY$ be the suffixes of $S\#$ that correspond to positions $i$ and $j$ in $\BWT_{S\#}$, respectively, where $X$ and $Y$ are strings on alphabet $[0..\sigma]$. Note that the position $i'$ of $WX$ in $\BWT_{S\#}$ is $\mathtt{select}_{a}(\BWT_{S\#}, i-C[a] )$, the position $j'$ of $WY$ in $\BWT_{S\#}$ is $\mathtt{select}_{a}(\BWT_{S\#}, j-C[a] )$, and $W$ is the longest prefix of the suffixes that correspond to positions $i'$ and $j'$ in $\BWT_{S\#}$, which also labels a node of $\ST_{S\#}$. We use operation $\mathtt{selectLeaf}$ provided by the topology of $\ST_{S\#}$ to convert $i'$ and $j'$ to identifiers of leaves in $\ST_{S\#}$, and we compute $\mathtt{id}(w)$ using operation $\mathtt{lca}$ on such leaves.
\end{proof}

Note that, if $aW$ is neither a suffix nor a right-maximal substring of $S\#$, i.e. if $aW$ is always followed by the same character $b \in [1..\sigma]$, the algorithm in Lemma \ref{lemma:suffixLink} maps the locus of $aW$ to the locus of $WbX$ in $\ST_{S\#}$, where $X \in [1..\sigma]^*$ and $aWbX$ is the (unique) shortest right-extension of $aW$ that is right-maximal. The locus of $WbX$ might not be the same as the locus of $W$. As we will see in Section \ref{sec:bwtIndexesConstruction}, this is the reason why the \emph{bidirectional BWT index} of Definition \ref{def:biBWTindex} (on page \pageref{def:biBWTindex}) does not support operation $\mathtt{contractLeft}$ (respectively, $\mathtt{contractRight}$) for strings that are neither suffixes nor right-maximal substrings of $S\#$ (respectively, of $\REV{S}\#$).

\subsection{Data structures for monotone minimal perfect hash functions}\label{sec:build_mmphf}

Given a set $\mathcal{S} \subseteq [1..U]$ of size $n$, a \emph{monotone minimal perfect hash function} (denoted by MMPHF in what follows) is a function $f: [1..U] \mapsto [1..n]$ such that $x<y$ implies $f(x)<f(y)$ for every $x,y \in \mathcal{S}$. In other words, if the set of elements of $\mathcal{S}$ is $x^1<x^2<\dots<x^n$, then $f(x^i)=i$, i.e. the function returns the rank inside $\mathcal{S}$ of the element it takes as an argument. The function is allowed to return an arbitrary value for any $x \in [1..U] \setminus \mathcal{S}$.

To build efficient implementations of MMPHFs, we will repeatedly take advantage of the following lemma:

\begin{lemma}[\cite{BBPV09}]\label{lemma:build_mmphf1}
Let $\mathcal{S} \subseteq [1..U]$ be a set represented in sorted order by the sequence $x^1<x^2<\cdots<x^n$, where $x^i$ is encoded in $\log{U}$ bits for all $i \in [1..n]$ and $\log U<n$. There is an implementation of a MMPHF on $\mathcal{S}$ that takes $O(n\log\log U)$ bits of space, that evaluates $f(x)$ in constant time for any $x \in [1..U]$, and that can be built in \emph{randomized} $O(n)$ time and in $O(n\log U)$ bits of working space.
\end{lemma}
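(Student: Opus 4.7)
The plan is to follow the standard bucketing paradigm for monotone minimal perfect hash functions. First, I would partition the sorted input $x^1 < x^2 < \cdots < x^n$ into $k = \lceil n/b \rceil$ consecutive buckets $B_1,\ldots,B_k$ of size $b = \lceil \log U \rceil$ each (except possibly the last). The function would then be defined as $f(x) = (i-1)b + j$ whenever $x$ is the $j$-th element of the $i$-th bucket $B_i$, so that computing $f$ reduces to two subproblems: (1) given $x \in \mathcal{S}$, locate its bucket index $i$; and (2) given $x \in B_i$, return its local rank $j$ inside $B_i$.

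For subproblem (1), the key idea is that we only need $k = O(n/\log U)$ bucket labels, so if we can locate the bucket using $O(\log\log U)$ bits amortized per original element, we are done. Specifically, for each bucket $B_i$ I would identify a distinguishing prefix (a shortest bit-prefix that separates $B_i$ from $B_{i-1}$ and $B_{i+1}$) and store a minimal perfect hash function over these prefixes together with a compact data structure that, given a query $x$, finds the longest distinguishing prefix matching $x$ via the standard doubling-search-over-lengths trick (based on randomized hashing of the prefixes of $x$ of lengths $1, 2, 4, 8, \ldots$, up to $\log U$). This search completes in constant time in expectation because the relevant bit-length can be located by hashing one prefix per length power, and with signature checks false positives are ruled out. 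This is exactly the scheme of \cite{BBPV09}, whose construction is randomized $O(k \log U) = O(n)$ time, and whose space is $O(k \log \log U)$ bits in total, dwarfed by the within-bucket cost.

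For subproblem (2), inside a bucket of size $b = O(\log U)$ I would store an order-preserving mini-structure that, given $x \in B_i$, returns its rank in $B_i$ in constant time using $O(\log \log U)$ bits per element, so that the total contribution is $O(n \log \log U)$ bits. This can be achieved by a second, local application of the same bucketing/hashing idea, exploiting the fact that the bucket is small enough that the within-bucket universe can be reduced to $O(b)$ signatures using constant-time hashing, together with a lookup table of $o(n)$ bits shared across buckets.

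Finally, the two structures are concatenated and the construction is performed in randomized $O(n)$ time by running all the hashing subroutines of \cite{BBPV09} in expected linear time; during construction we read the sorted input (which already takes $O(n \log U)$ bits) and build the two components in place, so working space is $O(n \log U)$ bits. The main obstacle in making this work is subproblem (1): getting constant-time bucket location while spending only $O(\log\log U)$ bits per element requires the randomized distinguishing-prefix hashing of \cite{BBPV09}, which is precisely the content being cited.
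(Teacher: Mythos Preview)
Your high-level bucketing decomposition matches the paper's, but your solution to subproblem (1) does not achieve the constant worst-case query time claimed in the lemma. The doubling search over prefix lengths $1,2,4,\ldots,\log U$ touches $\Theta(\log\log U)$ lengths, so even with constant-time hashing at each length the bucket-location step costs $\Theta(\log\log U)$, not $O(1)$. (This is essentially the $z$-fast-trie variant of \cite{BBPV09}, which has $O(\log\log U)$ query time; it is a different construction from the one the lemma needs.) Your subproblem (2) is also left vague: ``a second local application of the same idea'' does not obviously terminate in a constant-time, $O(\log\log U)$-bits-per-element structure without further argument.

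The paper's proof avoids any search by a trick you are missing: it builds a single (non-monotone) minimal perfect hash function $F$ on the \emph{entire} set $\mathcal{S}$, and uses $F$ as an index into two satellite tables. For each $x^i$ it stores (i) the length $|p^j|$ of the longest common prefix of the block $B_j$ containing $x^i$, and (ii) the position of $x^i$ inside $B_j$. Table (i) costs $O(n\log\log U)$ bits; table (ii) costs $O(n\log\log n)$ bits because blocks have size $b=\log n$. At query time, one call to $F$ yields the LCP length, the key's prefix of that length is extracted in $O(1)$ time, and a second MPHF $G$ built on the $\lceil n/b\rceil$ distinct block prefixes (which are all distinct, a folklore fact) returns the block index; the stored position then gives the rank. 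Both subproblems are thus answered with a constant number of MPHF evaluations and table lookups, with no search over prefix lengths.
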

\begin{proof}
We use a technique known as \emph{most-significant-bit bucketing} \cite{belazzougui2011theory}. Specifically, we partition the sequence that represents $\mathcal{S}$ into $\lceil n/b \rceil$ blocks of consecutive elements, where each block $B^i = x^{(i-1)b+1}, \ldots, x^{ib}$ contains exactly $b=\log n$ elements (except possibly for the last block $x^{(i-1)b+1}, \ldots, x^n$, which might be smaller). Then, we compute the length of the longest common prefix $p^i$ of the elements in every $B^i$, starting from the most significant bit. To do so, it suffices to compute the longest prefix that is common to the first and to the last element of $B^i$: this can be done in constant time using the $\mathtt{mostSignificantBit}$ operation, which can be implemented using a constant number of multiplications~\cite{brodnik1993computation}. The length of the longest common prefix of a block is at most $\log U - \log\log n \in O(\log U)$.

Then, we build an implementation of a \emph{minimal perfect} hash function $F$ that maps every element in $\mathcal{S}$ onto a number in $[1..n]$. This can be done in $O(n\log U)$ bits of working space and in \emph{randomized} $O(n)$ time: see \cite{HT01}. We also use a table $\mathtt{lcp}[1..n]$ that stores at index $F(x^i)$ the length of the longest common prefix of the block to which $x^i$ belongs, and a table $\mathtt{pos}[1..n]$ that stores at index $F(x^i)$ the relative position of $x^i$ inside its block. Formally:
\begin{eqnarray*}
\mathtt{lcp}[F(x^i)] & = & |p^{\lfloor (i-1)/b \rfloor + 1}| \\
\mathtt{pos}[F(x^i)] & = & i-b\cdot\lfloor (i-1)/b \rfloor
\end{eqnarray*}
The implementation of $F$ takes $O(n+\log\log U)$ bits of space, $\mathtt{lcp}$ takes $O(n\log\log U)$ bits, and $\mathtt{pos}$ takes $O(n\log\log n)$ bits.

It is folklore that all $p^i$ values are distinct, thus each $p^i$ identifies block $i$ uniquely. We build an implementation of a \emph{minimal perfect} hash function $G$ on set $p^1,p^2,\dots,p^{\lceil n/b \rceil}$, and an inversion table $\mathtt{lcp2block}[1..\lceil n/b\rceil]$ that stores value $i$ at index $G(p^i)$. The implementation of $G$ takes $O(n/\log{n} + \log{\log{U}})$ bits of space, and it can be built in $O((n/\log{n})\log{U})$ bits of working space and in randomized $O(n/\log{n})$ time. Table $\mathtt{lcp2block}$ takes $O((n/\log{n})  \cdot \log(n/\log{n})) = O(n)$ bits. 
With this setup of data structures, we can return in constant time the rank $i$ in $\mathcal{S}$ of any $x^i$, by issuing:
\[
i = b \cdot \mathtt{lcp2block} \Big[ G(x^{i}\big[ 1..\mathtt{lcp}[F(x^i)] \big] \Big] + \mathtt{pos}[F(x^i)]
\]
where $x^{i}[g..h]$ denotes the substring of the binary representation of $x^i$ in $\log{U}$ bits that starts at position $g$ and ends at position $h$.
\end{proof}

We will mostly use Lemma \ref{lemma:build_mmphf1} inside the following construction, which is based on partitioning the universe rather than the set of numbers:

\begin{lemma}\label{lemma:build_mmphf2}
Let $\mathcal{S} \subseteq [1..U]$ be a set represented in sorted order by the sequence $x^1<x^2<\cdots<x^n$, where $x^i$ is encoded in $\log{U}$ bits for all $i \in [1..n]$. There is an implementation of a MMPHF on $\mathcal{S}$ that takes $O(n \log\log b) + \lceil U/b \rceil(2+\lceil\log(nb/U)\rceil)+o(U/b)$ bits of space, that evaluates $f(x)$ in constant time for any $x \in [1..U]$, and that can be built in \emph{randomized} $O(n)$ time and in $O(b\log b)$ bits of working space, for any choice of $b$.
\end{lemma}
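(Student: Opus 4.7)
The plan is to partition the universe $[1..U]$ into $N=\lceil U/b\rceil$ chunks $[1..b],[b+1..2b],\ldots$ of size $b$, and to assemble the MMPHF from two ingredients: (i) a global prefix-sum data structure that records the number $n_c$ of elements of $\mathcal{S}$ falling in each chunk $c$, and (ii) a local MMPHF per non-empty chunk that returns the rank of a query element among the elements of $\mathcal{S}$ that belong to the same chunk, viewed inside the universe $[1..b]$. A query $f(x)$ then computes the chunk index $c=\lceil x/b\rceil$ and the local coordinate $x'=((x-1)\bmod b)+1$, retrieves the cumulative count $\sigma_c=\sum_{c'<c}n_{c'}$ in constant time via Lemma~\ref{lemma:prefixSums}, and returns $\sigma_c+f_c(x')$.

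First I would produce the sequence $n_1,\ldots,n_N$ by a single left-to-right scan of the sorted input, feeding the counts on the fly into the prefix-sum structure of Lemma~\ref{lemma:prefixSums}. Since that array has $\lceil U/b\rceil$ entries summing to $n$, the structure occupies exactly the second space term of the statement, $\lceil U/b\rceil(2+\lceil\log(nb/U)\rceil)+o(U/b)$ bits. Next, for every non-empty chunk $c$ I would build the local MMPHF $f_c\colon[1..b]\to[1..n_c]$ over the at most $b$ elements contributed by $\mathcal{S}$ to chunk $c$: when $n_c>\log b$ Lemma~\ref{lemma:build_mmphf1} applies directly on universe $b$ and yields a constant-time structure in $O(n_c\log\log b)$ bits; when $1<n_c\leq\log b$ the same space bound can be attained by combining a Hagerup--Tholey minimal perfect hash over the $n_c$ local coordinates with an auxiliary rank table of $n_c\lceil\log n_c\rceil=O(n_c\log\log b)$ bits; chunks with $n_c\leq 1$ need no local structure at all, since the answer is then determined by $\sigma_c$ alone. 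The per-chunk outputs are laid out consecutively inside one bit region, and a second (sparser) prefix-sum data structure over their bit-lengths lets us locate $f_c$ in constant time.

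For the construction I would use the static allocation strategy of Section~\ref{sec:staticAllocation}: in a preliminary pass I compute each $n_c$ and the exact bit-length of each $f_c$; then I allocate a single contiguous region of the right total size, and in a second pass I buffer the at most $b$ local coordinates of the current chunk into $O(b\log b)$ bits of scratch space, invoke Lemma~\ref{lemma:build_mmphf1} (or the small-chunk variant) on that buffer in randomized $O(n_c)$ time and $O(n_c\log b)=O(b\log b)$ bits of working space, copy the resulting bits into the preallocated slot, and discard the buffer. Summing over chunks gives total randomized time $O(n)$ and peak working space $O(b\log b)$. For the space accounting, the key observation is that the number of non-empty chunks is at most $n$, so any additive $O(\log\log b)$ overhead per non-empty chunk is absorbed into the first term $O(n\log\log b)$ of the stated bound.

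The main obstacle will be the non-uniform behaviour of Lemma~\ref{lemma:build_mmphf1}, whose $O(n\log\log U)$ guarantee only applies under $\log U<n$ — precisely the regime that fails for chunks with fewer than $\log b$ elements. Handling these small chunks while preserving the $O(n\log\log b)$ total budget, and arranging the offsets of the concatenated local MMPHFs so that their prefix-sum encoding fits inside the stated $\lceil U/b\rceil(2+\lceil\log(nb/U)\rceil)+o(U/b)$ bits rather than paying an extra $\log\log\log b$ per chunk inside the Elias--Fano logarithm, is the delicate part of the proof that will require either a careful small-chunk scheme or a rounded slot-allocation trick.
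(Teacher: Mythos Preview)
Your approach is essentially the same as the paper's: partition $[1..U]$ into $\lceil U/b\rceil$ chunks of size $b$, store the chunk counts in the prefix-sum structure of Lemma~\ref{lemma:prefixSums} (this is exactly the second term in the space bound), build a local MMPHF via Lemma~\ref{lemma:build_mmphf1} on the low-order $\log b$ bits inside each chunk with more than one element, and answer a query by $r+f^i(x')$ where $r$ is the prefix sum and $x'$ the local coordinate. The paper calls this technique \emph{quotienting}.

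You are in fact being more careful than the paper on two points. First, the paper simply invokes Lemma~\ref{lemma:build_mmphf1} on every block with more than one element and asserts $O(n\log\log b)$ total bits, without discussing the constraint $\log U<n$ of that lemma; your small-chunk fallback (MPHF plus an $n_c\lceil\log n_c\rceil$-bit rank table, which is $O(n_c\log\log b)$ when $n_c\le\log b$) is a correct way to close this gap, and since there are at most $n$ non-empty chunks the additive $O(\log\log b)$ per-chunk MPHF overhead is absorbed as you note. Second, the paper does not say how to locate the bits of $f^i$ given $i$; your idea of a second prefix-sum over the local-structure lengths is the natural fix. Both concerns you flag as ``the main obstacle'' are real but are exactly the loose ends left by the paper's own sketch, and your proposed handling of them is sound.
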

\begin{proof}
We will make use of a partitioning technique known as \emph{quotienting}~\cite{Pagh01}). We partition interval $[1..U]$ into $n' \leq n$ blocks of size $b$ each, except for the last block which might be smaller. Note that the most significant $\log U - \log b$ bits are identical in all elements of $\mathcal{S}$ that belong to the same block. 
For each block $i$ that contains more than one element of $\mathcal{S}$, we build an implementation of a monotone minimal perfect hash function $f^i$ on the elements inside the block, as described in Lemma \ref{lemma:build_mmphf1}, \emph{restricted to their least significant $\log{b}$ bits}: all such implementations take $O(n\log\log b)$ bits of space in total, and constructing each of them takes $O(b \log b)$ bits of working space. Then, we use Lemma \ref{lemma:prefixSums} to build a prefix-sum data structure that encodes in $\lceil U/b \rceil(2+\lceil \log(nb/U) \rceil)+o(U/b)$ bits of space the number of elements in every block. Given an element $x \in [1..U]$, we first find the block it belongs to, by computing $i=\lceil x/b \rceil$, then we use the prefix-sum data structure to compute the number $r$ of elements in $\mathcal{S}$ that belong to blocks smaller than $i$, and finally we return $r+f^{i}(x[\log{U}-\log{b}+1..\log{U}])$, where $x[g..h]$ denotes the substring of the binary representation of $x$ in $\log{U}$ bits that starts at position $g$ and ends at position $h$.
\end{proof}

The construction used in Lemma \ref{lemma:build_mmphf2} is a slight generalization of one initially described in~\cite{BBPV09}. 
Setting $b=\lceil U/n \rceil$ in Lemma \ref{lemma:build_mmphf2} makes the MMPHF implementation fit in $O(n \log\log(U/n))$ bits of space.

\subsection{Data structures for range-minimum and range-distinct queries}\label{sec:build_range_color_rep}

Given an array of integers $A[1..n]$, let function $\mathtt{rmq}(i,j)$ return an index $k \in [i..j]$ such that $A[k]=\min\{A[x] : x \in [i..j]\}$, with ties broken arbitrarily. We call this function a \emph{range minimum query} (RMQ) over $A$. It is known that range-minimum queries can be answered by a data structure that is small and efficient to compute:

\begin{lemma}[\cite{Fi10}]\label{lemma:rmq}
Assume that we have a representation of an array of integers $A[1..n]$ that supports accessing the value $A[i]$ stored at any position $i \in [1..n]$ in time $t$. Then, we can build a data structure that takes $2n+o(n)$ bits of space, and that answers $\mathtt{rmq}(i,j)$ for any pair of integers $i<j$ in $[1..n]$ in constant time, \emph{without accessing the representation of $A$}. This data structure can be built in $O(nt)$ time and in $n+o(n)$ bits of working space.
\end{lemma}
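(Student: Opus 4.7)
The plan is to reduce $\mathtt{rmq}$ to LCA on the Cartesian tree of $A$ and represent that tree succinctly by its balanced-parenthesis sequence.

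Recall the Cartesian tree $C$ of $A$: its root has label equal to the (leftmost) position of the minimum of $A$, and its left and right subtrees are, recursively, the Cartesian trees of the subarrays to the left and right of that position. Two classical facts apply. First, a preorder traversal of $C$ visits the nodes in the order $1,2,\ldots,n$ of their labels, so labels coincide with preorder ranks and need not be stored. Second, $\mathtt{rmq}(i,j)$ equals the label of the LCA in $C$ of the nodes whose labels are $i$ and $j$. I would therefore store only the topology of $C$, using its $2n$-bit balanced-parenthesis (BP) representation augmented with the $o(n)$-bit structure of \Lemma{lemma:balancedParentheses}, which supports constant-time preorder-rank-to-node conversions (via $\mathtt{selectLeaf}$ and the like) and LCA queries. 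A query $\mathtt{rmq}(i,j)$ then locates the nodes of preorder ranks $i$ and $j$, computes their LCA, and returns its preorder rank; this touches neither $A$ nor any copy of it, as required.

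For the construction I would produce the BP sequence on the fly through the standard linear-time Cartesian tree algorithm on $A$: maintain a stack of indices representing the right spine of the partial tree, and upon reading position $i$ emit \texttt{)} and pop while the top-of-stack entry has value strictly greater than $A[i]$, then emit \texttt{(} and push $i$; close any remaining parentheses at the end. Each index is pushed and popped at most once, so with $O(t)$-time access to $A$ the total running time is $O(nt)$.

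The main obstacle is to contain the stack within the $n+o(n)$ working-space budget, since a plain stack of indices costs $\Theta(n\log n)$ bits in the worst case. I would follow Fischer's trick of representing the stack implicitly through the BP buffer that is being produced as output: the unmatched open parentheses in the BP prefix written so far are in one-to-one correspondence with the current stack, and the $k$-th \texttt{(} in the buffer corresponds to the $k$-th index ever pushed. By partitioning the BP buffer into blocks of $\Theta(\log n)$ bits, precomputing $o(n)$-bit lookup tables for intra-block operations, and maintaining $O(n/\log n)$-sized summaries of parenthesis-excess and \texttt{(}-count per block, one can locate the rightmost unmatched \texttt{(} and recover the $A$-index it represents in $O(1)$ amortized time per stack operation, using only $o(n)$ extra working bits; the amortization charges the cost of scanning backwards across empty blocks against the pops that subsequently occur inside them. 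Hence the construction runs in $O(nt)$ time and fits within the $n+o(n)$ working-space bound, after which the auxiliary structures of \Lemma{lemma:balancedParentheses} are installed in the same time and space budget.
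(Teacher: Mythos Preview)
The paper does not prove this lemma; it is quoted as a known result from Fischer~\cite{Fi10} with no accompanying proof. So there is nothing to compare against on the paper's side, and your write-up is essentially a sketch of Fischer's construction, which is the right thing to do here.

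Your high-level plan is correct: reduce $\mathtt{rmq}$ to $\mathtt{lca}$ on the Cartesian tree, store only the tree topology in $2n+o(n)$ bits, and build the tree left-to-right with the balanced-parenthesis buffer doubling as the right-spine stack so that working space stays in $n+o(n)$ bits.

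There is, however, a concrete error in the query part. For the standard Cartesian tree you define (root at the minimum, recurse on the two sides), it is the \emph{in-order} traversal, not the preorder traversal, that visits the nodes in the order $1,2,\ldots,n$. A two-line example shows this: for $A=[3,1,2]$ the root is position~$2$ with children $1$ and $3$, so preorder is $2,1,3$. Consequently your proposed query procedure---select the nodes of preorder ranks $i$ and $j$, take their $\mathtt{lca}$, return its preorder rank---does not compute $\mathtt{rmq}(i,j)$. Relatedly, $\mathtt{selectLeaf}$ is the wrong primitive even in spirit, since internal nodes of the Cartesian tree also correspond to array positions. The fix is routine: use the in-order rank bijection instead, which the balanced-parenthesis machinery of \Lemma{lemma:balancedParentheses} (or the DFUDS variant Fischer actually uses) supports in constant time. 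With that correction the argument goes through.
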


Assume now that the elements of array $A[1..n]$ belong to alphabet $[1..\sigma]$, and let $\Sigma_{i,j}$ be the set of distinct characters that occur inside subarray $A[i..j]$. Let function $\mathtt{rangeDistinct}(i,j)$ return the set of tuples $\{ (c,\mathtt{rank}_{A}(c,p_c),\mathtt{rank}_{A}(c,q_c)) : c \in \Sigma_{i,j} \}$ \emph{in any order}, where $p_c$ and $q_c$ are the first and the last occurrence of $c$ in $A[i..j]$, respectively. The frequency of any $c \in \Sigma_{i,j}$ inside $A[i..j]$ is $\mathtt{rank}_{A}(c,q_c)-\mathtt{rank}_{A}(c,p_c)+1$. It is well known that $\mathtt{rangeDistinct}$ queries can be implemented using $\mathtt{rmq}$ queries on a specific array, as described in the following lemma:

\begin{lemma}[\cite{Mu02,Sa07b,BNV13}]\label{lemma:rangedistinct}
Given a string $A \in [1..\sigma]^n$, we can build a data structure of size $n\log{\sigma} + 8n + o(n)$ bits that answers $\mathtt{rangeDistinct}(i,j)$ for any pair of integers $i<j$ in $[1..n]$ in $O(\mathtt{occ})$ time and in $\sigma\log(n+1)$ bits of temporary space, where $\mathtt{occ}=|\Sigma_{i,j}|$. This data structure can be built in $O(kn)$ time and in $(n/k)\log{\sigma}+2n+o(n)$ bits of working space, for any positive integer $k$, and it does not require $A$ to answer $\mathtt{rangeDistinct}$ queries.
\end{lemma}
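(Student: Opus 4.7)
The plan is to reduce range-distinct queries to range-minimum queries on the classical Muthukrishnan auxiliary array. Consider the conceptual array $D[1\ltdots n]$ where $D[k]$ is the largest $h<k$ with $A[h]=A[k]$, or $0$ if no such $h$ exists. The first occurrence of every distinct character in $A[i\ltdots j]$ is precisely the set of positions $k\in[i\ltdots j]$ with $D[k]<i$, and all such positions can be enumerated by divide-and-conquer on $\mathtt{rmq}(i,j)$: if the returned index $k$ satisfies $D[k]\geq i$ we stop; otherwise we report $k$ and recurse on $[i\ltdots k-1]$ and $[k+1\ltdots j]$. Each reported position costs $O(1)$, so enumeration runs in $O(\mathtt{occ})$ time. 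The symmetric ``next occurrence'' array $D'[k]$, combined with a range-\emph{maximum} query structure obtained by applying Lemma~\ref{lemma:rmq} to the sequence $-D'$, enumerates the last occurrences in $[i\ltdots j]$ the same way.

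For the data structures I would keep the first variant of Lemma~\ref{lemma:rank_select_access}, which occupies $n\log\sigma+4n+o(n)$ bits and supports $\mathtt{access}$ and $\mathtt{partialRank}$ on $A$ in constant time, together with two RMQ structures from Lemma~\ref{lemma:rmq}, one for $D$ and one for $-D'$, each taking $2n+o(n)$ bits. The crucial feature of Lemma~\ref{lemma:rmq} is that its query algorithm does not access the underlying array, so neither $D$ nor $D'$ needs to be retained: the total is $n\log\sigma+8n+o(n)$ bits, as required, and the final structure no longer depends on $A$ itself.

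At query time I would run the recursive enumeration on the RMQ of $D$; for each returned position $p$ I set $c=\mathtt{access}(A,p)$ and write $L[c]\leftarrow\mathtt{partialRank}(A,p)$ into a scratch array $L[1\ltdots\sigma]$ of $\sigma\lceil\log(n+1)\rceil$ bits. Since $p$ is the first occurrence of $c$ in $[i\ltdots j]$, $L[c]$ coincides with $\mathtt{rank}_A(c,p_c)$. Then I would run the symmetric enumeration on the range-max structure of $D'$; for every reported last occurrence $q$ with $c'=\mathtt{access}(A,q)$ I emit the tuple $(c',L[c'],\mathtt{partialRank}(A,q))$ and reset $L[c']$ to its initial value, thereby restoring the temporary space. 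Each distinct character is touched twice, so the query runs in $O(\mathtt{occ})$ time within the claimed temporary budget.

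For construction I would first build the access/partial-rank structure. Then I would generate $D$ on the fly via a left-to-right scan that maintains a $\sigma\log n$-bit array of last-seen positions, feeding the values into Fischer's one-pass RMQ construction; $D'$ is built by a symmetric right-to-left pass, and running the two RMQ constructions sequentially keeps their combined working space at $2n+o(n)$. The $(n/k)\log\sigma$ slack in the working space, with matching $O(kn)$ time, corresponds to using the $k$-parameterised variant of Lemma~\ref{lemma:rank_select_access} as an intermediate access provider during the two RMQ passes, so that the final constant-time access structure need not coexist with them in memory. The main point requiring care is verifying that this intermediate representation really supports the purely sequential scans demanded by Fischer's algorithm within the stated budget; once that is in place, the remaining accounting of lower-order terms is routine.
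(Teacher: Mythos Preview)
Your approach is essentially the paper's: two RMQ structures on the previous-/next-occurrence arrays, the first variant of Lemma~\ref{lemma:rank_select_access} for constant-time \texttt{access} and \texttt{partialRank}, and a $\sigma\log(n{+}1)$-bit scratch vector linking the two passes. The space accounting and the two-pass query logic are right.

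There is one genuine gap. Your recursion terminates by testing ``$D[k]\geq i$'', but you cannot evaluate $D[k]$ with the structures you retain. Computing $D[k]$ requires $\mathtt{select}_{A[k]}(A,\mathtt{partialRank}(A,k)-1)$, and you explicitly keep only the first variant of Lemma~\ref{lemma:rank_select_access}, which has no \texttt{select}. The RMQ of Lemma~\ref{lemma:rmq} returns only the index, never the value, so the test as stated is unsupported, and without it the recursion degenerates to $O(j-i+1)$ work rather than $O(\mathtt{occ})$.

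The paper sidesteps this by replacing the value test with a \emph{marking} test: at each recursive call, compute $c=\mathtt{access}(A,k)$ and stop if $c$ is already marked; otherwise mark $c$ (e.g.\ by writing into the very array you call $L$), report $k$, and recurse on both sides. One checks that, with the natural recursion order, ``$c$ already marked'' is equivalent to ``$D[k]\geq i$'', so correctness and the $O(\mathtt{occ})$ bound are preserved, and only \texttt{access} is needed. The same fix applies to your second pass on $D'$. Your construction sketch also deviates from the paper: the paper builds the \emph{second} variant of Lemma~\ref{lemma:rank_select_access} (with \texttt{select}) temporarily, uses \texttt{select} to simulate random access to $P$ and $N$ for the RMQ builder, and then discards the $(n/k)\log\sigma+n+o(n)$ bits of inverse-permutation data; this is where the $(n/k)\log\sigma$ working-space term and the $O(kn)$ time actually come from, not from an ``intermediate access provider'' during sequential scans.
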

\begin{proof}
To return just the distinct characters in $\Sigma_{i,j}$ it suffices to build a data structure that supports RMQs on an auxiliary array $P[1..n]$, where $P[i]$ stores the position of the \emph{previous occurrence} of character $A[i]$ in $A$. Since $\mathtt{rmq}(i,j)$ is the leftmost occurrence of character $A[\mathtt{rmq}(i,j)]$ in $A[i..j]$, it is well known that $\Sigma_{i,j}$ can be built by issuing $O(\mathtt{occ})$ $\mathtt{rmq}$ queries on $P$ and $O(\mathtt{occ})$ accesses to $A$, using a stack of $O(\mathtt{occ} \cdot \log n)$ bits and a bitvector of size $\sigma$. This is achieved by setting $k=\mathtt{rmq}(i,j)$, by recurring on subintervals $[i..k-1]$ and $[k+1..j]$, and by using the bitvector to mark the distinct characters observed during the recursion and to stop the process if $A[k]$ is already marked \cite{Mu02}. Random access to array $P$ can be simulated in constant time using $\mathtt{partialRank}$ and $\mathtt{select}$ operations on $A$, which can be implemented as described in Lemma \ref{lemma:rank_select_access} setting $k$ to a constant. We use the data structures of Lemma \ref{lemma:rank_select_access} also to simulate access to $A$ without storing $A$ itself. We build the RMQ data structure using Lemma \ref{lemma:rmq}. After construction, we will never need to answer $\mathtt{select}$ queries on $A$, thus we do not output the $(n/k)\log{\sigma}+n+o(n)$ bits that encode the inverse permutation in Lemma \ref{lemma:rank_select_access}.

To report partial ranks in addition to characters, we adapt this construction as follows. We build a data structure that supports RMQs on an auxiliary array $N[1..n]$, where $N[i]$ stores the position of the \emph{next} occurrence of character $A[i]$ in $A$. Given an interval $[i..j]$, we first use the RMQ data structure on $P$ and a vector $\mathtt{chars}[1..\sigma]$ of $\sigma\log(n+1)$ bits to store the first occurrence $p_c$ of every $c \in \Sigma_{i,j}$. Then, we use the RMQ data structure on $N$ to detect the last occurrence $q_c$ of every $c \in \Sigma_{i,j}$, and we access $\mathtt{chars}[c]$ both to retrieve the corresponding $p_c$ and to clean up cell $\mathtt{chars}[c]$ for the next query. Finally, we compute $\mathtt{rank}_{A}(c,p_c)$ and $\mathtt{rank}_{A}(c,q_c)$ using the $\mathtt{partialRank}$ data structure of Lemma \ref{lemma:rank_select_access}. To build the data structure that supports RMQs on $N$, we can use the same memory area of $n+o(n)$ bits used to build the data structure that supports RMQs on $P$.
\end{proof}

The temporary space used to answer a $\mathtt{rangeDistinct}$ query can be reduced to $\sigma$ bits by more involved arguments \cite{BNV13}. Rather than using $\mathtt{partialRank}$, $\mathtt{select}$, and $\mathtt{access}$, we can implement the $\mathtt{rangeDistinct}$ operation using MMPHFs: the following lemma details this approach, since the rest of the paper will repeatedly use its main technique.

\begin{lemma}[\cite{BNV13}] \label{lemma:rangedistinct2}
We can augment a string $A \in [1..\sigma]^n$ with a data structure of size $O(n\log{\log{\sigma}})$ bits that answers $\mathtt{rangeDistinct}(i,j)$ for any pair of integers $i<j$ in $[1..n]$ in $O(\mathtt{occ})$ time and in $\sigma\log(n+1)$ bits of temporary space, where $\mathtt{occ}=|\Sigma_{i,j}|$. This data structure can be built in $O(n)$ \emph{randomized} time and in $O(n\log{\sigma})$ bits of working space.
\end{lemma}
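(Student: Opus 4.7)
The plan is to re-use the RMQ-based recursion of Lemma \ref{lemma:rangedistinct} but to replace the partial-rank data structure of Lemma \ref{lemma:rank_select_access} (which costs $\Theta(n\log\sigma)$ bits) by a single compact monotone minimal perfect hash function. Since the statement only asks for an \emph{augmentation} of $A$, I assume $A$ itself supports constant-time random access and is charged separately, so only the extra data structures need to fit in $O(n\log\log\sigma)$ bits.

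First I would build the two RMQ structures of Lemma \ref{lemma:rmq} on the previous-occurrence array $P[1..n]$ and the next-occurrence array $N[1..n]$. Each occupies $2n+o(n)$ bits and, crucially, needs no access to $P$ or $N$ at query time. To build them I would temporarily equip $A$ with the partial-rank / select representation of Lemma \ref{lemma:rank_select_access}, which makes $P[i]=\mathtt{select}_{A[i]}(A,\mathtt{partialRank}(A,i)-1)$ and $N[i]$ computable in constant time; this auxiliary structure costs $O(n\log\sigma)$ bits of working space and is discarded once the two RMQ indices have been produced.

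Next I would build a single MMPHF $F$ on the set $X=\{(A[i],i) : i\in[1..n]\}$ embedded into the universe $[1..\sigma]\times[1..n]$ of size $\sigma n$ under lexicographic order. Invoking Lemma \ref{lemma:build_mmphf2} with block length $b=\sigma$ yields a structure of size $O(n\log\log\sigma)+n(2+\lceil\log 1\rceil)+o(n)=O(n\log\log\sigma)$ bits, built in randomized $O(n)$ time with $O(\sigma\log\sigma)=o(n)$ bits of internal working space. The construction needs sequential access to $X$ in sorted order, which I would supply in $O(n)$ total time by iterating $c=1,\ldots,\sigma$ and, within each $c$, emitting $(c,\mathtt{select}_c(A,j))$ for $j=1,\ldots,|S_c|$, using the same temporary select structure as above. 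Alongside $F$ I would store the cumulative frequencies $O[c]=\sum_{c'<c}|S_{c'}|$ in $\sigma\lceil\log n\rceil=o(n)$ bits. Because the lex rank of $(c,i)$ equals $O[c]$ plus the rank of $i$ within $S_c$, I can compute $\mathtt{partialRank}(A,i)=F(A[i],i)-O[A[i]]$ in constant time using one direct read of $A$, one MMPHF evaluation and one table lookup.

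The query then mimics Lemma \ref{lemma:rangedistinct} verbatim: the recursion on $[i..j]$ is driven by $\mathtt{rmq}_P$ and $\mathtt{rmq}_N$, the character at each position $k$ returned by an RMQ is read directly from $A$, the $\sigma\log(n+1)$-bit $\mathtt{chars}$ buffer filters already-reported characters and remembers first occurrences, and the MMPHF-based $\mathtt{partialRank}$ produces the partial ranks attached to each distinct character. The hard part is squeezing the partial-rank support into $O(n\log\log\sigma)$ bits: a naive one-MMPHF-per-character scheme via Lemma \ref{lemma:build_mmphf1} would spend $\sum_c O(|S_c|\log\log n)=O(n\log\log n)$, which is too large when $\sigma\ll n$. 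Collapsing all characters into the single universe of size $\sigma n$ and calling Lemma \ref{lemma:build_mmphf2} with $b=\sigma$ is what replaces $\log\log n$ by $\log\log\sigma$ and makes the prefix-sum contribution $\lceil U/b\rceil(2+\lceil\log(nb/U)\rceil)+o(U/b)$ collapse to $O(n)+o(n)$; verifying this arithmetic and checking that the sorted-stream input to Lemma \ref{lemma:build_mmphf2} can itself be produced in $O(n)$ time and $O(n\log\sigma)$ bits of working space is the main technical obstacle.
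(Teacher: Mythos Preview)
Your proposal is correct and follows the same high-level scheme as the paper: build RMQ structures on the previous- and next-occurrence arrays, then use an MMPHF to support constant-time $\mathtt{partialRank}$ in $O(n\log\log\sigma)$ bits. The only genuine difference is in how the MMPHF is instantiated. The paper builds \emph{one MMPHF per character} $c$, on the set of positions of $c$ in $A$, invoking Lemma~\ref{lemma:build_mmphf2} with universe $U=n$ and block size $b=\sigma^k$ for $k\in\{1,2\}$; summing over $c$ yields the $O(n\log\log\sigma)$ bound. You instead build a \emph{single} MMPHF on the product universe $[1..\sigma]\times[1..n]$ of size $\sigma n$ with block size $b=\sigma$, and recover $\mathtt{partialRank}(A,i)$ by subtracting a cumulative count $O[A[i]]$. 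Your arithmetic is right: with $U=\sigma n$, $b=\sigma$ the prefix-sum term of Lemma~\ref{lemma:build_mmphf2} collapses to $2n+o(n)$, and the dominant term is $O(n\log\log\sigma)$.

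Both routes are valid; yours is arguably cleaner since it avoids managing $\sigma$ separate structures and the associated space accounting. The paper's per-character approach, on the other hand, keeps around (during construction) the gamma-coded position lists $P_c$ together with bitvectors $\mathtt{start}_c$, which it also uses to simulate random access to the previous-occurrence array $P$; you achieve the same simulation directly via the temporary $\mathtt{partialRank}$/$\mathtt{select}$ structure of Lemma~\ref{lemma:rank_select_access}, which is equally legitimate and stays within the $O(n\log\sigma)$ working-space budget.
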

\begin{proof}
We build the set of sequences $\{ P_c : c \in [1..\sigma] \}$, such that $P_c$ contains all the positions $p_1,p_2,\dots,p_k$ of character $c$ in $A$ in increasing order. We encode $P_c$ as a bitvector such that position $p_i$ for $i>1$ is represented by the Elias gamma coding of $p_i-p_{i-1}$. The total space taken by all such sequences is $O(n\log{\sigma})$ bits, by applying Jensen's inequality twice. Let $|P_c|$ be the number of bits in $P_c$: we compute $|P_c|$ and we allocate a corresponding region of memory using the static allocation strategy described in Section \ref{sec:staticAllocation}. We also mark in an additional bitvector $\mathtt{start}_{c}[1..|P_c|]$ the first bit of every representation of a $p_i$ in $P_c$, and we index $\mathtt{start}_{c}$ to support $\mathtt{select}$ queries.

Then, we build an implementation of an MMPHF for every $P_c$, using Lemma \ref{lemma:build_mmphf2} with $U=n$ and $b=\sigma^k$ for some positive integer $k$. Specifically, for every $c$, we perform a single scan of sequence $P_c$, decoding all the positions that fall inside the same block of $A$ of size $\sigma^k$, and building an implementation of an MMPHF for the positions inside the block. Once all such MMPHFs have been built, we discard all $P_c$ sequences. The total space used by all MMPHF implementations is at most $O(n(\log{\log{\sigma}}+\log{k})) + (nk/\sigma^k)\log{\sigma} + 2n/\sigma^k + o(n/\sigma^k)$ bits: any $k \geq 1$ makes such space fit in $O(n \log{\log{\sigma}})$ bits, and it makes the working space of the construction fit in $O(\sigma^{k}\log{\sigma})$ bits. Since we assumed $\sigma \in o(\sqrt{n}/\log{n})$, setting $k \in \{1,2\}$ makes this additional space fit in $O(n\log{\sigma})$ bits.

Finally, we proceed as in Lemma \ref{lemma:rangedistinct}. Given a position $i$, we can compute $\mathtt{rank}_{A}(A[i],i)$ by querying the MMPHF data structure of character $A[i]$, and we can simulate random access to $P[i]$ by querying the MMPHF data structure of character $A[i]$ and by accessing $p_i-P[i]$ using a $\mathtt{select}$ operation on $\mathtt{start}_{A[i]}$.
\end{proof}

Lemma \ref{lemma:rangedistinct} builds an internal representation of $A$, and the original representation of $A$ provided in the input can be discarded. On the other hand, Lemma \ref{lemma:rangedistinct2} uses the input representation of $A$ to answer queries, thus it can be combined with any representation of $A$ that allows constant-time access -- for example with those that represent $A$ up to its $k$th order empirical entropy for $k \in o(\log_{\sigma}n)$ \cite{ferragina2007simple}.


\section{Enumerating all right-maximal substrings}\label{sec:enumeration}

The following problem lies at the core of our construction and, as we will see in Section \ref{sec:stringAnalysis}, it captures the requirements of a number of fundamental string analysis algorithms:

\begin{problem} \label{problem:iterator}
Given a string $T \in [1..\sigma]^{n-1}\#$, return the following information for all right-maximal substrings $W$ of $T$:
\begin{itemize}
\item $|W|$ and $\INTERVAL{W}$ in $\SA_{T}$;
\item the sorted sequence $b_1 < b_2 < \dots < b_k$ of all the distinct characters in $[0..\sigma]$ such that $W b_i$ is a substring of $T$;
\item the sequence of intervals $\INTERVAL{Wb_1},\dots,\INTERVAL{Wb_k}$;
\item a sequence $a_1, a_2, \dots, a_h$ that lists all the $h$ distinct characters in $[0..\sigma]$ such that $a_i W$ is a prefix of a rotation of $T$; the sequence $a_1, a_2, \dots, a_h$ is not necessarily in lexicographic order;
\item the sequence of intervals $\INTERVAL{a_1 W},\dots,\INTERVAL{a_h W}$.
\end{itemize}
\end{problem}

Problem \ref{problem:iterator} does not specify the order in which the right-maximal substrings of $T$ (or equivalently, the internal nodes of $\ST_T$) must be enumerated, nor the order in which the left-extensions $a_i W$ of a right-maximal substring $W$ must be returned. It does, however, specify the order in which the \emph{right-extensions} $W b_i$ of $W$ must be returned.

The first step for solving Problem \ref{problem:iterator} consists in devising a suitable representation for a right-maximal substring $W$ of $T$. Let $\gamma(a,W)$ be the number of distinct strings $Wb$ such that $aWb$ is a prefix of a rotation of $T$, where $a \in [0..\sigma]$ and $b \in \{b_1,\dots,b_k\}$. Note that there are precisely $\gamma(a,W)$ distinct characters to the right of $aW$ when it is a prefix of a rotation of $T$: thus, if $\gamma(a,W)=0$, then $aW$ is not a prefix of any rotation of $T$; if $\gamma(a,W)=1$ (for example when $a=\#$), then $aW$ is not a right-maximal substring of $T$; and if $\gamma(a,W) \geq 2$, then $aW$ is a right-maximal substring of $T$. This suggests to represent a substring $W$ of $T$ with the following pair:
$$
\REPR{W} = (\mathtt{chars}[1..k],\mathtt{first}[1..k+1])
$$
where $\mathtt{chars}[i]=b_i$ and $\INTERVAL{Wb_i}=[\mathtt{first}[i]..\mathtt{first}[i+1]-1]$ for $i \in [1..k]$. Note that $\INTERVAL{W} = \big[ \mathtt{first}[1]..\mathtt{first}[k+1]-1 \big]$, since it coincides with the concatenation of the intervals of the right-extensions of $W$ in lexicographic order. If $W$ is not right-maximal, array $\mathtt{chars}$ and $\mathtt{first}$ in $\REPR{W}$ have length one and two, respectively.

Given $\REPR{W}$, $\REPR{a_i W}$ can be precomputed for all $i \in [1..h]$, as follows:

\begin{lemma} \label{lemma:extendLeft}
Assume the notation of Problem \ref{problem:iterator}. Given a data structure that supports $\mathtt{rangeDistinct}$ queries on $\BWT_T$, given the $C$ array of $T$, and given $\REPR{W}=(\mathtt{chars}[1..k],\mathtt{first}[1..k+1])$ for a substring $W$ of $T$, we can compute the sequence $a_1,\dots,a_h$ and the corresponding sequence $\REPR{a_1 W},\dots,\REPR{a_h W}$, in $O(t \cdot \mathtt{occ})$ time and in $O(\sigma^{2}\log{n})$ bits of temporary space, where $t$ is the time taken by the $\mathtt{rangeDistinct}$ operation per element in its output, and $\mathtt{occ}$ is the number of distinct strings $a_{i}Wb_j$ that are the prefix of a rotation of $T$, where $i \in [1..h]$ and $j \in [1..k]$.
\end{lemma}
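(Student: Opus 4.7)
\medskip

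\noindent\textbf{Proof plan.} The approach rests on the backward-search identity: for every string $V \in \mathcal{S}(T)$ and every character $a \in [0..\sigma]$, the interval $\INTERVAL{aV}$ in $\SA_T$ is fully determined by the positions at which $a$ occurs inside $\BWT_T[\INTERVAL{V}]$. I will apply this identity with $V=Wb_j$ for each of the $k$ right-extensions of $W$ simultaneously, aggregating the results per left-extending character.

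\noindent\textbf{Main loop.} For $j=1,2,\dots,k$ in order, invoke $\mathtt{rangeDistinct}$ on the interval $[\mathtt{first}[j]..\mathtt{first}[j+1]-1]$ of $\BWT_T$, which coincides with $\INTERVAL{Wb_j}$. Each call returns, for every distinct character $a$ appearing in the interval, a triple $(a,r_a,s_a)$ where $r_a=\mathtt{rank}_{\BWT_T}(a,p_a)$ and $s_a=\mathtt{rank}_{\BWT_T}(a,q_a)$ for the first and last occurrences $p_a,q_a$ of $a$ inside the interval. Since $\mathtt{rank}_{\BWT_T}(a,p_a)=\mathtt{rank}_{\BWT_T}(a,\mathtt{first}[j]-1)+1$ and $\mathtt{rank}_{\BWT_T}(a,q_a)=\mathtt{rank}_{\BWT_T}(a,\mathtt{first}[j+1]-1)$, the standard $\LF$-mapping formula yields
\[
\INTERVAL{aWb_j}=[\,C[a]+r_a \;..\; C[a]+s_a\,].
\]

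\noindent\textbf{Accumulation.} I will allocate two auxiliary tables $\mathtt{localChars}[0..\sigma]$ and $\mathtt{localFirst}[0..\sigma]$, where each entry $\mathtt{localChars}[a]$ is a buffer of up to $k\le\sigma$ characters and each $\mathtt{localFirst}[a]$ is a buffer of up to $k+1$ integers of $\log n$ bits; together they use $O(\sigma^{2}\log n)$ bits. I also keep a compact list $L$ of the characters $a$ that have been touched at least once, to avoid scanning the full alphabet at the end. On receiving a triple $(a,r_a,s_a)$ in iteration $j$, I append $b_j=\mathtt{chars}[j]$ to $\mathtt{localChars}[a]$ and $C[a]+r_a$ to $\mathtt{localFirst}[a]$, overwrite the pending sentinel slot of $\mathtt{localFirst}[a]$ with $C[a]+s_a+1$, and add $a$ to $L$ if unseen. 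Because the outer loop iterates $j$ in increasing order and $\mathtt{chars}[1..k]$ is already sorted by assumption, each $\mathtt{localChars}[a]$ is built in lexicographic order and each $\mathtt{localFirst}[a]$ is non-decreasing. After the loop, I output $(a,\REPR{aW})$ for every $a\in L$ by reading off the two buffers.

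\noindent\textbf{Complexity and obstacles.} The $j$-th call to $\mathtt{rangeDistinct}$ costs $O(t\cdot\mathtt{occ}_j)$ where $\mathtt{occ}_j$ is the number of distinct characters in $\BWT_T[\INTERVAL{Wb_j}]$; summing over $j$ gives $\sum_j \mathtt{occ}_j=\mathtt{occ}$ and hence $O(t\cdot \mathtt{occ})$ total time, since the per-triple postprocessing is $O(1)$. The temporary space $O(\sigma^{2}\log n)$ for the two buffers dominates the $\sigma\log(n+1)$ bits required internally by each $\mathtt{rangeDistinct}$ call (Lemma~\ref{lemma:rangedistinct}). Correctness follows from the bijection $aWb_j\in\mathcal{S}(T)\iff a\text{ occurs in }\BWT_T[\INTERVAL{Wb_j}]$, so no left-extension is missed and the aggregated intervals are precisely $\INTERVAL{aWb_j}$. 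The only delicate point is maintaining the closing sentinel of each $\mathtt{localFirst}[a]$ without knowing a priori which $j$ contributes the last right-extension of $aW$; the trick above (overwriting the sentinel on every visit to $a$) handles this cleanly and is the only subtlety in an otherwise direct aggregation.
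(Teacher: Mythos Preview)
Your proposal is correct and follows essentially the same approach as the paper: iterate over the $k$ right-extensions $Wb_j$ in order, call $\mathtt{rangeDistinct}$ on each interval, use the $C$ array to convert the returned partial ranks into $\INTERVAL{aWb_j}$, and accumulate the results in per-character buffers of total size $O(\sigma^2\log n)$. The paper uses three matrices $A$, $F$, $L$ and a counter array $\mathtt{gamma}$ where you use two buffers and a touched-list, and it appends the final sentinel $L[a,\mathtt{gamma}[a]]+1$ once at the end rather than overwriting it on every visit, but these are purely cosmetic bookkeeping differences.
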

\begin{proof}
Let $\mathtt{leftExtensions}[1 \ltdots \sigma+1]$ be a vector of characters given in input to the algorithm and initialized to all zeros, and let $h$ be the number of nonempty cells in this vector. We will store in vector $\mathtt{leftExtensions}$ all characters $a_1,a_2,\dots,a_h$, not necessarily in lexicographic order. Consider also matrices $A[0 \ltdots \sigma, 1 \ltdots \sigma+1]$, $F[0 \ltdots \sigma, 1 \ltdots \sigma+1]$ and $L[0 \ltdots \sigma, 1 \ltdots \sigma+1]$, given in input to the algorithm and initialized to all zeros, whose rows correspond to possible left-extensions of $W$. We will store character $b_j$ in cell $A[a_i,p]$, for increasing values of $p$ starting from one, iff $a_i W b_j$ is the prefix of a rotation of $T$: in this case, we will also set $F[a_i,p]=\SP{a_i W b_j}$ and $L[a_i,p]=\EP{a_i W b_j}$. In other words, every triplet $(A[a_i,p],F[a_i,p],L[a_i,p])$ identifies the right-extension $Wb_j$ of $W$ associated with character $b_j = A[a_i,p]$, and it specifies the interval of $a_i W b_j$ in $\BWT_T$ (see Figure \ref{fig:unidirectionalBWTIndex}). We use array $\mathtt{gamma}[0 \ltdots \sigma]$, given in input to the algorithm and initialized to all zeros, to maintain, for every $a \in [0 \ltdots \sigma]$, the number of distinct characters $b \in \{b_1,\dots,b_k\}$ such that $aWb$ is the prefix of a rotation of $T$, or equivalently the number of nonempty cells in row $a$ of matrices $A$, $F$ and $L$. In other words, $\mathtt{gamma}[a]=\gamma(a,W)$.

For every $j \in [1..k]$, we enumerate all the distinct characters that occur inside the interval $\BWT_{T}[\mathtt{first}[j]..\mathtt{first}[j+1]-1]$ of string $Wb_j=W \cdot \mathtt{chars}[j]$, along with the corresponding partial ranks, using operation $\mathtt{rangeDistinct}$. Recall that $\mathtt{rangeDistinct}$ does not necessarily return such characters in lexicographic order. For every character $a$ returned by $\mathtt{rangeDistinct}$, we compute $\INTERVAL{aWb_j}$ in constant time using the $C$ array and the partial ranks, we increment counter $\mathtt{gamma}[a]$ by one, and we set: 
\begin{eqnarray*}
A\big[a,\mathtt{gamma}[a]\big] & = & \mathtt{chars}[j] \\
F\big[a,\mathtt{gamma}[a]\big] & = & \SP{aWb_j} \\
L\big[a,\mathtt{gamma}[a]\big] & = & \EP{aWb_j}
\end{eqnarray*}
See Figure \ref{fig:unidirectionalBWTIndex} for an example. If $\mathtt{gamma}[a]$ transitioned from zero to one, we increment $h$ by one and we set $\mathtt{leftExtensions}[h]=a$. At the end of this process, $\mathtt{leftExtensions}[i]=a_i$ for $i \in [1..h]$ (note again that the characters in $\mathtt{leftExtensions}[1..h]$ are not necessarily sorted lexicographically), the nonempty rows in $A$, $F$ and $L$ correspond to such characters, the characters that appear in row $a_i$ of matrix $A$ are sorted lexicographically, and the corresponding intervals $\big[ F[a_i,p]..L[a_i,p] \big]$ are precisely the intervals of string $a_i W \cdot A[a_i,p]$ in $\BWT_T$. It follows that such intervals are adjacent in $\BWT_T$, thus:
$$
\REPR{a_i W} = \left(A\big[a_i,1..\mathtt{gamma}[a_i]\big],F\big[a_i,1..\mathtt{gamma}[a_i]\big] \bullet \big(L\big[a_i,\mathtt{gamma}[a_i]\big]+1\big) \right)
$$
where $X \bullet y$ denotes appending number $y$ to the end of array $X$. We can restore all matrices and vectors to their original state within the claimed time budget, by scanning over all cells of $\mathtt{leftExtensions}$, using their value to address matrices $A$, $F$ and $L$, and using array $\mathtt{gamma}$ to determine how many cells must be cleaned in each row of such matrices.
\end{proof}

\begin{figure}[t!]
\begin{center}
\includegraphics[width = 0.8\textwidth]{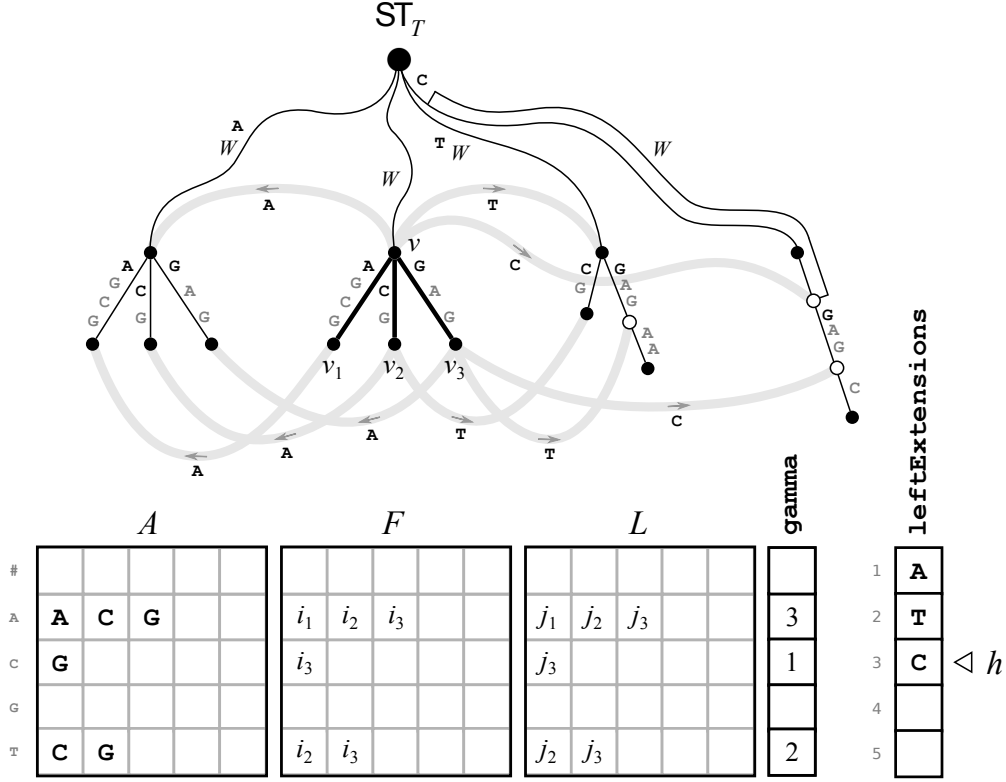}
\caption{Lemma \ref{lemma:extendLeft} applied to the right-maximal substring $W=\ell(v)$. Gray directed arcs represent implicit and explicit Weiner links. White dots represent the destinations of implicit Weiner links. Child $v_k$ of node $v$ in $\ST_T$ has interval $[i_k..j_k]$ in $\BWT_T$, where $k \in [1..3]$. Among all strings prefixed by string $W$, only those prefixed by $W\mathtt{GAG}$ are preceded by $\mathtt{C}$: it follows that $\mathtt{C}W$ is always followed by $\mathtt{G}$ and it is not right-maximal, thus the Weiner link from $v$ labeled by $\mathtt{C}$ is implicit. Conversely, $W\mathtt{AGCG}$, $W\mathtt{CG}$ and $W\mathtt{GAG}$ are all preceded by an $\mathtt{A}$, so $\mathtt{A}W$ is right-maximal and the Weiner link from $v$ labeled by $\mathtt{A}$ is explicit.
}
\label{fig:unidirectionalBWTIndex}
\end{center}
\end{figure}

Iterated applications of Lemma \ref{lemma:extendLeft} are almost all we need to solve Problem \ref{problem:iterator} efficiently, as described in the following lemma:

\begin{lemma} \label{lemma:iterator}
Given a data structure that supports $\mathtt{rangeDistinct}$ queries on the BWT of a string $T \in [1..\sigma]^{n-1}\#$, and given the $C$ array of $T$, there is an algorithm that solves Problem \ref{problem:iterator} in $O(nt)$ time and in $O(\sigma^2\log^{2}{n})$ bits of working space, where $t$ is the time taken by the $\mathtt{rangeDistinct}$ operation per element in its output.
\end{lemma}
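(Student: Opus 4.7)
The plan is to perform an iterative depth-first traversal of the suffix-link tree of $T$, keeping on an explicit stack the $\REPR{\cdot}$ descriptors of the right-maximal substrings still to be visited. I initialize the stack with $\REPR{\varepsilon}$, which is built in $O(\sigma)$ time from the $C$ array, since $\INTERVAL{\varepsilon}=[1..n]$ and the right-extensions of the empty string are exactly the characters occurring in $T$. Each iteration pops a descriptor $\REPR{W}$, emits all the data required by \Prob{problem:iterator} (the depth $|W|$ is stored alongside $W$ on the stack, while the interval and the sorted right-extensions together with their intervals are read off $\REPR{W}$ itself), invokes \Lemma{lemma:extendLeft} on $\REPR{W}$ to obtain $a_1,\ldots,a_h$ and $\REPR{a_1 W},\ldots,\REPR{a_h W}$, and finally pushes onto the stack exactly those $\REPR{a_i W}$ whose $\mathtt{chars}$ array has length at least two, i.e. those $a_i W$ that are themselves right-maximal. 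This is a correct traversal because the internal nodes of the suffix-link tree are the right-maximal substrings of $T$ and its edges are the explicit Weiner links.

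For the running time I charge each invocation of \Lemma{lemma:extendLeft} at a node $W$ to $t\cdot\mathtt{occ}(W)$, where $\mathtt{occ}(W)$ counts the distinct triples $(a,W,b)$ such that $aWb$ prefixes some rotation of $T$. Summing over right-maximal $W$, every such triple falls into one of two disjoint classes. Either $aW$ is not right-maximal, in which case $b$ is uniquely determined by $(a,W)$ and the triple corresponds bijectively to an implicit Weiner link out of the locus of $W$; or $aW$ is right-maximal, in which case $(a,W)$ is an explicit Weiner link into an internal node of $\ST_T$ and $b$ labels one of its outgoing edges. Since every non-root internal node of $\ST_T$ is the target of exactly one explicit Weiner link (the reverse of its suffix link), the second count is bounded by the total number of edges in $\ST_T$. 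Combined with the $O(n)$ upper bound on both types of Weiner link from \Obs{obs:suffixtree}, this gives $\sum_W \mathtt{occ}(W)\in O(n)$, so the algorithm runs in $O(nt)$ time overall.

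The main obstacle, and the reason for the $O(\sigma^2\log^2 n)$-bit working-space budget, lies in bounding the stack. Each descriptor occupies $O(\sigma\log n)$ bits ($\mathtt{chars}$ has at most $\sigma+1$ characters and $\mathtt{first}$ at most $\sigma+2$ entries of $\log n$ bits), so I must cap the number of concurrent descriptors at $O(\sigma\log n)$. I do this with a ``smallest-interval first'' push order: at a node $W$ I push the right-maximal children in decreasing order of $|\INTERVAL{\cdot}|$, placing the heaviest at the bottom and the lightest on top, so the next iteration always descends into the child with the smallest interval. Because $\sum_a|\INTERVAL{aW}|\le|\INTERVAL{W}|$, every ``light'' (non-heaviest) right-maximal child has interval at most $|\INTERVAL{W}|/2$. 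A standard induction on the interval size $s$ then shows that processing a subtree whose root has interval $\le s$ keeps the stack within $O(\sigma\log s)$ items: a unary node contributes only a transient single item (pushed and immediately popped), while a branching node with $k$ right-maximal children leaves $k-1\le\sigma-1$ pending siblings below, above which the currently processed subtree has interval at most $s/2$; when the heaviest child is finally popped there are no leftover siblings from that batch above it, so it inherits the inductive bound without additional penalty. Multiplying $O(\sigma\log n)$ descriptors by $O(\sigma\log n)$ bits each yields the claimed $O(\sigma^2\log^2 n)$-bit working space, on top of the $O(\sigma^2\log n)$ bits of temporary memory that \Lemma{lemma:extendLeft} needs, which can be allocated once and reused across all invocations.
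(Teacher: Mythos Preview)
Your proof is correct and follows essentially the same approach as the paper: a depth-first traversal of the suffix-link tree using \Lemma{lemma:extendLeft}, with the heaviest-child-at-the-bottom push discipline (the quicksort stack trick) to bound the stack at $O(\sigma\log n)$ descriptors of $O(\sigma\log n)$ bits each. One minor point: pushing in fully \emph{decreasing} order would require sorting and could cost an extra $\log\sigma$ factor per node when $t=O(1)$; both your own inductive argument and the paper only need the single heaviest child identified and pushed first (which takes $O(h)$ time), with the remaining children pushed in arbitrary order.
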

\begin{proof}
We use again the notation of Problem \ref{problem:iterator}. Assume by induction that we know $\REPR{W}=(\mathtt{chars}[1..k],\mathtt{first}[1..k+1])$ and $|W|$ for some right-maximal substring $W$ of $T$. Using Lemma \ref{lemma:extendLeft}, we compute $a_i$ and $\REPR{a_i W}=(\mathtt{chars}_{i}[1..k_i],\mathtt{first}_{i}[1..k_i+1])$ for all $i \in [1..h]$, and we determine whether $a_i W$ is right-maximal by checking whether $|\mathtt{chars}_i|>1$, or equivalently whether $\mathtt{gamma}[a_i]>1$ in Lemma \ref{lemma:extendLeft}: if this is the case, we push pair $(\REPR{a_i W},|W|+1)$ to a stack $S$. In the next iteration, we pop the representation of a string from the stack and we repeat the process, until the stack becomes empty. Note that this is equivalent to following all the explicit Weiner links from (or equivalently, all the reverse suffix links to) the node $v$ of $\ST_T$ with $\ell(v)=W$, not necessarily in lexicographic order. Thus, running the algorithm from a stack initialized with $\REPR{\varepsilon}$ is equivalent to a depth-first traversal of the \emph{suffix-link tree} of $T$ (not necessarily following the lexicographic order of Weiner link labels): recall from Section \ref{sec:suffixTree} that a traversal of $\SLT_T$ guarantees to enumerate all the right-maximal substrings of $T$. Triplet $\REPR{\varepsilon}$ can be easily built from the $C$ array of $T$.

Every $\mathtt{rangeDistinct}$ query performed by the algorithm can be charged to a distinct node of $\ST_T$, and every tuple in the output of all such $\mathtt{rangeDistinct}$ queries can be charged to a distinct (explicit or implicit) Weiner link. It follows from Observation \ref{obs:suffixtree} that the algorithm runs in $O(nt)$ time. Since the algorithm performs a depth-first traversal of the suffix-link tree of $T$, the depth of the stack is bounded by the length of a longest right-maximal substring of $T$. More precisely, since we always pop the element at the top of the stack, the depth of the stack is bounded by quantity $\mu_T$ defined in \Sec{sect:definitions_strings}, i.e. by the largest number of (not necessarily proper) suffixes of a maximal repeat that are themselves maximal repeats. Even more precisely, since we push just right-maximal substrings, the depth of the stack is bounded by quantity $\lambda_T$ defined in \Sec{sect:definitions_strings}. Unfortunately, $\lambda_T$ might be $O(n)$. We reduce this depth to $O(\log n)$ by pushing at every iteration the pair $(\REPR{a_i W},|a_i W|)$ with largest $\INTERVAL{a_i W}$ first (a technique already described in \cite{hoare1962quicksort}): the interval of every other $aW$ is necessarily at most half of $\INTERVAL{W}$, thus stack $S$ contains at any time pairs from $O(\log n)$ suffix-link tree levels. Every such level contains $O(\sigma)$ pairs, and every pair takes $O(\sigma\log{n})$ bits, thus the total space used by the stack is $O(\sigma^2 \log^{2}{n})$ bits.
\end{proof}

Algorithm \ref{algo:iterator} summarizes Lemma \ref{lemma:iterator} in pseudocode. 
Combining Lemma \ref{lemma:iterator} with the $\mathtt{rangeDistinct}$ data structure of Lemma \ref{lemma:rangedistinct} we obtain the following result:

\begin{theorem} \label{thm:iterator}
Given the BWT of a string $T \in [1..\sigma]^{n-1}\#$, we can solve Problem \ref{problem:iterator} in $O(nk)$ time, and in $n\log{\sigma}(1+1/k)+10n+(2\sigma+1)\log(n+1)+o(n)=n\log{\sigma}(1+1/k)+O(n)+O(\sigma\log{n})$ bits of working space, for any positive integer $k$.
\end{theorem}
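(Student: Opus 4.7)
The plan is to combine Lemma~\ref{lemma:iterator} with a concrete implementation of the $\mathtt{rangeDistinct}$ oracle provided by Lemma~\ref{lemma:rangedistinct}, and then to argue about the time and peak working space separately.

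First, I would build the $\mathtt{rangeDistinct}$ data structure of Lemma~\ref{lemma:rangedistinct} on $\BWT_T$, using the same parameter $k$ that appears in the theorem statement. By that lemma, this takes $O(kn)$ time, uses $(n/k)\log\sigma + 2n + o(n)$ bits of transient working space during construction, and leaves behind a structure of size $n\log\sigma + 8n + o(n)$ bits that answers each $\mathtt{rangeDistinct}$ query in $O(\mathtt{occ})$ time, i.e.\ in $t = O(1)$ time per element of its output, using $\sigma\log(n+1)$ bits of temporary space. In parallel I would scan $\BWT_T$ once to compute the $C$ array in $O(n)$ time and $O(\sigma\log n)$ bits, which is absorbed in the lower-order terms since $\sigma \in o(\sqrt{n}/\log n)$.

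Next I would invoke Lemma~\ref{lemma:iterator} on these two objects. Because $t = O(1)$, the algorithm runs in $O(n)$ time on top of the $O(kn)$ construction cost, giving a total of $O(nk)$ as claimed. Its additional working space is $O(\sigma^2\log^2 n)$ bits for the depth-first stack (which, under the assumption $\sigma \in o(\sqrt{n}/\log n)$, is $o(n)$), plus the $\sigma\log(n+1)$ bits of temporary space demanded by each $\mathtt{rangeDistinct}$ query, plus the auxiliary vectors of Lemma~\ref{lemma:extendLeft} ($\mathtt{leftExtensions}$, $\mathtt{gamma}$, and the three matrices $A$, $F$, $L$), which contribute the remaining $O(\sigma\log n)$ term in the stated bound.

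For the space accounting I would add up the coexisting contributions to the peak: the $\mathtt{rangeDistinct}$ structure itself ($n\log\sigma + 8n + o(n)$ bits), the transient $(n/k)\log\sigma + 2n + o(n)$ bits used only during its construction (which is the moment of peak working space), and the $o(n) + (2\sigma+1)\log(n+1)$ bits of the iteration phase. These combine to $n\log\sigma(1 + 1/k) + 10n + (2\sigma+1)\log(n+1) + o(n)$ bits, matching the theorem. The only point that requires a tiny amount of care is to note that construction peak and iteration peak occur at disjoint times, so one can reuse the $(n/k)\log\sigma$ transient memory for the iteration's $o(n)$ auxiliary structures; since the construction peak dominates, the stated bound is the true peak. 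This is a routine bookkeeping step rather than a genuine obstacle, so no deeper argument is needed.
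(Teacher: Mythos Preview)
Your proposal is correct and follows essentially the same approach as the paper: build the $\mathtt{rangeDistinct}$ structure of Lemma~\ref{lemma:rangedistinct} with parameter $k$, compute the $C$ array, and invoke Lemma~\ref{lemma:iterator}, then add up the pieces. One minor bookkeeping point: the explicit $(2\sigma+1)\log(n+1)$ term in the statement comes from the $C$ array ($(\sigma+1)\log n$ bits) together with the $\sigma\log(n+1)$ bits of temporary space used by each $\mathtt{rangeDistinct}$ query, not from the matrices $A$, $F$, $L$ of Lemma~\ref{lemma:extendLeft}; those matrices are $O(\sigma^2\log n)$ bits and, like the stack, are already absorbed into the $o(n)$ term under the standing assumption $\sigma \in o(\sqrt{n}/\log n)$.
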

\begin{proof}
Lemma \ref{lemma:iterator} needs just the $C$ array, which takes $(\sigma+1)\log n$ bits, and a $\mathtt{rangeDistinct}$ data structure: the one in Lemma \ref{lemma:rangedistinct} takes $n\log{\sigma}+8n+o(n)$ bits of space, and it answers queries in time linear in the size of their output and in $\sigma\log{(n+1)}$ bits of space in addition to the output. 
Building the $C$ array from $\BWT_T$ takes $O(n)$ time, and building the $\mathtt{rangeDistinct}$ data structure of Lemma \ref{lemma:rangedistinct} takes $O(nk)$ time and $(n/k)\log{\sigma}+2n+o(n)$ bits of working space, for any positive integer $k$.
\end{proof}

Note that replacing Lemma \ref{lemma:rangedistinct} in Theorem \ref{thm:iterator} with the alternative construction of Lemma \ref{lemma:rangedistinct2} introduces randomization and it does not improve space complexity.

As we saw in Section \ref{sec:topology}, having an efficient algorithm to enumerate all intervals in $\BWT_T$ of right-maximal substrings of $T$ has an immediate effect on the construction of the balanced parentheses representation of $\ST_T$. The following result derives immediately from plugging Theorem \ref{thm:iterator} in Lemma \ref{lemma:tree_topology}:

\begin{theorem}\label{thm:topology}
Given the BWT of a string $T \in [1..\sigma]^{n-1}\#$, we can build the balanced parentheses representation of the topology of $\ST_T$ in $O(nk)$ time and in $n\log{\sigma}(1+1/k)+O(n)$ bits of working space, for any positive integer $k$.
\end{theorem}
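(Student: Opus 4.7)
The plan is to derive this statement as an immediate corollary of the two pieces already in place: Theorem \ref{thm:iterator}, which enumerates the $\SA_{T\#}$-intervals of all right-maximal substrings (equivalently, all internal nodes of $\ST_{T\#}$) in $O(nk)$ total time and $n\log{\sigma}(1+1/k)+O(n)+O(\sigma\log n)$ bits of working space, and Lemma \ref{lemma:tree_topology}, which turns any such enumerator into a builder of the balanced-parentheses representation of $\ST_{T\#}$ using only $O(n)$ additional bits and $O(1)$ amortized extra work per enumerated interval. So the whole argument is essentially a substitution: instantiate Lemma \ref{lemma:tree_topology} with the enumerator supplied by Theorem \ref{thm:iterator}.

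Concretely, I would first invoke Theorem \ref{thm:iterator} to obtain, starting from $\BWT_{T\#}$ alone, a routine that outputs the pair $(\SP{W},\EP{W})$ for every right-maximal substring $W$ of $T\#$ (these are exactly the intervals of internal suffix-tree nodes), running in $O(nk)$ total time. Then I would feed this routine to Lemma \ref{lemma:tree_topology}: that proof first runs the enumerator once to determine, via the static-allocation strategy of Section~\ref{sec:staticAllocation}, the size of each counter block in the $C_o,C_c$ arrays, and then runs the enumerator a second time to increment the corresponding counters. Each of the $O(n)$ intervals produced is handled in $O(1)$ extra time (updates of two counters plus a prefix-sum lookup), so the combined running time is dominated by the two enumeration passes and is $O(nk)$.

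For the working space, the iterator contributes $n\log{\sigma}(1+1/k)+O(n)+O(\sigma\log n)$ bits, and Lemma \ref{lemma:tree_topology} contributes $O(n)$ bits for the counter arrays and precomputed tables. The $O(\sigma\log n)$ summand is absorbed because we assumed $\sigma\in o(\sqrt{n}/\log n)$ in \Sec{sect:definitions_strings}, giving $\sigma\log n\in o(\sqrt n)\subseteq o(n)$. Hence the total working space simplifies to $n\log{\sigma}(1+1/k)+O(n)$, as claimed; the output of $2n+o(n)$ parentheses is not counted.

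There is no real obstacle here; the only point that needs care is verifying that the two components can coexist without their space bounds compounding. In particular, the iterator's stack, $\mathtt{rangeDistinct}$ data structure, and $C$ array must be alive simultaneously with the counter arrays and Elias-gamma coding tables used inside Lemma \ref{lemma:tree_topology}, but their space usages are additive and fall within $n\log{\sigma}(1+1/k)+O(n)$. One might also note that the iterator must be genuinely re-runnable from scratch for the second pass of Lemma \ref{lemma:tree_topology}, which is immediate since it reads only the $\mathtt{rangeDistinct}$ structure and $C$ array, both of which are preserved across invocations.
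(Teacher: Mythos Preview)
Your proposal is correct and matches the paper's approach exactly: the paper states just before the theorem that the result ``derives immediately from plugging Theorem~\ref{thm:iterator} in Lemma~\ref{lemma:tree_topology}'' and gives no further proof. Your elaboration of the details (two enumeration passes for static allocation, additive space accounting, and absorption of the $O(\sigma\log n)$ term via the standing assumption $\sigma\in o(\sqrt{n}/\log n)$) is accurate and fills in precisely what the paper leaves implicit.
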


In this paper we will also need to enumerate all the right-maximal substrings of a \emph{concatenation} $T=T^1 T^2 \cdots T^m$ of $m$ strings $T^1,T^2,\dots,T^m$, where $T^i \in [1..\sigma]^{n_{i}-1}\#_i$ for $i \in [1..m]$. Recall that the right-maximal substrings of $T$ correspond to the internal nodes of the \emph{generalized suffix tree} of $T^1,T^2,\dots,T^m$, thus we can solve Problem \ref{problem:generalizedIterator} by applying Lemma \ref{lemma:iterator} to the BWT of $T$. If we are just given the BWT of each $T^i$ separately, however, we can represent a substring $W$ as a pair \emph{of sets of arrays} $\REPRPRIME{W}=(\{\mathtt{chars}^1,\ldots,\mathtt{chars}^m\},\{\mathtt{first}^1\ldots\mathtt{first}^m\})$, where $\mathtt{chars}^i$ collects all the distinct characters $b$ such that $Wb$ is observed in string $T^i$, in lexicographic order, and the interval of string $W \cdot \mathtt{chars}^i[j]$ in $\BWT_{T^i}$ is $\big[ \mathtt{first}^i[j]..\mathtt{first}^i[j+1]-1 \big]$. If $W$ does not occur in $T^i$, we assume that $|\mathtt{chars}^i|=0$ and that $\mathtt{first}^{i}[1]$ equals one plus the number of suffixes of $T^i$ that are lexicographically smaller than $W$. If necessary, this representation can be converted in $O(m\sigma)$ time into a representation based on intervals of $\BWT_T$. We can thus adapt the approach of Lemma \ref{lemma:iterator} to solve the following generalization of Problem \ref{problem:iterator}, as described in Lemma \ref{lemma:generalizedIterator} below:

\begin{problem} \label{problem:generalizedIterator}
Given strings $T^1,T^2,\dots,T^m$ with $T^i \in [1..\sigma]^{n_{i}-1}\#_i$ for $i \in [1..m]$, return the following information for all right-maximal substrings $W$ of $T=T^1 T^2 \cdots T^m$:
\begin{itemize}
\item $|W|$ and $\INTERVAL{W}$ in $\SA_T$;
\item the sorted sequence $b_1 < b_2 < \dots < b_k$ of all the distinct characters in $[-m+1..\sigma]$ such that $W b_i$ is a substring of $T$;
\item the sequence of intervals $\INTERVAL{Wb_1},\dots,\INTERVAL{Wb_k}$;
\item a sequence $a_1, a_2, \dots, a_h$ that lists all the $h$ distinct characters in $[-m+1..\sigma]$ such that $a_i W$ is the prefix of a rotation of $T$; the sequence $a_1, a_2, \dots, a_h$ is not necessarily in lexicographic order;
\item the sequence of intervals $\INTERVAL{a_1 W},\dots,\INTERVAL{a_h W}$.
\end{itemize}
\end{problem}

\begin{lemma} \label{lemma:generalizedIterator}
Assume that we are given a data structure that supports $\mathtt{rangeDistinct}$ queries on the BWT of a string $T^i$, and the $C$ array of $T^i$, for all strings in a set $\{T^1,T^2,\dots,T^m\}$, where $T^i \in [1..\sigma]^{n_{i}-1}\#_i$ for $i \in [1..m]$. There is an algorithm that solves Problem \ref{problem:generalizedIterator} in $O(mnt)$ time and in $O(m \sigma^2 \log^{2}n)$ bits of working space, where $t$ is the time taken by the $\mathtt{rangeDistinct}$ operation per element in its output, and $n=\sum_{i=1}^{m}n_i$.
\end{lemma}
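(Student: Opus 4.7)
The plan is to lift the single-string enumeration of Lemma \ref{lemma:iterator} to the multi-string setting by maintaining, for every right-maximal substring $W$ of $T=T^1 T^2\cdots T^m$, the representation $\REPRPRIME{W}=(\{\mathtt{chars}^1,\ldots,\mathtt{chars}^m\},\{\mathtt{first}^1,\ldots,\mathtt{first}^m\})$ whose $i$-th slot records the lexicographically sorted right-extensions of $W$ in $T^i$ together with their intervals in $\BWT_{T^i}$; if $W$ does not occur in $T^i$, the $i$-th slot stores an empty $\mathtt{chars}^i$ and a single value in $\mathtt{first}^i$ equal to one plus the number of suffixes of $T^i$ lexicographically smaller than $W$. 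The initial representation $\REPRPRIME{\varepsilon}$ is read off from the $m$ input $C$ arrays.

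First I would generalize Lemma \ref{lemma:extendLeft} to this multi-string representation. For each string $T^i$ with $|\mathtt{chars}^i|>0$ and each right-extension $Wb=W\cdot\mathtt{chars}^i[j]$, I invoke $\mathtt{rangeDistinct}$ on the interval $\BWT_{T^i}[\mathtt{first}^i[j]..\mathtt{first}^i[j+1]-1]$; each returned triple combined with $C_{T^i}$ yields $\INTERVAL{aWb}$ in $\BWT_{T^i}$ in constant time. I aggregate the outputs with the same matrices $(A,F,L)$ and vectors $\mathtt{leftExtensions}$, $\mathtt{gamma}$ used in Lemma \ref{lemma:extendLeft}, except that each triple $(A,F,L)$ now has $m$ parallel slabs, one per string $T^i$, while a single $\mathtt{leftExtensions}$ vector collects every character $a$ that appears as a left-extension in at least one slab. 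From the slabs I assemble $\REPRPRIME{aW}$ for every distinct $a$; the string $aW$ is right-maximal in $T$ iff $\sum_{i=1}^m |\mathtt{chars}_a^i|\geq 2$, in which case I push $(\REPRPRIME{aW},|aW|)$ onto a DFS stack. Cleanup of the auxiliary matrices proceeds exactly as in Lemma \ref{lemma:extendLeft}, scanning $\mathtt{leftExtensions}$.

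The DFS is then started from $\REPRPRIME{\varepsilon}$, and following the technique at the end of the proof of Lemma \ref{lemma:iterator} I would measure the ``size'' of a node by $s(W)=\sum_{i=1}^m (\mathtt{first}^i[|\mathtt{chars}^i|+1]-\mathtt{first}^i[1])$, i.e.\ the total occurrences of $W$ across all $T^i$, and push children in decreasing order of $s$. The key observation is that since every occurrence of $W$ in every $T^i$ is preceded by a unique character, the left-extensions $aW$ partition the occurrences of $W$ in each $T^i$ and therefore $\sum_a s(aW)=s(W)\leq n$; hence any chain of items simultaneously on the stack has sizes that halve at each step, and the depth is $O(\log n)$. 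Each level contributes $O(\sigma)$ pushed items, each occupying $O(m\sigma \log n)$ bits, so the stack fits in $O(m\sigma^2\log^2 n)$ bits; the auxiliary matrices used in the extended Lemma \ref{lemma:extendLeft} fit in the same budget.

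The main obstacle is the running time. For each fixed $T^i$, the charging argument of Lemma \ref{lemma:iterator} applies verbatim: every triple returned by a $\mathtt{rangeDistinct}$ call on $\BWT_{T^i}$ can be charged to a distinct explicit or implicit Weiner link of $\ST_{T^i}$, so by Observation \ref{obs:suffixtree} the total output of all such calls is $O(n_i)$ and the total $\mathtt{rangeDistinct}$ work over all strings is $O(nt)$. On top of that, at each internal node of the generalized suffix tree of $T$ (of which there are $O(n)$) I must iterate over all $m$ per-string slots both to schedule the $\mathtt{rangeDistinct}$ calls and to emit the output, and when emitting a node I also convert $\REPRPRIME{W}$ into $\INTERVAL{W}$ in $\SA_T$ in $O(m)$ additional time by means of the prefix-sum conversion mentioned before the statement of the lemma. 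Summing, the total running time is $O(mn+nt)\subseteq O(mnt)$, matching the claim.
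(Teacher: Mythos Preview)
Your overall plan matches the paper's: maintain $\REPRPRIME{W}$, run Lemma~\ref{lemma:extendLeft} in parallel on the $m$ BWTs with a single shared $\mathtt{leftExtensions}$ array, push right-maximal left-extensions onto a stack, and use the largest-child-first trick to keep the stack depth $O(\log n)$. Two steps, however, are not correct as written.

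\textbf{Right-maximality test.} Your criterion ``$aW$ is right-maximal in $T$ iff $\sum_{i=1}^m|\mathtt{chars}_a^i|\geq 2$'' is false: if $aW$ occurs in two different $T^i$ but is followed by the \emph{same} character $b$ in both, then each $\mathtt{gamma}^i[a]=1$, the sum is $\geq 2$, yet $aW$ is not right-maximal in $T$. With your test you would push and later output such $aW$, violating the specification of Problem~\ref{problem:generalizedIterator}. The paper's test is the correct one: $aW$ is right-maximal iff some $\mathtt{gamma}^i[a]>1$, or there exist $i\neq j$ with $\mathtt{gamma}^i[a]=\mathtt{gamma}^j[a]=1$ and $A^i[a][1]\neq A^j[a][1]$.

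\textbf{Time analysis of the $\mathtt{rangeDistinct}$ calls.} Your per-string charging ``every triple returned by a $\mathtt{rangeDistinct}$ call on $\BWT_{T^i}$ can be charged to a distinct Weiner link of $\ST_{T^i}$'' does not go through. The nodes $W$ you visit are the internal nodes of the \emph{generalized} tree $\ST_T$, not of $\ST_{T^i}$; many such $W$ can share the same locus in $\ST_{T^i}$ (take $T^1=a_1a_2\cdots a_k\#_1$ with all $a_j$ distinct, and use the other $T^j$ to make $\Theta(k^2)$ substrings of $T^1$ right-maximal in $T$: the work on $T^1$ is $\Theta(k^2)$ while $n_1=k+1$). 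The fix is to charge each output triple $(W,a,b,i)$ to the pair $(W,a,b)$, which by Observation~\ref{obs:suffixtree} applied to $\ST_T$ ranges over $O(n)$ values, and then note that each such pair carries at most $m$ indices $i$; this gives the $O(mn)$ bound on total $\mathtt{rangeDistinct}$ output and hence the claimed $O(mnt)$ time.
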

\begin{proof}
To keep the presentation as simple as possible we omit the details on how to handle strings that occur in some $T^i$ but that do not occur in some $T^j$ with $j \neq i$. We use the same algorithm as in Lemma \ref{lemma:iterator}, but this time with the following data structures:
\begin{itemize}
\item $m$ distinct arrays $\mathtt{gamma}^1,\mathtt{gamma}^2,\dots,\mathtt{gamma}^m$;
\item $m$ distinct matrices $A^1,A^2,\dots,A^m$, $F^1,F^2,\dots,F^m$, and $L^1,L^2,\dots,L^m$;
\item a single stack, in which we push $\REPRPRIME{W}$ tuples;
\item a single array $\mathtt{leftExtensions}[1..\sigma+m]$, which stores all the distinct left-extensions of a string $W$ that are the prefix of a rotation of a string $T^i$, not necessarily in lexicographic order.
\end{itemize}
Given $\REPRPRIME{W}$ for a right-maximal substring $W$ of $T$, we apply Lemma \ref{lemma:extendLeft} to each $T^i$ to compute the corresponding $\REPR{aW}$ for all strings $aW$ that are the prefix of a rotation of $T^i$, updating row $a$ in $A^i$, $F^i$, $L^i$ and $\mathtt{gamma}^i$ accordingly, and adding a character $a$ to the shared array $\mathtt{leftExtensions}$ whenever we see $a$ for the first time in any $T^i$ (see Algorithm \ref{algo:extendLeftPrime}). If $a=\#_i$, we assume it is actually $\#_{i-1}$ if $i>1$, and we assume it is $\#_{m}$ if $i=1$. We push to the shared stack the pair $\REPRPRIME{aW}=(\{\mathtt{chars}^1\ldots \mathtt{chars}^m\},\{\mathtt{first}^1\ldots\mathtt{first}^m\})$ such that $\mathtt{chars}^i=A^i[a,1..\mathtt{gamma}^{i}[a]]$, $\mathtt{first}^i=F^i[a,1..\mathtt{gamma}^{i}[a]] \bullet (L^i[a,\mathtt{gamma}^{i}[a]]+1)$ for all $i \in [1..m]$, if and only if $aW$ is right-maximal in $T$, or equivalently iff there is an $i \in [1..m]$ such that $\mathtt{gamma}^i[a]>1$, or alternatively if there are two integers $i \neq j$ in $[1..m]$ such that $\mathtt{gamma}^i[a]=1$, $\mathtt{gamma}^j[a]=1$, and $A^i[a][1] \neq A^j[a][1]$ (see Algorithm \ref{algo:generalizedIterator}). Note that we never push $\REPRPRIME{aW}$ with $a = \#_i$ in the stack, thus the space taken by the stack is $O(m \sigma^2 \log^{2}n)$ bits. In analogy to Lemma \ref{lemma:iterator}, we push first to the stack the left-extension $aW$ of $W$ that maximizes $\sum_{i=1}^{m}|\INTERVALFUNCTION(aW,T^i)| = \sum_{i=1}^{m}L^{i}[a,\mathtt{gamma}^{i}[a]]-F^{i}[a,1]+1$. The result of this process is a traversal of the suffix-link tree of $T$, not necessarily following the lexicographic order of its Weiner link labels. The total cost of translating every $\REPRPRIME{W}$ into the quantities required by Problem \ref{problem:generalizedIterator} is $O(mn)$.
\end{proof}

Recall that we say that a node (possibly a leaf) of the suffix tree of $T=T^1 T^2 \cdots T^m$ is \emph{pure} if all the leaves in its subtree are suffixes of exactly one string $T^i$, and we call it \emph{impure} otherwise. Lemma \ref{lemma:generalizedIterator} can be adapted to traverse only impure nodes of the generalized suffix tree. This leads to the following algorithm for building the BWT of $T$ from the BWT of $T^1,T^2,\cdots,T^m$:

\begin{lemma} \label{lemma:generalizedIterator_bwtMerge}
Assume that we are given a data structure that supports $\mathtt{rangeDistinct}$ queries on the BWT of a string $T^i$, and the $C$ array of $T^i$, for all strings in a set $\{T^1,T^2,\dots,T^m\}$, where $T^i \in [1..\sigma]^{n_{i}-1}\#_i$ for $i \in [1..m]$. There is an algorithm that builds the BWT of string $T = T^1 T^2 \cdots T^m$ in $O(mnt)$ time and in $O(m \sigma^2 \log^{2}n)$ bits of working space, where $t$ is the time taken by the $\mathtt{rangeDistinct}$ operation per element in its output, and $n=\sum_{i=1}^{m}n_i$.
\end{lemma}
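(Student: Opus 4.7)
The plan is to adapt the iterator of Lemma~\ref{lemma:generalizedIterator} so that it enumerates only the impure right-maximal substrings of $T$, and at each such visit writes the blocks of $\BWT_T$ that correspond to the \emph{pure} right-extensions of the current node in the generalized suffix tree (GST). The starting observation is that, since every $T^i$ ends with a unique separator $\#_i$ that occurs nowhere else in $T$, two rotations of $T = T^1 T^2 \cdots T^m$ compare lexicographically exactly according to their prefixes up to (and including) the first $\#_j$, and those prefixes are precisely suffixes of the individual $T^i$'s; hence $\SA_T$ corresponds positionally to the sorted list of all suffixes of the $T^i$'s in their GST, and each entry $\BWT_T[i]$ equals the corresponding $\BWT_{T^j}$ character, with the sole exception of the $m$ boundary positions whose underlying suffix is the whole string $T^j$, where the preceding character in cyclic $T$ is $\#_{j-1}$ (with $\#_0$ interpreted as $\#_m$) rather than the $\#_j$ contributed by $\BWT_{T^j}$.

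First I would run the iterator of Lemma~\ref{lemma:generalizedIterator} modified to stop at pure nodes: a right-maximal $W$ is declared \emph{pure} as soon as exactly one of the arrays $\mathtt{chars}^1,\ldots,\mathtt{chars}^m$ in $\REPRPRIME{W}$ is non-empty, and in that case no Weiner-link child of $W$ is pushed onto the stack. Skipping pure nodes is safe because every left-extension $aW$ of a pure $W$ occurs only in the same $T^i$ as $W$, so the entire suffix-link subtree of $W$ consists of pure right-maximal substrings that contribute nothing new. At each visited impure $W$ I would compute $\mathtt{start}_W = 1+\sum_{i=1}^{m}(\mathtt{first}^i[1]-1)$, the starting position of $W$'s interval in $\BWT_T$, then merge-sort the $m$ sorted lists $\mathtt{chars}^i$ to enumerate the right-extension characters $c$ of $W$ in lexicographic order while maintaining a running pointer $\mathtt{pos}$ initialized to $\mathtt{start}_W$. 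For every such $c$, letting $I_c = \{i \wt c\in\mathtt{chars}^i\}$, $j_i$ the rank of $c$ in $\mathtt{chars}^i$, and $\mathtt{size}_c = \sum_{i\in I_c}(\mathtt{first}^i[j_i+1]-\mathtt{first}^i[j_i])$: if $|I_c|=1$ and $i$ is its unique element, I copy $\BWT_{T^i}[\mathtt{first}^i[j_i]\ltdots\mathtt{first}^i[j_i+1]-1]$ into $\BWT_T[\mathtt{pos}\ltdots\mathtt{pos}+\mathtt{size}_c-1]$, relabeling the possibly occurring $\#_i$ to $\#_{i-1}$ (with $\#_0$ interpreted as $\#_m$). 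In every case $\mathtt{pos}$ is advanced by $\mathtt{size}_c$; impure right-extensions ($|I_c|>1$) are intentionally left blank here, since the iterator will later visit the impure right-maximal descendants of $Wc$ in the GST and fill in those positions.

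The main obstacle will be proving correctness, i.e., that every position of $\BWT_T$ is written by exactly one such block copy. The key structural claim, to be proven along the unique root-to-leaf path in the GST of each suffix $s$, is that $s$ has a unique deepest impure GST ancestor $u$ and that the edge from $u$ to its GST child $v$ on the path to $s$ starts with a character $c$ that belongs to exactly one $\mathtt{chars}^i$. Otherwise $uc$ would be impure; since $v$'s label is by assumption pure and of the form $ucY$, $uc$ would be followed by $Y[1]$ only in the unique $T^i$ containing $v$ but by some other character in the remaining $T^j$'s, making $uc$ right-maximal and hence a GST node strictly between $u$ and $v$, contradicting directness of the edge. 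Therefore, when the iterator visits $u$ and processes $c$, the resulting copy is uniquely determined and covers $s$'s position in $\BWT_T$; uniqueness over all copies follows because each leaf lies in the subtree of a unique pure child of a unique deepest impure ancestor.

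Finally, the cost analysis is immediate: the iterator itself runs in $O(mnt)$ time and $O(m\sigma^2\log^2 n)$ bits of working space by Lemma~\ref{lemma:generalizedIterator}; the merge-sort of the $\mathtt{chars}^i$ arrays and the bookkeeping at each visited node add $O(m\sigma)$ per visit, which is absorbed by the iterator bound; and the block copies together transfer exactly $n$ characters across the whole execution, contributing $O(n)$ additional time.
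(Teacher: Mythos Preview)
Your proposal is correct and takes essentially the same approach as the paper: restrict the iterator of Lemma~\ref{lemma:generalizedIterator} to impure internal nodes of the generalized suffix tree, and at every visited impure node $W$ copy the block of $\BWT_{T^i}$ corresponding to each right-extension character $c$ that belongs to exactly one $\mathtt{chars}^i$. The only cosmetic differences are that you maintain a running output pointer whereas the paper computes the destination position via an explicit $\mathtt{smaller}(b,i)$ formula, you spell out the separator relabeling $\#_i\mapsto\#_{i-1}$ that the paper leaves implicit, and your correctness argument (unique deepest impure ancestor) is more detailed than the paper's one-line claim that the BWT of $T$ partitions into intervals of pure nodes of minimal depth.
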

\begin{proof}
The BWT of $T$ can be partitioned into disjoint intervals that correspond to \emph{pure nodes of minimal depth} in $\ST_{T}$, i.e. to pure nodes whose parent is impure. In $\ST_T$, suffix links from impure nodes lead to other impure nodes, so the set of all impure nodes is a subgraph of the suffix-link tree of $T$, it includes the root, and it can be traversed by iteratively taking explicit Weiner links from the root. We modify Algorithm \ref{algo:generalizedIterator} to traverse only impure internal nodes of $\ST_T$, by pushing to the stack $\REPRPRIME{aW}=(\{\mathtt{chars}^i\},\{\mathtt{first}^i\})$, where $\mathtt{chars}^i=A^i[a,1..\mathtt{gamma}^i[a]]$ and $\mathtt{first}^i=F^i[a,1..\mathtt{gamma}^i[a]] \bullet (L^i[a,\mathtt{gamma}^i[a]]+1)$ for all $i \in [1..m]$, iff it represents an internal node of $\ST_T$, and moreover if there are two integers $i \neq j$ in $[1..m]$ such that $\mathtt{gamma}^i[a]>0$ and $\mathtt{gamma}^j[a]>0$.

Assume that we enumerate an impure internal node of $\ST_{T}$ with label $W$, and let $\REPRPRIME{W}=(\{\mathtt{chars}^i\},\{\mathtt{first}^i\})$. We merge in linear time the set of sorted arrays $\{\mathtt{chars}^i\}$. Assume that character $b=\mathtt{chars}^i[j]$ occurs only in $\mathtt{chars}^i$. It follows that the locus of $Wb$ in $\ST_T$ is a pure node of minimal depth, and we can copy $\BWT_{T^i}\big[ \mathtt{first}^i[j] .. \mathtt{first}^i[j+1]-1 \big]$ to $\BWT_{T}\big[ x .. x+\mathtt{first}^i[j+1]-\mathtt{first}^i[j]-1 \big]$, where $x = 1+\sum_{i=1}^{m}\mathtt{smaller}(b,i)$ and
$$
\mathtt{smaller}(b,i) = \left\{
\begin{array}{ll}
\mathtt{first}^i[1]-1 & \mbox{if }(|\mathtt{chars}^i|=0) \:\mbox{or} \:(\mathtt{chars}^i[1] \geq b) \\
\max_{j : \mathtt{chars}^i[j]<b}\{ \mathtt{first}^i[j+1]-1 \} & \mbox{otherwise }
\end{array}
\right.
$$

The value of $x$ can be easily maintained while merging set $\{\mathtt{chars}^i\}$. If character $b$ occurs in more than one $
\mathtt{chars}^i$ array, then the locus of $Wb$ in $\ST_T$ is impure, and it will be enumerated (or it has already been enumerated) by the traversal algorithm.
\end{proof}

In the rest of the paper we will focus on the case $m=2$. The following theorem, which we will use extensively in Section \ref{sec:stringAnalysis}, combines Lemma \ref{lemma:generalizedIterator} for $m=2$ with the $\mathtt{rangeDistinct}$ data structure of Lemma \ref{lemma:rangedistinct}:

\begin{theorem} \label{thm:generalizedIterator}
Given the BWT of a string $S^1 \in [1..\sigma]^{n_1-1}\#_1$ and the BWT of a string $S^2 \in [1..\sigma]^{n_2-1}\#_2$, we can solve Problem \ref{problem:generalizedIterator} in $O(nk)$ time and in $n\log{\sigma}(1+1/k)+10n+o(n)$ bits of working space, for any positive integer $k$, where $n=n_1+n_2$.
\end{theorem}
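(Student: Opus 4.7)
The plan is to specialize Lemma \ref{lemma:generalizedIterator} to $m=2$, plugging in for each input BWT the $\mathtt{rangeDistinct}$ data structure of Lemma \ref{lemma:rangedistinct}; this mirrors exactly the construction used in Theorem \ref{thm:iterator}, just run on two inputs rather than one.

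First, I would derive the $C$ array of each $S^i$ from $\BWT_{S^i}$ by a single linear scan, using $O(n)$ total time and $O(\sigma \log n)=o(n)$ bits under the standing alphabet assumption $\sigma \in o(\sqrt{n}/\log n)$. Next, for $i\in\{1,2\}$ in turn, I would invoke Lemma \ref{lemma:rangedistinct} on $\BWT_{S^i}$: each such construction takes $O(n_i k)$ time and $(n_i/k)\log\sigma + 2n_i + o(n_i)$ bits of additional working space, and produces a structure of $n_i \log\sigma + 8 n_i + o(n_i)$ bits that answers $\mathtt{rangeDistinct}$ in $O(1)$ time per output element and needs no further access to the input BWT, so each $\BWT_{S^i}$ can be discarded as soon as its structure is ready. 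Finally, I would invoke Lemma \ref{lemma:generalizedIterator} with these two range-distinct structures and the two $C$ arrays; with per-element cost $t=O(1)$, this runs in $O(mn)=O(n)$ time and uses an extra $O(m\sigma^2 \log^2 n)=o(n)$ bits of working space.

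The peak working space is reached while the second range-distinct structure is under construction: the first structure already occupies $n_1 \log\sigma + 8 n_1 + o(n_1)$ bits, the second (being built) eventually occupies $n_2 \log\sigma + 8 n_2 + o(n_2)$ bits, and the transient construction overhead contributes at most $(n_2/k)\log\sigma + 2 n_2 + o(n_2)$ bits. Upper-bounding the overhead by $(n/k)\log\sigma + 2n$ and summing yields $n \log\sigma (1+1/k) + 10 n + o(n)$ bits; the total time is $O(n_1 k + n_2 k) + O(n) = O(nk)$, matching the claim.

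There is no real obstacle here, since the theorem is essentially a composition of already-proved results. The only delicate points are bookkeeping: verifying that the $O(\sigma^2 \log^2 n)$ iteration overhead from Lemma \ref{lemma:generalizedIterator} and the $O(\sigma \log n)$ cost of storing the two $C$ arrays both fall into the $o(n)$ slack under the alphabet assumption, and that the peak-space accounting above lines up with the constants $1+1/k$ and $10$ in the statement.
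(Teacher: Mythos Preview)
Your proposal is correct and follows exactly the approach the paper takes: it simply states that the theorem ``combines Lemma \ref{lemma:generalizedIterator} for $m=2$ with the $\mathtt{rangeDistinct}$ data structure of Lemma \ref{lemma:rangedistinct},'' and your argument is the natural unpacking of that sentence along the lines of Theorem \ref{thm:iterator}. Your peak-space bookkeeping is in fact more explicit than anything the paper writes down, and it lines up with the constants in the statement.
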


Finally, in Section \ref{sec:bwtConstruction} we will work on strings that are not terminated by a special character, thus we will need the following version of Lemma \ref{lemma:generalizedIterator_bwtMerge} that works on sets of rotations rather than on sets of suffixes:

\begin{theorem} \label{thm:generalizedIterator_bwtMerge}
Let $S^1 \in [1..\sigma]^{n_1}$ and $S^2 \in [1..\sigma]^{n_2}$ be two strings such that $|\mathcal{R}(S^1)|=n_1$, $|\mathcal{R}(S^2)|=n_2$, and $\mathcal{R}(S^1) \cap \mathcal{R}(S^2) = \emptyset$. Given the BWT of $\mathcal{R}(S^1)$ and the BWT of $\mathcal{R}(S^2)$, we can build the BWT of $\mathcal{R}(S^1) \cup \mathcal{R}(S^2)$ in $O(nk)$ time and in $n\log{\sigma}(1+1/k)+10n+o(n)$ bits of working space, where $n=n_1+n_2$.
\end{theorem}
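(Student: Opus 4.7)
The plan is to mirror the proof of \Thm{thm:generalizedIterator} but apply it to the BWT-merging algorithm of \Lemma{lemma:generalizedIterator_bwtMerge} rather than to the enumeration algorithm of \Lemma{lemma:generalizedIterator}, and to carry both over from the terminated-strings setting to the rotation setting. First, I would build on each input BWT, sequentially and one at a time, the $\mathtt{rangeDistinct}$ data structure of \Lemma{lemma:rangedistinct} together with the corresponding $C$ array. Building the two data structures sequentially lets the $(n_i/k)\log\sigma + 2n_i + o(n_i)$ bits of construction working space be released before the other one is built; what remains resident in memory once both are constructed is at most $n\log\sigma + 8n + o(n)$ bits, plus two $C$ arrays of $o(n)$ bits. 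This setup step costs $O(nk)$ time.

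Next, I would run \Lemma{lemma:generalizedIterator_bwtMerge} with $m=2$, starting the depth-first traversal from $\REPRPRIME{\varepsilon}$ (which is read off directly from the two $C$ arrays) and applying at each step \Lemma{lemma:extendLeft} to each of the two BWTs to follow Weiner links from the current impure right-maximal substring $W$. The largest-interval-first pushing rule caps the stack depth at $O(\log n)$, so the stack together with the auxiliary matrices $A^i$, $F^i$, $L^i$ and $\mathtt{gamma}^i$ of \Lemma{lemma:generalizedIterator} occupies only $O(\sigma^2\log^2 n) = o(n)$ bits under the standing assumption $\sigma \in o(\sqrt{n}/\log n)$. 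Whenever a node labelled $W$ is enumerated, the two sorted arrays $\mathtt{chars}^1,\mathtt{chars}^2$ are merged in linear time; each character $b$ that appears in a single $\mathtt{chars}^i$ marks a pure minimal-depth subtree of the generalized compact trie, and the corresponding slice $\BWT_{\mathcal{R}(S^i)}[\mathtt{first}^i[j] \ltdots \mathtt{first}^i[j+1]-1]$ is copied verbatim into the appropriate range of the output, with the output offset maintained incrementally through the $\mathtt{smaller}(b,i)$ quantity of \Lemma{lemma:generalizedIterator_bwtMerge}. The total work of the traversal is $O(n)$, since by \Obs{obs:suffixtree} the number of Weiner links in the generalized compact trie is $O(n)$ and each call to $\mathtt{rangeDistinct}$ in \Lemma{lemma:rangedistinct} costs constant time per output tuple, and the total volume of BWT bytes copied is exactly $n\log\sigma$.

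The one step that is not literally covered by \Lemma{lemma:generalizedIterator_bwtMerge} — and therefore the main thing to check — is that its argument remains valid when $\mathcal{S}(S^i)$ is replaced by the set of prefixes of rotations of $S^i$, with no terminator. The hypotheses $|\mathcal{R}(S^1)| = n_1$, $|\mathcal{R}(S^2)| = n_2$ and $\mathcal{R}(S^1) \cap \mathcal{R}(S^2) = \emptyset$ ensure that the compact trie of $\mathcal{R}(S^1) \cup \mathcal{R}(S^2)$ is a well-defined generalized suffix tree in the sense of \Sec{sec:suffixTree}, so pure versus impure nodes, Weiner links, and the partition of the output BWT into slices of pure minimal-depth subtrees all carry their usual meaning. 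Since a left-extension $aW$ of a prefix of a rotation of $S^i$ is itself the prefix of another rotation of $S^i$ (the rotation that starts one position earlier cyclically), Weiner-link traversal from the root reaches every internal node of the generalized compact trie, and the impure ones form a connected rooted subgraph of its suffix-link tree exactly as in the terminated case; hence \Lemma{lemma:extendLeft}, \Lemma{lemma:generalizedIterator}, and \Lemma{lemma:generalizedIterator_bwtMerge} all apply verbatim.

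Adding the contributions — $n\log\sigma + 8n + o(n)$ bits for the two $\mathtt{rangeDistinct}$ structures, the non-simultaneous $(n/k)\log\sigma + 2n + o(n)$ bits of construction working space, $o(n)$ bits for the $C$ arrays, and $o(n)$ bits for the traversal stack and matrices — yields the claimed bound $n\log\sigma(1+1/k) + 10n + o(n)$ bits of working space, while the total running time is $O(nk)$ as announced.
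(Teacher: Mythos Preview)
Your proposal is correct and follows essentially the same approach as the paper: invoke \Lemma{lemma:generalizedIterator_bwtMerge} with $m=2$, implement the two $\mathtt{rangeDistinct}$ data structures via \Lemma{lemma:rangedistinct}, and observe that the hypotheses on $\mathcal{R}(S^1)$ and $\mathcal{R}(S^2)$ make the generalized compact trie of rotations well defined so that the merge argument carries over verbatim. The paper's own proof is considerably terser---it states these three points in three sentences without the explicit space accounting or the mechanical description of the traversal---but your more detailed exposition is consistent with the accounting spelled out for the single-string case in \Thm{thm:iterator}.
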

\begin{proof}
Since all rotations of $S^i$ are lexicographically distinct, the compact trie of all such rotations is well defined, and every leaf of such trie corresponds to a distinct rotation of $S^i$. Since no rotation of $S^1$ is lexicographically identical to a rotation of $S^2$, the generalized compact trie that contains all rotations of $S^1$ and all rotations of $S^2$ is well defined, and every leaf of such trie corresponds to a distinct rotation of $S^1$ or of $S^2$. We can thus traverse such generalized compact trie using $\BWT_{S^1}$ and $\BWT_{S^2}$ as described in Lemma \ref{lemma:generalizedIterator_bwtMerge}, using Lemma \ref{lemma:rangedistinct} to implement $\mathtt{rangeDistinct}$ data structures.
\end{proof}

\begin{algorithm}
\KwIn{$\REPR{W}$ for a substring $W$ of $T$. Support for $\mathtt{rangeDistinct}$ queries on the BWT of $T$. $C$ array of $T$. Empty matrices $A$, $F$, and $L$, empty arrays $\mathtt{gamma}$ and $\mathtt{leftExtensions}$, and a pointer $h$.}
\KwOut{Matrices $A$, $F$, $L$, pointer $h$, arrays $\mathtt{gamma}$ and $\mathtt{leftExtensions}$, filled as described in Lemma \ref{lemma:iterator}.}

	$(\mathtt{chars},\mathtt{first}) \gets \REPR{W}$\;
	$h \gets 0$\;
	\For{$j \in [1..|\mathtt{chars}|]$}{
   		$\mathcal{I} \gets \BWT_{T}.\mathtt{rangeDistinct}(\mathtt{first}[j],\mathtt{first}[j+1]-1)$\;
		\For{$(a,p_a,q_a) \in \mathcal{I}$}{
			\If{$\mathtt{gamma}[a]=0$}{
				$h \gets h+1$\;
				$\mathtt{leftExtensions}[h] \gets a$\;
			}
			$\mathtt{gamma}[a] \gets \mathtt{gamma}[a]+1$\;
			$A[a,\mathtt{gamma}[a]] \gets \mathtt{chars}[j]$\;
			$F[a,\mathtt{gamma}[a]] \gets C[a]+p_a$\;
			$L[a,\mathtt{gamma}[a]] \gets C[a]+q_a$\;
		}
   }

\caption{ \label{algo:extendLeft}
{Building $\REPR{aW}$ from $\REPR{W}$ for all $a \in [0..\sigma]$ such that $aW$ is a prefix of a rotation of $T \in [1..\sigma]^{n-1}\#$.}
}
\end{algorithm}

\begin{algorithm}
\KwIn{BWT transform and $C$ array of $T$. Array $\mathtt{distinctChars}$ of all the distinct characters that occur in $T$, in lexicographic order, and array $\mathtt{start}$ of starting positions of the corresponding intervals in $\BWT_{T}$. Support for $\mathtt{rangeDistinct}$ queries on $\BWT_{T}$, and an implementation of Algorithm \ref{algo:extendLeft} (function $\mathtt{extendLeft}$).}
\KwOut{$(\REPR{W},|W|)$ for all right-maximal substrings $W$ of $T$.}
$S \gets $ empty stack\;
$A \gets \mathtt{zeros}[0..\sigma,1..\sigma+1]$\;
$F \gets \mathtt{zeros}[0..\sigma,1..\sigma+1]$\;
$L \gets \mathtt{zeros}[0..\sigma,1..\sigma+1]$\;
$\mathtt{gamma} \gets \mathtt{zeros}[0..\sigma]$\; 
$\mathtt{leftExtensions} \gets \mathtt{zeros}[1..\sigma+1]$\;
$\REPR{\varepsilon} \gets (\mathtt{distinctChars},\mathtt{start} \bullet (n+1))$\;
$S.\mathtt{push}\big((\REPR{\varepsilon},0)\big)$\;
\While{\Not $S.\mathtt{isEmpty}()$} {
   $(\REPR{W},|W|) \gets S.\mathtt{pop}()$\;
   $h \gets 0$\;
   $\mathtt{extendLeft}(\REPR{W},\BWT_{T},C,A,F,L,\mathtt{gamma},\mathtt{leftExtensions},h)$\;
   \HiLi$\mathtt{callback}(\REPR{W},|W|,\BWT_{T},C,A,F,L,\mathtt{gamma},\mathtt{leftExtensions},h)$\;

   \tcc{Pushing right-maximal left-extensions on the stack}
   $\mathcal{C} \gets \{ c : c=\mathtt{leftExtensions}[i], i \in [1..h], \mathtt{gamma}[c]>1 \}$\;
   \If{$\mathcal{C} \neq \emptyset$}{
		$c \gets \mbox{argmax}\{ L[c,\mathtt{gamma}[c]]-F[c,1] : c \in \mathcal{C} \}$\;
		$\REPR{cW} \gets (A[c,1..\mathtt{gamma}[c]], F[c,1..\mathtt{gamma}[c]] \bullet (L[c,\mathtt{gamma}[c]]+1))$\;
		$S.\mathtt{push}( \REPR{cW}, |W|+1 )$\;
		\For{$a \in \mathcal{C} \setminus \{c\}$}{
			$\REPR{aW} \gets (A[a,1..\mathtt{gamma}[a]], F[a,1..\mathtt{gamma}[a]] \bullet (L[a,\mathtt{gamma}[a]]+1))$\;
			$S.\mathtt{push}( \REPR{aW}, |W|+1 )$\;
		}
   }

   \tcc{Cleaning up for the next iteration}
   \For{$i \in [1..h]$}{
   		$a \gets \mathtt{leftExtensions}[i]$\;
		\For{$j \in [1..\mathtt{gamma}[a]]$}{
			$A[a,j] \gets 0$\;
			$F[a,j] \gets 0$\; 
			$L[a,j] \gets 0$\;
		}
		$\mathtt{gamma}[a] \gets 0$\;
   }
}
\caption{\label{algo:iterator}
{Enumerating all right-maximal substrings of $T \in [1..\sigma]^{n-1}\#$. See Lemma \ref{lemma:extendLeft} for a definition of operator $\bullet$. The callback function $\mathtt{callback}$ highlighted in gray just prints the pair $(\REPR{W},|W|)$ given in input. Section \ref{sec:stringAnalysis} describes other implementations of $\mathtt{callback}$.}
}
\end{algorithm}

\begin{algorithm}
\KwIn{BWT transform and $C$ array of string $T^{i}\#$. Array $\mathtt{distinctChars}^i$ of all the distinct characters that occur in $T^{i}\#$, in lexicographic order, and array $\mathtt{start}^i$ of starting positions of the corresponding intervals in $\BWT_{T^{i}\#}$. Support for $\mathtt{rangeDistinct}$ queries on $\BWT_{T^{i}\#}$, and an implementation of Algorithm \ref{algo:extendLeftPrime} (function $\mathtt{extendLeft}'$).}
\KwOut{$(\REPRPRIME{W},|W|)$ for all right-maximal substrings $W$ of $T$.}
$S \gets $ empty stack\;
\For{$i \in [1..m]$}{
	$A^i \gets \mathtt{zeros}[0..\sigma,1..\sigma+1]$, $F^i \gets \mathtt{zeros}[0..\sigma,1..\sigma+1]$\;
	$L^i \gets \mathtt{zeros}[0..\sigma,1..\sigma+1]$, $\mathtt{gamma}^i \gets \mathtt{zeros}[0..\sigma]$\; 
}
$\mathtt{leftExtensions} \gets \mathtt{zeros}[1..\sigma+m]$\;
$\mathtt{seen} \gets \mathtt{zeros}[1..\sigma]$\;
$\REPRPRIME{\varepsilon} \gets (\{\mathtt{distinctChars}^i\}, \{\mathtt{start}^i \bullet (n_{i}+1)\})$\;
$S.\mathtt{push}((\REPRPRIME{\varepsilon},0))$\;
\While{\Not $S.\mathtt{isEmpty}()$} {
   $(\REPRPRIME{W},|W|) \gets S.\mathtt{pop}()$\;
   $h \gets 0$\;
   $\mathtt{extendLeft}'(\REPRPRIME{W},\{\BWT_{T^i}\},\{C^i\},\{A^i\},\{F^i\},\{L^i\},\{\mathtt{gamma}^i\},\mathtt{leftExtensions},\mathtt{seen},h)$\;
   $\mathtt{callback}(\REPRPRIME{W},|W|,\{\BWT_{T^i}\},\{C^i\},\{A^i\},\{F^i\},\{L^i\},\{\mathtt{gamma}^i\},\mathtt{leftExtensions},h)$\;
   \tcc{Pushing right-maximal left-extensions on the stack}
   \HiLi$\mathcal{C} \gets \{ c>0 : c=\mathtt{leftExtensions}[i], i \in [1..h], (\exists \; p \in [1..m] : \mathtt{gamma}^{p}[c]>1)$ 
   \HiLi\Or $(\exists \; p \neq q : \mathtt{gamma}^{p}[c]=1, \mathtt{gamma}^{q}[c]=1, A^p[c,1] \neq A^q[c,1]) \}$\;
   \If{$\mathcal{C} \neq \emptyset$}{
		$c \gets \mbox{argmax}\left\{ \sum_{i=1}^{m} L^{i}[c,\mathtt{gamma}^i[c]]-F^{i}[c,1] : c \in \mathcal{C} \right\}$\;
		$\REPRPRIME{cW} \gets (\{A^i[c,1..\mathtt{gamma}^i[c]]\}, \{F^i[c,1..\mathtt{gamma}^i[c]] \bullet (L^i[c,\mathtt{gamma}^i[c]]+1)\})$\;
		$S.\mathtt{push}( \REPRPRIME{cW}, |W|+1 )$\;
		\For{$a \in \mathcal{C} \setminus \{c\}$}{
			$\REPRPRIME{aW} \gets ( \{A^i[a,1..\mathtt{gamma}^i[a]]\}, \{F^i[a,1..\mathtt{gamma}^i[a]] \bullet (L^i[a,\mathtt{gamma}^i[a]]+1)\})$\;
			$S.\mathtt{push}( \REPRPRIME{aW}, |W|+1 )$\;
		}
   }
   \tcc{Cleaning up for the next iteration}
   \For{$i \in [1..h]$}{
   		$a \gets \mathtt{leftExtensions}[i]$\;
		\If{$a \leq 0$}{
			$k \gets -a+2 \; (\mbox{mod}_1 \; m)$\;
			$A^k[0,1] \gets 0$, $F^k[0,1] \gets 0$, $L^k[0,1] \gets 0$, $\mathtt{gamma}^k[0] \gets 0$\;
		}
		\Else {
			$\mathtt{seen}[a] \gets 0$\;
			\For{$j \in [1..m]$}{
				\For{$k \in [1..\mathtt{gamma}^j[a]]$}{
					$A^j[a,k] \gets 0$, $F^j[a,k] \gets 0$, $L^j[a,k] \gets 0$\;
				}
				$\mathtt{gamma}^j[a] \gets 0$\;
			}
		}
   }
}
\caption{\label{algo:generalizedIterator}
{Enumerating all right-maximal substrings of $T=T^1 \#_1 T^2 \#_2 \cdots T^m \#_m$, where $T^i \in [1..\sigma]^{n_i-1}$ for $i \in [1..m]$, $m \geq 1$. The key differences from Algorithm \ref{algo:iterator} are highlighted in gray. To iterate over all \emph{impure} right-maximal substrings of $T$, it suffices to replace just the gray lines (see Lemma \ref{lemma:generalizedIterator_bwtMerge}). See Lemma \ref{lemma:extendLeft} for a definition of operator $\bullet$. The callback function $\mathtt{callback}$ just prints its input $(\REPRPRIME{W},|W|)$. For brevity the case in which a string occurs in some $T^i$ but does not occur in some $T^j$ is not handled.
}
}
\end{algorithm}

\begin{algorithm}
\KwIn{$\REPRPRIME{W}$ for a substring $W$ of $T$. Support for $\mathtt{rangeDistinct}$ queries on $\BWT_{T^{i}\#}$, $C$ array of $T^i$, empty matrices $A^i$, $F^i$ and $L^i$, and empty array $\mathtt{gamma}^i$ of string $T^i$, for all $i \in [1..m]$. A single empty array $\mathtt{leftExtensions}$, a single bitvector $\mathtt{seen}$, and a single pointer $h$.}
\KwOut{Matrices $A^i$, $F^i$, $L^i$, pointer $h$, and arrays $\mathtt{gamma}^i$ and $\mathtt{leftExtensions}$, for all $i \in [1..m]$, filled as described in Lemma \ref{lemma:generalizedIterator}.}
$(\{\mathtt{chars}^i\},\{\mathtt{first}^i\}) \gets \REPRPRIME{W}$\;
$h \gets 0$\;
\For{$i \in [1..m]$}{
	\For{$j \in [1..|\mathtt{chars}^i|]$}{
   		$\mathcal{I} \gets \BWT_{T^{i}\#}.\mathtt{rangeDistinct}(\mathtt{first}^{i}[j],\mathtt{first}^{i}[j+1]-1)$\;
		\For{$(a,p_a,q_a) \in \mathcal{I}$}{
			\If{$a=0$}{
				$h \gets h+1$\;
				$\mathtt{leftExtensions}[h] \gets -(i-1 \; (\mbox{mod}_1 \; m))+1$\;
			}
			\Else{
				\HiLi\If{$\mathtt{seen}[a]=0$}{
					\HiLi$\mathtt{seen}[a]=1$\;
					$h \gets h+1$\;
					$\mathtt{leftExtensions}[h] \gets a$\;
				}
			}
			$\mathtt{gamma}^{i}[a] \gets \mathtt{gamma}^{i}[a]+1$\;
			$A^{i}[a,\mathtt{gamma}^{i}[a]] \gets \mathtt{chars}^i[j]$\;
			$F^{i}[a,\mathtt{gamma}^{i}[a]] \gets C^{i}[a]+p_a$\;
			$L^{i}[a,\mathtt{gamma}^{i}[a]] \gets C^{i}[a]+q_a$\;
		}
   }
}
\caption{ \label{algo:extendLeftPrime}
{Building $\REPRPRIME{aW}$ from $\REPRPRIME{W}$ for all $a \in [-m+1..\sigma]$ such that $aW$ is a prefix of a rotation of $T=T^1 \#_1 T^2 \#_2 \cdots T^m \#_m$, where $m \geq 1$ and $T^i \in [1..\sigma]^{n_i-1}$. The lines highlighted in gray are the key differences from Algorithm \ref{algo:extendLeft}.}
}
\end{algorithm}


\section{Building the Burrows-Wheeler transform} \label{sec:bwtConstruction}

It is well-known that the Burrows-Wheeler transform of a string $T\#$ such that $T \in [1 \ltdots \sigma]^n$ and $\#=0 \notin [1 \ltdots \sigma]$, can be built in $O(n\log{\log{\sigma}})$ time and in $O(n\log{\sigma})$ bits of working space \cite{HSS09}. In this section we bring construction time down to $O(n)$ by plugging Theorem \ref{thm:generalizedIterator_bwtMerge} into the recursive algorithm described in \cite{HSS09}, which we summarize here for completeness.

Specifically, we partition $T$ into blocks of equals size $B$. For convenience, we work with a version of $T$ whose length is a multiple of $B$, by appending to the end of $T$ the smallest number of occurrences of character $\#$ such that the length of the resulting padded string is an integer multiple of $B$, and such that the padded string contains at least one occurrence of $\#$. Recall that $B \cdot \lceil x/B \rceil$ is the smallest multiple of $B$ that is at least $x$. Thus, we append $n'-n$ copies of character $\#$ to $T$, where $n'=B \cdot \lceil (n+1)/B \rceil$. To simplify notation we call the resulting string $X$, and we use $n'$ to denote the length of $X$.

We interpret a partitioning of $X$ into blocks as a new string $X_B$ of length $n'/B$, defined on the alphabet $[1 \ltdots (\sigma+1)^B]$ of all strings of length $B$ on alphabet $[0 \ltdots \sigma]$: the ``characters'' of $X_B$ correspond to the blocks of $X$. In other words, $X_{B}[i] = X[(i-1)B+1 \ltdots iB]$. We assume $B$ to be even, and we denote by $\mathtt{left}$ (respectively, $\mathtt{right}$) the function from $[1 \ltdots (\sigma+1)^B]$ to $[1 \ltdots (\sigma+1)^{B/2}]$ such that $\mathtt{left}(W)$ returns the first (respectively, the second) half of block $W$. In other words, if $W=w_1\cdots w_B$, $\mathtt{left}(W)=w_1\cdots w_{B/2}$ and $\mathtt{right}(W)=w_{B/2+1}\cdots w_B$. We also work with circular rotations of $X$ (see Section \ref{sect:definitions_strings}): specifically, we denote by $\overleftarrow{X}$ string $X[B/2+1 \ltdots n'] \cdot X[1 \ltdots B/2]$, or equivalently string $X$ \emph{circularly rotated to the left by $B/2$ positions}, and we denote by $\overleftarrow{X}_B$ the string on alphabet $[1 \ltdots (\sigma+1)^B]$ induced by partitioning $\overleftarrow{X}$ into blocks of size $B$.

Note that the suffix that starts at position $i$ in $\overleftarrow{X}_B$ equals the half-block $P_i=X[B/2+(i-1)B+1 \ltdots iB]$, followed by string $S_i=F_{i+1} \cdot X[1 \ltdots B/2]$, where $F_{i+1}$ is the suffix of $X_B$ that starts at position $i+1$ in $X_B$, if any. 
Thus, it is not surprising that we can derive the BWT of string $\overleftarrow{X}_B$ from the BWT of string $X_B$:

\begin{lemma}[\cite{HSS09}] \label{lemma:bwtOfLeftShift}
The BWT of string $\overleftarrow{X}_B$ can be derived from the BWT of string $X_B$ in $O(n'/B)$ time and $O(\sigma^B \cdot \log(n'/B))$ bits of working space, where $n'=|X|$.
\end{lemma}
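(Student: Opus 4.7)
I would exploit the structural decomposition stated immediately before the lemma: the $i$-th suffix of $\overleftarrow{X}_B$, spelled over the original alphabet of $X$, is exactly $P_i \cdot F_{i+1} \cdot X[1 \ltdots B/2]$, where $P_i = \mathtt{right}(X_B[i])$ is a half-block and $F_{i+1}$ is the $(i{+}1)$-st suffix of $X_B$, with indices taken cyclically (so at $i = n'/B$ the ``next'' suffix wraps around to $F_1$). Lexicographic order in the super-alphabet of $\overleftarrow{X}_B$ agrees with lexicographic order over $[0 \ltdots \sigma]$, so the sorted order of the suffixes of $\overleftarrow{X}_B$ is exactly the sorted order of the pairs $(P_i, r_i)$, where $r_i = \ISA_{X_B}[(i+1)\,(\mbox{mod}_1\; n'/B)]$ is the rank of $F_{i+1}$ in $\SA_{X_B}$. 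Because $X$ ends with $\#$'s and $\#$ is the smallest symbol, the family $\{F_j\}_{j=1}^{n'/B}$ is prefix-free, so two suffixes of $\overleftarrow{X}_B$ whose primary keys $P_i$ agree are correctly ordered by their secondary keys $r_i$.

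\textbf{Execution.} First, I would invert $\BWT_{X_B}$ by LF mapping: a cumulative $C$-array indexed by super-character together with a per-symbol counter implement LF in $O(1)$ amortized time per step, and they fit jointly in $O(\sigma^B \log(n'/B))$ bits. Iterating LF from the SA-position of $X_B[n'/B]$ (the super-character containing the trailing $\#$'s, whose position is identified as the lexicographic minimum) simultaneously recovers the string $X_B$ (and hence every $P_i$) and the array $\ISA_{X_B}$ (and hence every $r_i$) in $O(n'/B)$ total time. Second, I would form the array of pairs $(P_i, r_i)_{i=1}^{n'/B}$ and radix-sort it in two passes: a bucket-sort pass on the secondary keys $r_i$, which form a permutation of $[1 \ltdots n'/B]$ and thus cost $O(n'/B)$; followed by a stable counting-sort pass on the primary keys $P_i$ over the half-block universe $[1 \ltdots (\sigma+1)^{B/2}]$. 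The resulting permutation is $\SA_{\overleftarrow{X}_B}$, and a single scan of it outputs $\BWT_{\overleftarrow{X}_B}$ by emitting the super-character of $\overleftarrow{X}_B$ at position $(\SA_{\overleftarrow{X}_B}[j]-1)\,(\mbox{mod}_1\; n'/B)$ for each $j$.

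\textbf{Main obstacle.} The delicate correctness point is arguing that $r_i$ — which sorts $F_{i+1}$ alone — also correctly orders the actual tie-breaker $F_{i+1} \cdot X[1 \ltdots B/2]$ used when two $P_i$'s coincide. This reduces to the prefix-freeness of $\{F_j\}$ granted by the $\#$-padding, plus a brief case analysis for the wrap-around index $i = n'/B$ (where there is no genuine ``next'' suffix of $X_B$). The other constraint is the $O(n'/B)$ time budget, which implicitly demands the parameter regime $(\sigma+1)^B = O(n'/B)$: in this regime the $C$-array initialization, the LF inversion, and the counting sort over a bucket universe of size $(\sigma+1)^{B/2}$ each cost $O(n'/B)$, and the stated working-space bound $O(\sigma^B \log(n'/B))$ is both sufficient and asymptotically tight for simultaneously holding the $C$ array, the inverse suffix array, and the sorting buckets.
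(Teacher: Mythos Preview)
The paper does not give its own proof of this lemma; it is quoted verbatim from \cite{HSS09}. Your overall strategy---decompose rotation $i$ of $\overleftarrow{X}_B$ as the half-block $P_i$ followed by (a prefix of) rotation $i{+}1$ of $X_B$, then bucket by $P_i$ and break ties by the rank $r_i$ read off $\SA_{X_B}$---is the right idea, and your correctness discussion (including the tie-breaking subtlety) is essentially sound.

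Where your proposal does not meet the statement is the space accounting. You materialize $\ISA_{X_B}$ and then an explicit array of $n'/B$ pairs $(P_i,r_i)$; the $r_i$ components alone occupy $(n'/B)\log(n'/B)$ bits. For this to be $O(\sigma^B\log(n'/B))$ you would need $n'/B=O(\sigma^B)$, yet you yourself invoke the opposite regime $(\sigma+1)^B=O(n'/B)$ to justify the $O(n'/B)$ time bound. In the recursion of Section~\ref{sec:bwtConstruction} the lemma is applied with $B$ halved down to $B=2$, where $\sigma^B\ll n'/B$; there your working space is $\Theta((n'/B)\log(n'/B))$, not $O(\sigma^B\log(n'/B))$. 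So your closing sentence, that the stated bound ``is both sufficient and asymptotically tight for simultaneously holding the $C$ array, the inverse suffix array, and the sorting buckets,'' is incorrect: the inverse suffix array simply does not fit.

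The repair is to avoid materializing any permutation-sized array. Scan $j=1,\ldots,n'/B$ in $\SA_{X_B}$ order. For each $j$, set $c=\BWT_{X_B}[j]$; the item whose secondary key is $j$ has primary key $P=\mathtt{right}(c)$, so you can write it directly to the next free slot of bucket $P$ in the output (bucket boundaries come from one counting pass and fit in $O((\sigma{+}1)^{B/2}\log(n'/B))$ bits). The character to write there is $\mathtt{right}(\BWT_{X_B}[\LF(j)])\cdot\mathtt{left}(c)$, and $\LF(j)$ is available in $O(1)$ time if you maintain per-super-character running counts while scanning $j$ in order; those running counts plus the $C$ array are exactly what the $O((\sigma{+}1)^B\log(n'/B))$ working-space budget is sized to hold.
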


The second key observation that we exploit for building the BWT of $X$ is the fact that the suffixes of $X_{B/2}$ which start at odd positions coincide with the suffixes of $X_B$, and the suffixes of $X_{B/2}$ that start at even positions coincide with the suffixes of $\overleftarrow{X}_B$. Thus, we can reconstruct the BWT of $X_{B/2}$ by merging the BWT of $X_B$ with the BWT of $\overleftarrow{X}_B$: this is where Theorem \ref{thm:generalizedIterator_bwtMerge} comes into play.

\begin{lemma} \label{lemma:bwtOfHalfBlock}
Assume that we can read in constant time a block of $B$ characters. Then, the BWT of string $X_{B/2}$ can be derived from the BWT of string $X_B$ and from the BWT of string $\overleftarrow{X}_B$, in $O(n'/B)$ time and $O(n'\log{\sigma})$ bits of working space, where $n'=|X|$.
\end{lemma}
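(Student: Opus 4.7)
The plan is to reduce the construction to an application of Theorem~\ref{thm:generalizedIterator_bwtMerge}, where $S^1 = X_B$ and $S^2 = \overleftarrow{X}_B$ are interpreted as strings over the block alphabet $[1..(\sigma+1)^B]$ (ordered lexicographically by the underlying $[0..\sigma]^*$-strings). The merging step already produces a BWT whose lexicographic ordering agrees with $\BWT_{X_{B/2}}$; only the alphabet needs to be converted from full blocks to half-blocks.

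The key combinatorial observation is the decomposition $\mathcal{R}(X_{B/2}) = \mathcal{R}(X_B) \sqcup \mathcal{R}(\overleftarrow{X}_B)$. The rotation of $X_{B/2}$ starting at odd half-block position $k=2i-1$ begins at position $(i-1)B+1$ in $X$, which is exactly the starting position of the $i$-th rotation of $X_B$; the rotation starting at even position $k=2i$ begins at position $(i-1)B+B/2+1$, which is the $i$-th rotation of $\overleftarrow{X}_B$. Since lexicographic comparison over $[1..(\sigma+1)^B]$ with the induced block order agrees with comparison of the underlying $[0..\sigma]$-strings of length $n'$, the order in which the merge produces characters is precisely the order in which $\BWT_{X_{B/2}}$ would list them. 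The disjointness and distinctness hypotheses of Theorem~\ref{thm:generalizedIterator_bwtMerge} follow from the starting positions being disjoint in $X$ and from $X$ containing at least one $\#$, which guarantees that all rotations of $X$ are lexicographically distinct.

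Next, the full-block characters produced by the merge must be converted into the half-block characters of $\BWT_{X_{B/2}}$. A direct calculation shows that the half-block preceding the rotation at half-block position $k$ is always the \emph{right half} of the preceding full block: for odd $k=2i-1$ the preceding half-block is $X[(i-1)B-B/2+1\ltdots (i-1)B]$, i.e.\ $\mathtt{right}(X_B[i-1])$, and for even $k=2i$ it is $X[(i-1)B+1\ltdots(i-1)B+B/2]$, i.e.\ $\mathtt{right}(\overleftarrow{X}_B[i-1])$. Hence if $C_j$ denotes the $j$-th character of the merged BWT, then $\BWT_{X_{B/2}}[j] = \mathtt{right}(C_j)$. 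Using the constant-time block-access assumption, this post-processing scan takes $O(n'/B)$ time and negligible extra space. Plugging $n_1=n_2=n'/B$, alphabet $(\sigma+1)^B$, and $k=1$ into Theorem~\ref{thm:generalizedIterator_bwtMerge} yields $O(n'/B)$ time and $2(n'/B)\cdot B\log(\sigma+1)\cdot 2 + O(n'/B) = O(n'\log\sigma)$ bits of working space, matching the claim.

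The main subtlety lies not in the correspondence but in verifying that the primitives behind Theorem~\ref{thm:generalizedIterator_bwtMerge} (in particular the $\mathtt{rangeDistinct}$ structure of Lemma~\ref{lemma:rangedistinct}) behave correctly when invoked with alphabet $(\sigma+1)^B$; this imposes a mild upper bound on $B$ that is in any case required by the constant-time block-access hypothesis of the lemma and by the blanket assumption $\sigma \in o(\sqrt{n}/\log n)$. Once $B$ is chosen accordingly, the argument goes through and establishes the lemma.
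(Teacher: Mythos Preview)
Your proposal is correct and follows essentially the same approach as the paper: both invoke Theorem~\ref{thm:generalizedIterator_bwtMerge} with $S^1=X_B$ and $S^2=\overleftarrow{X}_B$, justify the distinctness and disjointness of rotations, exploit the order-preserving bijection between $\mathcal{R}(X_B)\cup\mathcal{R}(\overleftarrow{X}_B)$ and $\mathcal{R}(X_{B/2})$, and apply $\mathtt{right}$ to convert full-block characters into half-block characters. The only cosmetic difference is that the paper applies $\mathtt{right}$ inside the merge (to each character as it is copied from an input BWT), whereas you describe it as a post-processing scan; the effect and cost are identical, and your explicit mention of the alphabet-size constraint is a caveat the paper defers to the surrounding discussion of the choice of $B$.
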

\begin{proof}
All rotations of $X_B$ (respectively, of $\overleftarrow{X}_B$) are lexicographically distinct, and no rotation of $X_B$ is lexicographically identical to a rotation of $\overleftarrow{X}_B$. Thus, we can use Theorem \ref{thm:generalizedIterator_bwtMerge} to build the BWT of $\mathcal{R}(X_B) \cup \mathcal{R}(\overleftarrow{X}_B)$ in $O(n'/B)$ time and in $2n'(1+1/k)\log(\sigma+1)+20n'/B+o(n'/B) \in O(n'\log{\sigma})$ bits of working space. Inside the algorithm of Theorem \ref{thm:generalizedIterator_bwtMerge}, we apply the constant-time operator $\mathtt{right}$ to the characters of the input BWTs. There is a bijection between set $\mathcal{R}(X_B) \cup \mathcal{R}(\overleftarrow{X}_B)$ and set $\mathcal{R}(X_{B/2})$ that preserves lexicographic order, thus the BWT of $\mathcal{R}(X_{B/2})$ coincides with the BWT of $\mathcal{R}(X_B) \cup \mathcal{R}(\overleftarrow{X}_B)$ in which each character is processed with operator $\mathtt{right}$.
\end{proof}

Lemmas \ref{lemma:bwtOfLeftShift} and \ref{lemma:bwtOfHalfBlock} suggest building the BWT of $X$ in $O(\log B)$ steps, where at step $i$ we compute the BWT of string $X_{B/2^i}$, stopping when $B/2^i=1$. Note that the key requirement of Lemma \ref{lemma:bwtOfHalfBlock}, i.e. that all rotations of $X_{B/2^i}$ (respectively, of $\overleftarrow{X}_{B/2^i}$) are lexicographically distinct, and that no rotation of $X_{B/2^i}$ is lexicographically identical to a rotation of $\overleftarrow{X}_{B/2^i}$, holds for all $i$. The time for completing step $i$ is $O(n'/(B/2^i))$, and the Burrows-Wheeler transforms of $X_{B/2^i}$ and of $\overleftarrow{X}_{B/2^i}$ take $O(n' \log{\sigma})$ bits of space for every $i$.

The base case of the recursion is the BWT of string $X_B$ for some initial block size $B$: we build it using any suffix array construction algorithm that works in $O(\sigma^{B}+n'/B)$ time and in $O((n'/B)\log(n'/B))$ bits of space (for example those described in \cite{KA03,KSPP05,KSB06}). We want this first phase to take $O(n')$ time and $O(n'\log{\sigma})$ bits of space, or in other words we want to satisfy the following constraints:

\begin{enumerate}
\item $\sigma^{B} \in O(n')$ \label{constraint1}
\item $(n'/B)\log(n'/B) \in O(n'\log{\sigma})$, or more strictly $(n'/B)\log n' \in O(n'\log{\sigma})$. \label{constraint2}
\end{enumerate}

We also want $B$ to be a power of two. Recall that $2^{\lceil \log x \rceil}$ is the smallest power of two that is at least $x$. Assume thus that we set $B=2^{\lceil\log(\log n' / (c\log{\sigma}))\rceil}$ for some constant $c$. Then $B \geq \log n' / (c \log{\sigma})$, thus Constraint \ref{constraint2} is satisfied by any choice of $c$. Since $\lceil x \rceil < x+1$, we have that $B < (2/c)\log n' / \log{\sigma}$, thus Constraint \ref{constraint1} is satisfied for any $c \geq 2$. For this choice of $B$ the number of steps in the recursion becomes $O(\log\log n')$, and we can read a block of size $B$ in constant time as required by \Lemma{lemma:bwtOfHalfBlock} since the machine word is assumed to be $\Omega(\log n')$. It follows that building the BWT of $X$ takes $O(n'+(n'/B)\sum_{i=1}^{\log B}2^i) = O(n')$ time and $O(n'\log{\sigma})$ bits of working space. Since the BWT of $T\#$ can be derived from the BWT of $X$ at no extra asymptotic cost (see \cite{HSS09}), we have the following result:

\begin{theorem}\label{thm:bwtConstruction}
The BWT of a string $T\#$ such that $T \in [1 \ltdots \sigma]^n$ and $\#=0$ can be built in $O(n)$ time and in $O(n\log{\sigma})$ bits of working space.
\end{theorem}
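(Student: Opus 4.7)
The plan is to realize the block-halving recursion of~\cite{HSS09}, but to accelerate each merge step with the linear-time BWT merger provided by \Thm{thm:generalizedIterator_bwtMerge}. First I would pad $T$ with copies of $\#$ into a string $X$ of length $n' = B\lceil (n+1)/B\rceil = \Theta(n)$, view $X$ blockwise as a string $X_B$ of length $n'/B$ over the meta-alphabet $[1..(\sigma+1)^B]$, and compute $\BWT$ of $X_B$ as a base case by running any standard linear-time suffix array construction algorithm directly on $X_B$.

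The recursion proceeds in $\log B$ halving steps. At each step I would invoke \Lemma{lemma:bwtOfLeftShift} to derive $\BWT_{\overleftarrow{X}_B}$ from $\BWT_{X_B}$ via the prescribed circular left-shift, then invoke \Lemma{lemma:bwtOfHalfBlock} (which internally applies \Thm{thm:generalizedIterator_bwtMerge} and the half-block projection $\mathtt{right}$) to merge the two transforms into $\BWT_{X_{B/2}}$. Note that at every level the hypothesis of \Thm{thm:generalizedIterator_bwtMerge} is satisfied, because all rotations of $X_{B/2^i}$ are lexicographically distinct from one another and from all rotations of $\overleftarrow{X}_{B/2^i}$. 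After $\log B$ iterations the block size reaches $1$, giving $\BWT$ of $X$, from which $\BWT$ of $T\#$ is extracted in linear time by the post-processing of~\cite{HSS09}.

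The main obstacle is calibrating $B$ so that the base case, the recursion depth, and every intermediate level simultaneously respect the budget. I need (i) $\sigma^B = O(n')$ so that the base-case suffix array algorithm runs in $O(n')$ time and so that a block fits in a constant number of machine words (enabling the constant-time block access required by \Lemma{lemma:bwtOfHalfBlock}), (ii) $(n'/B)\log n' = O(n'\log\sigma)$ so that the base case's pointer array fits in the space budget, and (iii) $B$ a power of two so that halving eventually reaches $1$. Setting $B = 2^{\lceil\log(\log n'/(c\log\sigma))\rceil}$ for any constant $c \geq 2$ squeezes $B$ into the interval $[\log n'/(c\log\sigma),\, 2\log n'/(c\log\sigma))$, satisfies all three conditions, and yields $O(\log\log n')$ recursion levels.

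Finally I would check that the total cost telescopes. Level $i$ operates on strings of length $n'/(B/2^i)$, so by \Lemma{lemma:bwtOfLeftShift} and \Lemma{lemma:bwtOfHalfBlock} its work is $O(n' 2^i / B)$; summing over $i = 0,\dots,\log B$ gives a geometric series dominated by its last term $O(n')$. The working space at every level is $O(n'\log\sigma)$ bits, since both the two BWTs being merged and the internal structures of \Thm{thm:generalizedIterator_bwtMerge} fit in that budget, and the $o(n'/B)$ contribution of \Lemma{lemma:bwtOfLeftShift} is absorbed. Combining with the base case and the linear-time recovery of $\BWT_{T\#}$ from $\BWT_X$ establishes \Thm{thm:bwtConstruction}.
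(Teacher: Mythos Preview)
Your proposal is essentially identical to the paper's own argument: the same padding, the same base case via a linear-time suffix array construction on $X_B$, the same $\log B$ halving steps that alternate \Lemma{lemma:bwtOfLeftShift} and \Lemma{lemma:bwtOfHalfBlock}, the same choice $B=2^{\lceil\log(\log n'/(c\log\sigma))\rceil}$ with $c\ge 2$, and the same geometric-series time analysis. The only slip is that you quote the working space of \Lemma{lemma:bwtOfLeftShift} as $o(n'/B)$ bits, whereas the lemma states $O(\sigma^B\log(n'/B))$ bits; with your choice of $B$ this is still absorbed into the budget, so the argument stands.
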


\section{Building string indexes}\label{sec:bwtIndexesConstruction}


\subsection{Building the compressed suffix array}

The \emph{compressed suffix array} of a string $T \in [1..\sigma]^{n-1}\#$ (abbreviated to CSA in what follows) is a representation of $\SA_T$ that uses just $O((n\log{\sigma})/\epsilon)$ bits space for any given constant $\epsilon$, at the cost of increasing access time to any position of $\SA_T$ to $t = O((\log_{\sigma}{n})^{\epsilon}/\epsilon)$ \cite{GV05}. Without loss of generality, let $B$ be a block size such that $B^i$ divides $n$ for any setting of $i$ that we will consider, and let $T_i$ be the (suitably terminated) string of length $n/B^i$ defined on the alphabet $[1 \ltdots (\sigma+1)^{B^i}]$ of all strings of length $B^i$ on alphabet $[0 \ltdots \sigma]$, and such that the ``characters'' of $T_i$ correspond to the consecutive blocks of size $B^i$ of $T$. In other words, $T_{i}[j] = T[(j-1)B^i+1 \ltdots jB^i]$. Note that $T_0=T$, and $T^i$ with $i>0$ is the string obtained by grouping every consecutive $B$ characters of $T_{i-1}$. The CSA of $T$ with parameter $\epsilon$ consists of the suffix array of $T_{1/\epsilon}$, and of $1/\epsilon$ layers, where layer $i \in [0..1/\epsilon-1]$ is composed  of the following elements: 
\begin{enumerate}
\item A data structure that supports $\mathtt{access}$ and $\mathtt{partialRank}$  
operations\footnote{The CSA was originally defined in terms of the $\psi$ function: in this case, support for $\mathtt{select}$ queries would be needed.} on $\BWT_{T_i}$.
\item The $C$ array of $T_i$, defined on alphabet $[1..(\sigma+1)^{B^i}]$, encoded as a bitvector with $(\sigma+1)^{B^i}$ ones and $n/B^i$ zeros, and indexed to support $\mathtt{select}$ queries.
\item A bitvector $\mathtt{marked}_i$, of size $n/B^i$, that marks every position $j$ such that $\SA_{T_i}[j]$ is a multiple of $B$.
\end{enumerate}
For concreteness, let $\epsilon=2^{-c}$ for some constant $c > 0$.  
Note that layer $i$ contains enough information to support function $\LF$ 
on $T_i$. To compute $\SA_{T_i}[j]$, we first check whether $\mathtt{marked}_{i}[j]=1$: if so, then $\SA_{T_i}[j] = B \cdot \SA_{T_{i+1}}[j']$, where $j'=\mathtt{rank}_{1}(\mathtt{marked}_i,j)$. Otherwise, we iteratively set $j$ to $\LF(j)$ in constant time 
and we test whether $\mathtt{marked}_{i}[j]=1$. If it takes $t$ iterations to reach a $j^*$ such that $\mathtt{marked}_{i}[j^*]=1$, then $\SA_{T_i}[j] = B \cdot \SA_{T_{i+1}}[\mathtt{rank}_1(\mathtt{marked}_i,j^*)] + t$. 
Since $t \leq B-1$ at any layer, the time spent in a layer is $O(B)$, and the time to traverse all layers is $O(B/\epsilon)$. Setting $B=(\log_{\sigma}{n})^{\epsilon}$ achieves the claimed time complexity, and assuming without loss of generality that 
$\sigma$ is a power of two and $n=\sigma^{2^{2^a}}$ for some integer $a \geq c$ ensures that $B^i$ for any $i \in [1..1/\epsilon]$ is an integer that divides $n$. 
Using Lemma \ref{lemma:rank_select_access}, every layer takes $O(n\log{\sigma})$ bits of space, irrespective of $B$, so the whole data structure takes $O((n\log{\sigma})/\epsilon)$ bits of space. Counting the number $\mathtt{occ}$ of occurrences of a pattern $P$ in $T$ can be performed in a number of ways with the CSA. A simple, $O(|P|\log{n} \cdot (\log_{\sigma}{n})^{\epsilon}/\epsilon))$ time solution, consists in performing binary searches on the suffix array: this allows one to locate all such occurrences in $O(|P|(\log{n}+\mathtt{occ}) \cdot (\log_{\sigma}{n})^{\epsilon}/\epsilon))$ time. 
Alternatively, count queries could be implemented with backward steps as in the BWT index, in overall $O(|P|\log{\log{\sigma}})$ time, using Lemma \ref{lemma:rank_select_access3}.

The CSA takes in general at least $n\log{\sigma}+o(n)$ bits, or even $nH_k+o(n)$ bits for $k=o(\log_{\sigma}{n})$ \cite{GGV03}. The CSA has a number of variants, the fastest of which can be built in $O(n\log{\log{\sigma}})$ time using $O(n\log{\sigma})$ bits of working space~\cite{HSS09}. 
%
%
%
%
Combining the setup of data structures described above with Theorem \ref{thm:bwtConstruction} 
allows one to build the CSA more efficiently:

\begin{theorem} \label{thm:csaConstruction}
Given a string $T = [1..\sigma]^{n}$, we can build the compressed suffix array in $O(n)$ time and in $O(n\log{\sigma})$ bits of working space.
\end{theorem}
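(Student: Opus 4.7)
My plan is to build the $1/\epsilon = O(1)$ layers of the CSA bottom-up. First I will invoke Theorem~\ref{thm:bwtConstruction} to obtain $\BWT_{T\#}$ in $O(n)$ time and $O(n\log\sigma)$ bits of working space, and then equip it with the constant-time $\mathtt{access}$ and $\mathtt{partialRank}$ data structure of Lemma~\ref{lemma:rank_select_access}, together with the $C$-array of $T_0$ encoded as a select-indexed bitvector. These structures support $\LF$ on layer~$0$ in constant time, which is the primitive I will use to bootstrap all higher layers.

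For each $i = 0, 1, \ldots, 1/\epsilon - 1$, I will derive layer $i+1$ from layer $i$ by two reverse scans of $T_i$ driven by Lemma~\ref{lemma:bwtinvert}. Each scan enumerates, in $O(n_i)$ time and $O(\log n_i)$ bits of extra working space, every position $q$ of $T_i$ in decreasing order, giving access to both its BWT position $\SA_{T_i}^{-1}[q]$ and (through $\BWT_{T_i}$) the character at position $q$ in $T_i$. In the first scan I will set $\mathtt{marked}_i[\SA_{T_i}^{-1}[q]] \leftarrow 1$ whenever $q$ aligns with a block of $T_{i+1}$, and then I will index $\mathtt{marked}_i$ for constant-time rank. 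In the second scan I will maintain a sliding window of the last $B$ characters of $T_i$ read; whenever a full block of $T_i$ has just been collected, I will emit it as one ``character'' of $\BWT_{T_{i+1}}$, written \emph{directly} at position $\mathtt{rank}_1(\mathtt{marked}_i, \SA_{T_i}^{-1}[q^*])$, where $q^*$ is the previously recorded block-boundary position to which this block corresponds. Concurrently I will tally block multiplicities to produce the $C$-array bitvector of $T_i$, and I will attach the constant-time access/partial-rank structure of Lemma~\ref{lemma:rank_select_access} to $\BWT_{T_{i+1}}$. As noted in the text preceding the theorem, each layer's combined structures occupy $O(n\log\sigma)$ bits.

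At the top of the recursion, $T_{1/\epsilon}$ has $n_{1/\epsilon} = n/\log_\sigma n$ characters, each encoded in $\log n$ bits, for a total of $O(n \log\sigma)$ bits. After renaming its distinct characters to a contiguous integer alphabet of size at most $n_{1/\epsilon}$ by a single radix sort over keys of $O(\log n) = O(w)$ bits, I will build $\SA_{T_{1/\epsilon}}$ using any of the linear-time integer-alphabet suffix-array construction algorithms of~\cite{KA03,KSPP05,KSB06}, in $O(n_{1/\epsilon})$ time and $O(n_{1/\epsilon}\log n) = O(n\log\sigma)$ bits of working space.

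Summing over the $O(1)$ layers, the total construction time is $\sum_{i=0}^{1/\epsilon} O(n_i) = O(n)$ because $n_i = n/B^i$ decays geometrically, and the total working space stays within $O(n\log\sigma)$ bits. The hard part will be arranging every per-layer step—the BWT extraction, the $C$-array, the $\mathtt{marked}$ bitvector, and the alphabet renaming—to run \emph{simultaneously} in $O(n_i)$ time and $O(n\log\sigma)$ bits. The key trick that makes this work is the reuse of the partially built $\mathtt{marked}_i$ bitvector in the second scan to place each block directly in its correct slot of $\BWT_{T_{i+1}}$, which avoids a general comparison sort and keeps the inter-layer work proportional to $n_i$ rather than $n_i \log n_i$.
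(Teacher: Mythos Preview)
Your approach is correct and reaches the same bounds, but it differs from the paper's. The paper's argument is shorter: since there are only $1/\epsilon = O(1)$ layers, it simply invokes Theorem~\ref{thm:bwtConstruction} once per layer to build $\BWT_{T_i}$ directly from $T_i$ (a string of length $n_i = n/B^i$ over an alphabet of size $(\sigma+1)^{B^i}$), each call taking $O(n_i)$ time and $O(n_i \cdot B^i \log\sigma) = O(n\log\sigma)$ bits; the $\mathtt{marked}_i$ bitvectors are then filled by a single BWT inversion per layer via Lemma~\ref{lemma:bwtinvert}. The paper further remarks that all the $\BWT_{T_i}$ are in fact intermediate products of a \emph{single} run of the recursion inside Theorem~\ref{thm:bwtConstruction}, so even the $1/\epsilon$ separate invocations can be avoided.

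Your route instead builds only $\BWT_{T_0}$ via Theorem~\ref{thm:bwtConstruction} and then \emph{derives} each $\BWT_{T_{i+1}}$ from $\BWT_{T_i}$ by two LF-driven scans, using $\mathtt{rank}_1$ on the freshly built $\mathtt{marked}_i$ to place each aggregated block in its correct slot. This is a sound alternative: the correspondence between block-boundary suffixes of $T_i$ and suffixes of $T_{i+1}$ preserves lexicographic order, so $\mathtt{rank}_1(\mathtt{marked}_i,\cdot)$ indeed yields the target index in $\BWT_{T_{i+1}}$, and the sliding window of $B$ characters of $T_i$ produces exactly the preceding block $T_{i+1}[p-1]$. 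What your approach buys is a self-contained layer-to-layer extraction that does not rely on re-applying the heavy machinery of Theorem~\ref{thm:bwtConstruction} at coarser granularities, and it makes explicit a reusable primitive (extracting a block-sampled BWT from a finer one). What the paper's approach buys is brevity: it treats Theorem~\ref{thm:bwtConstruction} as a black box and avoids the bookkeeping of the two-scan scheme.
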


Note that the BWT of all strings $T_i$ in the CSA of $T$, as well as all bitvectors $\mathtt{marked}_i$, can be built \emph{in a single invocation} of Theorem \ref{thm:bwtConstruction}, rather than by invoking Theorem \ref{thm:bwtConstruction} $1/\epsilon$ times. 
Note also that combining Theorem \ref{thm:bwtConstruction} with Lemma \ref{lemma:buildingSuccinctSA} and with the first data structure of Lemma \ref{lemma:rank_select_access3}, yields immediately a BWT index and a succinct suffix array that can be built in deterministic linear time: 


\begin{theorem} \label{thm:bwtIndexConstruction0}
Given a string $T = [1..\sigma]^{n}$, we can build the following data structures:
\begin{itemize}
\item A BWT index that takes $n\log{\sigma}(1+1/k)+O(n\log{\log{\sigma}})$ bits of space for any positive integer $k$, and that implements operation $\LF(i)$ in constant time for any $i \in [1..n]$, and operation $\mathtt{count}(P)$ in $O(m(\log{\log{\sigma}}+k))$ time for any $P \in [1..\sigma]^m$. The index can be built in $O(n)$ time and in $O(n\log{\sigma})$ bits of working space.

\item A succinct suffix array that takes $n\log{\sigma}(1+1/k)+O(n\log{\log{\sigma}})+O((n/r)\log{n})$ bits of space for any positive integers $k$ and $r$, and that implements operation $\mathtt{count}(P)$ in $O(m(\log{\log{\sigma}}+k))$ time for any $P \in [1..\sigma]^m$, operation $\mathtt{locate}(i)$ in $O(r)$ time, and operation $\mathtt{substring}(i,j)$ in $O(j-i+r)$ time for any $i<j$ in $[1..n]$. The index can be built in $O(n)$ time and in $O(n\log{\sigma})$ bits of working space.

\end{itemize}
\end{theorem}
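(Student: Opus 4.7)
The plan is to assemble both structures by composing Theorem \ref{thm:bwtConstruction} with the first variant of Lemma \ref{lemma:rank_select_access3} for the BWT index, and then invoking Lemma \ref{lemma:buildingSuccinctSA} on top of the BWT index to obtain the succinct suffix array. Both pieces are already proven, so the theorem is essentially a packaging result, and the only work is to verify that the space and time bounds add up correctly.

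For the BWT index, I would first apply Theorem \ref{thm:bwtConstruction} to construct $\BWT_{T\#}$ in deterministic $O(n)$ time and $O(n\log\sigma)$ bits of working space. The array $C[0..\sigma]$ is obtained in $O(n)$ time and $O(\sigma\log n) = o(n)$ bits by a single scan over $\BWT_{T\#}$. I would then discard any temporary representation of the BWT and build, directly on top of $\BWT_{T\#}$, the first data structure of Lemma \ref{lemma:rank_select_access3} with the given parameter $k$: this takes $O(n)$ construction time, $o(n)$ bits of working space, and $n\log\sigma(1+1/k)+O(n\log\log\sigma)$ bits of final space, while supporting $\mathtt{access}$ and $\mathtt{partialRank}$ in constant time, $\mathtt{select}$ in $O(k)$ time, and $\mathtt{rank}$ in $O(\log\log\sigma+k)$ time. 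Since $\LF(i)=C[L[i]]+\mathtt{partialRank}(L,i)$, the LF mapping runs in constant time. A $\mathtt{count}(P)$ query is implemented by $|P|=m$ backward-search steps, each requiring two rank queries on the BWT at cost $O(\log\log\sigma+k)$, giving the claimed $O(m(\log\log\sigma+k))$ bound.

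For the succinct suffix array, I would feed the BWT index constructed above into Lemma \ref{lemma:buildingSuccinctSA}. Since $t=O(1)$ for our LF implementation, the construction takes deterministic $O(n)$ time and $O(\log n)$ bits of working space. The sampling machinery of Definition \ref{def:succinctSuffixArray} at rate $r$ contributes the arrays $\mathtt{samples}$ and $\mathtt{pos2rank}$ (together $O((n/r)\log n)$ bits) and a bitvector of length $n$ indexed for rank/select ($n+o(n)$ bits), all absorbed by the $O((n/r)\log n)$ and $O(n\log\log\sigma)$ terms. The $\mathtt{count}$ query is inherited from the BWT index; $\mathtt{locate}(i)$ performs at most $r$ LF steps followed by a sample lookup in $O(r)$ time; $\mathtt{substring}(i,j)$ uses $\mathtt{pos2rank}$ to reach the nearest sampled position at or after $j$ and then applies LF $O(j-i+r)$ times to emit the characters right-to-left.

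The main thing to check carefully, rather than a genuine obstacle, is that the data structure of Lemma \ref{lemma:rank_select_access3} does not need the original string to answer queries (this is stated explicitly in its conclusion), so the BWT produced by Theorem \ref{thm:bwtConstruction} can be replaced in place by the new representation without inflating the working space beyond $O(n\log\sigma)$ bits. With that observation the two bulleted claims follow by straightforward accounting of the bit budgets and query times contributed by each component, and the entire construction is deterministic.
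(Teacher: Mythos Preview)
Your proposal is correct and follows essentially the same approach as the paper, which simply states that the theorem follows immediately from combining Theorem~\ref{thm:bwtConstruction}, the first data structure of Lemma~\ref{lemma:rank_select_access3}, and Lemma~\ref{lemma:buildingSuccinctSA}. Your added verification that $\LF$ is implemented via $\mathtt{access}+\mathtt{partialRank}$ (hence constant time) and that $\mathtt{count}$ uses $\mathtt{rank}$ (hence $O(\log\log\sigma+k)$ per step) is exactly the accounting the paper leaves implicit.
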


\subsection{Building BWT indexes}

To reduce the time complexity of a backward step to a constant, however, we need to augment the representation of the topology of $\ST_{T}$ described in Lemma \ref{lemma:balancedParentheses} with an additional operation, where $T = [1..\sigma]^{n-1}\#$. Recall that the identifier $\mathtt{id}(v)$ of a node $v$ of $\ST_T$ is the rank of $v$ in the preorder traversal of $\ST_T$. Given a node $v$ of $\ST_T$ and a character $a \in [0..\sigma]$, let operation $\mathtt{weinerLink}(\mathtt{id}(v),a)$ return zero if string $a\ell(v)$ is not the prefix of a rotation of $T$, and return $\mathtt{id}(w)$ otherwise, where $w$ is the locus of string $a\ell(v)$ in $\ST_T$. The following lemma describes how to answer $\mathtt{weinerLink}$ queries efficiently:

\begin{lemma} \label{lemma:weiner_link_support}
Assume that we are given  a data structure that supports $\mathtt{access}$ queries on the BWT of a string $T = [1..\sigma]^{n-1}\#$ in constant time, a data structure that supports $\mathtt{rangeDistinct}$ queries on $\BWT_T$ in constant time per element in the output, and a data structure that supports $\mathtt{select}$ queries on $\BWT_T$ in time $t$. Assume also that we are given the representation of the topology of $\ST_T$ described in Lemma \ref{lemma:balancedParentheses}. Then, we can build a data structure that takes $O(n\log{\log{\sigma}})$ bits of space and that supports operation $\mathtt{weinerLink}(\mathtt{id}(v),a)$ in $O(t)$ time for any node $v$ of $\ST_T$ (including leaves) and for any character $a \in [0..\sigma]$. This data structure can be built in \emph{randomized} $O(nt)$ time and in $O(n\log{\sigma})$ bits of working space.
\end{lemma}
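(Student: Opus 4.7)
The plan is to reduce $\mathtt{weinerLink}(\mathtt{id}(v),a)$ to a single backward-search step on $\BWT_T$ followed by an $\mathtt{lca}$ query on the topology of $\ST_T$. First I would use $\mathtt{leftmostLeaf}$ and $\mathtt{rightmostLeaf}$ to convert $\mathtt{id}(v)$ into the BWT interval $[i..j]$ of $v$ in constant time (with $[i..j]=[i..i]$ when $v$ is a leaf). Then I would compute the interval of $a\ell(v)$ by $[i'..j']=[C[a]+\mathtt{rank}_a(\BWT_T,i-1)+1\,..\,C[a]+\mathtt{rank}_a(\BWT_T,j)]$, returning $0$ whenever $i'>j'$. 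Because $[i'..j']$ coincides with the BWT interval of the locus $w$ of $a\ell(v)$, regardless of whether $a\ell(v)$ is right-maximal or only an implicit Weiner link target, the answer is $\mathtt{lca}(\mathtt{selectLeaf}(i'),\mathtt{selectLeaf}(j'))$, which is constant time on the topology. All the nontrivial work is therefore in supporting $\mathtt{rank}_a(\BWT_T,\cdot)$ in $O(t)$ time within $O(n\log\log\sigma)$ extra bits.

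For partial rank I would reuse the construction in the proof of Lemma~\ref{lemma:rangedistinct2}: for each character $a\in[0..\sigma]$, build an MMPHF on the sorted positions $P_a=\{p:\BWT_T[p]=a\}$ via Lemma~\ref{lemma:build_mmphf2} with $U=n$ and block size $b=\sigma^2$, giving constant-time $\mathtt{partialRank}$ in $O(n\log\log\sigma)$ bits overall. To turn $\mathtt{partialRank}$ into full rank at an arbitrary $q$, I would add a two-level predecessor structure per character: partition $[1..n]$ into universe blocks of size $\sigma^c$ for a small constant $c$, and for each character $a$ attach, to every nonempty block, a local predecessor structure of Lemma~\ref{lemma:sbTree} over the positions of $a$ inside that block, plus a rank/select bitvector per character marking its nonempty blocks so that empty blocks can be skipped in constant time and the maximum position of the preceding nonempty block recovered in one extra step. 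Each local predecessor has universe $\sigma^c$, hence spends $O(\log\log\sigma)$ bits per element and answers in $O(t)$ time when its sorted input is accessed through the given $\mathtt{select}$; summed over characters and blocks this fits in $O(n\log\log\sigma)$ bits, and then $\mathtt{rank}_a(\BWT_T,q)=\mathtt{partialRank}(p)$ at the predecessor $p$ (or $0$ if none). Construction would enumerate all right-maximal substrings through Lemma~\ref{lemma:iterator} in $O(n)$ time to identify which characters appear where in $\BWT_T$, and feed the sorted streams into Lemma~\ref{lemma:build_mmphf2} (randomized) and Lemma~\ref{lemma:sbTree}; the latter accesses its sorted input via $\mathtt{select}$, so the $n$ total accesses account for the $O(nt)$ bound, and the intermediate per-character buffers fit in $O(n\log\sigma)$ bits of working space, which is reclaimed at the end.

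The main obstacle I expect is verifying that the predecessor scheme really fits in $O(n\log\log\sigma)$ bits while answering in $O(t)$ time in the worst case: the constant $c$ in the universe-block size $\sigma^c$ must be chosen so that the combined cost of the local predecessors, the per-character nonempty-block bitvectors indexed for constant-time rank and select, the compact pointer tables locating each local structure (best stored via Lemma~\ref{lemma:prefixSums}), and the lower-order overheads inherited from Lemma~\ref{lemma:sbTree} all stay within budget, even for characters whose positions in $\BWT_T$ are spread across many sparsely populated blocks. A secondary subtlety is that when the local predecessor query returns nothing, the fallback to the last nonempty preceding block must still cost $O(t)$; this is arranged by caching the maximum element of each nonempty block inside its local structure and using rank/select on the per-character bitvector to locate that block in constant time, so the total cost of one $\mathtt{rank}_a$ query remains one $\mathtt{partialRank}$ plus one predecessor call, and the overall Weiner-link query stays in $O(t)$.
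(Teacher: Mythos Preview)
Your approach is correct but takes a genuinely different route from the paper. The paper does not implement general $\mathtt{rank}$ on $\BWT_T$ at all; instead, for each character $c$ it records the set $\mathtt{sources}^c$ of suffix-tree nodes (in preorder) that carry an implicit or explicit Weiner link labeled $c$, builds an MMPHF $f^c$ on their preorder identifiers, marks the explicit links in a bitvector $\mathtt{explicit}^c$, and stores an array $C'$ counting nodes whose label starts with a smaller character. The destination is then returned directly as $C'[c]+\mathtt{rank}_1(\mathtt{explicit}^c,f^c(\mathtt{id}(v))-1)+1$, and a single $\mathtt{select}$ on $\BWT_T$ is used only to \emph{verify} that $v$ really has a Weiner link labeled $c$ (since the MMPHF may return garbage otherwise). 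Your route instead manufactures full $\mathtt{rank}$ in $O(t)$ time from partial rank (MMPHFs on the position lists $P_a$) plus per-character predecessor structures of Lemma~\ref{lemma:sbTree} accessed through $\mathtt{select}$, and then finishes with a standard backward step and an $\mathtt{lca}$. Your version yields a more general primitive (true $\mathtt{rank}$ within $O(n\log\log\sigma)$ extra bits), while the paper's node-level construction is purpose-built and extends cleanly to the $\mathtt{countSmaller}$ operation needed in Lemma~\ref{lemma:bidirectional_weiner_link_support}. One minor simplification: you do not need the enumerator of Lemma~\ref{lemma:iterator} to discover where each character sits in $\BWT_T$; a single left-to-right scan using the given constant-time $\mathtt{access}$ already produces the sorted position streams for every character, which also makes the randomized $O(n)\subseteq O(nt)$ construction bound immediate.
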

\begin{proof}
We show how to build efficiently the data structure described in \cite{BNtalg14}, which we summarize here for completeness. We use the suffix tree topology to convert in constant time $\mathtt{id}(v)$ to $\INTERVAL{v}$ (using operations $\mathtt{leftmostLeaf}$ and $\mathtt{rightmostLeaf}$), and vice versa (using operations $\mathtt{selectLeaf}$ and $\mathtt{lca}$). We traverse $\ST_T$ in preorder using the suffix tree topology, as described in Lemma \ref{lemma:spaceEfficientPreorder}. For every internal node $v$ of $\ST_T$, we use a $\mathtt{rangeDistinct}$ query to compute all the $h$ distinct characters $a_1,\dots,a_h$ that appear in $\BWT_{T}[\INTERVAL{v}]$, and for every such character the interval of $a_i \ell(v)$ in $\BWT_{T}$, in overall $O(h)$ time. Note that the sequence $a_1,\dots,a_h$ returned by a $\mathtt{rangeDistinct}$ query is not necessarily sorted in lexicographic order. We determine whether $a_i \ell(v)$ is the label of a node $w$ of $\ST_T$ by taking a suffix link from the locus of $a_i \ell(v)$ in $O(t)$ time, using Lemma \ref{lemma:suffixLink}, and by checking whether the destination of such link is indeed $v$.

For every character $c \in [0..\sigma]$, we use vector $\mathtt{sources}^c$ to store all nodes $v$ of $\ST_T$ (including leaves) that are the source of an implicit or explicit Weiner link labeled by $c$, in the order induced by the preorder traversal of $\ST_T$. We encode the difference between the preorder ranks of two consecutive nodes in the same $\mathtt{sources}^c$ using Elias delta or gamma coding~\cite{El75}. We also store a bitvector $\mathtt{explicit}^c$ that marks with a one every explicit Weiner link in $\mathtt{sources}^c$ (recall that Weiner links from leaves are explicit). Bitvectors $\mathtt{sources}^c$ and $\mathtt{explicit}^c$ can be filled during the preorder traversal of $\ST_T$. Once $\mathtt{explicit}^c$ has been filled, we index it to answer $\mathtt{rank}$ queries. The space used by such indexed bivectors $\mathtt{explicit}^c$ for all $c \in [0..\sigma]$ is $O(n)$ bits by Observation \ref{obs:suffixtree}, and the space used by vectors $\mathtt{sources}^c$ for all $c \in [0..\sigma]$ is $O(n\log{\sigma})$ bits, by applying Jensen's inequality twice as in Lemma \ref{lemma:rangedistinct2}. We follow the static allocation strategy described in Section \ref{sec:staticAllocation}: specifically, we compute the number of bits needed by $\mathtt{sources}^c$ and $\mathtt{explicit}^c$ during a preliminary pass over $\ST_T$, in which we increment the size of the arrays by keeping the preorder position of the last internal node with a Weiner link labeled by $c$, for all $c \in [0..\sigma]$. This preprocessing takes $O(n)$ time and $O(\sigma\log{n}) \in o(n)$ bits of space. Once such sizes are known, we allocate a large enough contiguous region of memory.

Finally, we build an array $C'[1..\sigma]$ where $C'[a]$ is the number of nodes $v$ in $\ST_T$ (including leaves) such that $\ell(v)$ starts with a character strictly smaller than $a$. We also build an implementation of an MMPHF $f^c$ for every $\mathtt{sources}^c$ using the technique described in the proof of Lemma \ref{lemma:rangedistinct2}, and we discard $\mathtt{sources}^c$. All such MMPHF implementations take $O(n\log{\log{\sigma}})$ bits of space, and they can be built in overall $O(n)$ randomized time and in $O(\sigma^k \log{\sigma})$ bits of working space, for any integer $k>1$. Note that $C'$ takes $O(\sigma\log{n}) \in o(n)$ bits of space, since $\sigma \in o(\sqrt{n}/\log{n})$, and it can be built with a linear-time preorder traversal of $\ST_T$.

Given a node $v$ of $\ST_T$ and a character $c \in [0..\sigma]$, we determine whether the Weiner link from $v$ labeled by $c$ is explicit or implicit by accessing $\mathtt{explicit}^{c}(f^{c}(\mathtt{id}(v)))$, and we compute the identifier of the locus $w$ of the destination of the Weiner link (which might be a leaf) by computing:
$$
C'[c]+\mathtt{rank}_{1}(\mathtt{explicit}^{c},f^{c}(\mathtt{id}(v))-1)+1
$$
If there is no Weiner link from $v$ labeled by $c$, then $v$ does not belong to $\mathtt{sources}^c$, but $f^{c}(\mathtt{id}(v))$ still returns a valid pointer in $\mathtt{sources}^c$: to check whether this pointer corresponds to $v$, we convert $v$ and $w$ to intervals in $\BWT_T$ using the suffix tree topology, and we check whether $\mathtt{select}_{c}(\BWT_{T},\SP{w}-C[c]) \in \INTERVAL{v}$.

The output of this construction consists in arrays $C'$, $\mathtt{explicit}^c$, and in the implementation of $f^c$, for all $c \in [0..\sigma]$.
\end{proof}

Since operation $\mathtt{weinerLink}(\mathtt{id}(v),a)$ coincides with a backward step with character $a$ from $\INTERVAL{v}$ in $\BWT_T$, Lemma \ref{lemma:weiner_link_support} enables the construction of space-efficient BWT indexes with constant-time $\LF$:

\begin{theorem}\label{thm:bwtIndexConstruction}
Given a string $T = [1..\sigma]^{n-1}\#$, we can build any of the following data structures in randomized $O(n)$ time and in $O(n\log{\sigma})$ bits of working space:
\begin{itemize}
\item A BWT index that takes $n\log{\sigma}(1+1/k) + O(n\log{\log{\sigma}})$ bits of space for any positive integer $k$, and that implements operation $\LF(i)$ in constant time for any $i \in [1..n]$, and operation $\mathtt{count}(P)$ in $O(mk)$ time for any $P \in [1..\sigma]^m$.
\item A succinct suffix array that takes $n\log{\sigma}(1+1/k) + O(n\log{\log{\sigma}}) + O((n/r)\log{n})$ bits of space for any positive integers $k$ and $r$, and that implements operation $\mathtt{count}(P)$ in $O(mk)$ time for any $P \in [1..\sigma]^m$, operation $\mathtt{locate}(i)$ in $O(r)$ time, and operation $\mathtt{substring}(i,j)$ in $O(j-i+r)$ time for any $i<j$ in $[1..n]$.
\end{itemize}
Alternatively, for the same construction space and time, we can build analogous data structures that support $\LF(i)$ in $O(k)$ time, $\mathtt{count}(P)$ in $O(m)$ time, $\mathtt{locate}(i)$ in $O(r)$ time, and $\mathtt{substring}(i,j)$ in $O(j-i+r)$ time: such data structures take the same space as those described above.
\end{theorem}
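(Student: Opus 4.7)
The plan is to assemble the index out of four independent pieces built on top of $\BWT_T$: the transform itself, a rank/select/access representation of $\BWT_T$, the topology of $\ST_T$, and a Weiner-link oracle. The BWT is produced by Theorem \ref{thm:bwtConstruction} in deterministic $O(n)$ time and $O(n\log\sigma)$ bits. The crucial observation is that one backward step with character $a$ from the interval of a right-maximal substring $W$ is literally the Weiner link from the locus of $W$ labeled by $a$; hence once Lemma \ref{lemma:weiner_link_support} is in place, $\mathtt{count}(P)$ amounts to $|P|$ Weiner-link queries, whereas $\LF$ is computed directly as $C[\BWT_T[i]]+\mathtt{partialRank}(\BWT_T,i)$.

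The two variants differ only in how Lemma \ref{lemma:rank_select_access} is instantiated on $\BWT_T$. For the first variant I would pick the configuration giving constant-time access and partial rank with $O(k)$ select, so that $\LF$ costs $O(1)$ and each Weiner link (hence each backward step) costs $O(k)$, yielding $O(mk)$ count. For the alternative variant I would swap the two roles so that select becomes $O(1)$: Weiner links and backward search are then $O(1)$ and count is $O(m)$, while $\LF$ slows to $O(k)$. In both cases the rank/select/access piece occupies $n\log\sigma(1+1/k)+O(n)$ bits, the topology built by Theorem \ref{thm:topology} adds $O(n)$ bits, and the Weiner-link oracle of Lemma \ref{lemma:weiner_link_support} adds $O(n\log\log\sigma)$ bits, matching the claimed total.

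The main obstacle is that Lemma \ref{lemma:weiner_link_support} charges randomized $O(nt)$ construction time where $t$ is the select cost on $\BWT_T$; in the first variant the final select costs $O(k)$, not $O(1)$. I would bridge this with a two-phase construction. First, build the $k=1$ instantiation of Lemma \ref{lemma:rank_select_access} (constant time for all three operations, $2n\log\sigma+O(n)$ bits, deterministic $O(n)$ time) together with a temporary $\mathtt{rangeDistinct}$ structure via Lemma \ref{lemma:rangedistinct}. Inside this scaffolding, Theorem \ref{thm:topology} and Lemma \ref{lemma:weiner_link_support} both run in randomized $O(n)$ time while total working space remains $O(n\log\sigma)$. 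Once the MMPHFs $f^c$ and bitvectors $\mathtt{explicit}^c$ produced by Lemma \ref{lemma:weiner_link_support} are stored, the temporary structures are discarded and the final instantiation of Lemma \ref{lemma:rank_select_access} at the desired $k$ is installed in their place.

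For the succinct suffix array I would finally add the samples of Definition \ref{def:succinctSuffixArray}. In variant 1, Lemma \ref{lemma:buildingSuccinctSA} builds them in $O(n)$ time since $\LF$ is constant-time, and $\mathtt{locate}$ and $\mathtt{substring}$ follow $\LF$-chains of length at most $r$, giving $O(r)$ and $O(j-i+r)$. In variant 2, $\LF$ is $O(k)$, but $\psi(i)=\mathtt{select}_{a}(\BWT_T,i-C[a])$ is $O(1)$; I would therefore produce the samples by a single forward scan using $\psi$ (in $O(n)$ time), and implement $\mathtt{locate}$ and $\mathtt{substring}$ by $\psi$-chains of length $O(r)$ rather than $\LF$-chains, preserving the $O(r)$ and $O(j-i+r)$ bounds stated in the theorem.
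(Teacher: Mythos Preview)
Your proposal is correct and follows essentially the same approach as the paper: build $\BWT_T$ via Theorem~\ref{thm:bwtConstruction}, layer on Lemma~\ref{lemma:rank_select_access} for $\mathtt{access}$/$\mathtt{partialRank}$/$\mathtt{select}$, build the topology via Theorem~\ref{thm:topology} using a temporary $\mathtt{rangeDistinct}$ structure from Lemma~\ref{lemma:rangedistinct}, then build the Weiner-link oracle of Lemma~\ref{lemma:weiner_link_support} and discard the temporary structure; the two variants come from the two instantiations of Lemma~\ref{lemma:rank_select_access}, and the succinct suffix array adds the usual samples, driven by $\LF$ in one variant and by $\psi$ in the other.

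One point worth noting: your two-phase construction (first build the $k=1$ scaffolding so that $\mathtt{select}$ is constant-time during the invocation of Lemma~\ref{lemma:weiner_link_support}, then swap in the final-$k$ structure) is a genuine refinement over the paper's own proof, which simply builds the final-$k$ structure once and runs Lemma~\ref{lemma:weiner_link_support} on top of it. The paper's version is fine when $k$ is treated as a constant, but your phasing removes the $O(nk)$ dependency in the construction time without changing the working-space bound, since the temporary $k=1$ structure still fits in $O(n\log\sigma)$ bits. The same remark applies to your use of a $\psi$-scan to build the samples in the second variant, which the paper leaves implicit.
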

\begin{proof}
In this proof we combine a number of results described earlier in the paper: see Figure \ref{fig:architecture} for a summary of their mutual dependencies.

We use Theorem \ref{thm:bwtConstruction} to build $\BWT_T$ from $T$, and Lemma \ref{lemma:rank_select_access} to build a data structure that supports $\mathtt{access}$, $\mathtt{partialRank}$ and $\mathtt{select}$ queries on $\BWT_T$. Then, we discard $\BWT_T$. Together with the $C$ array of $T$, this is already enough to implement function $\LF$ and to build arrays $\mathtt{samples}$ and $\mathtt{pos2rank}$ for the succinct suffix array, using Lemma \ref{lemma:buildingSuccinctSA}. We either use the data structure of Lemma \ref{lemma:rank_select_access} that supports select queries in $O(k)$ time (in which case we implement locate and substring queries with function $\LF$), or the data structure that supports select queries in constant time (in which case we implement locate and substring queries with function $\psi$).

To implement backward steps we need support for $\mathtt{weinerLink}$ operations from internal nodes of $\ST_T$. We use Lemma \ref{lemma:rangedistinct} to build a $\mathtt{rangeDistinct}$ data structure on $\BWT_T$ from the $\mathtt{access}$, $\mathtt{partialRank}$ and $\mathtt{select}$ data structure built by Lemma \ref{lemma:rank_select_access}. We use $\mathtt{rangeDistinct}$ queries inside the algorithm to enumerate the BWT intervals of all internal nodes of $\ST_T$ described in Theorem \ref{thm:iterator}, and we use such algorithm to build the balanced parentheses representation of $\ST_T$ as described in Theorem \ref{thm:topology}. To support operations on the topology of $\ST_T$, we feed the balanced parentheses representation of $\ST_T$ to Lemma \ref{lemma:balancedParentheses}. Finally, we use the $\mathtt{rangeDistinct}$ data structure, the tree topology, and the support for $\mathtt{access}$, $\mathtt{partialRank}$ and $\mathtt{select}$ queries on $\BWT_T$, to build the data structures that support $\mathtt{weinerLink}$ operations described in Lemma \ref{lemma:weiner_link_support}. At the end of this process, we discard the $\mathtt{rangeDistinct}$ data structure.

The output of this construction consists of the data structures that support $\mathtt{access}$, $\mathtt{partialRank}$, and $\mathtt{select}$ on $\BWT_T$, and $\mathtt{weinerLink}$ on $\ST_T$.
\end{proof}

\subsection{Building the bidirectional BWT index \label{sect:biBWT}}

The BWT index can be made \emph{bidirectional}, in the sense that it can be adapted to support both left and right extension by a single character \cite{Bidirectional_search_in_a_string_with_wavelet_trees,Bidirectional_search_in_a_string_with_wavelet_trees_and_bidirectional_matching_statistics}. In addition to having a number of applications in high-throughput sequencing (see e.g.  \cite{LLTWWY09,SOAP2_An_improved_ultrafast_tool_for_short_read_alignment}), this index can be used to implement a number of string analysis algorithms, and as an intermediate step for building the compressed suffix tree.

Given a string $T=t_1 t_2 \cdots t_{n-1}$ on alphabet $[1 \ltdots \sigma]$, consider two BWT transforms, one built on $T\#$ and one built on $\REV{T}\# = t_n t_{n-1} \cdots t_1\#$. Let $\INTERVALFUNCTION(W,T)$ be the function that returns the interval in $\BWT_{T\#}$ of the suffixes of $T\#$ that are prefixed by string $W \in [1 \ltdots \sigma]^+$. Note that interval $\INTERVALFUNCTION(W,T)$ in the \emph{suffix array} of $T\#$ contains all the starting positions of string $W$ in $T$. Symmetrically, interval $\INTERVALFUNCTION(\REV{W},\REV{T})$ in the suffix array of $\REV{T}\#$ contains all those positions $i$ such that $n-i+1$ is an \emph{ending position} of string $W$ in $T$.

\begin{definition}
\label{def:biBWTindex}
Given a string $T \in [1 \ltdots \sigma]^{n-1}$, a \emph{bidirectional BWT index} on $T$ is a data structure that supports the following operations on pairs of integers $1 \leq i \leq j \leq n$ and on substrings $W$ of $T$:
\begin{itemize}
\item $\mathtt{isLeftMaximal}(i,j)$: returns $1$ if substring $\BWT_{T\#}[i \ltdots j]$ contains at least two distinct characters, and $0$ otherwise.
\item $\mathtt{isRightMaximal}(i,j)$: returns $1$ if substring $\BWT_{\REV{T}\#}[i \ltdots j]$ contains at least two distinct characters, and $0$ otherwise.
\item $\mathtt{enumerateLeft}(i,j)$: returns all the distinct characters that appear in substring $\BWT_{T\#}[i \ltdots j]$, \emph{in lexicographic order}.
\item $\mathtt{enumerateRight}(i,j)$: returns all the distinct characters that appear in $\BWT_{\REV{T}\#}[i \ltdots j]$, \emph{in lexicographic order}.
\item $\mathtt{extendLeft}\left(c,\INTERVALFUNCTION(W,T),\INTERVALFUNCTION(\REV{W},\REV{T})\right)$: returns pair $\left(\INTERVALFUNCTION(cW,T),\INTERVALFUNCTION(\REV{W}c,\REV{T})\right)$ for $c \in [0 \ltdots \sigma]$.
\item $\mathtt{extendRight}\left(c,\INTERVALFUNCTION(W,T),\INTERVALFUNCTION(\REV{W},\REV{T})\right)$: returns $\left(\INTERVALFUNCTION(Wc,T),\INTERVALFUNCTION(c\REV{W},\REV{T})\right)$ for $c \in [0 \ltdots \sigma]$.
\item $\mathtt{contractLeft}\left(\INTERVALFUNCTION(aW,T),\INTERVALFUNCTION(\REV{W}a,\REV{T})\right)$, where $a \in [1..\sigma]$ and $aW$ is right-maximal: returns pair $\left(\INTERVALFUNCTION(W,T),\INTERVALFUNCTION(\REV{W},\REV{T})\right)$;
\item $\mathtt{contractRight}\left(\INTERVALFUNCTION(Wb,T),\INTERVALFUNCTION(b\REV{W},\REV{T})\right)$, where $b \in [1..\sigma]$ and $Wb$ is left-maximal: returns pair $\left(\INTERVALFUNCTION(W,T),\INTERVALFUNCTION(\REV{W},\REV{T})\right)$.
\end{itemize}
\end{definition}

Operations $\mathtt{extendLeft}$ and $\mathtt{extendRight}$ are analogous to a standard backward step in $\BWT_{T\#}$ or $\BWT_{\REV{T}\#}$, but they keep the interval of a string $W$ in one BWT \emph{synchronized} with the interval of its reverse $\REV{W}$ in the other BWT.
 
In order to build a bidirectional BWT index on string $T$, we also need to support operation $\mathtt{countSmaller}(\INTERVAL{v},c)$, which returns the number of occurrences of characters smaller than $c$ in $\BWT_{T\#}[\INTERVAL{v}]$, where $v$ is a node of $\ST_{T\#}$ and $c$ is the label of an explicit or implicit Weiner link from $v$. Note that, when $v$ is a leaf of $\ST_{T\#}$, $\mathtt{countSmaller}(\INTERVAL{v},\BWT_{T\#}[x])=0$, where $x=\SP{v}=\EP{v}$. The construction of Lemma \ref{lemma:weiner_link_support} can be extended to support constant-time $\mathtt{countSmaller}$ queries, as described in the following lemma:

\begin{lemma} \label{lemma:bidirectional_weiner_link_support}
Assume that we are given a data structure that supports $\mathtt{access}$ queries on the BWT of a string $T = [1..\sigma]^{n-1}\#$ in constant time, a data structure that supports $\mathtt{rangeDistinct}$ queries on $\BWT_T$ in constant time per element in the output, and a data structure that supports $\mathtt{select}$ queries on $\BWT_T$ in time $t$. Assume also that we are given the representation of the topology of $\ST_T$ described in Lemma \ref{lemma:balancedParentheses}. Then, we can build a data structure that takes $3n\log{\sigma}+O(n\log{\log{\sigma}})$ bits of space, and that supports operation $\mathtt{weinerLink}(\mathtt{id}(v),a)$ in $O(t)$ time for any node $v$ of $\ST_T$ (including leaves) and for any character $a \in [0..\sigma]$, and operation $\mathtt{countSmaller}(\INTERVAL{v},c)$ in constant time for any internal node $v$ of $\ST_T$ and for any character $a \in [1..\sigma]$ that labels a Weiner link from $v$. This data structure can be built in \emph{randomized} $O(nt)$ time and in $O(n\log{\sigma})$ bits of working space.
\end{lemma}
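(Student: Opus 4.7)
The plan is to augment the data structure of Lemma~\ref{lemma:weiner_link_support} with one additional per-Weiner-link value that encodes $\mathtt{countSmaller}$. Recall that Lemma~\ref{lemma:weiner_link_support} already maps, in constant time, a pair $(\mathtt{id}(v),c)$ to a position $p=f^{c}(\mathtt{id}(v))$ inside the list $\mathtt{sources}^{c}$ whenever a Weiner link labeled $c$ leaves~$v$, and at this position it retrieves the destination identifier and the $\mathtt{explicit}^{c}$ flag. I would add a parallel array $\mathtt{smaller}^{c}[1..|\mathtt{sources}^{c}|]$ that stores, at position $p$, the value $\mathtt{countSmaller}(\INTERVAL{v},c)$. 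The query for $\mathtt{countSmaller}$ then reduces to the MMPHF evaluation already performed by $\mathtt{weinerLink}$, followed by a single lookup in $\mathtt{smaller}^{c}$, which is constant time.

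To compute these values I would piggy-back on the preorder traversal of $\ST_{T}$ already employed by Lemma~\ref{lemma:weiner_link_support}. At each internal node $v$, the single $\mathtt{rangeDistinct}$ query on $\BWT_{T}[\INTERVAL{v}]$ returns, for every distinct character $a$ appearing there, the positions of its first and last occurrence inside the interval, from which $s_{a}=|\INTERVAL{a\ell(v)}|$ follows in constant time using the partial-rank data structure. I would drop the returned triples into an auxiliary bucket of size $\sigma$ indexed by the character, scan the bucket once from left to right to produce the prefix sums $0,s_{a_{1}},s_{a_{1}}+s_{a_{2}},\dots$ (which are precisely $\mathtt{countSmaller}(\INTERVAL{v},a_{i})$), and deposit each value at position $f^{a_{i}}(\mathtt{id}(v))$ of $\mathtt{smaller}^{a_{i}}$. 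Resetting the bucket costs only $O(1)$ per Weiner link. Since every Weiner link is touched exactly once and each touch performs $O(1)$ arithmetic operations, the construction respects the $O(nt)$ time budget of Lemma~\ref{lemma:weiner_link_support}, and the use of $\mathtt{rangeDistinct}$ together with the MMPHFs keeps the construction working space within $O(n\log\sigma)$ bits.

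The central obstacle will be encoding $\{\mathtt{smaller}^{c}\}_{c}$ within $3n\log\sigma+O(n\log\log\sigma)$ bits while preserving constant-time random access, since a naive width of $\lceil\log n\rceil$ bits per entry would reach $\Theta(n\log n)$. The plan is to group the entries by node rather than by character: for every internal node $v$ I would store the monotone sequence $0,s_{a_{1}},s_{a_{1}}+s_{a_{2}},\ldots$ as a single Elias--Fano codeword whose universe is $|\INTERVAL{v}|$ and whose length is the number $k_{v}$ of Weiner links out of $v$, laying all such codewords contiguously using the static-allocation scheme of Section~\ref{sec:staticAllocation} and indexing the starting positions with the prefix-sum structure of Lemma~\ref{lemma:prefixSums}. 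A lookup then follows the MMPHF of Lemma~\ref{lemma:weiner_link_support} with a constant-time Elias--Fano access at rank $\mathtt{rank}_{1}(\mathtt{explicit}^{c},p)+\mathtt{rank}_{0}(\mathtt{explicit}^{c},p)$ inside the codeword of~$v$. Bounding the aggregate space by $3n\log\sigma$ requires a Jensen-type argument on $\sum_{v}k_{v}(2+\lceil\log(|\INTERVAL{v}|/k_{v})\rceil)$, which exploits both $\sum_{v}k_{v}\le 4n-5$ from Observation~\ref{obs:suffixtree} and the fact that the destinations of the Weiner links from $v$ partition $\INTERVAL{v}$, so that the $|\INTERVAL{v}|$ inside the logarithm can be charged against the subtree sizes of the induced reverse-suffix-tree image of~$v$.
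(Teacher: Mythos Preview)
Your plan to attach a value to every Weiner link and to compute these values during the preorder traversal is natural, but the encoding you propose does not meet the $3n\log\sigma+O(n\log\log\sigma)$ space bound. Grouping by node and storing, for each internal node $v$, the Elias--Fano codeword of the $k_v$ prefix sums in universe $|\INTERVAL{v}|$ costs $\sum_{v} k_v\big(2+\lceil\log(|\INTERVAL{v}|/k_v)\rceil\big)$ bits, and this sum can be $\Theta(n\log n)$. A concrete counterexample is $T=(ab)^{m}\#$ with $n=2m+1$ and $\sigma=2$: every $(ab)^i$ for $i\in[1..m-1]$ is an internal node with exactly two Weiner links (labelled $\#$ and $b$) and $|\INTERVAL{(ab)^i}|=m-i+1$, so these nodes alone contribute $\sum_{i=1}^{m-1}2\lceil\log((m-i+1)/2)\rceil=\Theta(m\log m)=\Theta(n\log n)$ bits, whereas the target is $O(n)$ bits. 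The Jensen argument you sketch cannot close this gap: the quantity $\sum_v |\INTERVAL{v}|$ that would appear as the numerator is itself $\Theta(n\log n)$ on this input, and ``charging $|\INTERVAL{v}|$ to the reverse suffix tree'' does not help because the same node of $\ST_{\REV{T}\#}$ is charged by many~$v$.

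The paper avoids this obstruction by grouping by \emph{character} instead of by node, and by storing \emph{differences along a tree} rather than raw values. For every $c$ it materialises the contracted tree $\ST^c$ (the subgraph of $\ST_T$ induced by nodes with a $c$-labelled Weiner link), together with its balanced-parentheses topology, and stores $\mathtt{diff}^c[\mathtt{id}^c(v)]=\mathtt{countSmaller}(\INTERVAL{v},c)-\sum_{w}\mathtt{countSmaller}(\INTERVAL{w},c)$, the sum ranging over the children of $v$ in $\ST^c$. Then $\mathtt{countSmaller}(\INTERVAL{v},c)$ is recovered as a subtree sum in $\ST^c$, answered in constant time via a prefix-sum structure and the $\mathtt{leftmostLeaf}/\mathtt{rightmostLeaf}$ operations of Lemma~\ref{lemma:balancedParentheses}. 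The crucial gain is that $\sum_k \mathtt{diff}^c[k]\le n$ for every $c$ (it telescopes to $\mathtt{countSmaller}(\INTERVAL{\varepsilon},c)$), so the prefix-sum structures over all $c$ fit in $O(n\log\sigma)$ bits by Jensen. Building the $\mathtt{diff}^c$ arrays is also nontrivial: the paper processes characters in increasing order and maintains, for every node, the largest character seen so far, so that the current $\mathtt{countSmaller}$ can be obtained from the already-finished smaller characters.

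Two smaller issues in your write-up: the lookup ``at rank $\mathtt{rank}_1(\mathtt{explicit}^c,p)+\mathtt{rank}_0(\mathtt{explicit}^c,p)$ inside the codeword of $v$'' is just $p=f^c(\mathtt{id}(v))$, which is a position in $\mathtt{sources}^c$, not the rank of $c$ among the Weiner-link labels of $v$; and scanning a $\sigma$-sized bucket at every internal node to sort the output of $\mathtt{rangeDistinct}$ costs $\Theta(n\sigma)$ time, not $O(nt)$.
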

\begin{proof}
We run the algorithm described in the proof of Lemma \ref{lemma:weiner_link_support}. Specifically, we traverse $\ST_T$ in preorder, we print arrays $\mathtt{sources}^c$ for all $c \in [0..\sigma]$, and we build the implementation of an MMPHF $f^c$ for every $\mathtt{sources}^c$. Before discarding $\mathtt{sources}^c$, we build the prefix-sum data structure of Lemma \ref{lemma:prefixSums} on every $\mathtt{sources}^c$ with $c>0$: by Jensen's inequality and Observation \ref{obs:suffixtree}, all such data structures take at most $3n\log{\sigma}+6n+o(n)$ bits of space in total.

Let $\ST^c = (V^c,E^c)$ be the contraction of $\ST_T$ induced by all the $n^c$ nodes (including leaves) that have an explicit or implicit Weiner link labeled by character $c \in [1..\sigma]$. During the preorder traversal of $\ST_T$, we also concatenate to a bitvector $\mathtt{parentheses}^c$ an open parenthesis every time we visit an internal node $v$ with a Weiner link labeled by character $c$ from its parent, and a closed parenthesis every time we visit $v$ from its last child. Note that $\mathtt{parentheses}^c$ represents the topology of $\ST^c$. By Observation \ref{obs:suffixtree}, building all bitvectors $\mathtt{parentheses}^c$ takes $O(n)$ time and $6n+o(n)$ bits of space in total, since every pair of corresponding parentheses can be charged to an explicit or implicit Weiner link of $\ST_T$. We feed $\mathtt{parentheses}^c$ to Lemma \ref{lemma:balancedParentheses} to obtain support for tree operations, and we discard $\mathtt{parentheses}^c$. Following the strategy described in Section \ref{sec:staticAllocation}, we preallocate the space required by $\mathtt{sources}^c$ and $\mathtt{parentheses}^c$ for all $c \in [1..\sigma]$ during a preliminary pass over $\ST_T$. Note that the preorder rank in $\ST^c$ of a node $v$, that we denote by $\mathtt{id}^{c}(v)$, equals its position in array $\mathtt{sources}^c$. Note also that the set of $\mathtt{id}^{c}(w)$ values for all the descendants $w$ of $v$ in $\ST^c$, including $v$ itself, forms a contiguous range.

We allocate $\sigma$ empty arrays $\mathtt{diff}^{c}[1..n^c]$ which, at the end of the algorithm, will contain the following information:
\begin{eqnarray*}
\mathtt{diff}^{c}[\mathtt{id}^{c}(v)] & = & \mathtt{countSmaller}\big( \INTERVAL{v},c \big) \\
& & - \sum_{(v,w) \in E_c}\mathtt{countSmaller}\big( \INTERVAL{w},c \big)
\end{eqnarray*}
i.e. $\mathtt{diff}^{c}[k]$ will encode the difference between the number of characters smaller than $c$ in the BWT interval of the node $v$ of $\ST_T$ that is mapped to position $k$ in $\mathtt{sources}^c$, and the number of characters smaller than $c$ in the BWT intervals of all the descendants of $v$ in the contracted suffix tree $\ST^c$. To compute $\mathtt{countSmaller}(\INTERVAL{v},c)$ for some internal node $v$ of $\ST_T$, we proceed as follows. We use the implementation of the MMPHF $f^c$ built on $\mathtt{sources}^c$ to compute $\mathtt{id}^{c}(v)$, we retrieve the smallest and the largest $\mathtt{id}^{c}(w)$ value assumed by a descendant $w$ of $v$ in $\ST^c$ using operations $\mathtt{leftmostLeaf}$ and $\mathtt{rightmostLeaf}$ provided by the topology of $\ST^c$, and we sum $\mathtt{diff}^{c}[k]$ for all $k$ in this range. We compute this sum in constant time by encoding $\mathtt{diff}^{c}$ with the prefix-sum data structure described in Lemma \ref{lemma:prefixSums}. Since $\sum_{k=1}^{n^c}\mathtt{diff}^{c}[k] \leq n$, the total space taken by all such prefix-sum data structures is at most $3n\log{\sigma}+6n+o(n)$ bits, by Observation \ref{obs:suffixtree} and Jensen's inequality.

To build the $\mathtt{diff}^{c}$ arrays, we scan the sequence of all characters $c_1 < c_2 < \cdots < c_k$ such that $c_i \in [1..\sigma]$ and $\ST^{c_i}$ has at least one node, for all $i \in [1..k]$. We use a temporary vector $\mathtt{lastChar}$ with one element per node of $\ST_T$: after having processed character $c_i$, $\mathtt{lastChar}[\mathtt{id}(v)]$ stores the largest $c_j \leq c_i$ that labels a Weiner link from $v$. We also assume to be able to answer $\mathtt{countSmaller}(\INTERVAL{v},c)$ queries in constant time. Note that $\mathtt{lastChar}$ takes at most $(2n-1)\log{\sigma}$ bits of space. We process character $c_i$ as follows. We traverse $\ST^{c_i}$ in preorder using its topology, as described in Lemma \ref{lemma:spaceEfficientPreorder}. For each node $v$ of $\ST^{c_i}$, we use $\mathtt{id}^{c_i}(v)$ and the prefix-sum data structure on $\mathtt{sources}^{c_i}$ to compute $\mathtt{id}(v)$. If $v$ is an internal node of $\ST_T$, we use $\mathtt{id}(v)$ to access $b=\mathtt{lastChar}[\mathtt{id}(v)]$. We compute the number of occurrences of character $b$ in $\INTERVAL{v}$ using the $O(t)$-time operation $\mathtt{weinerLink}(\mathtt{id}(v),b)$, and we compute the number of occurrences of characters smaller than $b$ in $\INTERVAL{v}$ using the constant-time operation $\mathtt{countSmaller}(\INTERVAL{v},b)$. We do the same for all children of $v$ in $\ST^{c_i}$, which we can access using the topology of $\ST^{c_i}$. Finally, we sum the values of all children and we subtract this sum from the value of $v$, appending the result to the end of $\mathtt{diff}^{c_i}$ using Elias delta or gamma coding. Finally, we set $\mathtt{lastChar}[\mathtt{id}(v)] = c_i$. The total number of accesses to a node $v$ of $\ST_T$ is a constant multiplied by the number of Weiner links from $v$, thus the algorithm runs in $O(nt)$ time.

The output of the construction consists in the topology of $\ST^{c_i}$ for all $i \in [1..k]$, in arrays $C'$ and $\mathtt{explicit}^c$ of Lemma \ref{lemma:weiner_link_support} for all $c \in [0..\sigma]$, in the implementation of $f^c$ for all $c \in [0..\sigma]$, and in the prefix-sum data structure on $\mathtt{diff}^{c_i}$ for all $i \in [1..k]$.
\end{proof}

Lemma \ref{lemma:bidirectional_weiner_link_support} immediately yields the following result:

\begin{theorem} \label{thm:bidirectionalIndexConstruction}
Given a string $T = [1..\sigma]^n$, we can build in randomized $O(n)$ time and in $O(n\log{\sigma})$ bits of working space a bidirectional BWT index that takes $O(n\log{\sigma})$ bits of space and that implements every operation in time linear in the size of its output.
\end{theorem}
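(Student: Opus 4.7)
The plan is to build the bidirectional index as two synchronized copies of the machinery already developed for a single BWT, one on $T\#$ and one on $\REV{T}\#$. First I would invoke \Thm{thm:bwtConstruction} to produce $\BWT_{T\#}$ and $\BWT_{\REV{T}\#}$ in deterministic $O(n)$ time and $O(n\log{\sigma})$ bits of working space. On top of each BWT I would install, exactly as in the proof of \Thm{thm:bwtIndexConstruction}, the $\mathtt{access}$/$\mathtt{partialRank}$/constant-time $\mathtt{select}$ data structure of \Lemma{lemma:rank_select_access}, the $\mathtt{rangeDistinct}$ support of \Lemma{lemma:rangedistinct}, the balanced-parentheses topology of the corresponding suffix tree (\Thm{thm:topology} fed to \Lemma{lemma:balancedParentheses}), and the $\mathtt{weinerLink}$/$\mathtt{countSmaller}$ data structure of \Lemma{lemma:bidirectional_weiner_link_support}. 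Every piece runs in randomized $O(n)$ time and $O(n\log{\sigma})$ bits, and the auxiliary $\mathtt{rangeDistinct}$ structure can be discarded at the end, leaving a final index of $O(n\log{\sigma})$ bits.

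With this setup, interval identifiers and node identifiers are interchangeable in constant time through the $\mathtt{leftmostLeaf}$, $\mathtt{rightmostLeaf}$, $\mathtt{selectLeaf}$, and $\mathtt{lca}$ primitives of \Lemma{lemma:balancedParentheses}. Operations $\mathtt{isLeftMaximal}$ and $\mathtt{isRightMaximal}$ reduce to a single $\mathtt{rangeDistinct}$ call on $\BWT_{T\#}$ or $\BWT_{\REV{T}\#}$ respectively, stopped as soon as two distinct characters are seen.

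The heart of the argument is the pair $\mathtt{extendLeft}$/$\mathtt{extendRight}$. For $\mathtt{extendLeft}(c,\INTERVAL{W,T},\INTERVAL{\REV{W},\REV{T}})$, the forward interval $\INTERVAL{cW,T}$ is obtained by a constant-time $\mathtt{weinerLink}(\mathtt{id}(v),c)$ on the locus $v$ of $W$ in $\ST_{T\#}$. The reverse interval $\INTERVAL{\REV{W}c,\REV{T}}$ is the sub-block of $\INTERVAL{\REV{W},\REV{T}}$ whose rotations of $\REV{T}$ are continued by the left-extension $c$ of $W$ in $T$; its length equals $|\INTERVAL{cW,T}|$ and its starting offset inside $\INTERVAL{\REV{W},\REV{T}}$ is exactly $\mathtt{countSmaller}(\INTERVAL{W,T},c)$, both computable in constant time by \Lemma{lemma:bidirectional_weiner_link_support}. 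Symmetrically, $\mathtt{extendRight}$ uses $\mathtt{weinerLink}$ and $\mathtt{countSmaller}$ on $\ST_{\REV{T}\#}$. For $\mathtt{contractLeft}(\INTERVAL{aW,T},\INTERVAL{\REV{W}a,\REV{T}})$ with $aW$ right-maximal, I would recover $\INTERVAL{W,T}$ by following the suffix link from the locus of $aW$ in $\ST_{T\#}$ via \Lemma{lemma:suffixLink} (which is well-defined precisely because $aW$ labels an internal node); the reverse interval then has length $|\INTERVAL{W,T}|$ and starts at $\SP{\REV{W}a}-\mathtt{countSmaller}(\INTERVAL{W,T},a)$. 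Operation $\mathtt{contractRight}$ is symmetric on $\ST_{\REV{T}\#}$.

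The one step that does not immediately drop out of our building blocks is $\mathtt{enumerateLeft}$ and $\mathtt{enumerateRight}$, since \Lemma{lemma:rangedistinct} returns distinct characters in arbitrary order whereas Definition \ref{def:biBWTindex} requires lexicographic order. I would handle this by augmenting \Lemma{lemma:bidirectional_weiner_link_support} with a side table that, for every internal node of $\ST_{T\#}$ (and of $\ST_{\REV{T}\#}$), stores the sorted list of characters labeling its Weiner links; these lists are concatenated and indexed for constant-time positional access, and by \Obs{obs:suffixtree} their total length is $O(n)$, hence $O(n\log{\sigma})$ bits. They are a natural byproduct of the preorder traversal already performed inside \Lemma{lemma:bidirectional_weiner_link_support}. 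This sorted-enumeration requirement is the main technical obstacle I foresee; once it is resolved, the time bound is immediate from the $O(n)$ construction of each component, and keeping the total working space within $O(n\log{\sigma})$ bits across the two symmetric constructions is ensured by running them sequentially rather than in parallel.
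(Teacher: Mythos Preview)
Your construction is correct and follows the paper's architecture: build the \Thm{thm:bwtIndexConstruction} machinery, upgraded with \Lemma{lemma:bidirectional_weiner_link_support}, symmetrically on $\BWT_{T\#}$ and $\BWT_{\REV{T}\#}$, and implement $\mathtt{extendLeft}$/$\mathtt{extendRight}$ via $\mathtt{weinerLink}$ plus $\mathtt{countSmaller}$ exactly as you describe. The differences from the paper are all at the level of which primitive you reach for, not of overall strategy.

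Three points of comparison. First, for $\mathtt{isLeftMaximal}$ the paper does not rely on $\mathtt{rangeDistinct}$ at all; it builds a bitvector $\mathtt{runs}$ marking positions where $\BWT_{T\#}[i]\neq\BWT_{T\#}[i-1]$ and answers with two $\mathtt{rank}_1$ queries. This is cleaner than an early-terminated $\mathtt{rangeDistinct}$ and also resolves the small inconsistency in your write-up (you propose to discard the $\mathtt{rangeDistinct}$ structure yet still use it at query time). Second, for the reverse interval in $\mathtt{contractLeft}$ the paper takes a different route: after the suffix link gives $\INTERVAL{W,T}$, it tests $\mathtt{isLeftMaximal}(\INTERVAL{W,T})$; if $W$ is not left-maximal the reverse interval is unchanged, otherwise $\REV{W}$ labels an internal node of $\ST_{\REV{T}\#}$ and that node is precisely the \emph{parent} of the locus of $\REV{W}a$, so one $\mathtt{parent}$ call on the reverse topology suffices. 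Your inversion of the $\mathtt{extendLeft}$ formula via $\mathtt{countSmaller}(\INTERVAL{W,T},a)$ is an equally valid constant-time alternative. Third, for $\mathtt{enumerateLeft}$ the paper does not store explicit sorted Weiner-link lists; instead it observes that the left-extensions of $W$ are exactly the edge labels of the children of the node $v'=\ell^{-1}(\REV{W})$ in $\ST_{\REV{T}\#}$, builds one MMPHF per internal node of $\ST_{\REV{T}\#}$ over its (sorted) child labels via \Lemma{lemma:build_mmphf1}, and uses it to permute the unsorted output of $\mathtt{rangeDistinct}$ on $\BWT_{T\#}$ into lexicographic order. Your explicit-list solution is correct and, as you note, the sorted order falls out of the character-by-character pass in \Lemma{lemma:bidirectional_weiner_link_support}; it costs $O(n\log\sigma)$ bits versus the paper's $O(n\log\log\sigma)$, which is immaterial for the theorem as stated.
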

\begin{proof}
Let $W$ be a substring of $T$ such that $\INTERVALFUNCTION(W,\BWT_{T\#})=[i..j]$ and $\INTERVALFUNCTION(\REV{W},\BWT_{\REV{T}\#})=[i'..j']$, let $v$ be the node of $\ST_{T\#}$ such that $\INTERVALFUNCTION(v,\BWT_{T\#})=[i..j]$ and let $v'$ be the node of $\ST_{\REV{T}\#}$ such that $\INTERVALFUNCTION(v',\BWT_{\REV{T}\#})=[i'..j']$. We plug the $\mathtt{countSmaller}$ support provided by Lemma \ref{lemma:bidirectional_weiner_link_support} in the construction of the BWT index described in Theorem \ref{thm:bwtIndexConstruction}, and we build the corresponding data structures on both $\BWT_{T\#}$ and $\BWT_{\REV{T}\#}$.

Operation $\mathtt{extendLeft}\big( a,(i,j),(i',j') \big) = \big( (p,q),(p',q') \big)$ can be implemented as follows: we compute $(p,q)$ using $\mathtt{weinerLink}(\mathtt{id}(v),a)$, and we set $(p',q') = \big( i'+ \mathtt{countSmaller}(i,j,a), i'+ \mathtt{countSmaller}(i,j,a) + q-p \big)$.

To support $\mathtt{isLeftMaximal}$ we build a bitvector $\mathtt{runs}[2 \ltdots n+1]$ such that $\mathtt{runs}[i]=1$ if and only if $\BWT_{T\#}[i] \neq \BWT_{T\#}[i-1]$. We build this vector by a linear scan of $\BWT_{T\#}$, and we index it to support $\mathtt{rank}$ queries in constant time. We implement $\mathtt{isLeftMaximal}(i,j)$ by checking whether there is a one in $\mathtt{runs}[i+1 \ltdots j]$, i.e. whether $\mathtt{rank}_{1}(\mathtt{runs},j)-\mathtt{rank}_{1}(\mathtt{runs},i) \geq 1$. This technique was already described in e.g. \cite{kulekci2012efficient,okanohara2009text}.

Assuming that $W$ is right-maximal, we support $\mathtt{contractLeft}\big( (i,j),(i',j') \big) = \big( (p,q),(p',q') \big)$ as follows. Let $W=aV$ for some $a \in [0..\sigma]$ and $V \in [1..\sigma]^*$. We compute $(p,q)=\INTERVALFUNCTION(V,\BWT_{T\#})$ using operation $\mathtt{suffixLink}(\mathtt{id}(v))$ described in Lemma \ref{lemma:suffixLink}, and we check the result of operation $\mathtt{isLeftMaximal}(p,q)$: if $V$ is not left-maximal, then $(p',q')=(i',j')$, otherwise $\REV{V}$ is the label of an internal node of $\ST_{\REV{T}\#}$, and this node is the parent of $v'$.

To implement $\mathtt{enumerateLeft}(i,j)$, we first check whether $\mathtt{isLeftMaximal}(i,j)$ returns true: otherwise, there is just character $\BWT_{T\#}[i]$ to the left of $W$ in $T\#$. Recall that operation $\mathtt{rangeDistinct}(i,j)$ on $\BWT_{T\#}$ returns the distinct characters that occur in $\BWT_{T\#}[i..j]$ as a sequence $a_1,\dots,a_h$ which is not necessarily sorted lexicographically. Note that characters $a_1,\dots,a_h$ are precisely the distinct right-extensions of string $\REV{W}$ in $\REV{T}\#$: since $W$ is left-maximal, we have that $\REV{W}=\ell(v')$, and $a_1,\dots,a_h$ are the labels associated with the children of $v'$ in $\ST_{\REV{T}\#}$. Thus, if we had an MMPHF $f^{v'}$ that maps $a_1,\dots,a_h$ to their rank among the children of $v'$ in $\ST_{\REV{T}\#}$, we could sort the output of $\mathtt{rangeDistinct}(i,j)$ in linear time. We can build the implementation of $f^{v'}$ for all internal nodes $v'$ of $\ST_{\REV{T}\#}$ using the enumeration algorithm described in Theorem \ref{thm:iterator}, and by applying to array $\mathtt{chars}$ of $\mathtt{repr}(\ell(v'))$ the implementation of the MMPHF described in Lemma \ref{lemma:build_mmphf1}. Since every character in every $\mathtt{chars}$ array can be charged to a distinct node of $\ST_{\REV{T}\#}$, the set of all such MMPHF implementations takes $O(n\log{\log{\sigma}})$ bits of space, and building it takes randomized $O(n)$ time and $O(\sigma\log{\sigma})$ bits of working space. Operation $\mathtt{enumerateLeft}$ can be combined with $\mathtt{extendLeft}$ to return intervals in addition to distinct characters.

We support $\mathtt{enumerateRight}$, $\mathtt{isRightMaximal}$, $\mathtt{contractRight}$ and $\mathtt{extendRight}$ symmetrically.
\end{proof}

Before describing the construction of other indexes, we note that the constant-time $\mathtt{countSmaller}$ support of \Lemma{lemma:bidirectional_weiner_link_support}, combined with the enumeration algorithm of \Lemma{lemma:iterator}, enables an efficient way of building $\BWT_{\REV{T}\#}$ from $\BWT_{T\#}$:

\begin{lemma} \label{lemma:revBWT}
Let $T \in [1..\sigma]^n$ be a string. Given $\BWT_{T\#}$, indexed to support $\mathtt{rangeDistinct}$ queries in constant time per element in their output, and $\mathtt{countSmaller}$ queries in constant time, we can build $\BWT_{\REV{T}\#}$ in $O(n)$ time and in $O(\sigma^{2}\log^{2}{n})$ bits of working space, and we can build $\BWT_{\REV{T}\#}$ \emph{from left to right}, in $O(n)$ time and in $O(\lambda_{T} \cdot \sigma^{2}\log{n})$ bits of working space, where $\lambda_T$ is defined in \Sec{sect:definitions_strings}.
\end{lemma}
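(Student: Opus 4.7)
The plan is to augment the depth-first enumeration of right-maximal substrings from \Lemma{lemma:iterator} so that each stack entry additionally carries the interval $[\SP{\REV{W}}..\EP{\REV{W}}]$ of $\REV{W}$ in $\BWT_{\REV{T}\#}$, and to emit a block of $\BWT_{\REV{T}\#}$ whenever a left-extension $aW$ is found not to be right-maximal in $T$.

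The key structural identity is the following. For every right-maximal $W$ we have $|\INTERVAL{\REV{W}}|=|\INTERVAL{W}|$, since occurrences of $W$ in $T\#$ and of $\REV{W}$ in $\REV{T}\#$ are in bijection. Moreover, the children of the locus of $\REV{W}$ in $\ST_{\REV{T}\#}$, taken in lex order, are $\REV{W}a_1,\ldots,\REV{W}a_h$ where $a_1<\cdots<a_h$ are the left-extensions of $W$ in $T\#$ (indeed, $\REV{W}c$ is a substring of $\REV{T}\#$ iff $cW$ is a substring of $T\#$). Hence $\INTERVAL{\REV{W}}$ partitions into consecutive blocks of sizes $|\INTERVAL{a_1W}|,\ldots,|\INTERVAL{a_hW}|$, and the starting offset of block $i$ inside $\INTERVAL{\REV{W}}$ equals $\sum_{j<i}|\INTERVAL{a_jW}|=\mathtt{countSmaller}(\INTERVAL{W},a_i)$ on $\BWT_{T\#}$, computable in constant time by assumption.

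Concretely, I initialize the stack with $(\REPR{\varepsilon},[1..n])$ and run \Algo{algo:iterator} with a modified callback. After extracting $(\REPR{W},[\SP{\REV{W}}..\EP{\REV{W}}])$ and computing $\REPR{a_iW}$ for every left-extension $a_i$ via \Lemma{lemma:extendLeft}, I derive $\INTERVAL{\REV{W}a_i}$ in constant time as above. If $a_iW$ is not right-maximal, its $\REPR{a_iW}$ records a single right-extension $b$; then $\REV{W}a_i$ is not left-maximal in $\REV{T}\#$, so $\BWT_{\REV{T}\#}[p]=b$ for every $p\in\INTERVAL{\REV{W}a_i}$, and I emit $b$ there. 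Otherwise I push the augmented pair. To see correctness, each position $p$ of $\BWT_{\REV{T}\#}$ corresponds to a suffix $\REV{P}\#$ for a unique prefix $P$ of $T\#$, with $\BWT_{\REV{T}\#}[p]=T\#[|P|+1]$ (read circularly). The branch of the DFS reaching $p$ extends $W$ leftwards by the characters of $P$ read from right to left, and eventually by $\#$; it descends while the current suffix of $P$ is right-maximal and writes at the first one that is not, which always exists because $\#P$ contains the unique symbol $\#$ and therefore occurs at most once in $T\#$. At that write the unique right-extension is necessarily $T\#[|P|+1]$, so $p$ receives the correct value and no other write touches $p$ because the DFS halts at a non-right-maximal node.

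For the complexity, \Lemma{lemma:iterator} already gives $O(n)$ time under the stated constant-time $\mathtt{rangeDistinct}$ assumption, every Weiner link only adds $O(1)$ work for a $\mathtt{countSmaller}$ query, and the total volume of writes sums to $n$. Pushing children largest-first, as in \Lemma{lemma:iterator}, keeps the stack depth in $O(\log n)$ and the working space in $O(\sigma^{2}\log^{2}{n})$ bits, which proves the first bound. For the left-to-right variant I instead push children in \emph{reverse} lex order, so that $\INTERVAL{\REV{W}a_i}$ is processed in lex order of $a_i$ and $\BWT_{\REV{T}\#}$ is emitted left to right; the stack depth is then bounded by $\lambda_T$, exactly as in the proof of \Lemma{lemma:iterator} without the largest-first trick, yielding $O(\lambda_T \cdot \sigma^{2}\log{n})$ bits. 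The only real subtlety is maintaining $\INTERVAL{\REV{W}}$ under Weiner-link traversal, which is precisely what the constant-time $\mathtt{countSmaller}$ support enables.
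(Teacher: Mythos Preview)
Your proposal is correct and follows essentially the same approach as the paper: augment the suffix-link-tree traversal of \Lemma{lemma:iterator} with the synchronized interval of $\REV{W}$ in $\BWT_{\REV{T}\#}$ via $\mathtt{countSmaller}$, write a run of the unique right-extension character whenever a Weiner link is implicit, and for the left-to-right variant replace the largest-first push by lexicographic order at the cost of stack depth $\lambda_T$. Your correctness argument is somewhat more explicit than the paper's, but the algorithm, the use of $\mathtt{countSmaller}$, and both space analyses coincide with the paper's proof.
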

\begin{proof}
We use \Lemma{lemma:iterator} to iterate over all right-maximal substrings $W$ of $T$, and we use $\mathtt{countSmaller}$ queries to keep at every step, in addition to $\REPR{W}$, the interval of $\REV{W}$ in $\BWT_{\REV{T}\#}$, as described in \Thm{thm:bidirectionalIndexConstruction}.

Let $a \in [1..\sigma]$, let $\INTERVALFUNCTION(\REV{W},\BWT_{\REV{T}\#}) = [i..j]$, and let $\INTERVALFUNCTION(\REV{aW},\BWT_{\REV{T}\#}) = [i'..j']$. Recall that $[i'..j'] \subseteq [i..j]$, and that we can test whether $aW$ is right-maximal by checking whether $\mathtt{gamma}[a]>1$ in \Lemma{lemma:extendLeft}. If $aW$ is not right-maximal, i.e. if the Weiner link labelled by $a$ from the locus of $W$ in $\ST_{T\#}$ is implicit, then $\BWT_{\REV{T}\#}[i'..j']$ is a run of character $A[a][1]$, where $A$ is the matrix used in \Lemma{lemma:extendLeft}. If $aW$ is right-maximal, then it will be processed in the same way as $W$ during the iteration, and its corresponding interval $[i'..j']$ in $\BWT_{\REV{T}\#}$ will be recursively filled.

To build $\BWT_{\REV{T}\#}$ from left to right, it suffices to replace the traversal strategy of \Lemma{lemma:iterator}, based on the logarithmic stack technique, with a traversal based on the lexicographic order of the left-extensions of every right-maximal substring. This makes the depth of the traversal stack of \Lemma{lemma:iterator} become $O(\lambda_T)$.
\end{proof}

Contrary to the algorithm described in \cite{OhlebuschBA14}, Lemma \ref{lemma:revBWT} does not need $T$ and $\SA_{T\#}$ in addition to $\BWT_{T\#}$. 

We also note that a fast bidirectional BWT index, such as the one in \Thm{thm:bidirectionalIndexConstruction}, enables a number of applications, which we will describe in more detail in Section \ref{sec:stringAnalysis}. For example, we can enumerate all the right-maximal substrings of $T$ as in Section \ref{sec:enumeration}, but with the additional advantage of providing access to their left extensions in lexicographic order:

\begin{lemma}
Given the bidirectional BWT index of $T \in [1..\sigma]^n$ described in Theorem \ref{thm:bidirectionalIndexConstruction}, there is an algorithm that solves Problem \ref{problem:iterator} in $O(n)$ time, and in $O(\sigma\log^{2}{n})$ bits of working space and $O(\sigma^{2}\log{n})$ bits of temporary space, where the sequence $a_1,\dots,a_h$ of left-extensions of every right-maximal string $W$ is in lexicographic order.
\end{lemma}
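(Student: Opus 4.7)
The plan is to imitate the depth-first traversal of $\SLT_T$ from Lemma \ref{lemma:iterator}, but to replace the heavy representation $\REPR{W}$ on the stack by the light triple $(\INTERVAL{W}, \INTERVAL{\REV{W}}, |W|)$, and to replace the $\mathtt{rangeDistinct}$-based computations by the bidirectional operations of Theorem \ref{thm:bidirectionalIndexConstruction}. Each triple occupies $O(\log n)$ bits, so if the stack holds at most $O(\sigma)$ triples at each of $O(\log n)$ levels, the total stack space is $O(\sigma\log^2 n)$ bits, matching the claimed working-space bound.

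When a triple $(\INTERVAL{W}, \INTERVAL{\REV{W}}, |W|)$ is popped, we rebuild $\REPR{W}$ on the fly: we call $\mathtt{enumerateRight}(\INTERVAL{\REV{W}})$ to obtain the right-extension characters $b_1 < b_2 < \cdots < b_k$ in lexicographic order, and for each $b_i$ we issue $\mathtt{extendRight}$ to get $\INTERVAL{W b_i}$. Symmetrically we call $\mathtt{enumerateLeft}(\INTERVAL{W})$, which returns the left-extension characters $a_1 < a_2 < \cdots < a_h$ \emph{in lexicographic order} (this is the key difference with Lemma \ref{lemma:iterator}), and for each $a_i$ we issue $\mathtt{extendLeft}$ to obtain both $\INTERVAL{a_i W}$ and $\INTERVAL{\REV{a_i W}}$. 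This supplies all the output required by Problem \ref{problem:iterator} with the $a_i$ sorted. For each $a_i$ we decide whether $a_i W$ is itself right-maximal by calling $\mathtt{isRightMaximal}(\INTERVAL{\REV{a_i W}})$; if so, we form the triple $(\INTERVAL{a_i W}, \INTERVAL{\REV{a_i W}}, |W|+1)$ and push it on the stack. As in Lemma \ref{lemma:iterator} we push the right-maximal children of $W$ in order of decreasing interval size, so that every non-topmost triple on the stack has an interval at most half the interval of the triple below it; this confines the stack to $O(\log n)$ levels and therefore to $O(\sigma\log^2 n)$ bits overall.

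Time accounting proceeds as in Lemma \ref{lemma:iterator}. Each of $\mathtt{enumerateLeft}$, $\mathtt{enumerateRight}$, $\mathtt{extendLeft}$, $\mathtt{extendRight}$, and $\mathtt{isRightMaximal}$ runs in time linear in the size of its output (Theorem \ref{thm:bidirectionalIndexConstruction}), so the work spent on a right-maximal substring $W$ is proportional to the number of suffix-tree children of the corresponding node of $\ST_T$ plus the number of its incoming Weiner links. By Observation \ref{obs:suffixtree} these quantities sum to $O(n)$ over the whole traversal, giving the claimed $O(n)$ total time.

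The $O(\sigma^2\log n)$ temporary-space allowance is used exactly as in the proof of Lemma \ref{lemma:extendLeft}: we maintain $(\sigma+1)\times(\sigma+1)$ auxiliary matrices whose rows are indexed by a left-extension character and whose columns hold $O(\log n)$-bit interval endpoints, so that when the bidirectional operations deliver triples $(a, b_j, \INTERVAL{a W b_j})$ out of lexicographic order we can deposit them in matrix cells and read them back sorted by $b_j$ to assemble $\REPR{a_i W}$ (and, consequently, to evaluate $\mathtt{isRightMaximal}$ and to report the intervals demanded by Problem \ref{problem:iterator}). Following the template of Lemma \ref{lemma:extendLeft}, each matrix is initialized to zero and restored to zero after every visited node via a walk over the $O(\sigma)$ left-extensions recorded during processing, so these buffers qualify as temporary rather than working space. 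The main technical point to verify will be that the lexicographic ordering promised by $\mathtt{enumerateLeft}$ is produced in time linear in $h$ rather than with a $\log\sigma$ sorting penalty; this follows from the per-node MMPHF sorting argument embedded in the proof of Theorem \ref{thm:bidirectionalIndexConstruction}, applied here to the children of $v'=\text{locus}(\REV{W})$ in $\ST_{\REV{T}\#}$.
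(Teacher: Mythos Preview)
Your approach is essentially the same as the paper's: replace $\REPR{W}$ on the stack by the constant-size pair of intervals $(\INTERVALFUNCTION(W,T\#),\INTERVALFUNCTION(\REV{W},\REV{T}\#))$ together with $|W|$, and drive the traversal with $\mathtt{enumerateLeft}$, $\mathtt{extendLeft}$, $\mathtt{isRightMaximal}$ (and their right counterparts). Your first two paragraphs are correct and already constitute the proof.

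Your third paragraph, however, misidentifies the role of the $O(\sigma^2\log n)$ temporary space. With the bidirectional index you never need the matrices $A,F,L$ of Lemma~\ref{lemma:extendLeft}: $\mathtt{extendLeft}(a,\cdot,\cdot)$ returns $\INTERVAL{aW}$ and $\INTERVAL{\REV{W}a}$ directly, without decomposing over the $b_j$, and $\mathtt{isRightMaximal}$ is a primitive of the index, not something you assemble from $\REPR{a_iW}$. The bidirectional operations do not ``deliver triples $(a,b_j,\INTERVAL{aWb_j})$'' at all. All you actually need to buffer per node is the at most $\sigma+1$ triples for the right-maximal left-extensions before pushing them in size order, which is $O(\sigma\log n)$ bits; the $O(\sigma^2\log n)$ bound in the statement is comfortably met (and is likely just a slack bound inherited from the buffers inside $\mathtt{rangeDistinct}$/$\mathtt{enumerateLeft}$). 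So the paragraph is unnecessary and its mechanism is wrong, but it does not break your proof.
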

\begin{proof}
By adapting Lemma \ref{lemma:iterator} to use operations $\mathtt{enumerateLeft}$, $\mathtt{extendLeft}$ and $\mathtt{isRightMaximal}$ provided by the bidirectional BWT index. The smaller working space with respect to Lemma \ref{lemma:iterator} derives from the fact that the representation of a string $W$ is now the constant-space pair of intervals $\big( \INTERVALFUNCTION(W,T\#),\INTERVALFUNCTION(\REV{W},\REV{T}\#) \big)$.
\end{proof}

\subsection{Building the permuted LCP array}

We can use the bidirectional BWT index to compute the permuted LCP array as well:

\begin{lemma}\label{lemma:plcpConstruction}
Given the bidirectional BWT index of $T \in [1..\sigma]^n$ described in Theorem \ref{thm:bidirectionalIndexConstruction}, we can build $\PLCP_{T\#}$ in $O(n)$ time and in $O(\log n)$ bits of working space.
\end{lemma}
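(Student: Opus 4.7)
The plan is to adapt Kasai et al.'s classical algorithm for building the PLCP array from the suffix array and the text so that it runs on top of the bidirectional BWT index, using just a constant number of integer pointers together with a pair of synchronized BWT intervals as state. The guiding observation is that $\PLCP_{T\#}[i]$ equals the longest $|W|$ such that $W$ is a prefix of $T[i\ldots n+1]$ and $\INTERVALFUNCTION(W, T\#)$ contains $p-1$, where $p=\ISA_{T\#}[i]$: indeed, the suffix at SA-position $p-1$ is the immediate lexicographic predecessor of $T[i\ldots n+1]$, and the LCP of two adjacent suffixes in $\SA_{T\#}$ is the longest string whose BWT interval contains both of their positions. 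So the algorithm maintains such a candidate $W$ via its intervals in $\BWT_{T\#}$ and $\BWT_{\REV{T}\#}$, and evolves it using $\mathtt{extendRight}$ and $\mathtt{contractLeft}$ as $i$ advances.

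First I would compute $p^{(1)} = \ISA_{T\#}[1]$ in $O(n)$ time by applying $\LF$ exactly $n$ times starting from BWT position $1$ (the position of $\#$ in the suffix array); this keeps the state in a single $O(\log n)$-bit integer. Then I scan $i = 1, \ldots, n+1$ in string order maintaining the tuple $(p, q, |W|, [a \ltdots b], [a' \ltdots b'])$, where $p = \ISA_{T\#}[i]$, $q = \ISA_{T\#}[i + |W|]$, and $[a \ltdots b]$, $[a' \ltdots b']$ are the intervals of $W$ in $\BWT_{T\#}$ and of $\REV{W}$ in $\BWT_{\REV{T}\#}$. At iteration $i$ the algorithm first \textbf{extends}: while $\mathtt{extendRight}(c,\cdot)$ with $c = T[i+|W|]$ returns a new interval $[a'' \ltdots b'']$ whose left endpoint satisfies $a'' \leq p-1$, it accepts the extension, appends $c$ to $W$, and sets $q \gets \psi(q)$. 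Then it \textbf{outputs} $\PLCP[i] = |W|$; \textbf{contracts} on the left by $\mathtt{contractLeft}$ and decrements $|W|$ if $|W| > 0$ (leaving $q$ unchanged, since $i + |W|$ is invariant under the upcoming increment of $i$ paired with this decrement of $|W|$), and otherwise sets $q \gets \psi(q)$; finally it advances $p \gets \psi(p)$.

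The call to $\mathtt{contractLeft}$ is valid at every iteration because the $W$ present at that step is the LCP of two distinct suffixes, and is therefore right-maximal, which matches the precondition stated just after Lemma \ref{lemma:suffixLink}. The total number of successful right-extensions is $\sum_i (\PLCP[i+1] - \PLCP[i] + 1) \leq 2n$ by the standard monotonicity $\PLCP[i+1] \geq \PLCP[i] - 1$, all remaining per-iteration work is $O(1)$, and by Theorem \ref{thm:bidirectionalIndexConstruction} every index operation and every $\LF$/$\psi$ call costs constant time, so the algorithm runs in $O(n)$ total time. The $\PLCP$ values are streamed out and the state is a constant number of $O(\log n)$-bit words, giving $O(\log n)$ bits of working space. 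The main point I expect to spell out carefully is the auxiliary pointer $q$: it is what allows the next extension character $c = T[i+|W|]$ to be read from the index alone (by a constant-time select on the bitvector $C'$ that identifies the character whose $C$-range contains $q$), and its invariance under the contract-and-shift step combined with its single $\psi$ update after each successful right-extension keeps it synchronized with $\ISA_{T\#}[i+|W|]$ throughout the scan.
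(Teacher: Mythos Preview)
Your proposal is correct and follows essentially the same approach as the paper: both adapt Kasai et al.'s linear-time PLCP computation to the bidirectional BWT index by iterating $i$ in text order, maintaining the current candidate $W$ via $\mathtt{contractLeft}$ and $\mathtt{extendRight}$, and stopping the extension when the lexicographic predecessor of suffix $i$ falls out of the interval. The paper phrases the stopping test as ``$x=r_{i+1}$'' after an extension, whereas you phrase it as ``$a''\le p-1$'' before accepting it, but these are equivalent; and where the paper says it reads the next character from $T'$ while ``inverting $\BWT_{\REV{T}\#}$'', you make the same mechanism explicit through the auxiliary pointer $q=\ISA_{T\#}[i+|W|]$ advanced by $\psi$, which is a welcome clarification rather than a different idea.
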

\begin{proof}
We scan $T'=T\#$ from left to right. By inverting $\BWT_{\REV{T}\#}$, we know in constant time the position $r_i$ in $\BWT_{T\#}$ that corresponds to every position $i$ in $T\#$. Assume that we know $\PLCP[i]$ and the interval of $aW=T[i..i+\PLCP[i]-1]$ in $\BWT_{T\#}$ and in $\BWT_{\REV{T}\#}$, where $a \in [1..\sigma]$. Note that $aW$ is right-maximal, thus we can take the suffix link from the internal node of the suffix tree of $T\#$ labeled by $aW$ to the internal node labeled by $W$, using operation $\mathtt{contractLeft}$. Let $([x..y],[x'..y'])$ be the intervals of $W$ in $\BWT_{T\#}$ and in $\BWT_{\REV{T}\#}$, respectively. If $i=0$ or $\PLCP[i]=0$, rather than taking the suffix link from $aW$, we set $W=\varepsilon$, $x=x'=1$ and $y=y'=n+1$. Since $\PLCP[i+1] \geq \PLCP[i]-1$, we set $\PLCP[i+1]$ to its lower bound $|W|$. Then, we issue:
$$
([x..y],[x'..y']) \gets \mathtt{extendRight}(T'[i+\PLCP[i]],[x..y],[x'..y'])
$$
and we check whether $x=r_{i+1}$: if this is the case we stop, since neither $W \cdot T'[i+\PLCP[i]]$ nor any of its right-extensions are prefixes of the suffix at position $r_{i+1}-1$ in $\BWT_{T\#}$. Otherwise, we increment $\PLCP[i+1]$ by one and we continue issuing $\mathtt{extendRight}$ operations with the following character of $T'$. At the end of this process we know the interval of $T'[i+1..i+1+\PLCP[i+1]-1]$ in $\BWT_{T\#}$ and $\BWT_{\REV{T}\#}$, thus we can repeat the algorithm from position $i+2$.
\end{proof}

This algorithm can be easily adapted to compute the \emph{distinguishing statistics array} of a string $T$ given its bidirectional BWT index, and to compute the \emph{matching statistics array} of a string $T^2$ with respect to a string $T^1$, given the bidirectional BWT index of $T^1 \#_1 T^2 \#_2$: see Section \ref{sect:matchingStatistics}.

\subsection{Building the compressed suffix tree} \label{sec:buildingCST}

The \emph{compressed suffix tree} of a string $T \in [1..\sigma]^{n-1}$ \cite{Sa07a}, abbreviated to CST in what follows, is an index that consists of the following elements:
\begin{enumerate}
\item The compressed suffix array of $T\#$.
\item The topology of the suffix tree of $T\#$. This takes $4n+o(n)$ bits of space, but it can be reduced to $2.54n+o(n)$ bits~\cite{Fi11}. 
\item The permuted LCP array of $T\#$, which takes $2n+o(n)$ bits of space \cite{Sa07a}. 
\end{enumerate}
The CST is designed to support the same set of operations as the suffix tree. Specifically, all operations that involve just the suffix tree topology can be supported in constant time, including taking the parent of a node and the lowest common ancestor of two nodes. Most of the remaining operations are instead supported in time $t$, i.e. in the time required for accessing the value stored at a given suffix array position. Some operations are supported by augmenting the CST with other data structures: for example, following the edge that connects a node to its child with a given label (and returning an error if no such edge exists) needs additional $O(n\log{\log{\sigma}})$ bits, and runs in $t$ time. Some operations take even more time: for example, string level ancestor queries (defined in Section \ref{sec:buildingCST}) need additional $o(n)$ bits of space, and are supported in $O(t\log{\log{n}})$ time.

By just combining Lemma \ref{lemma:plcpConstruction} with Theorems \ref{thm:bidirectionalIndexConstruction}, \ref{thm:csaConstruction}, \ref{thm:topology} and \ref{thm:bwtConstruction}, we can prove the key result of Section \ref{sec:bwtIndexesConstruction}:

\begin{theorem}
Given a string $T = [1..\sigma]^n$, we can build the three main components of the compressed suffix tree (i.e. the compressed suffix array, the suffix tree topology, and the permuted LCP array) in randomized $O(n)$ time and in $O(n\log{\sigma})$ bits of working space. Such components take overall $O(n\log{\sigma})$ bits of space.
\end{theorem}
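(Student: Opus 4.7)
The plan is to assemble the theorem by invoking, in the right order, the construction results already established in the paper, while being careful that the peak working space at every stage stays within $O(n\log\sigma)$ bits.

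First, I would apply Theorem \ref{thm:bwtConstruction} directly to $T$ to obtain $\BWT_{T\#}$ in deterministic $O(n)$ time and $O(n\log\sigma)$ bits of working space. From $\BWT_{T\#}$ I would then augment it with the $\mathtt{rangeDistinct}$ structure of Lemma \ref{lemma:rangedistinct} and with the $\mathtt{access}$/$\mathtt{partialRank}$/$\mathtt{select}$ data structures of Lemma \ref{lemma:rank_select_access}, all within the same resource budget. The $\mathtt{rangeDistinct}$ support makes it possible to enumerate the intervals of all internal nodes of $\ST_{T\#}$ via Theorem \ref{thm:iterator}, and I would feed this enumeration into Lemma \ref{lemma:tree_topology} (i.e.\ Theorem \ref{thm:topology}) to produce the balanced-parentheses representation of the suffix tree topology in $O(n)$ time and $O(n\log\sigma)$ bits of working space. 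This gives the second of the three output components.

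Next, the topology together with $\BWT_{T\#}$ and its auxiliary structures enables the $\mathtt{weinerLink}$/$\mathtt{countSmaller}$ data structure of Lemma \ref{lemma:bidirectional_weiner_link_support}, built in randomized $O(n)$ time. Combined with Lemma \ref{lemma:revBWT}, I can produce $\BWT_{\REV{T}\#}$ in $O(n)$ time and $O(\sigma^2\log^2 n)=o(n\log\sigma)$ bits of extra working space. Running the same chain of constructions on $\BWT_{\REV{T}\#}$ yields the symmetric ingredients needed by Theorem \ref{thm:bidirectionalIndexConstruction}, and therefore a bidirectional BWT index of $T$ supporting every operation in time linear in the output size, in randomized $O(n)$ time and $O(n\log\sigma)$ bits overall. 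I can now invoke Lemma \ref{lemma:plcpConstruction} on this index to produce $\PLCP_{T\#}$ in $O(n)$ time and $O(\log n)$ bits of additional working space, giving the third output component. Finally, Theorem \ref{thm:csaConstruction} builds the compressed suffix array of $T$ in $O(n)$ time and $O(n\log\sigma)$ bits of working space, completing the three components.

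The main obstacle is not any single construction but the memory accounting: several of the intermediate structures (the bidirectional BWT index, the $\mathtt{rangeDistinct}$ structure on both BWTs, the $\mathtt{weinerLink}$ structure on both sides, the CSA scaffolding) each cost $\Theta(n\log\sigma)$ bits, so they cannot all coexist. To stay within the claimed $O(n\log\sigma)$ working-space bound, I would follow the dependency diagram of Figure \ref{fig:architecture}: build each intermediate structure, use it to produce the next, and then \emph{discard} it before moving on. Concretely, the topology is computed and retained; the forward $\mathtt{rangeDistinct}$ structure is dropped once $\BWT_{\REV{T}\#}$ and the bidirectional index are built; the bidirectional index is kept only long enough to run Lemma \ref{lemma:plcpConstruction}; and the CSA construction of Theorem \ref{thm:csaConstruction} is either interleaved with or invoked directly on $\BWT_{T\#}$. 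The static allocation strategy of Section \ref{sec:staticAllocation} is used throughout to avoid fragmentation overhead. The three output components—CSA, topology, and PLCP—together occupy $O(n\log\sigma)$ bits, as claimed.
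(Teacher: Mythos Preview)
Your proposal is correct and follows essentially the same approach as the paper: the paper's proof is a one-line citation of Theorem \ref{thm:bwtConstruction}, Theorem \ref{thm:topology}, Theorem \ref{thm:csaConstruction}, Theorem \ref{thm:bidirectionalIndexConstruction}, and Lemma \ref{lemma:plcpConstruction}, and you have simply unpacked that chain of constructions in the right dependency order, with the added (and useful) discussion of discarding intermediate structures to keep the peak space at $O(n\log\sigma)$ bits. One cosmetic remark: rather than going through Lemma \ref{lemma:revBWT} to obtain $\BWT_{\REV{T}\#}$, you could equally well apply Theorem \ref{thm:bwtConstruction} directly to $\REV{T}$; both routes fit the stated bounds.
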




A number of applications of the suffix tree depend on the following additional operations: $\mathtt{stringDepth}(\mathtt{id}(v))$, which returns the length of the label of a node $v$ of the suffix tree; $\mathtt{blindChild}(\mathtt{id}(v),a)$, which returns the identifier of the child $w$ of a node $v$ of the suffix tree such that $\ell(v,w)=aW$ for some $W \in \Sigma^*$ and $a \in \Sigma$, and whose output is undefined if $v$ has no outgoing edge whose label starts with $a$; $\mathtt{child}(\mathtt{id}(v),a)$, which is analogous to $\mathtt{blindChild}$ but returns $\emptyset$ if $v$ has no outgoing edge whose label starts with $a$; and $\mathtt{stringAncestor}(\mathtt{id}(v),d)$, which returns the locus of the prefix of length $d$ of $\ell(v)$. The latter operation is called \emph{string level ancestor query}. Operation $\mathtt{stringDepth}$ can be supported in $O((\log_{\sigma}^{\epsilon}{n})/\epsilon)$ 
time using just the three main components of the compressed suffix tree. To support $\mathtt{blindChild}$ and $\mathtt{child}$ we need the following additional structure:

\begin{lemma}
Given a string $T = [1..\sigma]^n$, we can build in randomized $O(n)$ time and in $O(n\log{\sigma})$ bits of working space, a data structure that allows a compressed suffix tree to support operation $\mathtt{blindChild}$ in constant time, and operation $\mathtt{child}$ in $O((\log_{\sigma}^{\epsilon}{n})/\epsilon)$ 
time. Such data structure takes $O(n\log{\log{\sigma}})$ bits of space.
\end{lemma}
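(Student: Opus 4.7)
The plan is to associate with every internal node $v$ of $\ST_{T\#}$ a monotone minimal perfect hash function that maps a character $a$ to the rank of $a$ among the first characters of the labels of the outgoing edges of $v$. Given such an MMPHF $f_v$, operation $\mathtt{blindChild}(\mathtt{id}(v),a)$ is implemented by returning $\mathtt{child}(\mathtt{id}(v),f_v(a))$, which is constant time by Lemma \ref{lemma:balancedParentheses}. To locate $f_v$ given $\mathtt{id}(v)$ in constant time, I would concatenate all the MMPHFs contiguously in memory in order of increasing $\mathtt{id}(v)$, store their starting positions in the prefix-sum data structure of Lemma \ref{lemma:prefixSums}, and map $\mathtt{id}(v)$ to its rank among internal nodes via a $\mathtt{rank}$ query on a $(2n{+}o(n))$-bit bitvector marking the preorder positions of internal nodes.

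To construct the MMPHFs I would enumerate the internal nodes of $\ST_{T\#}$ using Theorem \ref{thm:iterator} (or equivalently the bidirectional enumeration of Section \ref{sect:biBWT}); for each right-maximal substring $W$ the call returns the sorted array $\mathtt{chars}$ in $\REPR{W}$, which is exactly the set of first characters of the edges leaving the locus $v$ of $W$ in $\ST_{T\#}$. On each such array of size $k_v$ over universe $[1..\sigma]$, I would build an MMPHF via Lemma \ref{lemma:build_mmphf2} with block size $b=\lceil \sigma/k_v\rceil$, giving $O(k_v\log\log(\sigma/k_v))$ bits per node. Since $\sum_v k_v\le 2n-2$ (total number of edges of the suffix tree), summing and applying Jensen's inequality bounds the total space by $O(n\log\log\sigma)$ bits. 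The final layout is produced with the static allocation strategy of Section \ref{sec:staticAllocation}: a first enumeration pass records the $k_v$ values and thus the size of each MMPHF, a contiguous memory region is allocated, and a second enumeration pass writes the MMPHFs into their preassigned slots. Each individual MMPHF is built in randomized $O(k_v)$ time using $O(\sigma\log\sigma)=o(n)$ bits of working space, so total construction time is randomized $O(n)$ and total working space is dominated by the $O(n\log\sigma)$ bits of the enumeration.

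For operation $\mathtt{child}(\mathtt{id}(v),a)$ the only extra ingredient is a verification that the result $w=\mathtt{blindChild}(\mathtt{id}(v),a)$ indeed corresponds to an outgoing edge whose label starts with $a$ (the MMPHF returns an arbitrary index when $a$ is not in the set, so the returned child may have a different first label-character). The first character of the edge $(v,w)$ equals $T[\SA_{T\#}[\SP{w}]+\mathtt{stringDepth}(v)]$: retrieve $\SP{w}$ from the topology in $O(1)$ time, $\mathtt{stringDepth}(v)$ using the CST's PLCP and RMQ primitives in $O((\log_{\sigma}^{\epsilon}n)/\epsilon)$ time, $\SA_{T\#}[\SP{w}]$ through the CSA in $O((\log_{\sigma}^{\epsilon}n)/\epsilon)$ time, and a single character of $T$ via the $\mathtt{substring}$ operation of the succinct suffix array (with sampling rate $O(\log_{\sigma}^{\epsilon}n)$) in $O((\log_{\sigma}^{\epsilon}n)/\epsilon)$ time. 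Comparing this character with $a$ and returning $\emptyset$ on mismatch implements $\mathtt{child}$ in $O((\log_{\sigma}^{\epsilon}n)/\epsilon)$ time, matching the claim.

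The main obstacle is controlling the total space of the per-node MMPHFs: a naive use of Lemma \ref{lemma:build_mmphf1} fails for nodes with $k_v\le\log\sigma$ children because its precondition $\log U<n$ is violated; tuning $b$ per node in Lemma \ref{lemma:build_mmphf2} is precisely what allows the sum $\sum_v k_v\log\log(\sigma/k_v)$ to collapse to $O(n\log\log\sigma)$. A secondary, but technical, issue is ensuring that the two-pass static allocation does not blow up construction working space; this is handled because both the enumeration algorithm and the MMPHF builder are themselves restartable and keep their working space in $O(n\log\sigma)$ bits.
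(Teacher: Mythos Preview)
Your proposal is correct and follows essentially the same approach as the paper: build a per-node MMPHF on the sorted array of first edge-label characters (obtained from the $\mathtt{chars}$ component of $\REPR{W}$ during the enumeration of Theorem~\ref{thm:iterator}), lay out all MMPHFs contiguously with a prefix-sum directory, and implement $\mathtt{blindChild}$ by composing the MMPHF with the topological $\mathtt{child}(\cdot,i)$ of Lemma~\ref{lemma:balancedParentheses}. The paper materializes intermediate arrays $\mathtt{nChildren}$ and $\mathtt{labels}$ (both $O(n\log\sigma)$ bits of working space) before building the MMPHFs, whereas you do two enumeration passes with static allocation; and the paper relegates the verification step for $\mathtt{child}$ to the surrounding text while you spell it out explicitly---but these are cosmetic differences, and your space analysis (including the per-node choice of $b$ in Lemma~\ref{lemma:build_mmphf2}) is sound.
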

\begin{proof}
We build the following data structures, described in \cite{BN11,BNtalg14}. We use an array $\mathtt{nChildren}[1..2n-1]$, of $(2n-1)\log{\sigma}$ bits, to store the number of children of every suffix tree node, in preorder, and we use an array $\mathtt{labels}[1..2n-2]$ of $(2n-2)\log{\sigma}$ bits to store the sorted labels of the children of every node, in preorder. We enumerate the BWT intervals of every right-maximal substring $W$ of $T$, as well as the number $k$ of distinct right-extensions of $W$, using Theorem \ref{thm:iterator}. We convert $\INTERVAL{W}$ into the preorder identifier $i$ of the corresponding suffix tree node using the tree topology, and we set $\mathtt{nChildren}[i]=k$. Then, we build the prefix-sum data structure of Lemma \ref{lemma:prefixSums} on array $\mathtt{nChildren}$ (recall that this structure takes $O(n)$ bits of space), and we enumerate again the BWT interval of every right-maximal substring $W$ of $T$, along with its right-extensions $b_1,b_2,\dots,b_k$, using Theorem \ref{thm:iterator}. For every such $W$, we set $\mathtt{labels}[i+j]=b_j$ for all $j \in [0..k-1]$, where  $i$ is computed from the prefix-sum data structure. Finally, we scan $\mathtt{nChildren}$ and $\mathtt{labels}$ using pointers $i$ and $j$, respectively, both initialized to one, we iteratively build a monotone minimal perfect hash function on $\mathtt{labels}[j..j+\mathtt{nChildren}[i]-1]$ using Lemma \ref{lemma:build_mmphf2}, and we set $i$ to $i+1$ and $j$ to $j+\mathtt{nChildren}[i]$. All such MMPHFs fit in $O(n\log{\log{\sigma}})$ bits of space and they can be built in randomized $O(n)$ time.
\end{proof}

The output of $\mathtt{blindChild}$ can be checked in $O((\log_{\sigma}^{\epsilon}{n})/\epsilon)$ 
time using the compressed suffix array and the $\mathtt{stringDepth}$ operation, assuming we store the original string. 
Finally, the data structures that support string level ancestor queries can be built in deterministic linear time and in $O(n\log{\sigma})$ bits of space:

\begin{lemma}\label{lemma:stringAncestor}
Given the compressed suffix tree of a string $T = [1..\sigma]^n$, we can build in  $O(n)$ time and in $O(n\log{\sigma})$ bits of working space, a data structure that allows the compressed suffix tree to answer $\mathtt{stringAncestor}$ queries in $O( ((\log_{\sigma}^{\epsilon}{n})/\epsilon)\log^{\epsilon}{n} \cdot \log{\log{n}})$ time. Such data structure takes $o(n)$ bits of space.

\end{lemma}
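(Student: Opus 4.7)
The plan is to build a small auxiliary sampled-ancestor index on top of the three main CST components. The key building blocks already at our disposal are: constant-time tree navigation from the balanced-parentheses topology (Lemma \ref{lemma:balancedParentheses}), $O(t)$-time evaluation of $\mathtt{stringDepth}(v)$ via one PLCP/CSA access (where $t = O((\log_\sigma^\epsilon n)/\epsilon)$), the enumeration of all internal nodes in $O(n)$ time provided by Theorem \ref{thm:iterator}, and the predecessor data structure of Lemma \ref{lemma:predecessor} with $O(\log\log U)$ query time.

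I would first mark a sparse set $M$ of internal nodes of $\ST_{T\#}$ such that on every root-to-leaf path, consecutive marked ancestors differ by at most $r = \Theta(\log^{\epsilon} n)$ in tree depth; the simplest choice is to mark every node whose tree depth is a multiple of $r$. There are then $|M| = O(n/r)$ marked nodes. For each marked node $u$ I store $\mathtt{stringDepth}(u)$ using a differential encoding relative to the nearest marked proper ancestor, plus occasional absolute anchors; with appropriate packing the total storage for depths fits in $o(n)$ bits. I then build a single predecessor structure (Lemma \ref{lemma:predecessor}) over the string depths of marked ancestors, organized so that the marked ancestors of any query node can be isolated as a contiguous range in the underlying array (for example, by concatenating the preorder-ordered list of marked nodes and using the subtree-interval provided by $\mathtt{leftmostLeaf}$/$\mathtt{rightmostLeaf}$). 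The whole index is produced in one DFS driven by Theorem \ref{thm:iterator}, maintaining the current tree depth and reading $|W|$ from $\REPR{W}$ to obtain string depths on the fly; this runs in $O(n)$ time and $O(n\log\sigma)$ bits of working space.

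For a query $\mathtt{stringAncestor}(\mathtt{id}(v), d)$ I proceed in three phases: (i) climb from $v$ using the $\mathtt{ancestor}$ operation of Lemma \ref{lemma:balancedParentheses} for at most $r$ tree-depth steps until I either verify the answer (using one $\mathtt{stringDepth}$ query of cost $O(t)$ per step) or reach a marked ancestor; (ii) perform a single predecessor search in $O(\log\log n)$ time over the marked depths that are ancestors of $v$ to identify the marked ancestor $u^*$ whose string depth is the predecessor of $d$; (iii) walk down from $u^*$ toward $v$ using $\mathtt{ancestor}$, at each step evaluating $\mathtt{stringDepth}$ in $O(t)$, until the node with string depth exactly $d$ (or the edge crossing depth $d$) is located. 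Phase (i) and (iii) each require at most $O(r)$ evaluations of $\mathtt{stringDepth}$, giving an aggregate query time of $O(r \cdot t + \log\log n) = O((\log_\sigma^\epsilon n / \epsilon) \log^\epsilon n \log\log n)$, as claimed.

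The main obstacle will be driving the total space down to $o(n)$ bits while still enabling step (ii) with a single $O(\log\log n)$ predecessor query. The naive storage of $|M|$ string depths in $\log n$ bits each is $\Theta((n\log n)/\log^\epsilon n)$, which is not $o(n)$; I would overcome this by (a) storing only differential string depths between consecutive marked ancestors on each root-to-leaf chain (each fitting in $O(\log n)$ bits but with the total telescoping to $O(n)$ bits, which can then be Elias-coded via Lemma \ref{lemma:prefixSums} to $o(n)$ bits when averaged against the sparse sampling), and (b) indexing those depths with a single shared predecessor structure whose size is proportional to $|M|\log\log n$ rather than $|M|\log n$, using the bucketing trick from Lemma \ref{lemma:build_mmphf2}. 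Together these bring the auxiliary index within the $o(n)$-bit budget while preserving the stated query and construction bounds.
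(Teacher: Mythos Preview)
Your high-level architecture---sample nodes by tree depth, handle the unsampled stretch by direct $\mathtt{stringDepth}$ probes, and use some auxiliary structure to jump among sampled ancestors---is the right shape, and it is also what the paper does. However, two of your steps do not go through as written.

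First, marking every node whose tree depth is a multiple of $r$ does \emph{not} yield $O(n/r)$ marked nodes. In a complete binary tree with $n$ leaves, the levels at depths $r,2r,\dots,\log n$ together contain $\sum_i 2^{ir}=\Theta(n)$ nodes, so $|M|=\Theta(n)$ regardless of $r$. The paper avoids this by sampling a node only if its depth is a multiple of $b$ \emph{and} its height is at least $b$; the height condition lets one charge $b$ distinct unsampled nodes to every sample, giving $|M|\le n/b$. Without this, your space budget is already blown before any encoding tricks.

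Second, your step (ii)---a predecessor query ``over the marked depths that are ancestors of $v$''---is exactly the weighted level ancestor problem, and it cannot be reduced to a single query on a flat predecessor structure organised by preorder. The marked ancestors of $v$ are \emph{not} a contiguous range in preorder (nor in any fixed linear order independent of $v$), so $\mathtt{leftmostLeaf}$/$\mathtt{rightmostLeaf}$ on $v$ gives you $v$'s subtree, which is the wrong set. The paper instead builds the balanced-parentheses topology of the contracted tree of sampled nodes and feeds it, together with the list of $(\mathtt{id},\mathtt{stringDepth})$ pairs, to an off-the-shelf weighted level ancestor structure~\cite{amir2007dynamic}; that is where the $O(\log\log n)$ query and the $O(n/b)$ space come from.

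Two further differences are worth noting. The paper takes $b=\log^{2}n$, which is sparse enough that storing $\log n$ bits per sample already fits in $o(n)$---so none of your differential encoding or MMPHF bucketing is needed (and Lemma~\ref{lemma:build_mmphf2} would not help here anyway, since predecessor queries must answer correctly on non-members). And between two consecutive sampled ancestors the paper does a \emph{binary} search on tree depth using $\mathtt{ancestor}$ and $\mathtt{stringDepth}$ (string depth is monotone along any root--to--leaf path), costing $O(\log b)=O(\log\log n)$ probes rather than the $O(r)$ linear walk you propose.
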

\begin{proof}
We call \emph{depth} of a node the number of edges in the path that connects it to the root, and we call \emph{height} of an internal node $v$ the difference between the depth of the deepest leaf in the subtree rooted at $v$ and the depth of $v$. To build the data structure, we first sample a node every $b$ in the suffix tree. Specifically, we sample a node iff its depth is multiple of $b$ and its height is at least $b$. Note that the number of sampled nodes is at most $n/b$, since we can associate at least $b-1$ non-sampled nodes to every sampled node. Specifically, let $v$ be a sampled node at depth $ib$ for some $i$. If no descendant of $v$ is sampled, we can assign to $v$ all the at least $b$ nodes in the path from $v$ to its deepest leaf. If at least one descendant of $v$ is sampled, then $v$ has at least one sampled descendant $w$ at depth $(i+1)b$, and we can assign to $v$ all the $b-1$ non-sampled nodes in the path from $v$ to $w$.

We perform the sampling using just operations supported by the balanced parentheses representation of the topology of the suffix tree (see Lemma \ref{lemma:balancedParentheses}). Specifically, we perform a preorder traversal of the suffix tree topology using Lemma \ref{lemma:spaceEfficientPreorder}, we compute the depth and the height of every node $v$ using operations $\mathtt{depth}$ and $\mathtt{height}$ provided by the balanced parentheses representation, and, if $v$ has to be sampled, we append pair $(\mathtt{id}(v),\mathtt{stringDepth}(\mathtt{id}(v)))$ to a temporary list $\mathtt{pairs}$. Building $\mathtt{pairs}$ takes $O((n/b) \cdot (\log_{\sigma}^{\epsilon}{n})/\epsilon 
)$ time, and $\mathtt{pairs}$ itself takes $O((n/b)\log{n})$ bits of space.  During the traversal we also build a sequence of balanced parentheses $S$, that encodes the topology of the subgraph of the suffix tree induced by sampled nodes: every time we traverse a sampled node from its parent we append to $S$ an opening parenthesis, and every time we traverse a sampled node from its last child we append to $S$ a closing parentheses. At the end of this process, we build a weighted level ancestor data structure (WLA, see e.g. \cite{amir2007dynamic}) on the set of sampled nodes, where the weight assigned to a node equals its string depth. To do so, we build the data structure of Lemma \ref{lemma:balancedParentheses} on $S$, and we feed $S$ and $\mathtt{pairs}$ to the algorithm described in \cite{amir2007dynamic}. The WLA data structure takes $O(n/b)$ space and it can be built in $O(n/b)$ time. Finally, we build a dictionary $D$ that stores the identifiers of all sampled nodes: the size of this dictionary is $O((n/b)\log{n})$ bits. The dictionary and the WLA data structure are the output of our construction.

We now describe how to answer $\mathtt{stringAncestor}(\mathtt{id}(v),d)$, waving details on corner cases for brevity. We first check whether $w$, the lowest ancestor of $v$ at depth $ib$ for some $i$, is sampled: to do so, we compute $e=\mathtt{depth}(\mathtt{id}(v))$, we issue $\mathtt{ancestor}(\mathtt{id}(v),ib)$, where $i=\lfloor e/b \rfloor$, using the suffix tree topology, and we query $D$ with $\mathtt{id}(w)$. If $w$ is not sampled, we replace $w$ with its ancestor at depth $(i-1)b$, which is necessarily sampled. If the string depth of $w$ equals $d$, we return $\mathtt{id}(w)$. Otherwise, if the string depth of $w$ is less than $d$, we perform a binary search over the range of tree depths between the depth of $w$ plus one and the depth of $v$, using operations $\mathtt{ancestor}$ and $\mathtt{stringDepth}$. Otherwise, we query the WLA data structure to determine $u$, the deepest sampled ancestor of $w$ whose depth is less than $d$, and we perform a binary search over the range of depths between the depth of $u$ plus one and the depth of $w$, using operations $\mathtt{ancestor}$ and $\mathtt{stringDepth}$. Note that the range explored by the binary search is of size at most $2b$, thus the search takes $O(\log{b})$ steps and $O(\log{b} \cdot ((\log_{\sigma}^{\epsilon}{n})/\epsilon)    
)$ time. Setting $b=\log^{2}{n}$ makes the query time $O(\log{\log{n}} \cdot ((\log_{\sigma}^{\epsilon}{n})/\epsilon) 
)$, the time to build $\mathtt{pairs}$ $O(n)$, and the space taken by $\mathtt{pairs}$ and by the WLA data structure $o(n)$ bits.
\end{proof}


\section{String analysis} \label{sec:stringAnalysis}

In this section we use the enumerators of right-maximal substrings described in Theorems \ref{thm:iterator} and \ref{thm:generalizedIterator} to solve a number of fundamental string analysis problems in optimal deterministic time and small space. Specifically, we show that all such problems can be solved efficiently by just implementing function $\mathtt{callback}$ invoked by Algorithms \ref{algo:iterator} and \ref{algo:generalizedIterator}. We also show how to compute \emph{matching statistics} and \emph{distinguishing statistics} (defined below) using a bidirectional BWT index.

\subsection{Matching statistics} \label{sect:matchingStatistics}

\begin{definition}[\cite{weiner1973file,Wei73}]
Given two strings $S \in [1..\sigma]^n$ and $T \in [1..\sigma]^m$, and an integer threshold $\tau>0$, the \emph{matching statistics} $\MS_{T,S,\tau}$ of $T$ with respect to $S$ is a vector of length $m$ that stores at index $i \in [1..m]$ the length of the longest prefix of $T[i..m]$ that occurs at least $\tau$ times in $S$.
\end{definition}

\begin{definition}[\cite{weiner1973file,Wei73}]
Given $S \in [1..\sigma]^n$ and an integer threshold $\tau>0$, the \emph{distinguishing statistics} $\DS_{S,\tau}$ of $S$ is a vector of length $|S|$ that stores at index $i \in [1..|S|]$ the length of the shortest prefix of $S[i..|S|]$ that occurs at most $\tau$ times in $S$.
\end{definition}

We drop a subscript from $\DS_{S,\tau}$ whenever it is clear from the context. Note that $\DS_{S,\tau}[i] \geq 1$ for all $i$ and $\tau$. The key additional property of $\DS_{S,\tau}$, which is shared by $\PLCP_{S\#}$, is called \emph{$\delta$-monotonicity}:

\begin{definition}[\cite{robertson1968generalization}]
Let $a = a_0 a_1 \dots a_n$ and $\delta = \delta_1 \delta_2 \dots \delta_n$ be two sequences of nonnegative
integers. Sequence $a$ is said to be \emph{$\delta$-monotone} if $a_{i}-a_{i-1} \geq -\delta_{i}$ for all $i \in [1..n]$.
\end{definition}

Specifically, $\MS_{T,S,\tau}[i]-\MS_{T,S,\tau}[i-1] \geq -1$ for all $i \in [2..m]$, $\DS_{T,\tau}[i]-\DS_{T,\tau}[i-1] \geq -1$ and $\PLCP_{T\#}[i]-\PLCP_{T\#}[i-1] \geq -1$ for all $i \in [2..m+1]$. This property allows all three of these vectors to be encoded in $2x$ bits, where $x$ is the length of the corresponding input string \cite{Sa02,belazzougui2014indexed}.

The matching statistics array and the distinguishing statistics array of a string can be built in linear time from the bidirectional BWT index of \Thm{thm:bidirectionalIndexConstruction}:

\begin{lemma}\label{lemma:ds}
Given a bidirectional BWT index of $T \in [1..\sigma]^n$ that supports every operation in time linear in the size of its output, we can build $\DS_{T,\tau}$ in $O(n)$ time and in $O(\log{n})$ bits of working space. 
\end{lemma}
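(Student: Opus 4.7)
The plan is to adapt the algorithm of \Lemma{lemma:plcpConstruction} to the distinguishing statistics setting, exploiting the $\delta$-monotonicity of $\DS_{T,\tau}$. For each $i \in [1..n]$ let $V_i$ denote the longest prefix of $T[i..n]$ whose count in $T$ exceeds $\tau$, so that $\DS_{T,\tau}[i]=|V_i|+1$. The key structural observation is that the prefix of $T[i+1..n]$ of length $|V_i|-1$ is obtained from $V_i$ by removing its first character; this shorter string occurs at least as often as $V_i$, hence still has count exceeding $\tau$, whence $|V_{i+1}|\geq |V_i|-1$, which is precisely the $\delta$-monotonicity of $\DS$.

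I would scan $T$ from left to right by inverting $\BWT_{\REV{T}\#}$, maintaining throughout: the pair of intervals $([x..y],[x'..y'])$ of $V_i$ in $\BWT_{T\#}$ and $\BWT_{\REV{T}\#}$, the length $|V_i|$, and a scan pointer $p=i+|V_i|$ tracking how far the inversion has advanced. To transition from $i$ to $i+1$, if $V_i=\varepsilon$ I would simply reset $([x..y],[x'..y'])\gets ([1..n+1],[1..n+1])$; otherwise I would invoke $\mathtt{contractLeft}$ on $([x..y],[x'..y'])$. This operation is applicable because $V_i$ is right-maximal: assuming $\tau\geq 1$, the count condition forces $V_i$ to be a proper prefix of $T[i..n]$, so $T[i+|V_i|]$ exists; if $V_i$ had only one distinct right-extension in $T$, that extension would necessarily be $T[i+|V_i|]$ and would preserve the count, contradicting the fact that $V_i\cdot T[i+|V_i|]$ has count $\leq\tau<$ count of $V_i$. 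The resulting intervals encode a string of length $|V_i|-1$ whose count still exceeds $\tau$. I would then iteratively invoke $\mathtt{extendRight}$ with the next character $T[p+1]$ read on-the-fly from the inversion, advancing $p$ and the length counter on each extension that preserves count exceeding $\tau$, and stopping as soon as the interval size drops to $\leq\tau$: at that moment I would discard the failed extension, take the previous pair of intervals as those of $V_{i+1}$, and output $\DS[i+1]=|V_{i+1}|+1$.

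The main obstacle is the amortized time analysis, which hinges on verifying that the scan pointer advances monotonically. The quantity $p=i+|V_i|$ is non-decreasing in $i$ by $\delta$-monotonicity, so the inversion pointer only moves forward, performing $O(n)$ steps overall; by the same token the total number of successful $\mathtt{extendRight}$ calls across all iterations is bounded by $n$. Adding one $\mathtt{contractLeft}$ and at most one failed $\mathtt{extendRight}$ per iteration yields $O(n)$ bidirectional BWT operations in total, each constant-time by hypothesis. The working space consists of a constant number of integers in $[1..n]$ (the four interval endpoints, $|V_i|$, $i$, and $p$) plus the $O(\log n)$-bit state of the inversion of $\BWT_{\REV{T}\#}$, while the $\DS$ array itself is treated as output; this fits in $O(\log n)$ bits as required. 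Edge cases worth attention include initialization (start with $V_1$ by extending from $\varepsilon$ exactly as in the main loop), positions where $V_i=\varepsilon$ so contraction is skipped, and $p$ reaching $n$ with no further characters available, at which point the single remaining suffix has count $1\leq\tau$ and extension naturally halts.
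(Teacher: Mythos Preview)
Your proposal is correct and follows essentially the same approach as the paper's proof: scan $T\#$ left to right, maintain the bidirectional interval of the longest prefix with frequency exceeding $\tau$, use $\mathtt{contractLeft}$ to drop the first character (exploiting right-maximality of this prefix), and then repeatedly $\mathtt{extendRight}$ until the frequency drops, with $\delta$-monotonicity yielding the amortized $O(n)$ bound. You supply more detail than the paper does---in particular the explicit justification of right-maximality and the monotone scan-pointer bookkeeping---but the underlying algorithm is the same.
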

\begin{proof}
We proceed as in the proof of Lemma \ref{lemma:plcpConstruction}, scanning $T'=T\#$ from left to right. Assume that we are at position $i$ of $T'$, and assume that we know $\DS[i]$. Then, $aW=T'[i..i+\DS[i]-2]$ occurs more than $\tau$ times in $T'$ and it is a right-maximal substring of $T'$. To compute $\DS[i+1]$, we take the suffix link from the node of the suffix tree of $T'$ that corresponds to $aW$, using operation $\mathtt{contractLeft}$, and we issue $\mathtt{extendRight}$ operations on string $W$ using characters $T'[i+\DS[i]-1]$, $T'[i+\DS[i]]$, etc., until the frequency of the right-extension of $W$ drops again below $\tau+1$.
\end{proof}

\begin{lemma}\label{lemma:ms}
Let $S \in [1..\sigma]^n$ and $T \in [1..\sigma]^m$ be two strings. Given a bidirectional BWT index of their concatenation $S \#_1 T \#_2$ that supports every operation in time linear in the size of its output, we can build $\MS_{T,S,\tau}$ in $O(n+m)$ time and in $n+m+o(n+m)$ bits of working space.
\end{lemma}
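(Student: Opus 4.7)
The approach is to adapt the left-to-right scan used in Lemma~\ref{lemma:plcpConstruction} and Lemma~\ref{lemma:ds} to the concatenation $U=S\#_1 T\#_2$, with one extra ingredient that lets us count occurrences in $S$ alone (as opposed to in all of $U$). First I would precompute a bitvector $B[1\ltdots n+m+2]$, indexed for constant-time $\mathtt{rank}_1$, with $B[k]=1$ iff the $k$th suffix of $U$ in lexicographic order starts at a position in $[1\ltdots n]$. Every substring $W$ that the algorithm ever inspects is a nonempty prefix of some $T[i\ltdots m]$ and therefore contains no separator, so any suffix of $U$ that starts in $S$ and has $W$ as a prefix must fit entirely inside $S$; hence, for any interval $[l\ltdots r]$ of $W$ in $\BWT_U$, the number of occurrences of $W$ in $S$ equals $\mathtt{rank}_1(B,r)-\mathtt{rank}_1(B,l-1)$. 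The bitvector $B$ is produced in $O(n+m)$ time and $O(\log(n+m))$ bits of additional working space by a streaming application of Lemma~\ref{lemma:bwtinvert} on the constant-time $\LF$ supported by the bidirectional index, setting one bit per inversion step; indexing it for rank adds only $o(n+m)$ bits (Section~\ref{sect:rankandselect}).

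Once $B$ is built I would maintain a pair of synchronized intervals representing the current window $W=T[i\ltdots i+d-1]$ in both BWTs. Starting with $W=\varepsilon$, $d=0$, and the full intervals, each step attempts $\mathtt{extendRight}(T[i+d],\cdot,\cdot)$ whenever $i+d\leq m$: if the resulting interval has at least $\tau$ $S$-occurrences (checked with $B$), commit to it and set $d\gets d+1$; otherwise set $\MS_{T,S,\tau}[i]\gets d$, apply $\mathtt{contractLeft}$ if $d\geq 1$ to drop $T[i]$ from the front of $W$ (decrementing $d$), and then increment $i$. The $\delta$-monotonicity property $\MS_{T,S,\tau}[i+1]\geq\MS_{T,S,\tau}[i]-1$ bounds the total number of successful extensions over the whole scan by $2m$, while failed extensions and contractions contribute at most $m$ each; altogether the scan performs $O(m)$ operations on the bidirectional index, each in constant time.

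The delicate point, which I expect to be the main obstacle to write carefully, is verifying the precondition of $\mathtt{contractLeft}$: by Definition~\ref{def:biBWTindex} the string being contracted must be right-maximal in $U$. Whenever the algorithm calls $\mathtt{contractLeft}$ with $d\geq 1$, $W=T[i\ltdots i+d-1]$ is the current window and one of two situations holds. Either $i+d\leq m$ but extension with $c=T[i+d]$ failed, which means the number of $S$-occurrences of $Wc$ is strictly less than that of $W$, so at least one occurrence of $W$ in $S$ is followed by a character $c'\neq c$ (or by $\#_1$, if that occurrence is a suffix of $S$); combined with the in-$T$ extension by $c$, this endows $W$ with two distinct right-extensions in $U$, hence $W$ is right-maximal. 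Or $i+d>m$, so $W$ is a suffix of $T$ followed by $\#_2$ in $U$, while its at least one occurrence in $S$ (guaranteed by frequency $\geq\tau\geq 1$) is followed by a character in $[1\ltdots\sigma]\cup\{\#_1\}$, again yielding two distinct right-extensions. In both cases the call is valid and correctly returns the intervals of $T[i+1\ltdots i+d-1]$ in both BWTs. Summing up, the scan runs in $O(n+m)$ time and uses $n+m+o(n+m)$ bits for $B$ together with $O(\log(n+m))$ bits for a constant number of counters and interval pointers, as claimed.
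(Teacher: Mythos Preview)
Your proposal is correct and follows essentially the same approach as the paper: the paper's proof simply says to run the algorithm of Lemma~\ref{lemma:ds} while counting $S$-occurrences via a rank-indexed bitvector $\mathtt{which}$ (your $B$) marking suffixes that start inside $S$. Your write-up is in fact more careful than the paper's, since you explicitly verify the right-maximality precondition of $\mathtt{contractLeft}$, which the paper leaves implicit.
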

\begin{proof}
We use the same algorithm as in Lemma \ref{lemma:ds}, scanning $T$ from left to right and checking at each step the frequency of the current string in $S$. This can be done in constant time using a bitvector $\mathtt{which}[1..n+m+2]$ indexed to support rank operations, such that $\mathtt{which}[i]=1$ iff the suffix of $S \#_1 T \#_2$ with lexicographic rank $i$ starts inside $S$.
\end{proof}

By plugging Theorem \ref{thm:bidirectionalIndexConstruction} into Lemmas \ref{lemma:ds} and \ref{lemma:ms}, we immediately get the following result:

\begin{theorem}
Let $S \in [1..\sigma]^n$ and $T \in [1..\sigma]^m$ be two strings. We can build $\DS_{T,\tau}$ in randomized $O(m)$ time and in $O(m\log{\sigma})$ bits of working space, and we can build $\MS_{T,S,\tau}$ in randomized $O(n+m)$ time and in $O((n+m)\log{\sigma})$ bits of working space.
\end{theorem}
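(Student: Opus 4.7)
The theorem is a direct combination of the constant-time bidirectional BWT index construction (Theorem \ref{thm:bidirectionalIndexConstruction}) with the two algorithms of Lemmas \ref{lemma:ds} and \ref{lemma:ms}, so the plan is essentially a composition argument with attention to which input string the index is built over and to the bookkeeping of working space.

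For the distinguishing statistics, I would first apply Theorem \ref{thm:bidirectionalIndexConstruction} to $T$ itself, producing a bidirectional BWT index of $T$ in randomized $O(m)$ time and $O(m\log\sigma)$ bits of working space, such that every operation (in particular $\mathtt{extendRight}$ and $\mathtt{contractLeft}$) runs in time linear in its output size. I would then invoke Lemma \ref{lemma:ds} on this index, which sweeps $T$ from left to right, maintaining the current right-maximal substring whose frequency exceeds $\tau$ and amortizing $\mathtt{extendRight}$ calls against the total increment of $\DS$. This second phase adds $O(m)$ time and only $O(\log n)$ bits of working space on top of the index, so the overall bound is randomized $O(m)$ time and $O(m\log\sigma)$ bits.

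For the matching statistics, I would instead build the bidirectional BWT index of the concatenation $S\#_1 T\#_2$, whose length is $n+m+2$. Applying Theorem \ref{thm:bidirectionalIndexConstruction} to this string gives a bidirectional index in randomized $O(n+m)$ time and $O((n+m)\log\sigma)$ bits of working space. I would then invoke Lemma \ref{lemma:ms} on this index; it requires $O(n+m)$ time and an auxiliary bitvector $\mathtt{which}$ of $n+m+o(n+m)$ bits which identifies which lexicographic positions correspond to suffixes starting inside $S$, so that the frequency threshold can be checked in constant time. Since $n+m+o(n+m)$ fits in $O((n+m)\log\sigma)$ bits, the total working space is $O((n+m)\log\sigma)$ bits and the total time is randomized $O(n+m)$.

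No substantive obstacle remains: Lemmas \ref{lemma:ds} and \ref{lemma:ms} already state correctness and complexity assuming an index satisfying exactly the guarantees produced by Theorem \ref{thm:bidirectionalIndexConstruction}, so the only thing to verify is that the dominant space term of the index construction, $O((n+m)\log\sigma)$ bits, absorbs the $O(n+m)$ bits of Lemma \ref{lemma:ms} and the $O(\log n)$ bits of Lemma \ref{lemma:ds}, and that randomization, which enters only through the index construction, propagates unchanged to the final bound.
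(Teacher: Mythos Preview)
Your proposal is correct and follows essentially the same route as the paper, which simply states that the theorem follows by plugging Theorem~\ref{thm:bidirectionalIndexConstruction} into Lemmas~\ref{lemma:ds} and~\ref{lemma:ms}. Your write-up is in fact more detailed than the paper's one-line justification; the only trivial slip is that the $O(\log n)$ working space from Lemma~\ref{lemma:ds} should read $O(\log m)$ in the notation of this theorem, but this does not affect the argument.
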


Moreover, using Algorithm \ref{algo:generalizedIterator}, we can achieve the same bounds in deterministic linear time:

\begin{theorem}
Let $S \in [1..\sigma]^n$ and $T \in [1..\sigma]^m$ be two strings. We can build $\DS_{T,\tau}$ in $O(m)$ time and in $O(m \log{\sigma})$ bits of working space, and we can build $\MS_{T,S,\tau}$ in $O(n+m)$ time and in $O((n+m)\log{\sigma})$ bits of working space.
\end{theorem}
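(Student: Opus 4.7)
The plan is to eliminate the randomisation of the previous theorem by substituting the bidirectional BWT index (Theorem~\ref{thm:bidirectionalIndexConstruction}) with the deterministic enumerators of Algorithms~\ref{algo:iterator} and~\ref{algo:generalizedIterator}. First I would build the required BWTs deterministically in $O(m)$ time (respectively $O(n+m)$) and in $O(m\log\sigma)$ (respectively $O((n+m)\log\sigma)$) bits of working space via Theorem~\ref{thm:bwtConstruction}, and equip them with the constant-time $\mathtt{rangeDistinct}$ data structure of Lemma~\ref{lemma:rangedistinct}, so that the enumerators run in deterministic linear time per Theorem~\ref{thm:iterator}.

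For $\DS_{T,\tau}$, the key observation is that for any $\tau\geq 1$, $\DS_{T,\tau}[i]=|W|+1$, where $W$ labels the deepest ancestor of the $i$-th suffix leaf in $\ST_{T\#}$ whose frequency exceeds $\tau$; such an ancestor has at least two descendant leaves and is therefore right-maximal. I would thus run Algorithm~\ref{algo:iterator} on $\BWT_{T\#}$ and, in the callback, for each enumerated right-maximal $W$ with $\mathtt{first}[|\mathtt{chars}|+1]-\mathtt{first}[1]>\tau$, scan every right-extension $Wc$ listed in $\REPR{W}=(\mathtt{chars},\mathtt{first})$: whenever $\mathtt{first}[c+1]-\mathtt{first}[c]\leq\tau$, the string $W$ is the deepest heavy ancestor of every leaf in $[\mathtt{first}[c] \ltdots \mathtt{first}[c+1]-1]$, so we would record the value $|W|+1$ for those SA positions. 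These markings induce an array $D[1\ltdots m+1]$ with $D[j]=\DS_{T,\tau}[\SA_{T\#}[j]]$. To emit $\DS_{T,\tau}$ in $T$-order I would iterate function $\LF$ from SA-position~$1$ (the locus of $\#$ at $T$-position $m+1$); after $k$ iterations we sit at $\SA_{T\#}^{-1}[m+1-k]$, so reading $D$ there yields $\DS_{T,\tau}[m+1-k]$. Because $\DS_{T,\tau}$ is $\delta$-monotone, the reverse-order stream can be buffered into the compact $2m+o(m)$-bit encoding that constitutes the output.

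For $\MS_{T,S,\tau}$ the argument is symmetric but invokes Algorithm~\ref{algo:generalizedIterator} with the two separately built BWTs $\BWT_{S\#_1}$ and $\BWT_{T\#_2}$ (the case of two input strings). Whenever a generalised right-maximal $W$ has size at least $\tau$ inside the $S$-component of $\REPRPRIME{W}$, and some right-extension $Wc$ has size less than $\tau$ in the same component, I would mark with value $|W|$ the SA positions lying in the $\BWT_{T\#_2}$-component of $\INTERVAL{Wc}$; emission in $T$-order then proceeds by iterating $\LF$ on $\BWT_{T\#_2}$ exactly as above.

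The main obstacle is that the values $|W|+1$ can be as large as $m$, so a naive array $D$ would need $\Theta(m\log m)$ bits and overshoot the $O(m\log\sigma)$ working-space budget. I would resolve this by encoding $D$ implicitly: the callback writes only a bitvector of length $m+1$ marking the boundaries of the SA-intervals it assigns, and the value $|W|+1$ attached to each interval is recovered on demand from the $O(\sigma^2\log^2 m)=o(m\log\sigma)$-bit enumeration stack already maintained by Algorithm~\ref{algo:iterator}, together with the suffix-tree topology furnished by Theorem~\ref{thm:topology}. Interleaving the LF traversal with a single pass over this boundary bitvector, and exploiting the $\delta$-monotonicity of the output, then lets the final vector be produced without materialising any $\Theta(m\log m)$-bit intermediate; the same bookkeeping keeps the construction of $\MS_{T,S,\tau}$ within the $O((n+m)\log\sigma)$-bit budget.
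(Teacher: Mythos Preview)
Your overall plan---replace the randomized bidirectional index by the deterministic enumeration of Theorems~\ref{thm:iterator}/\ref{thm:generalizedIterator}, mark answers in SA order, and then permute to string order by inverting the BWT---matches the paper's strategy, and your combinatorial characterisation of $\DS$ (the value at a leaf is one plus the string depth of its deepest ancestor with frequency exceeding $\tau$) is correct.

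The gap is in your last paragraph. The fix you sketch for the $\Theta(m\log m)$-bit intermediate does not work: the enumeration stack of Algorithm~\ref{algo:iterator} is transient---once the enumeration finishes it is empty, and during the enumeration it does \emph{not} contain the ancestors of the current node (the logarithmic-stack trick pushes the heaviest child first precisely to avoid that). So there is no moment at which you can ``recover $|W|+1$ on demand'' for an arbitrary SA position. The suffix-tree topology from Theorem~\ref{thm:topology} gives tree depths, not string depths; recovering string depth from it needs exactly the $\PLCP$ machinery whose randomized construction you are trying to avoid. Finally, ``interleaving the $\LF$ traversal with a single pass over the boundary bitvector'' cannot be coordinated: $\LF$ visits SA positions in reverse string order, not in SA order, so a left-to-right scan of the bitvector and the $\LF$ walk are unrelated sequences.

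The paper sidesteps the whole issue by never storing values at all. It exploits $\delta$-monotonicity to encode $\MS$ (and analogously $\DS$) by two \emph{bitvectors} $\mathtt{start}$ and $\mathtt{end}$ of $m$ bits each, where $\mathtt{start}[i]=1$ iff $\MS[i]>\MS[i-1]-1$ and $\mathtt{end}[j]=1$ iff $j=i+\MS[i]-1$ for some $i$. These one-bits have a clean combinatorial characterisation: for $\mathtt{start}$, they are the positions $i+1$ such that the current match $W$ cannot be extended to the right in $S$ ($Wb\notin S$) \emph{and} its left extension $aW$ is not in $S$ either. During the generalized enumeration one therefore marks, for each right-maximal $W$ occurring in both strings and each $b\in\mathtt{chars}^T\setminus\mathtt{chars}^S$, the interval of $aWb$ in $\BWT_{T\#}$ for every left-extension $a$ with $aW\notin S$; a single BWT inversion then carries the bits to string order. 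The bitvector $\mathtt{end}$ is obtained symmetrically by repeating the whole computation on $\BWT_{\REV{T}\#}$ and $\BWT_{\REV{S}\#}$. This two-bitvector encoding, together with the use of the reverse BWTs for $\mathtt{end}$, is the missing idea in your proposal.
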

\begin{proof}
For simplicity we describe just how to compute $\MS_{T,S,1}$. Note that array $\MS_{T,S}$ can be built in linear time from two bitvectors $\mathtt{start}$ and $\mathtt{end}$, of size $|T|$ each, where $\mathtt{start}[i]=1$ iff $\MS_{T,S}[i]>\MS_{T,S}[i-1]-1$, and where $\mathtt{end}[j]=1$ iff there is an $i$ such that $j=i+\MS_{T,S}[i]-1$.

To build $\mathtt{start}$, we use an auxiliary bitvector $\mathtt{start}'$ of size $|T|+1$, initialized to zeros, and we run Algorithm \ref{algo:generalizedIterator} to iterate over all right-maximal substrings $W$ of $S\#_{1}T\#_2$ that occur both in $S$ and in $T$. Let $\mathtt{repr}'(W)=(\{\mathtt{chars}^S,\mathtt{chars}^T\},\{\mathtt{first}^S,\mathtt{first}^T\})$. If $\mathtt{chars}^T \setminus \mathtt{chars}^S = \emptyset$, we don't process $W$ further and we continue the iteration. Otherwise, for every character $b \in \mathtt{chars}^T \setminus \mathtt{chars}^S$, we enumerate all the distinct characters $a$ that occur to the left of $Wb$ in $T$, and their corresponding intervals in $\BWT_{T\#}$, as described in Lemma \ref{lemma:extendLeft}. If $aW$ is not a prefix of a rotation of $S$, we set to one all positions in $\mathtt{start}'[i..j]$, where $[i..j]$ is the interval of $aWb$ in $\BWT_{T\#}$. At the end of this process, we invert $\BWT_{T\#}$ and we set $\mathtt{start}[i+1]=1$ for every $i$ such that $\mathtt{start}'[j]=1$ and $j$ is the lexicographic rank of suffix $T[i..|T|]\#$ among all suffixes of $T\#$. Finally, we repeat the entire process using $\BWT_{\REV{T}\#}$, $\BWT_{\REV{S}\#}$ and $\mathtt{end}'$. The claimed complexity comes from Theorems \ref{thm:bwtConstruction} and \ref{thm:generalizedIterator}.
\end{proof}

\subsection{Maximal repeats, maximal unique matches, maximal exact matches.} \label{sect:maximalrepeats}

Recall from \Sec{sect:definitions} that string $W$ is a \emph{maximal repeat} of string $T \in [1..\sigma]^n$ if $W$ is both left-maximal and right-maximal in $T$. Let $\{W^1,W^2,\dots,W^{\mathtt{occ}}\}$ be the set of all $\mathtt{occ}$ distinct maximal repeats of $T$. We encode such set as a list of $\mathtt{occ}$ pairs of words $(p^i,|W^i|)$, where $p^i$ is the starting position of an occurrence of $W^i$ in $T$.

\begin{theorem} \label{thm:maxRepeats}
Given a string $T \in [1..\sigma]^n$, we can compute an encoding of all its $\mathtt{occ}$ distinct maximal repeats in $O(n+\mathtt{occ})$ time and in $O(n\log{\sigma})$ bits of working space.
\end{theorem}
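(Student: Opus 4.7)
The plan is to piggyback on the enumeration of right-maximal substrings provided by \Thm{thm:iterator}: we turn the callback of \Algo{algo:iterator} into a maximal-repeat detector, and then batch-convert the reported lex ranks into positions of $T$. First I would build $\BWT_{T\#}$ using \Thm{thm:bwtConstruction} in $O(n)$ time and $O(n\log\sigma)$ bits, together with the constant-time $\mathtt{rangeDistinct}$ structure of \Lemma{lemma:rangedistinct} and the $\mathtt{access}/\mathtt{partialRank}$ structures of \Lemma{lemma:rank_select_access}, all inside the same time and space budget. The latter two immediately yield a constant-time implementation of $\LF$ via the identity $\LF(i)=C[\BWT_{T\#}[i]]+\mathtt{partialRank}(\BWT_{T\#},i)$, which we will need both for the enumeration and for the batched locate below.

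Next I would run \Algo{algo:iterator} on this setup. At every iteration, the call to $\mathtt{extendLeft}$ fills $\mathtt{leftExtensions}[1..h]$ with the distinct characters $a\in[0..\sigma]$ such that $aW$ is a prefix of a rotation of $T\#$, where $W$ is the current right-maximal substring. By the definition of left-maximality in \Sec{sect:definitions_strings}, $W$ is a maximal repeat if and only if $h\geq 2$. I would therefore implement the callback so that, whenever $h\geq 2$, it appends the pair $(\mathtt{first}[1],|W|)$ to a list $L$; recall that $\mathtt{first}[1]=\SP{W}$, the left endpoint of the BWT interval of $W$. By \Thm{thm:iterator} the entire pass costs $O(n)$ time, with $O(1)$ work per recorded maximal repeat, and $L$ accumulates $\mathtt{occ}$ pairs of $O(\log n)$ bits each, which constitute the output buffer.

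Finally, I would convert each $\SP{W}$ stored in $L$ into an actual position $\SA_{T\#}[\SP{W}]$ of an occurrence of $W$ in $T$, by a single invocation of \Lemma{lemma:batch_extract} on $L$. With constant-time $\LF$, that lemma runs in $O(n+\mathtt{occ})$ time and uses $O(\mathtt{occ}\log n)$ bits of working space on top of $L$, by inverting $\BWT_{T\#}$ once and radix-sorting an intermediate table. Summing the three phases yields the claimed $O(n+\mathtt{occ})$ total time and $O(n\log\sigma)$ bits of working space for the BWT and auxiliary indexes; all auxiliary structures can then be discarded, leaving $L$ as the required encoding of the maximal repeats.

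The main obstacle, I expect, is not algorithmic but conceptual: one must observe that left-maximality of $W$ is already encoded in the quantity $h$ that \Algo{algo:iterator} computes anyway for its depth-first traversal of the suffix-link tree, since $h$ equals the number of distinct characters in $\BWT_{T\#}[\SP{W}..\EP{W}]$. Hence no separate $\mathtt{rangeDistinct}$ query, extra traversal, or additional data structure is needed to test left-maximality, and the whole algorithm reduces to a short implementation of the callback in \Algo{algo:iterator} followed by one call to \Lemma{lemma:batch_extract}.
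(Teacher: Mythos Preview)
Your proposal is correct and follows essentially the same approach as the paper: detect maximal repeats inside the callback of \Algo{algo:iterator} by testing $h\geq 2$, record $(\SP{W},|W|)$, and then resolve all recorded lex ranks to text positions via a single call to \Lemma{lemma:batch_extract} using the constant-time $\LF$ provided by \Lemma{lemma:rank_select_access}. The only minor points you omit are the static allocation of the output list (\Sec{sec:staticAllocation}) and the explicit remark that the $O(\mathtt{occ}\log n)$ bits for $\mathtt{pairs}$ and the auxiliary arrays of \Lemma{lemma:batch_extract} are charged to the output, but these do not affect the argument.
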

\begin{proof}
Recall from \Sec{sec:enumeration} the representation $\mathtt{repr}(W)$ of a substring $W$ of $T$. \Algo{algo:iterator} invokes function $\mathtt{callback}$ on every right-maximal substring $W$ of $T$: inside such function we can determine the left-maximality of $W$ by checking whether $h>1$, and in the positive case we append pair $(\mathtt{first}[1],|W|)$ to a list $\mathtt{pairs}$, where $\mathtt{first}[1]$ in $\REPR{W}$ is the first position of the interval of $W$ in $\BWT_{T\#}$ (see \Algo{algo:maxRepeats}). After the execution of the whole \Algo{algo:iterator}, we feed $\mathtt{pairs}$ to \Lemma{lemma:batch_extract}, obtaining in output a list of lengths and starting positions in $T$ that uniquely identifies the set of all maximal repeats of $T$. 

We build $\BWT_{T\#}$ from $T$ using \Thm{thm:bwtConstruction}. Then, we use \Lemma{lemma:rank_select_access} to build a data structure that supports $\mathtt{access}$ and $\mathtt{partialRank}$ queries on $\BWT_{T\#}$, and we discard $\BWT_{T\#}$. We use this structure both to implement function $\LF$ in \Lemma{lemma:batch_extract}, and to build the $\mathtt{rangeDistinct}$ data structure of \Lemma{lemma:rangedistinct}. Finally, as described in \Thm{thm:iterator}, we implement \Lemma{lemma:iterator} with this $\mathtt{rangeDistinct}$ data structure. We allocate the space for $\mathtt{pairs}$ and for related data structures in \Lemma{lemma:batch_extract} using the static allocation strategy described in \Sec{sec:staticAllocation}. We charge to the output the space taken by $\mathtt{pairs}$. In \Lemma{lemma:batch_extract}, we charge to the output the space taken by list $\mathtt{translate}$, as well as part of the working space used by radix sort.
\end{proof}

Maximal repeats have been detected from the input string in $O(n\log{\sigma})$ bits of working space before, but not in overall $O(n)$ time. Specifically, it is possible to achieve overall running time $O(n\log{\sigma})$ by combining the BWT construction algorithm described in \cite{HSS09}, which runs in $O(n\log{\log{\sigma}})$ time, with the maximal repeat detection algorithm described in \cite{BBO12}, which runs in $O(n\log{\sigma})$ time. The claim of \Thm{thm:maxRepeats} holds also for an encoding of the maximal repeats that contains, for every maximal repeat, the starting position of \emph{all} its occurrences in $T$. In this case, $\mathtt{occ}$ becomes the number of \emph{occurrences} of all maximal repeats of $T$. Specifically, given the BWT interval of a maximal repeat $W$, it suffices to mark all the positions inside the interval in a bitvector $\mathtt{marked}[1..n]$, to assign a unique identifier to every distinct maximal repeat, and to sort the translated list $\mathtt{pairs}$ by such identifier before returning it in output. Bitvector $\mathtt{marked}$ can be replaced by a smaller bitvector $\mathtt{marked}'$ as described in Lemma \ref{lemma:batch_extract}.

Once we have the encoding $(p^i,|W^i|)$ of every maximal repeat $W^i$, we can return the corresponding \emph{string} $W^i$ by scanning $T$ in blocks of size $\log{n}$, i.e. outputting $\log_{\sigma}{n}$ characters in constant time: this allows us to print the $C$ total characters in the output in overall $C/\log_{\sigma}{n}$ time. Alternatively, we can discard the original string altogether, and maintain instead an auxiliary stack of characters while we traverse the suffix-link tree in Lemma \ref{lemma:iterator}. Once we detect a maximal repeat, we print its string to the output by scanning the auxiliary stack in blocks of size $\log{n}$. Recall from Section \ref{sec:suffixTree} that the leaves of the suffix-link tree are maximal repeats: this implies that the depth $d$ of the auxiliary stack is at most equal to the length of the longest maximal repeat, thus the maximum size $d\log{\sigma}$ of the auxiliary stack can be charged to the output.

A \emph{supermaximal repeat} is a maximal repeat that is not a substring of another maximal repeat, and a \emph{near-supermaximal repeat} is a maximal repeat that has at least one occurrence that is not contained inside an occurrence of another maximal repeat (see e.g. \cite{Gu97}). The proof of \Thm{thm:maxRepeats} can be adapted to detect supermaximal and near-supermaximal repeats within the same bounds: we leave the details to the reader.

\begin{algorithm}[t]
\KwIn{$\REPR{W}$, $|W|$, $\BWT_{T}$, and $C$ array of string $T \in [1..\sigma]^{n-1}\#$. Matrices $A$, $F$, $L$, $\mathtt{gamma}$, $\mathtt{leftExtensions}$, and counter $h$, from \Lemma{lemma:extendLeft}. List $\mathtt{pairs}$.}
\If{$h<2$}{\textbf{return}\;}
$\mathtt{pairs}.\mathtt{append}\big( (\mathtt{first}[1],|W|) \big)$\;
\caption{\label{algo:maxRepeats}
Function $\mathtt{callback}$ for maximal repeats. See \Thm{thm:maxRepeats} and \Algo{algo:iterator}.}
\end{algorithm}

Consider two string $S$ and $T$. For a \emph{maximal unique match (MUM)} $W$ between $S \in [1..\sigma]^{n}$ and $T \in [1..\sigma]^{m}$, it holds that: (1) $W = S[i..i+k-1]$ and $W = T[j..j+k-1]$ for exactly one $i \in [1..n]$ and for exactly one $j \in [1..m]$; (2) if $i-1 \geq 1$ and $j-1 \geq 1$, then $S[i-1] \neq T[j-1]$; (4) if $i+k \leq n$ and $j+k \leq m$, then $S[i+k] \neq T[j+k]$ (see e.g. \cite{Gu97}). To detect all the MUMs of $S$ and $T$, it would suffice to build the suffix tree of the concatenation $C=S\#_{1}T\#_2$ and to traverse its internal nodes, since MUMs are right-maximal substrings of $C$, like maximal repeats. More specifically, only internal nodes $v$ with exactly two leaves as children can be MUMs. Let the two leaves of a node $v$ be associated with suffixes $C[i \ltdots |C|]$ and $C[j \ltdots |C|]$, respectively. Then, $i$ and $j$ must be such that $i \leq |S|$ and $j>|S+1|$, and the left-maximality of $v$ can be checked by accessing $S[i-1 \; (\mbox{mod}_1 \; n)]$ and $T[j-1 \; (\mbox{mod}_1 \; m)]$ in constant time.

This notion extends naturally to a set of strings: a string $W$ is a maximal unique match (MUM) of $d$ strings $T^1,T^2,\dots,T^d$, where $T^i \in [1..\sigma]^{n_{i}}$, if $W$ occurs exactly once in $T^i$ for all $i \in [1..d]$, and if $W$ cannot be extended to the left or to the right without losing one of its occurrences. We encode the set of all maximal unique matches $W$ of $T^1,T^2,\dots,T^d$ as a list of $\mathtt{occ}$ triplets of words $(p^i,|W|,\mathtt{id})$, where $p^i$ is the first position of the occurrence of $W$ in string $T^i$, and $\mathtt{id}$ is a number that uniquely identifies $W$. Note that the maximal unique matches of $T^1,T^2,\dots,T^d$ are maximal repeats of the concatenation $T = T^1 \#_1 T^2 \#_1 \cdots \#_1 T^d \#_2$, thus we can adapt \Thm{thm:maxRepeats} as follows:

\begin{theorem} \label{thm:mums}
Given a set of strings $T^1, T^2, \dots, T^d$ where $d>1$ and $T^i \in [1..\sigma]^{+}$ for all $i \in [1..d]$, we can compute an encoding of all the distinct maximal unique matches of the set in $O(n+\mathtt{occ})$ time and in $O(n\log{\sigma})$ bits of working space, where $n=\sum_{i=1}^{d}|T^i|$ and $\mathtt{occ}$ is the number of words in the encoding.
\end{theorem}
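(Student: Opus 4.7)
The plan is to reduce MUM detection to the maximal-repeat enumeration of \Thm{thm:maxRepeats}, applied to the concatenation $T = T^1 \#_1 T^2 \#_1 \cdots T^{d-1} \#_1 T^d$, where $\#_1$ is a separator not in $[1..\sigma]$, and then to build $\BWT_{T\#}$ with the standard terminator $\# \neq \#_1$ via \Thm{thm:bwtConstruction}. A nonempty $W \in [1..\sigma]^+$ is a MUM of $T^1,\ldots,T^d$ if and only if (i) $W$ is a maximal repeat of $T\#$, (ii) $|\INTERVAL{W}|=d$, and (iii) the $d$ occurrences of $W$ in $T\#$ lie in $d$ distinct factors $T^j$: MUM left- and right-maximality coincide with BWT left- and right-maximality, since an occurrence at position $1$ of a factor is preceded in $T\#$ by a separator rather than by a character in $[1..\sigma]$, and a string containing a separator cannot have $d$ occurrences in $T\#$. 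The effective alphabet of $T\#$ has size $\sigma+2$, still in $o(\sqrt{n}/\log n)$, so all earlier results apply.

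First I would build $\BWT_{T\#}$ via \Thm{thm:bwtConstruction} together with the $\mathtt{access}$, $\mathtt{partialRank}$, and $\mathtt{rangeDistinct}$ data structures of \Lemma{lemma:rank_select_access} and \Lemma{lemma:rangedistinct}, in $O(n)$ time and $O(n\log\sigma)$ bits, and then discard $\BWT_{T\#}$ itself. Alongside, I would maintain a plain bitvector $\mathtt{boundary}[1..|T\#|]$ marking the starting position of each $T^j$ in $T\#$, indexed for constant-time rank in $O(n)$ bits. Then I would run \Algo{algo:iterator} via \Thm{thm:iterator}: inside $\mathtt{callback}$, a right-maximal $W$ is flagged as a MUM candidate precisely when $h>1$ and $\mathtt{first}[k+1]-\mathtt{first}[1]=d$, in which case I assign it a fresh identifier $\mathtt{id}$, append the $d$ pairs $(\mathtt{first}[1]+r,\mathtt{id})$ for $r \in [0..d-1]$ to a growing list $\mathtt{pairs}$, and record $(\mathtt{id},|W|)$ in a parallel list $\mathtt{lengths}$.

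After enumeration, \Lemma{lemma:batch_extract} transforms every pair in $\mathtt{pairs}$ into $(\SA_{T\#}[i_k],\mathtt{id}_k)$; a constant-time rank on $\mathtt{boundary}$ then converts each $\SA_{T\#}[i_k]$ into a pair $(j,p^j)$ identifying the factor index and the offset within it. I would radix-sort the resulting triples $(\mathtt{id},j,p^j)$ by $\mathtt{id}$ in linear time, since identifiers are bounded by the number of candidates, which is at most $\lceil n/d\rceil$, and for each group of $d$ triples sharing an identifier I would test, using a scratch bitvector of $d$ bits whose set positions are zeroed lazily between groups, whether the $j$-values are pairwise distinct; surviving groups yield $d$ output triples $(p^j,|W|,\mathtt{id})$ each, with $|W|$ fetched from $\mathtt{lengths}$ through the sorted identifiers.

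The main obstacle is keeping all per-candidate work in $O(n+\mathtt{occ})$ time and $O(n\log\sigma)$ bits uniformly in $d$. Since every candidate contributes $d$ distinct positions of $T\#$, both $|\mathtt{pairs}|$ and the number of sorted triples are at most $n$, so the batched locate, the radix sort and the group scan all cost $O(n)$. The arrays $\mathtt{pairs}$, $\mathtt{lengths}$, and the triple buffer hold $O(\mathtt{occ})$ words and are preallocated via the static-allocation strategy of \Sec{sec:staticAllocation}, exactly as in the proof of \Thm{thm:maxRepeats}, so their storage is charged to the output and does not enter the working space, while the enumeration itself fits in $O(\sigma^2\log^2 n)\subseteq o(n\log\sigma)$ additional bits.
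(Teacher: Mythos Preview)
Your reduction, the characterization of MUMs as maximal repeats with interval size exactly $d$ whose occurrences hit $d$ distinct factors, and the overall time analysis are all correct and match the paper's setup. The difference lies in \emph{when} you filter out false candidates, and this creates a genuine space gap.

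You assert that ``the arrays $\mathtt{pairs}$, $\mathtt{lengths}$, and the triple buffer hold $O(\mathtt{occ})$ words''. They do not. You append $d$ entries to $\mathtt{pairs}$ for \emph{every} maximal repeat whose interval has length exactly $d$, before you know whether its occurrences lie in $d$ distinct factors. Since those intervals are pairwise disjoint (two distinct internal nodes of $\ST_{T\#}$ with the same number of leaves cannot be in an ancestor--descendant relation), the total number of such entries is bounded by $n$, but it is \emph{not} bounded by $\mathtt{occ}$: there may be $\Theta(n/d)$ candidates and only $O(1)$ actual MUMs. Hence $\mathtt{pairs}$, and the $O(|\mathtt{pairs}|\cdot\log n)$ working space of \Lemma{lemma:batch_extract}, can reach $\Theta(n\log n)$ bits, which exceeds the claimed $O(n\log\sigma)$ bits. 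You cannot charge this space to the output, because it is not proportional to the output.

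The paper avoids this by doing the document-distinctness test \emph{before} materializing any $\SA$ values. It first marks the interval endpoints of all size-$d$ maximal-repeat candidates in a bitvector $\mathtt{intervals}[1..|T|]$, then performs a single BWT inversion; at each step it knows both the text position (hence the factor index) and the BWT position, so it can record in a second bitvector $\mathtt{documents}$ which factors each candidate interval hits, all in $O(n)$ bits. A synchronous scan of the two bitvectors discards candidates that miss some factor. Only after this filtering does the paper run the iterator a second time and append $d$ triplets per \emph{surviving} candidate to $\mathtt{pairs}$, so that $|\mathtt{pairs}|=\Theta(\mathtt{occ})$ and its space can legitimately be charged to the output. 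Your approach becomes correct if you insert an analogous $O(n)$-bit filtering pass before calling \Lemma{lemma:batch_extract}.
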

\begin{proof}
We build the same data structures as in \Thm{thm:maxRepeats}, but on string $T = T^1 \#_1 T^2 \#_1 \cdots \#_1 T^d \#_2$, and we enumerate all the maximal repeats of $T$ using \Algo{algo:iterator}. Whenever we find a maximal repeat $W$ with exactly $d$ occurrences in $T$, we set to one in a bitvector $\mathtt{intervals}[1..|T|]$ the first and the last position of the interval of $W$ in $\BWT_T$ (see \Algo{algo:mums1}). Note that the BWT intervals of all the maximal repeats of $T$ with exactly $d$ occurrences are disjoint. Then, we index $\mathtt{intervals}$ to support rank queries in constant time, we allocate another bitvector $\mathtt{documents}[1..|T|]$, and we invert $\BWT_T$. Assume that, at the generic step of the inversion, we are at position $i$ in $T$ and at position $j$ in $\BWT_T$. We decide whether $j$ belongs to the interval of a maximal repeat with $d$ occurrences, by checking whether $\mathtt{rank}_{1}(\mathtt{intervals},j)$ is odd, or, if it is even, whether $\mathtt{intervals}[j]=1$. If $j$ belongs to an interval $[x..y]$ that has been marked in $\mathtt{intervals}$, we compute $x$ using rank queries on $\mathtt{intervals}$, and we set $\mathtt{documents}[x+p-1]$ to one, where $p$ is the identifier of the document that contains position $i$ in $T$. Finally, we scan bitvectors $\mathtt{intervals}$ and $\mathtt{documents}$ synchronously: for each interval $[x..y]$ that has been marked in $\mathtt{intervals}$ and such that $\mathtt{documents}[i]=0$ for some $i \in [x..y]$, we reset $\mathtt{intervals}[x]$ and $\mathtt{intervals}[y]$ to zero. Finally, we iterate again over all the maximal repeats of $T$ with exactly $d$ occurrences, using \Algo{algo:iterator}. Let $W$ be such a maximal repeat, with interval $[x..y]$ in $\BWT_T$: if $\mathtt{intervals}[x]=\mathtt{intervals}[y]=1$, we append to list $\mathtt{pairs}$ of \Thm{thm:maxRepeats} a triplet $(i,|W|,\mathtt{id})$ for all $i \in [x..y]$, where $\mathtt{id}$ is a number that uniquely identifies $W$ (see \Algo{algo:mums2}). Then, we continue as in \Thm{thm:maxRepeats}.
\end{proof}

\begin{algorithm}[t]
\KwIn{$\REPR{W}$, $|W|$, $\BWT_{T}$, and $C$ array of string $T \in [1..\sigma]^{n-1}\#$. Matrices $A$, $F$, $L$, $\mathtt{gamma}$, $\mathtt{leftExtensions}$, and counter $h$, from \Lemma{lemma:extendLeft}. Bitvector $\mathtt{intervals}[1..|T|]$.}
\If{$h<2$ \Or $\mathtt{first}[|\mathtt{first}|]-\mathtt{first}[1] \neq d$}{\textbf{return}\;}
$\mathtt{intervals}[\mathtt{first}[1]] \gets 1$\;
$\mathtt{intervals}[\mathtt{first}[|\mathtt{first}|]-1] \gets 1$\;
\caption{\label{algo:mums1}
First $\mathtt{callback}$ function for maximal unique matches. See \Thm{thm:mums} and \Algo{algo:iterator}.}
\end{algorithm}

\begin{algorithm}[t]
\KwIn{$\REPR{W}$, $|W|$, $\BWT_{T}$, and $C$ array of string $T \in [1..\sigma]^{n-1}\#$. Matrices $A$, $F$, $L$, $\mathtt{gamma}$, $\mathtt{leftExtensions}$, and counter $h$, from \Lemma{lemma:extendLeft}. Bitvector $\mathtt{intervals}[1..|T|]$. List $\mathtt{pairs}$. Integer $\mathtt{id}$.}
\If{$h<2$ \Or $\mathtt{first}[|\mathtt{first}|]-\mathtt{first}[1] \neq d$ \Or $\mathtt{intervals}[\mathtt{first}[1]] \neq 1$ \Or $\mathtt{intervals}[\mathtt{first}[|\mathtt{first}|]-1] \neq 1$}{\textbf{return}\;}
\For{$i \in [\mathtt{first}[1]..\mathtt{first}[|\mathtt{first}|]-1]$}{
	$\mathtt{pairs}.\mathtt{append}\big( (i,|W|,\mathtt{id}) \big)$\;
	$\mathtt{id} \gets \mathtt{id}+1$\;
}
\caption{\label{algo:mums2}
Second $\mathtt{callback}$ function for maximal unique matches. See \Thm{thm:mums} and \Algo{algo:iterator}.}
\end{algorithm}

Maximal exact matches (MEMs) are related to maximal repeats as well. A triplet $(i,j,\ell)$ is a \emph{maximal exact match} (also called \emph{maximal pair}) of two strings $T^1$ and $T^2$ if: (1) $T^{1}[i \ldots i+\ell-1]=T^{2}[j \ldots j+\ell-1]=W$; (2) if $i-1 \geq 1$ and $j-1 \geq 1$, then $T^{1}[i-1] \neq T^{2}[j-1]$; (3) if $i+\ell \leq |T^1|$ and $j+\ell \leq |T^2|$, then $T^{1}[i+\ell] \neq T^{2}[j+\ell]$ (see e.g. \cite{baker1995finding,Gu97}). We encode the set of all maximal exact matches of strings $T^1$ and $T^2$ as a list of $\mathtt{occ}$ such triplets. Since $W$ is a maximal repeat of $T^{1}\#_{1}T^{2}\#_{2}$ that occurs both in $T^1$ and in $T^2$, we can build a detection algorithm on top of the generalized iterator of \Algo{algo:generalizedIterator}, as follows:

\begin{theorem} \label{thm:mems}
Given two strings $T^1$ and $T^2$ in $[1..\sigma]^+$, we can compute an encoding of all their $\mathtt{occ}$ maximal exact matches in $O(|T^1|+|T^2|+\mathtt{occ})$ time and in $O((|T^1|+|T^2|)\log{\sigma})$ bits of working space.
\end{theorem}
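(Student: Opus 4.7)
The plan is to adapt the patterns used in Theorems \ref{thm:maxRepeats} and \ref{thm:mums}, replacing the single-string iterator by the generalized iterator of Algorithm \ref{algo:generalizedIterator}. First I would build $\BWT_{T^1\#_1}$ and $\BWT_{T^2\#_2}$ in deterministic linear time via Theorem \ref{thm:bwtConstruction}, and augment each of them with the $\mathtt{access}$/$\mathtt{partialRank}$ structures of Lemma \ref{lemma:rank_select_access} together with the $\mathtt{rangeDistinct}$ structures of Lemma \ref{lemma:rangedistinct}, so that the hypotheses of Theorem \ref{thm:generalizedIterator} are met. Then I would run Algorithm \ref{algo:generalizedIterator} on the two BWTs with a custom $\mathtt{callback}$ that emits MEMs on the fly.

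The key observation is that every MEM $(i,j,\ell)$ corresponds to a unique right-maximal substring $W=T^1[i\ltdots i+\ell-1]=T^2[j\ltdots j+\ell-1]$ of $T^1\#_1T^2\#_2$: both left-extensions $T^1[i-1]\neq T^2[j-1]$ and both right-extensions $T^1[i+\ell]\neq T^2[j+\ell]$ differ (boundary characters $\#_1,\#_2$ being distinct from each other and from every character of the alphabet). When Algorithm \ref{algo:generalizedIterator} visits such a $W$, the matrices $A^k,F^k,L^k,\mathtt{gamma}^k$ filled by Algorithm \ref{algo:extendLeftPrime} list, for every left-extension $a$ with $\mathtt{gamma}^k[a]>0$ and every $g\in[1\ltdots\mathtt{gamma}^k[a]]$, the right-extension $A^k[a,g]$ of $aW$ in $T^k$ together with the interval $[F^k[a,g]\ltdots L^k[a,g]]$ in $\BWT_{T^k\#_k}$ of the occurrences of $aW\cdot A^k[a,g]$. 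Inside the callback I would iterate over quadruples $(a^1,g_1,a^2,g_2)$ with $\mathtt{gamma}^1[a^1]>0$, $\mathtt{gamma}^2[a^2]>0$, $a^1\neq a^2$, and $A^1[a^1,g_1]\neq A^2[a^2,g_2]$; for every such surviving quadruple and every pair $(i,j)$ with $i\in[F^1[a^1,g_1]\ltdots L^1[a^1,g_1]]$ and $j\in[F^2[a^2,g_2]\ltdots L^2[a^2,g_2]]$, I assign a fresh identifier $\mathtt{id}$, append $(i,\mathtt{id})$ to a list $\mathtt{pairs}^1$, append $(j,\mathtt{id})$ to a list $\mathtt{pairs}^2$, and store $|W|$ in $\mathtt{lengths}[\mathtt{id}]$.

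Once Algorithm \ref{algo:generalizedIterator} has terminated, I would apply Lemma \ref{lemma:batch_extract} to $\mathtt{pairs}^1$ using $\BWT_{T^1\#_1}$ and to $\mathtt{pairs}^2$ using $\BWT_{T^2\#_2}$, translating BWT positions into starting positions in $T^1$ and $T^2$ respectively. A radix sort by $\mathtt{id}$ then merges the two translated lists with $\mathtt{lengths}$ into the final encoding of triplets $(i,j,\ell)$. As in Theorem \ref{thm:maxRepeats}, the final sizes of $\mathtt{pairs}^1,\mathtt{pairs}^2,\mathtt{lengths}$ are precomputed by a dry run of the iterator, so that the static allocation strategy of Section \ref{sec:staticAllocation} confines these arrays within the output budget and the overall working space remains $O((|T^1|+|T^2|)\log\sigma)$ by Theorem \ref{thm:generalizedIterator}.

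The main obstacle is showing that the callback runs in time proportional to the number of MEMs it emits, so the total is $O(|T^1|+|T^2|+\mathtt{occ})$. Two sources of overhead must be controlled: the scan over the non-empty rows of $A^1,A^2$ (which must avoid an $O(\sigma)$ charge per node), and the enumeration of quadruples that fail the mismatch tests. The first is handled by traversing only the compact index $\mathtt{leftExtensions}$ already maintained by Algorithm \ref{algo:extendLeftPrime}, so that the total scan cost is bounded by the total number $\sum_W (k^1_W + k^2_W) = O(|T^1|+|T^2|)$ of Weiner links in the two suffix trees (Observation \ref{obs:suffixtree}). The second is handled by bucketing the non-empty entries of $A^k$ by the value of $a$ and by the value of the right-extension, and by iterating, for each $(a^1,g_1)$, only over the $(a^2,g_2)$ entries that lie outside the matching buckets, so that every quadruple actually examined survives the mismatch tests and therefore emits at least $(L^1[a^1,g_1]-F^1[a^1,g_1]+1)(L^2[a^2,g_2]-F^2[a^2,g_2]+1)\geq 1$ MEMs. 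Amortized over all nodes, this yields total callback time $O(|T^1|+|T^2|+\mathtt{occ})$, matching the claimed bound.
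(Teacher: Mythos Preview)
Your overall architecture matches the paper's proof closely: build the two BWTs, run the generalized iterator of Algorithm~\ref{algo:generalizedIterator}, and inside the callback pair up left/right extensions $aWb$ in $T^1$ with $cWd$ in $T^2$ such that $a\neq c$ and $b\neq d$, then translate positions with Lemma~\ref{lemma:batch_extract}. The first amortization (scanning only \texttt{leftExtensions}, charging to Weiner links) is also what the paper does.

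The gap is in your second amortization. You assert that ``bucketing the non-empty entries of $A^k$ by the value of $a$ and by the value of the right-extension'' lets you, for each $(a^1,g_1)$, enumerate only the $(a^2,g_2)$ with $a^2\neq a^1$ and $A^2[a^2,g_2]\neq A^1[a^1,g_1]$, so that every examined quadruple produces output. But you do not say how to enumerate $X^2\setminus(\text{bucket}_1[a]\cup\text{bucket}_2[b])$ in time proportional to its size, and a straightforward implementation does not achieve this. Concretely, take $X^1=\{(1,1),(2,1),\dots,(k,1)\}$ and $X^2=\{(k{+}1,1),\dots,(2k,1)\}$: every pair disagrees on the first coordinate but agrees on the second, so the output is empty, yet for each $(i,1)\in X^1$ the first-coordinate bucket in $X^2$ is empty and you would still scan all $k$ entries of $X^2$ only to reject them on the second coordinate, spending $\Theta(k^2)$ time against a budget of $|X^1|+|X^2|+|X^1\otimes X^2|=2k$. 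Symmetric examples defeat ``bucket by second coordinate first'', and the set difference $X^2\setminus(B_1\cup B_2)$ is not something two independent bucketings let you traverse output-sensitively.

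This is exactly the obstacle the paper isolates into Lemma~\ref{lemma:product}, which gives a non-obvious $O(|A|+|B|+|A\otimes B|)$-time procedure: repeatedly pick a \emph{compatible} pair $(a,b),(c,d)$ inside $A$ (any element of $B$ is incompatible with at most one of them, so a scan of $B$ is paid for by output), and when no compatible pair remains in $A$, exploit that all surviving elements share a coordinate with a fixed $(a,b)$ to finish by a direct case analysis. Plugging Lemma~\ref{lemma:product} in place of your bucketing step completes your argument; without it, the claimed $O(|T^1|+|T^2|+\mathtt{occ})$ bound does not follow.
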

\begin{proof}
Recall that \Algo{algo:generalizedIterator} uses a $\mathtt{rangeDistinct}$ data structure built on top of the BWT of $T^1$, and a $\mathtt{rangeDistinct}$ data structure built on top of the BWT of $T^2$, to iterate over all the right-maximal substrings $W$ of $T^{1}\#_{1}T^{2}\#_{2}$. For every such $W$, the algorithm gives to function $\mathtt{callback}$ the intervals of all strings $aWb$ such that $a \in [1..\sigma]$, $b \in [1..\sigma]$, and $aWb$ is a prefix of a rotation of $T^1$, and with the intervals of all strings $cWd$ such that $c \in [1..\sigma]$, $d \in [1..\sigma]$, and $cWd$ is a prefix of a rotation of $T^2$. Recall from \Sec{sec:enumeration} the representation $\mathtt{repr}'(W)$ of a substring $W$ of $T^{1}\#_{1}T^{2}\#_{2}$. Inside function $\mathtt{callback}$, it suffices to determine whether $W$ occurs in both $T^1$ and $T^2$, using arrays $\mathtt{first}^1$ and $\mathtt{first}^2$ of $\REPRPRIME{W}$, and to determine whether $W$ is left-maximal in $T^{1}\#_{1}T^{2}\#_{2}$, by checking whether $h>1$ (see \Algo{algo:mems}). If both such tests succeed, we build a set $X$ that represents all strings $aWb$ that are the prefix of a rotation of $T^1$, and a set $X^2$ that represents all strings $cWd$ that are the prefix of a rotation of $T^2$:
\begin{eqnarray*}
X^1 & = \{ & (a,b,i,j) : a=\mathtt{leftExtensions}[p], \; p \leq h, \; \mathtt{gamma}^{1}[a]>0, \; b=A^{1}[a][q], \\
& & q \leq \mathtt{gamma}^{1}[a], \; i=F^{1}[a][q], \; j=L^{1}[a][q] \; \} \\
X^2 & = \{ & (c,d,i',j') : c=\mathtt{leftExtensions}[p], \; p \leq h, \; \mathtt{gamma}^{2}[c]>0, \; d=A^{2}[c][q], \\
& & q \leq \mathtt{gamma}^{1}[c], \; i'=F^{2}[c][q], \; j'=L^{2}[c][q] \; \}
\end{eqnarray*}
In such sets, $[i..j]$ is the interval of $aWb$ in the BWT of $T^{1}\#$, and $[i'..j']$ is the interval of $cWd$ in the BWT of $T^{2}\#$. Building $X^1$ and $X^2$ for all maximal repeats $W$ of $T^{1}\#_{1}T^{2}\#_2$ takes overall linear time in the size of the input, since every element of $X^1$ (respectively, of $X^2$) can be charged to a distinct edge or implicit Weiner link of the generalized suffix tree of $T^{1}\#_{1}T^{2}\#_2$, and the number of such objects is linear in the size of the input (see \Obs{obs:suffixtree}). Then, we use \Lemma{lemma:product} to compute the set of all quadruplets $(i,j,i',j')$ such that $(a,b,i,j) \in X^1$, $(c,d,i',j') \in X^2$, $a \neq c$ and $b \neq d$, in overall linear time in the size of the input and of the output, and for every such quadruplet we append all triplets $(x,y,|W|)$ to list $\mathtt{pairs}$ of \Thm{thm:maxRepeats}, where $x \in [i..j]$ and $y \in [i'..j']$. Running \Algo{algo:generalizedIterator} and building its input data structures from $T^1$ and $T^2$ takes overall $O(|T^1|+|T^2|)$ time and $O((|T^1|+|T^2|)\log{\sigma})$ bits of working space, by combining \Thm{thm:bwtConstruction} with Lemmas \ref{lemma:rank_select_access}, \ref{lemma:rangedistinct} and \ref{lemma:generalizedIterator}. 

Finally, we translate every $x$ and $y$ in $\mathtt{pairs}$ to a string position, as described in \Thm{thm:maxRepeats}. We allocate the space for $\mathtt{pairs}$ and for related data structures in \Lemma{lemma:batch_extract} using the static allocation strategy described in \Sec{sec:staticAllocation}. We charge to the output the space taken by $\mathtt{pairs}$. In \Lemma{lemma:batch_extract}, we charge to the output the space taken by list $\mathtt{translate}$, as well as part of the working space used by radix sort.
\end{proof}

\begin{algorithm}[t]
\KwIn{$\REPRPRIME{W}$, $|W|$, $\{\BWT_{T^{i}\#}\}$, $\{C^i\}$ arrays. Matrices $\{A^i\}$, $\{F^i\}$, $\{L^i\}$, $\{\mathtt{gamma}^i\}$. Array $\mathtt{leftExtensions}$ and counter $h$. List $\mathtt{pairs}$.}
\If{$h<2$ \Or $|\mathtt{chars}^{1}|=0$ \Or $|\mathtt{chars}^{2}|=0$}{\textbf{return}\;}
$X^1 \gets \emptyset$\;
$X^2 \gets \emptyset$\;
\For{$i \in [1..h]$}{
	$a \gets \mathtt{leftExtensions}[i]$\;
	\If{$\mathtt{gamma}^{1}[a]>0$}{
		\For{$j \in [1..\mathtt{gamma}^{1}[a]]$}{
			$X^1 \gets X^1 \cup \{(a,A^{1}[a][j],F^{1}[a][j],L^{1}[a][j])\}$\;
		}
	}
	\If{$\mathtt{gamma}^{2}[a]>0$}{
		\For{$j \in [1..\mathtt{gamma}^{2}[a]]$}{
			$X^2 \gets X^2 \cup \{(a,A^{2}[a][j],F^{2}[a][j],L^{2}[a][j])\}$\;
		}
	}	
}
$Y \gets X^1 \otimes X^2$\;
\For{$(i,j,i',j') \in Y$}{
	\For{$x \in [i..j]$, $y \in [i'..j']$}{
		$\mathtt{pairs}.\mathtt{append}\big( (x,y,|W|) \big)$\;
	}
}
\caption{\label{algo:mems}
Function $\mathtt{callback}$ for maximal exact matches. See \Thm{thm:mems} and \Algo{algo:generalizedIterator}. Operator $\otimes$ is from \Lemma{lemma:product}.}
\end{algorithm}

\begin{lemma} \label{lemma:product}
Let $\Sigma$ be a set, and let $A$ and $B$ be two subsets of $\Sigma \times \Sigma$. We can compute $A \otimes B = \{(a,b,c,d) \st (a,b) \in A, (c,d) \in B, a \neq c, b \neq d\}$ in $O(|A|+|B|+|A \otimes B|)$ time.
\end{lemma}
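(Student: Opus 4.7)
My plan enumerates $A \otimes B$ by iterating over the distinct first coordinates $a$ appearing in $A$ and, for each such $a$, processing only the part of $B$ that is ``compatible'' with $A_a = \{b : (a,b) \in A\}$. The engine of the analysis is an elementary \emph{subproblem bound}: for any set $X \subseteq \Sigma$ and any set $Y \subseteq \Sigma \times \Sigma$, the brute-force enumeration of $\{(x,(c,d)) \in X \times Y : x \neq d\}$ runs in $|X|\cdot|Y|$ time and produces $|X||Y| - \sum_{x \in X}|Y^x|$ tuples, where $Y^x = \{(c,x) \in Y\}$. Since $\sum_{x \in X}|Y^x| \leq |Y|$, the brute force always costs $O(|X|+|Y|+\text{output})$, and the symmetric statement for the first coordinate also holds. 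I would preprocess $A$ and $B$ in $O(|A|+|B|)$ time by grouping both sets by each coordinate, building a hash set of $B$ for $O(1)$ membership queries, and maintaining $B$ in a doubly linked list whose row $B_a$ can be removed or restored in $O(|B_a|)$ time.

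For each $a$ with $A_a \neq \emptyset$ the algorithm splits into two cases. In the \emph{fat} case $|A_a| \geq 2$, I temporarily remove row $a$ from the linked list, iterate $A_a \times \bar B_a$ (where $\bar B_a = B \setminus B_a$) checking $b \neq d$, and restore row $a$. The subproblem bound applied with $X = A_a$ and $Y = \bar B_a$ gives $|A_a|\cdot|\bar B_a| \leq |\bar B_a| + \text{output}_a$, and the lower bound $\text{output}_a \geq (|A_a|-1)|\bar B_a| \geq |\bar B_a|$ (valid because $|A_a| \geq 2$) collapses the per-$a$ cost to $O(|B_a| + \text{output}_a)$. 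Summing over fat $a$ gives $O(|B|+|A\otimes B|)$.

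The \emph{thin} case $|A_a| = 1$ with $A_a = \{b^*\}$ is the main obstacle, since a naive pass over $\bar B_a$ can cost $\Omega(|B|)$ while the output is tiny or even zero. I handle it in two steps. First, an $O(1)$ test using $|B_a|$, $|B^{b^*}|$, and the membership of $(a,b^*)$ in $B$ decides whether $\text{output}_a = (|B|-|B_a|) - (|B^{b^*}|-[(a,b^*)\in B])$ is zero; if so, $a$ is skipped. Second, the first time a value $b^*$ is needed I compute and cache $\tilde B^{b^*} = B \setminus B^{b^*}$ in $O(|B|)$ time, so that all subsequent non-skipped singletons with the same $b^*$ are processed by a single filtered scan of $\tilde B^{b^*}$. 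The iteration cost per processed singleton is $|\tilde B^{b^*}| = \text{output}_a + |B_a^{\neq b^*}|$, which sums to $|A\otimes B|+|B|$ since distinct singletons occupy distinct rows. The hard part is bounding the precomputation cost $|\mathcal{D}|\cdot|B|$, where $\mathcal{D}$ is the set of distinct processed values of $b^*$: I plan to pick, for each $b^* \in \mathcal{D}$, a witness singleton $a(b^*)$ with positive output and use the identity $|B| = |\bar B_{a(b^*)}|+|B_{a(b^*)}| \leq \text{output}_{a(b^*)} + |B^{b^*}| + |B_{a(b^*)}|$; since the witness rows $a(b^*)$ and the columns $b^*$ are all distinct, summing over $\mathcal{D}$ yields $|\mathcal{D}|\cdot|B| \leq |A\otimes B|+2|B|$. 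Combined with the $O(|A|+|B|)$ preprocessing, the total cost is $O(|A|+|B|+|A\otimes B|)$, as claimed.
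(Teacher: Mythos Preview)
Your argument is correct and reaches the same bound by a genuinely different route. The paper's proof is a greedy pairing: it repeatedly finds two \emph{compatible} pairs $(a,b),(c,d)\in A$ (meaning $a\neq c$ and $b\neq d$) and scans $B$ once, observing that every element of $B$ except possibly $(a,d)$ and $(c,b)$ matches at least one of the two, so each such scan emits at least $|B|-2$ tuples and can be charged to the output; when no two compatible pairs remain in $A$, all survivors lie on a ``cross'' $\{a\}\times\Sigma \cup \Sigma\times\{b\}$ and a single final pass over $B$ finishes. Your fat/thin row decomposition with cached column-complements $\tilde{B}^{b^*}$ is a more explicit, structured amortization, and your witness inequality $|B|\le \mathrm{output}_{a(b^*)}+|B^{b^*}|+|B_{a(b^*)}|$ for bounding $|\mathcal{D}|\cdot|B|$ is a nice touch. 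One trade-off worth flagging: the paper's proof needs only equality comparisons on $\Sigma$, while your preprocessing (grouping $A$ and $B$ by coordinate, the $O(1)$ membership test for $(a,b^*)\in B$, and the cache lookup keyed by $b^*$) relies on hashing or direct addressing on $\Sigma$. That is harmless in the paper's application, where $\Sigma=[0..\sigma]$, but if you keep the lemma at its stated generality you should either make that assumption explicit or note that the bound becomes randomized for arbitrary $\Sigma$.
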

\begin{proof}
We assume without loss of generality that $|A|<|B|$. We say that two pairs $(a,b),(c,d)$ are \emph{compatible} if $a \neq c$ and $b \neq d$. Note that, if $(a,b)$ and $(c,d)$ are compatible, then the only elements of $\Sigma \times \Sigma$ that are incompatible with both $(a,b)$ and $(c,d)$ are $(a,d)$ and $(c,b)$. We iteratively select a pair $(a,b) \in A$ and scan $A$ in $O(|A|)=O(|B|)$ time to find another compatible pair $(c,d)$: if we find one, we scan $B$ and report every pair in $B$ that is compatible with either $(a,b)$ or $(c,d)$. The output will be of size $|B|-2$ or larger, thus the time to scan $A$ and $B$ can be charged to the output. Then, we remove $(a,b)$ and $(c,d)$ from $A$ and repeat the process. If $A$ becomes empty we stop. If all the remaining pairs in $A$ are incompatible with our selected pair $(a,b)$, that is, if $c=a$ or $d=b$ for every $(c,d) \in A$, we build subsets $A^a$ and $A^b$ where $A^a=\{(a,x) : x \neq b\} \subseteq A$ and $A^b=\{ (x,b) : x \neq a\} \subseteq A$. Then we scan $B$, and for every pair $(x,y) \in B$ different from $(a,b)$ we do the following. If $x \neq a$ and $y \neq b$, then we report $(a,b,x,y)$, $\{(a,z,x,y) : (a,z) \in A^a, z \neq y\}$ and $\{(z,b,x,y) : (z,b) \in A^b, z \neq x\}$. Pairs $(a,y) \in A^a$ and $(x,b) \in A^b$ are the only ones that do not produce output, thus the cost of scanning $A^a$ and $A^b$ can be charged to printing the result. If $x=a$ and $y \neq b$, then we report $\{(z,b,x,y) : (z,b) \in A^b\}$. If $x \neq a$ and $y=b$, then we report $\{(a,z,x,y) : (a,z) \in A^a\}$.
\end{proof}

\Thm{thm:mems} uses the matrices and arrays of \Lemma{lemma:extendLeft} to access all the left-extensions $aW$ of a string $W$, and for every such left-extension to access all its right-extensions $aWb$. A similar approach can be used to compute all the \emph{minimal absent words} of a string $T$. String $W$ is a \emph{minimal absent word} of a string $T \in \Sigma^+$ if $W$ is not a substring of $T$ and if every proper substring of $W$ is a substring of $T$ (see e.g. \cite{crochemore1998automata}). To decide whether $aWb$ is a minimal absent word of $T$, where $\{a,b\} \subseteq \Sigma$, it suffices to check that $aWb$ does not occur in $T$, and that both $aW$ and $Wb$ occur in $T$. Only a maximal repeat of $T$ can be the infix $W$ of a minimal absent word $aWb$: we can enumerate all the maximal repeats $W$ of $T$ as in \Thm{thm:maxRepeats}. Recall also that $aWb$ is a minimal absent word of $T$ only if both $aW$ and $Wb$ occur in $T$. We can use $\REPR{W}$ to enumerate all strings $Wb$ that occur in $T$, we can use vector $\mathtt{leftExtensions}$ to enumerate all strings $aW$ that occur in $T$, and finally we can use matrix $A$ to discard all strings $aWb$ that occur in $T$. \Algo{algo:maws} uses this approach to output an encoding of all distinct minimal absent words of $T$ as a list of triplets $(i,\ell,b)$, where each triplet encodes minimal absent word $T[i..i+\ell-1] \cdot b$. Every operation of this algorithm can be charged to an element of the output, to an edge of the suffix tree of $T$, or to a Weiner link. The following theorem holds by this observation, and by applying the same steps as in \Thm{thm:maxRepeats}: we leave its proof to the reader.

\begin{theorem} \label{thm:maws}
Given a string $T \in [1..\sigma]^n$, we can compute an encoding of all its $\mathtt{occ}$ minimal absent words in $O(n+\mathtt{occ})$ time and in $O(n\log{\sigma})$ bits of working space.
\end{theorem}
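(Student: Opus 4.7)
The plan is to adapt \Algo{algo:iterator} by supplying a callback that, for each right-maximal substring $W$ of $T$ enumerated by \Lemma{lemma:iterator}, first checks whether $W$ is left-maximal (hence a maximal repeat), and if so iterates over all pairs $(a,b)$ with $a$ a left-extension and $b$ a right-extension of $W$, flagging those $(a,b)$ for which $aWb$ does not occur in $T$. The correctness rests on the observation that the infix $W$ of any minimal absent word $aWb$ must be a maximal repeat: both $aW$ and $Wb$ occur in $T$ while $aWb$ does not, which forces $W$ to have at least two distinct right-extensions (namely $b$ and some $b'\ne b$ with $aWb'$ a substring of $T$) and, symmetrically, at least two distinct left-extensions. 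Conversely, every maximal repeat $W$ generates a minimal absent word exactly when $(a,b)$ satisfies $a\in\mathtt{leftExtensions}$, $b\in\mathtt{chars}$, and $b\notin A[a]$.

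Concretely, inside the callback I would use $\REPR{W}=(\mathtt{chars}[1..k],\mathtt{first}[1..k{+}1])$ and the matrices $A,F,L$, together with $\mathtt{gamma}$ and $\mathtt{leftExtensions}$ produced by \Lemma{lemma:extendLeft}. If $h<2$, the string $W$ is not left-maximal and we return. Otherwise, for each $a=\mathtt{leftExtensions}[p]$, $p\le h$, I exploit the fact that both $\mathtt{chars}[1..k]$ and $A[a][1..\mathtt{gamma}[a]]$ are sorted in lexicographic order (the latter because \Algo{algo:extendLeft} processes the $b_j$ in lexicographic order when appending to $A[a]$) to perform a linear merge: each $b\in\mathtt{chars}\setminus A[a]$ yields a triplet $(F[a][1],|W|+1,b)$ which I append to a list $\mathtt{pairs}$. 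Here $F[a][1]$ is a position in $\BWT_{T\#}$ pointing to an occurrence of $aW$, so after the enumeration finishes I convert every such BWT position to its corresponding text position by a single invocation of \Lemma{lemma:batch_extract}, obtaining the encoding $(i,\ell,b)$ announced in the statement.

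The delicate step is bounding the total callback work, which is $\sum_W h_W\cdot k_W$ summed over all maximal repeats $W$. I split each pair $(a,b)$ into two classes. Pairs with $b\notin A[a]$ are in bijection with the output, contributing $O(\mathtt{occ})$ in aggregate. For pairs with $b\in A[a]$, if $\mathtt{gamma}[a]\geq 2$ then $aW$ is right-maximal and hence an internal node of $\ST_T$; the distinct values of $b$ correspond to distinct outgoing edges from $aW$, so I charge the pair to that edge. If $\mathtt{gamma}[a]=1$ then $a$ contributes exactly one pair, and the extension $W\mapsto aW$ is an implicit Weiner link from the locus of $W$ (implicit because $aW$ is not right-maximal). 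By \Obs{obs:suffixtree} the total number of suffix-tree edges and implicit Weiner links is $O(n)$, so the pairs with $b\in A[a]$ contribute $O(n)$ overall, giving total callback work $O(n+\mathtt{occ})$.

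The remaining ingredients follow the template of \Thm{thm:maxRepeats}: build $\BWT_{T\#}$ from $T$ via \Thm{thm:bwtConstruction}, install the $\mathtt{access}$/$\mathtt{partialRank}$ support of \Lemma{lemma:rank_select_access} and the $\mathtt{rangeDistinct}$ support of \Lemma{lemma:rangedistinct}, run \Algo{algo:iterator} with the callback above, and finally apply \Lemma{lemma:batch_extract} to turn $\mathtt{pairs}$ into the desired triplets. All preprocessing runs in $O(n)$ time and $O(n\log\sigma)$ bits of working space; the main loop adds $O(n+\mathtt{occ})$ time and, using the static allocation strategy of \Sec{sec:staticAllocation} to preallocate $\mathtt{pairs}$ (whose final size can be precomputed by a first, output-free pass), keeps the working space within $O(n\log\sigma)$ bits while charging the $\mathtt{pairs}$ array itself to the output. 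The hardest part of the argument is the charging used to prove $\sum_W h_W k_W = O(n+\mathtt{occ})$, since a naive bound would be quadratic; the edges-and-implicit-Weiner-links dichotomy above is what rescues the linear dependence on $n$.
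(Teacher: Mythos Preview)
Your proposal is correct and follows essentially the same approach as the paper. The paper's callback (\Algo{algo:maws}) uses a bitvector $\mathtt{used}[1..\sigma]$ to extract $\mathtt{chars}\setminus A[a]$ for each left-extension $a$, whereas you use a linear merge of the two sorted arrays; both cost $O(k_W)$ per left-extension and lead to the same $\sum_W h_W k_W$ total. Your charging argument---output pairs for $b\notin A[a]$, suffix-tree edges from $aW$ for $b\in A[a]$ with $\mathtt{gamma}[a]\ge 2$, and implicit Weiner links for $\mathtt{gamma}[a]=1$---is exactly the justification the paper sketches in one sentence (``Every operation \dots\ can be charged to an element of the output, to an edge of the suffix tree of $T$, or to a Weiner link''), and you have filled in the details correctly, including the observation that $A[a]$ is sorted because \Algo{algo:extendLeft} appends the $b_j$ in lexicographic order.
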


Recall from \Sec{sect:definitions} that $\mathtt{occ}$ can be of size $\Theta(n\sigma)$ in this case. Minimal absent words have been detected in linear time in the length of the input before, but using a suffix array (see \cite{barton2014linear} and references therein).

\begin{algorithm}[t]
\KwIn{$\REPR{W}$, $|W|$, $\BWT_{T}$, and $C$ array of string $T \in [1..\sigma]^{n-1}\#$. Matrices $A$, $F$, $L$, $\mathtt{gamma}$, $\mathtt{leftExtensions}$, and counter $h$, from \Lemma{lemma:extendLeft}. Bitvector $\mathtt{used}[1..\sigma]$ initialized to all zeros. List $\mathtt{pairs}$.}
\If{$h<2$}{\textbf{return}\;}
\For{$i \in [1..|\mathtt{chars}|]$}{
	$\mathtt{used}[\mathtt{chars}[i]] \gets 1$\;
}
\For{$i \in [1..h]$}{
	$a \gets \mathtt{leftExtensions}[i]$\;
	\For{$j \in [1..\mathtt{gamma}[a]]$}{
		$\mathtt{used}[A[a][j]] \gets 0$\;
	}
	\For{$j \in [1..|\mathtt{chars}|]$}{
		$b \gets \mathtt{chars}[j]$\;
		\If{$\mathtt{used}[b]=0$}{
			$\mathtt{pairs}.\mathtt{append}\big( (F[a][1],|W|+1,b) \big)$\;
			$\mathtt{used}[b] \gets 1$\;
		}
	}
}
\For{$i \in [1..|\mathtt{chars}|]$}{
	$\mathtt{used}[\mathtt{chars}[i]] \gets 0$\;
}
\caption{\label{algo:maws}
Function $\mathtt{callback}$ for minimal absent words. See \Thm{thm:maws} and \Algo{algo:iterator}.}
\end{algorithm}

\subsection{String kernels} \label{sect:stringkernels}

Another way of comparing and analyzing strings consists in studying the composition and abundance of all the distinct strings that occur in them. Given two strings $T^1$ and $T^2$, a \emph{string kernel} is a function that simultaneously converts $T^1$ and $T^2$ to \emph{composition vectors} $\{\mathbf{T^1},\mathbf{T^2}\} \subset \mathbb{R}^n$, indexed by a given set of $n>0$ distinct strings, and that computes a similarity or a distance measure between $\mathbf{T^1}$ and $\mathbf{T^2}$ (see e.g. \cite{haussler1999convolution,lodhi2002text}). Value $\mathbf{T^i}[W]$ is typically a function of the number $f_{T^i}(W)$ of (possibly overlapping) occurrences of string $W$ in $T^i$ (for example the estimate $p_{i}(W)=f_{T^i}(W)/(|T^i|-|W|+1)$ of the empirical probability of observing $W$ in $T^i$). In this section, we focus on computing the cosine of the angle between $\mathbf{T^1}$ and $\mathbf{T^2}$, defined as:
\begin{eqnarray*}
\kappa(\mathbf{T^1},\mathbf{T^2}) = \frac{\sum_{W}\mathbf{T^1}[W]\mathbf{T^2}[W]}{\sqrt{\left( \sum_{W}\mathbf{T^1}[W]^2 \right) \left( \sum_{W}\mathbf{T^2}[W]^2 \right)}} \label{eq:cosine}
\end{eqnarray*}
Specifically, we consider the case in which $\mathbf{T^i}$ is indexed by all distinct strings of a given length $k$ (called \emph{$k$-mers}), and the case in which $\mathbf{T^i}$ is indexed by all distinct strings \emph{of any length}:



\begin{definition}
Given a string $T \in [1 \ltdots \sigma]^+$ and a length $k>0$, let vector $\mathbf{T_k}=[1 \ltdots \sigma^k]$ be such that $\mathbf{T_k}[W]=f_{T}(W)$ for every $W \in [1 \ltdots \sigma]^k$. The \emph{$k$-mer complexity} $C(T,k)$ of string $T$ is the number of nonzero components of $\mathbf{T_k}$. The \emph{$k$-mer kernel} between two strings $T^1$ and $T^2$ is $\kappa(\mathbf{T^1_k},\mathbf{T^2_k})$.
\end{definition}

\begin{definition} \label{def:substringKernel}
Given a string $T \in [1 \ltdots \sigma]^+$, consider the infinite-dimensional vector $\mathbf{T_\infty}$, indexed by all distinct substrings $W \in [1 \ltdots \sigma]^+$, such that $\mathbf{T_\infty}[W]=f_{T}(W)$. The \emph{substring complexity} $C(T)$ of string $T$ is the number of nonzero components of $\mathbf{T_\infty}$. The \emph{substring kernel} between two strings $T^1$ and $T^2$ is $\kappa(\mathbf{T^1_\infty},\mathbf{T^2_\infty})$.
\end{definition}

Substring complexity and substring kernels, with or without a constraint on string length, can be computed using the suffix tree of a single string or the generalized suffix tree of two strings, using a \emph{telescoping technique} that works by adding and subtracting terms to and from a sum, and that does not depend on the order in which the nodes of the suffix tree are enumerated \cite{belazzougui2015framework}. We can thus implement all such algorithms as callback functions of Algorithms \ref{algo:iterator} and \ref{algo:generalizedIterator}:

\begin{theorem} \label{thm:kmerComplexity}
Given a string $T \in [1 \ltdots \sigma]^n$, there is an algorithm that computes:
\begin{itemize}
\item the $k$-mer complexity $C(T,k)$ of $T$, in $O(n)$ time and in $O(n\log{\sigma})$ bits of working space, for a given integer $k$;
\item the substring complexity $C(T)$, in $O(n)$ time and in $O(n\log{\sigma})$ bits of working space.
\end{itemize}
Given two strings $T^1$ and $T^2$ in $[1 \ltdots \sigma]^+$, there is an algorithm that computes:
\begin{itemize}
\item the $k$-mer kernel between $T^1$ and $T^2$, in $O(|T^1|+|T^2|)$ time and $O((|T^1|+|T^2|)\log{\sigma})$ bits of working space, for a given integer $k$;
\item the substring kernel between $T^1$ and $T^2$, in $O(|T^1|+|T^2|)$ time and in $O((|T^1|+|T^2|)\log{\sigma})$ bits of working space.
\end{itemize}
\end{theorem}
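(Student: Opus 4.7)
The plan is to implement each of the four quantities as a \texttt{callback} of Algorithm~\ref{algo:iterator} (for the single-string quantities $C(T,k)$ and $C(T)$) or Algorithm~\ref{algo:generalizedIterator} (for the kernels between $T^1$ and $T^2$), using the order-independent telescoping technique of~\cite{belazzougui2015framework}. The starting observation is that each quantity can be written as a sum over the edges of the (generalized) suffix tree. For the substring kernel,
$$\sum_W f_{T^1}(W) f_{T^2}(W) \;=\; \sum_{(u,v)\in E}(|\ell(v)|-|\ell(u)|)\,f_{T^1}(v)\,f_{T^2}(v),$$
since along every edge the frequencies in $T^1,T^2$ are constant and equal to those of the lower endpoint $v$; the substring complexity is the same sum with $f_{T^i}(v)$ replaced by $1$. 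The $k$-mer versions are the restriction of these sums to edges that are completely above depth $k$, plus a boundary term for edges crossing depth $k$, which again depends only on the two endpoints' depths and their frequencies.

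Second, I would rewrite each edge sum by attributing its contribution to the child endpoint $v$, splitting it as $|\ell(v)|\,f_{T^1}(v)f_{T^2}(v) \;-\; |\ell(u)|\,f_{T^1}(v)f_{T^2}(v)$ and accumulating the second piece at the parent when that parent is visited. For each right-maximal substring $W$ enumerated by Algorithm~\ref{algo:generalizedIterator}, the callback receives $|W|$ and $\REPRPRIME{W}$, from which (a) $f_{T^j}(W)$ and $f_{T^j}(Wb_i)$ for each right-extension $b_i$ are read off as interval lengths $\mathtt{first}^j[i+1]-\mathtt{first}^j[i]$, and (b) whether a given child of $W$ in the (generalized) suffix tree is a leaf or an internal node is determined by checking whether that child's interval has size one in \emph{both} BWTs; the unique-suffix case yields a closed-form contribution (the leaf's string depth equals $|T^j|-p+1$ for the corresponding position, but only the product of its frequency-$1$'s with $|\ell(v)|-|\ell(u)|$ is needed, and $|\ell(v)|$ disappears upon telescoping). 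All contributions are integers of $O(\log n)$ bits; the four running accumulators (numerator and two squared norms for the cosine kernel, one counter for complexity) fit in $O(\log n)$ bits of working space beyond the indexes.

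Third, I would bound the cost. Constructing the BWT of $T$ or of $T^1\#_1 T^2\#_2$ and the ancillary $\mathtt{rangeDistinct}$ structure uses Theorem~\ref{thm:bwtConstruction} and Lemmas~\ref{lemma:rank_select_access}\,--\,\ref{lemma:rangedistinct}, taking $O(n)$ time and $O(n\log\sigma)$ bits of working space. The traversal itself takes $O(n)$ time and $O(n\log\sigma)$ bits by Theorems~\ref{thm:iterator} and~\ref{thm:generalizedIterator}. The work inside each invocation of the callback is $O(|\mathtt{chars}|+|\mathtt{chars}^1|+|\mathtt{chars}^2|)$, which is proportional to the number of (explicit and implicit) Weiner links incident to the visited node; summed over all right-maximal substrings this is $O(n)$ by Observation~\ref{obs:suffixtree}. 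The final cosine is obtained by a single division and square root at the end.

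The main obstacle will be the $k$-mer case, where the telescoping has to be corrected for edges that straddle depth $k$: a child at depth $\geq k$ whose parent is at depth $<k$ contributes as if it were an explicit node at depth exactly $k$ with the same frequency. Handling this cleanly requires the callback, at each visited $W$ with $|W|<k$, to inspect each right-extension's interval and to issue a ``boundary'' contribution whenever the child locus (whether a deeper right-maximal node, a unary path, or a leaf) has depth $\geq k$; the depth of a child that is itself a right-maximal substring will be known only when that child is later visited, so the boundary contribution must be split into a piece posted at $W$ (depending only on $|W|$ and child frequencies) and a piece posted when the child is visited (depending only on child information). This decomposition is exactly what the telescoping of~\cite{belazzougui2015framework} accomplishes; its order-independence is what makes it compatible with the suffix-link-tree order in which our iterators enumerate nodes.
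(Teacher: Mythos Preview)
Your proposal is correct and follows essentially the same approach as the paper: both reduce each quantity to an order-independent telescoping sum over suffix-tree edges and evaluate it inside the \texttt{callback} of Algorithms~\ref{algo:iterator}/\ref{algo:generalizedIterator}, with the claimed bounds following from Theorems~\ref{thm:bwtConstruction}, \ref{thm:iterator}, and \ref{thm:generalizedIterator}. The paper's own proof is in fact terser than yours: it sketches only the $k$-mer complexity case (initialize $C(T,k)$ to the number of leaves at depth $\geq k$, then at each internal node $v$ with $|\ell(v)|\geq k$ add~$1$ and subtract the number of children) and defers the remaining three cases to~\cite{belazzougui2015framework}. One small slip: a child is a leaf of the generalized suffix tree when the \emph{sum} of its interval sizes across the two BWTs equals one, not when it has size one in \emph{both}.
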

\begin{proof}
To make the paper self-contained, we just sketch the proof of $k$-mer complexity given in \cite{belazzougui2015framework}: the same telescoping technique can be applied to solve all other problems: see \cite{belazzougui2015framework}. 

A $k$-mer of $T$ is either the label of a node of the suffix tree of $T$, or it ends in the middle of an edge $(u,v)$ of the suffix tree. In the latter case, we assume that the $k$-mer is represented by its locus $v$, which might be a leaf. Let $C(T,k)$ be initialized to $|T|+1-k$, i.e. to the number of leaves that correspond to suffixes of $T\#$ of length at least $k$, excluding suffix $T[|T|-k+2 \ltdots |T|]\#$. We use \Algo{algo:iterator} to enumerate the internal nodes of $\ST_{T\#}$, and every time we enumerate a node $v$ we proceed as follows. Let $\ell(v)=W$. If $|W|<k$ we leave $C(T,k)$ unaltered, otherwise we increment $C(T,k)$ by one and we decrement $C(T,k)$ by the number of children of $v$ in $\ST_{T\#}$, which is equal to $|\mathtt{chars}|$ in $\REPR{W}$. It follows that every node $v$ of $\ST_{T\#}$ that is located at depth at least $k$ and that is not the locus of a $k$-mer is both added to $C(T,k)$ (when the algorithm visits $v$) and subtracted from $C(T,k)$ (when the algorithm visits $\mathtt{parent}(v)$). Leaves at depth at least $k$ are added by the initialization of $C(T,k)$, and subtracted during the enumeration. Conversely, every locus $v$ of a $k$-mer of $T$ (including leaves) is just added to $C(T,k)$, because $|\ell(\mathtt{parent}(v))|<k$. The claimed complexity comes from \Thm{thm:bwtConstruction} and \Thm{thm:iterator}.
\end{proof}

A number of other kernels and complexity measures can be implemented on top of Algorithms \ref{algo:iterator} and \ref{algo:generalizedIterator}: see \cite{belazzougui2015framework} for details. Since such iteration algorithms work on data structures that can be built from the input strings in deterministic linear time, all such kernels and complexity measures can be computed from the input strings in deterministic $O(n)$ time and in $O(n\log{\sigma})$ bits of working space, where $n$ is the total length of the input strings.

\section*{Acknowledgement}
The authors wish to thank Travis Gagie for explaining the data structure built in Lemma \ref{lemma:stringAncestor}, as well as for valuable comments and encouragements, Gonzalo Navarro for explaining the algorithm in Theorem \ref{thm:mums}, Enno Ohlebusch for useful comments and remarks, and Alexandru Tomescu for valuable comments and encouragements.
\newpage
\bibliographystyle{plain}
\small
\bibliography{jacm}
\normalsize

\end{document}